\pgfplotsset{compat=1.17}
\newcommand{\ketbra}[2]{|#1\rangle\!\langle#2|}
\newcommand*\diff{\mathop{}\!\mathrm{d}}
\DeclareMathOperator*{\argmin}{arg\,min} 
\definecolor{mygrey}{gray}{0.35}
\definecolor{myblue}{rgb}{0.2,0.2,0.8}
\definecolor{myzard}{cmyk}{0,0,0.05,0}
\definecolor{mywhite}{rgb}{1,1,1}
\definecolor{myred}{rgb}{0.9,0.1,0.}
\newtheoremstyle{customStyle1}  
{0pt}       
{0pt}       
{\normalfont}   
{\parindent}        
{\em}  
{. --}   	 
{.5em}       
{\thmname{#1}\thmnumber{ #2}\thmnote{ (#3)}}  
\newcounter{theorems}
\newtheorem{theorem}[theorems]{Theorem}
\newtheorem{proposition}[theorems]{Proposition}
\newtheorem{corollary}[theorems]{Corollary}
\newtheorem{definition}[theorems]{Definition} 
\newtheorem{lemma}[theorems]{Lemma}
\def\pure{{\rm Pure}}
\def\pos{{\rm Pos}}
\def\cp{{\rm CP}}
\def\cptp{{\rm CPTP}}
\def\distill{{\rm Distill}}
\def\cost{{\rm Cost}}
\newcommand{\bea}{\begin{eqnarray}}
\newcommand{\eea}{\end{eqnarray}}
\newcommand{\be}{\begin{equation}}
\newcommand{\ee}{\end{equation}}
\newcommand{\ba}{\begin{equation}\begin{aligned}}
\newcommand{\ea}{\end{aligned}\end{equation}}
\newcommand{\epm}{\end{pmatrix}}
\newcommand{\bpm}{\begin{pmatrix}}
\newcommand{\ebm}{\end{bmatrix}}
\newcommand{\bbm}{\begin{bmatrix}}
\newcommand{\bex}{\begin{exercise}}
\newcommand{\eex}{\end{exercise}}
\newcommand{\ben}{\begin{enumerate}}
\newcommand{\een}{\end{enumerate}}
\def\be{\begin{equation}}
\def\ee{\end{equation}}
\newcommand{\rank}{{\rm Rank}}
\newcommand{\mb}{\mathfrak{B}}
\newcommand{\md}{\mathfrak{D}}
\newcommand{\mM}{\mathcal{M}}
\newcommand{\mW}{\mathcal{W}}
\newcommand{\la}{\langle}
\newcommand{\ra}{\rangle}
\newcommand{\da}{\downarrow}
\newcommand{\eps}{\varepsilon}
\newcommand{\mbb}[1]{\mathbb{#1}}
\newcommand{\eqdef}{\coloneqq}
\def\r{\mathbf{r}}
\def\p{\mathbf{p}}
\def\q{\mathbf{q}}
\def\e{\mathbf{e}}
\def\u{\mathbf{u}}
\def\0{\mathbf{0}}
\def\id{\mathsf{id}}
\def\mE{\mathcal{E}}
\def\mF{\mathcal{F}}
\def\mN{\mathcal{N}}
\def\mL{\mathcal{L}}
\def\mS{\mathcal{S}}
\def\mV{\mathcal{V}}
\def\mG{\mathcal{G}}
\def\tA{\tilde{A}}
\def\prob{{\rm Prob}}
\newcommand{\GG}[1]{\rm \textcolor{red}{ #1}}
\newcommand{\Gg}[1]{\textcolor{red}{ #1}}
\DeclareMathOperator{\tr}{Tr}
\DeclareMathOperator{\sr}{SR}
\newcommand{\norm}[1]{\left\lVert#1\right\rVert}
\DeclareMathOperator{\locc}{LOCC}
\begin{document}
\title{Single-shot entanglement manipulation of states and channels revisited}
\author{Thomas Theurer}
\email{thomas.theurer@ucalgary.ca}
\affiliation{Department of Mathematics and Statistics, University of Calgary, Calgary, AB T2N 1N4, Canada}
\affiliation{Institute for Quantum Science and Technology, University of Calgary, Calgary, AB T2N 1N4, Canada}
\author{Kun Fang}
\affiliation{Institute for Quantum Computing, Baidu Research, Beijing 100193, China}
\author{Gilad Gour}
\affiliation{Department of Mathematics and Statistics, University of Calgary, Calgary, AB T2N 1N4, Canada}
\affiliation{Institute for Quantum Science and Technology, University of Calgary, Calgary, AB T2N 1N4, Canada}

\date{\today}

\begin{abstract}
    We study entanglement distillation and dilution of states and channels in the single-shot regime. With the help of a recently introduced conversion distance, we provide compact closed-form expressions for the dilution and distillation of pure states and show how this can be used to efficiently calculate these quantities on multiple copies of pure states. These closed-form expressions also allow us to obtain second-order asymptotics. We then prove that the $\eps$-single-shot entanglement cost of mixed states is given exactly in terms of an expression containing a suitably smoothed version of the conditional max-entropy. For pure states, this expression reduces to the smoothed max-entropy of the reduced state. Based on these results, we bound the single-shot entanglement cost of channels. We then turn to the one-way entanglement distillation of states and channels and provide bounds in terms of a quantity we denote coherent information of entanglement.
\end{abstract}
\date{\today}
\maketitle

\section{Introduction}
Quantum entanglement~\cite{Plenio2007,Horodecki2009} plays a fundamental role in many technological applications that involve two (or more) spatially separated parties such as quantum teleportation~\cite{Bennett1993}, superdense coding~\cite{Bennett1992}, and secure quantum communication~\cite{Ekert1991}. If these parties are restricted to local operations and classical communication (LOCC)~\cite{Bennett1996,Horodecki2009}, they can manipulate and consume entanglement, but they cannot create it. Entanglement is thus a valuable quantum resource~\cite{Chitambar2019}. The fact that the previously mentioned protocols typically require specific entangled states, most often in the form of pure, maximally entangled states, makes the interconversion between entangled states via LOCC an important primitive. The interconversion between entangled states is traditionally studied in two different limits~\cite{Horodecki2009}, either in the limit where one has access to (unboundedly) many identical and uncorrelated copies of a given initial state and tries to convert them to as many target states as possible or in the so-called single-shot regime: Here, one asks how well one can approximate a given target state with a single copy of an initial state and LOCC. Whilst the asymptotic regime provides ultimate bounds on the usefulness of a quantum state, the single-shot regime is closer to what is relevant from an experimental perspective, where one has only access to finitely many copies. 

The maximally entangled states of dimension $m$ play a prominent role, not only because they are required in many protocols, but also because they can be converted to all other states of lower or equal dimension~\cite{Nielsen1999}. The special cases of (approximate) entanglement interconversion where either the target or the initial state are maximally entangled are thus of special interest and studied under the name entanglement distillation and dilution~\cite{Bennett1996a, Bennett1996b, Rains1999, Hayden2001, fang2019non, Regula2019}. More precisely, single-shot entanglement dilution describes the task where two distant parties try to convert, up to a fixed error $\eps$, a maximally entangled state of dimension $m$ to a given target state via LOCC. The minimal $m$ such that this is possible is then identified with the $\eps$-single-shot entanglement cost of the target state. Conversely, the $\eps$-single-shot distillable entanglement of an initial state $\rho$ is determined by the maximal dimension of a maximally entangled state to which $\rho$ can be converted (again up to an error $\eps$ and via LOCC). Importantly, there are many different (but topologically equivalent) ways to define the error $\eps$, e.g., via the trace norm or the fidelity. In the following, we will consider two choices, one based on the star conversion distance recently introduced in Ref.~\cite{Zanoni2023}, and the other based on the fidelity. After an introduction of the necessary notation in Sec.~\ref{sec:Notation}, in Sec.~\ref{subsec:StarSingleShot}, we will provide closed-form expressions for the $\eps$-single-shot distillable entanglement of pure states with respect to the star conversion distance. Moreover, we show that these quantities can be efficiently computed on multiple copies of a given state and provide explicit algorithms to do so. This allows us to derive analytical expressions for the second-order correction terms of the asymptotic expressions~\cite{Kumagai2017}. 

We then move on to the $\eps$-single-shot entanglement cost of mixed states, where we restrict the error using the square of the purified distance~\cite{Buscemi2011}. As our main result, we prove that this entanglement cost can be expressed \textit{exactly} in terms of a smoothed version of the conditional max-entropy, strengthening a result by Buscemi and Datta~\cite{Buscemi2011}. We then show that on pure states, the two definitions of the $\eps$-single-shot entanglement cost coincide, are equal to the smoothed max-entropy as the reduced state, and compare our findings to previously known results.

Historically, entanglement was primarily studied in the framework of so-called static resource theories~\cite{Chitambar2019} which focus on the value of quantum states~\cite{Plenio2007,Horodecki2009}. However, in typical applications in which we hope for quantum advantages, we are interested in performing a task (such as sending a secure message) that is done with the help of a quantum channel (called a dynamical resource). As demonstrated by quantum teleportation, with LOCC, we can convert static entanglement present in quantum states into channels outside of LOCC and thus indirectly quantify the value of channels via the value of states. From a conceptual point of view, it is however more natural to quantify the value of operations directly~\cite{Theurer2019}. Since quantum states can be seen as a special case of quantum operations with no input and a fixed output, quantifying the value of operations is a unifying concept that can also be used to quantify properties of operations that cannot be reduced to static resources~\cite{Theurer2019}. These observations have recently led to the development of dynamical resource theories~\cite{Zhuang2018, Theurer2019, Bauml2019, fang2019quantum, Liu2020, Wang2019a, Liu2019, Gour2019a, Takagi2019, Gour2020b,Gour2021, Fang2022} and in particular dynamical resource theories of entanglement~\cite{Gour2020b,Gour2021,Bauml2019,Li2021, Zhou2022}. The distillation and dilution of the entanglement of channels is an important primitive in such theories for the same reasons as in the static case and has been studied under relaxations of LOCC such as complete PPT-preservation~\cite{Rains1999, Rains2001, Audenaert2003} in Refs.~\cite{Das2020, Bauml2019, Wang2023} and under separability preservation in Ref.~\cite{Kim2021,Kim2021b}. See also Ref.~\cite{Fang2022,Regula2021} for fundamental limitations on the distillation of channel resources and Ref.~\cite{Takagi2022} for yield-cost relations in general resource theories.  In Sec.~\ref{subsec:channelCost}, we provide bounds on the $\eps$-single-shot entanglement cost of quantum channels under LOCC which coincide in the zero-error limit. 

An important subclass of LOCC is one-way LOCC in which classical communication is only allowed in one direction. In Sec.~\ref{sec:OneWay}, we introduce the coherent information of entanglement, which is monotonic under one-way LOCC, and use it to bound the one-way $\eps$-single shot distillable entanglement of both states and channels. We conclude with a discussion and outlook.

\section{Notation and preliminaries}\label{sec:Notation}

\begin{figure*}[t]\centering    \includegraphics[width=0.7\textwidth]{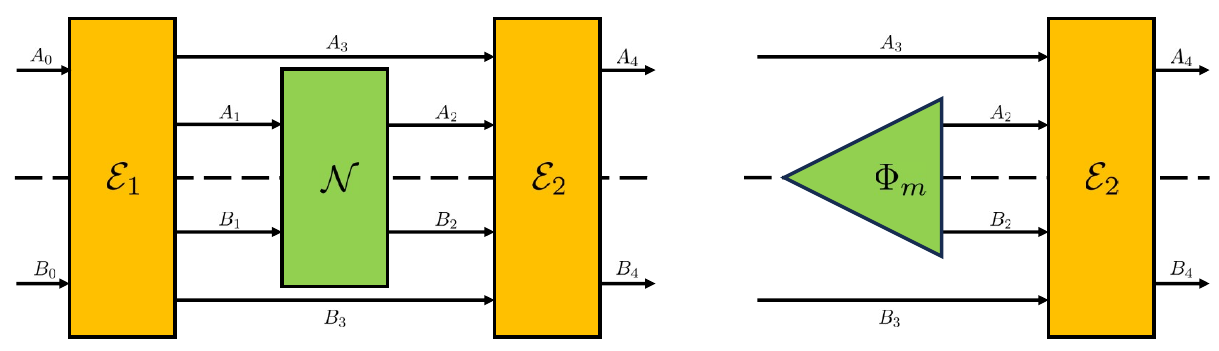}
	\caption{Left: LOCC superchannel converting a channel $\mN$ 	 to a channel $\mM=\mE_2\circ\mN\circ\mE_1$, where both $\mE_1$ and $\mE_2$ are in $\locc$. Right: If $\mN$ is replaced by a state, the superchannel simplifies. Solid lines represent quantum systems and the dashed line the spacial separation between Alice and Bob.}
	\label{fig:LOCCsuperChannel}
\end{figure*} 
Unless stated otherwise, proofs are provided in App.~\ref{sec:proofs}. In this paper, we only consider finite-dimensional Hilbert spaces, which we denote with capital Latin letters such as $C$. The dimension of a Hilbert space $C$ is denoted by $|C|$, and the set of density matrices acting on it by $\md(C)$, with $\pure(C)$ denoting the subset of pure states. Density matrices are represented by small Greek letters such as $\rho^C$, where the superscript indicates that $\rho$ acts on $C$. For, e.g., a state $\rho^{AB}\in\md(AB)$ we will also use the convention that $\rho^A=\tr_B\left[\rho^{AB}\right]$ denotes the marginal on system $A$. Whenever we consider a copy of a Hilbert space $A$, we will denote it as $\tA$ and we reserve $X$ to denote a classical system or register.

In the following, we will mainly be concerned with two spatially separated parties, Alice and Bob. To make clear which system is under the control of whom, we will use $A$ and $A'$ to denote Alice's systems and $B$ and $B'$ for Bob's. Quantum channels will be denoted by calligraphic large Latin letters such as $\mN$ (with the exemption of the identity channel, which we denote by $\id$) and the set of all quantum channels from $A$ to $B$ by CPTP$(A\rightarrow B)$, which stands for completely positive and trace-preserving. Completely positive linear maps will be denoted by CP and a collection of CP maps $\{\mN_x\}_{x=1}^n$ for which $\sum_{x = 1}^n \mN_x$ is a quantum channel will be called an instrument. Since we can always store the classical outcome $x$ of any instrument in a classical system $X$, we will interchangeably also write $\sum_x \mN_x \otimes \ketbra{x}{x}^X$ for the instrument.

The set of quantum channels that is implementable with local operations and classical communication will be denoted by $\locc$. Since with $\locc$, Alice and Bob can always attach and remove local auxiliary systems, for bipartite systems $AB$, we will assume in the following w.l.o.g. that $|A| = |B|$. Whenever there exists an $\mN\in \locc$ such that $\mN(\rho)=\sigma$, we write $\rho \xrightarrow\locc \sigma$. The set of channels that can be implemented with local operations and one-way classical communication \textit{from Alice to Bob} will be denoted by $\locc_1$. Moreover, we will denote super-channels~\cite{Chiribella2008}, i.e., linear maps between quantum channels that can be implemented by concatenating the channel they act on with two other channels implementing a pre- and post-processing, with capital Greek letters such as $\Theta$. We will be particularly concerned with LOCC super-channels, i.e., super-channels that can be implemented by a pre- and post-processing that are both in LOCC, see Fig.~\ref{fig:LOCCsuperChannel}.

We utilize bold small Latin letters such as $\p$ for probability vectors, with $p_x$ the $x$-th component of $\p$, and let $\prob(d)$ be the set of probability vectors of length $d$. The subset that contains the $d$-dimensional probability vectors with non-increasing entries will be denoted as $\prob^\da(d)$ and for $\p\in \prob(d)$, $\p^\da\in\prob^\da(d)$ represents the vector that we obtain by reordering the elements of $\p$. For $k\in [d]$, where $[d]$ is a short-hand notation for $\{1,...,d\}$, the $k$-th Ky-Fan norm of $\p\in \prob(d)$ is defined as 
\begin{align}
    \norm{\p}_{(k)}=\sum_{x=1}^k p^\da_x.
\end{align}
This definition can be extended to positive semidefinite operators $A$ by denoting with $\left\|A\right\|_{(k)}$ the sum of the $k$ largest eigenvalues of $A$ (i.e., the Ky-Fan norm of its eigenvalues). For $\p,\q \in\prob(d)$, we write $\p \succ \q$ if $\p$ majorizes $\q$~\cite{Marshall2011}, i.e., if
\begin{align}
	\norm{\p}_{(k)} \ge \norm{\q}_{(k)} \forall k\in [d].
\end{align}
If $\p,\q$ have different dimensions, we extend the definition by padding the shorter vector with zeros.
By fixing an arbitrary orthonormal basis $\Ket{x}^C$ for every Hilbert space $C$, any $\psi \in \pure(AB)$ is then $\locc$-equivalent to its standard form $\sum_x \sqrt{p_x} \Ket{xx}^{AB}$, where $\p \in \prob^\downarrow(|A|)$ contains the Schmidt coefficients of $\psi$, which, w.l.o.g., we always assume to be ordered non-increasingly. The number of non-zero Schmidt coefficients of $\psi$, i.e., its Schmidt rank, is denoted by $\sr(\psi)$ and $\Phi_m = \frac{1}{\sqrt{m}}\sum_{x = 1}^m \Ket{xx}^{AB}$ represents the the maximally entangled state of dimension $m$, where w.l.o.g., we will always assume that $m=|A|=|B|$.

We use $H(\rho)\eqdef-\tr(\rho\log \rho)$ to denote the von-Neumann entropy, and for $\rho\in\md(ABE)$, let $H(A|B)_\rho\eqdef H(\rho^{AB})-H(\rho^B)$ be the conditional von-Neumann entropy. For $\rho,\sigma \in\md(C)$ and $\norm{\cdot}_1$ the trace norm, the trace distance between $\rho$ and $\sigma$ is defined as $\frac12\norm{\rho-\sigma}_1$ and the relative entropy as $D(\rho\|\sigma)\eqdef\tr(\rho \log \rho)-\tr(\rho \log  \sigma$). The (trace) conversion distance under $\locc$ is commonly defined as  
\begin{align}\label{eq:convdist}
	&T\left(\rho^{AB} \xrightarrow{\locc} \sigma^{A'B'}\right) \\
    &\eqdef \min_{\tau\in\md(A'B')}\!\left\{ \frac12 \norm{\tau^{A'B'}-\sigma^{A'B'}}_1  :\rho^{AB}\xrightarrow{\locc} \tau^{A'B'}\right\}. \nonumber
\end{align}
On pure states $\psi$, $\phi \in \pure(AB)$, one can analogously define the star conversion distance~\cite{Zanoni2023} (see also Ref.~\cite{Kumagai2017}) as
\begin{align} \label{eq:starconvdist}
	&T_\star\left(\psi^{AB} \xrightarrow{\locc} \phi^{AB}\right) \nonumber\\
	&\eqdef \min_{\r \in \prob(|A|)}\left\{\frac{1}{2}\norm{\r-\q}_1\; :\; \r \succ \p \right\},
\end{align}
where $\p$, $\q \in \prob^\downarrow(|A|)$ are the Schmidt coefficients of $\psi$ and $\phi$, respectively, and $\norm{\p-\q}_1=\sum_x |p_x-q_x|$. 
This definition can easily be extended to pure bipartite states that do not belong to the same Hilbert space~\cite{Zanoni2023}: Using LOCC, one can always add and remove separable auxiliary states such that the Hilbert spaces match. For $\psi \in \pure(AB)$, $\phi \in \pure(A'B')$, and $d = |A| = |B|$, $d' = |A'| = |B'|$, we thus define
\begin{align}\label{eq:TstarGen}
    &T_\star\left(\psi^{AB} \xrightarrow{\locc} \phi^{A'B'}\right)  \\
    &\eqdef T_\star\left(\psi^{AB} \otimes \ketbra{11}{11}^{A'B'} \xrightarrow{\locc} \ketbra{11}{11}^{AB}\otimes\phi^{A'B'}  \right). \nonumber
\end{align}

Importantly, the conversion distances discussed above are topologically equivalent in the sense that~\cite{Zanoni2023}
\begin{align}\label{eq:equiv}
	&\frac{1}{2}T_\star^2 \left(\psi^{AB} \xrightarrow{\locc} \phi^{A'B'}\right) \leq T\left(\psi^{AB} \xrightarrow{\locc} \phi^{A'B'}\right) \nonumber \\
    &\leq \sqrt{2T_\star\left(\psi^{AB} \xrightarrow{\locc} \phi^{A'B'}\right)}
\end{align}
and according to Ref.~\cite[Thm.~3]{Zanoni2023}, $T_\star$ exhibits the following closed-form expression that is derived via the concept of approximate majorization~\cite{Horodecki2018}:

\begin{theorem}\label{thm:StarDist}
Let $\psi\in \pure(AB), \phi\in \pure(A'B')$, and $\p\in\prob^\downarrow(|A|), \q\in\prob^\downarrow(|A'|)$ be their corresponding Schmidt coefficients. Then,
\ba\label{eq:dlocc}
T_{\star}\!\left(\!\psi^{AB}\!\xrightarrow{\locc}\!\phi^{A'B'}\right)=&\max_{k\in[\sr(\psi^{AB})]}\big\{\|\p\|_{(k)}\!-\!\|\q\|_{(k)}\big\}.\nonumber
\ea
\end{theorem}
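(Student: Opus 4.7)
The plan is to establish matching lower and upper bounds on $T_\star$. First, I would reduce to the equal-dimension case $|A|=|A'|$: by Eq.~\eqref{eq:TstarGen}, tensoring with a local product state leaves each Schmidt spectrum unchanged up to zero-padding and hence preserves all Ky-Fan norms appearing on the right-hand side, so both sides of the claimed identity transform identically under this reduction.

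For the lower bound, fix any $\r\in\prob(|A|)$ with $\r\succ\p$ and any $k\in[|A|]$. The variational characterization of the total-variation distance yields $\tfrac12\|\r-\q\|_1 = \max_{S\subseteq[|A|]}\big(\sum_{i\in S}r_i - \sum_{i\in S}q_i\big)$; choosing $S$ to be the indices of the $k$ largest entries of $\r$ gives $\sum_{i\in S}r_i=\|\r\|_{(k)}$ while $\sum_{i\in S}q_i\le\|\q\|_{(k)}$ by the extremal property of the Ky-Fan norm. Combining with $\|\r\|_{(k)}\ge\|\p\|_{(k)}$ from $\r\succ\p$ delivers $\tfrac12\|\r-\q\|_1\ge\|\p\|_{(k)}-\|\q\|_{(k)}$. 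Maximizing over $k$ and then taking the infimum over $\r$ yields $T_\star\ge\max_k\{\|\p\|_{(k)}-\|\q\|_{(k)}\}$. Restricting the maximum to $k\in[\sr(\psi^{AB})]$ is harmless, since $\|\p\|_{(k)}=1$ is constant for $k\ge\sr(\psi^{AB})$ while $\|\q\|_{(k)}$ only increases, so the difference cannot be maximized beyond that range.

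For the upper bound, write $\delta=\max_k\{\|\p\|_{(k)}-\|\q\|_{(k)}\}\ge 0$ and construct an explicit $\r^*\in\prob(|A|)$ with $\r^*\succ\p$ and $\tfrac12\|\r^*-\q\|_1=\delta$, following the approximate-majorization construction of Horodecki et al.\ already cited in the paragraph above the statement. Concretely, one modifies $\q$ by transferring exactly $\delta$ units of probability mass from indices below the maximizing $k^\star$ upward to indices at or above $k^\star$; because $\delta$ dominates every individual deficit $\|\p\|_{(k)}-\|\q\|_{(k)}$, this redistribution simultaneously repairs every partial-sum inequality needed for $\r^*\succ\p$, while costing exactly $\delta$ in total-variation distance. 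The rearrangement inequality further allows one to restrict attention to non-increasingly sorted candidates $\r^*$, since sorting can only decrease $\|\r^*-\q\|_1$ when $\q$ is itself sorted.

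The main obstacle is making the construction robust: the naive ``pointwise max'' recipe $\|\r^*\|_{(k)}=\max(\|\p\|_{(k)},\|\q\|_{(k)})$ automatically enforces $\r^*\succ\p$, but its associated differences $r^*_k=\|\r^*\|_{(k)}-\|\r^*\|_{(k-1)}$ need not be monotone, and its $\ell_1$-cost to $\q$ equals the total variation of the cumulative-deficit sequence $((\|\p\|_{(k)}-\|\q\|_{(k)})_+)_k$, which can exceed $2\delta$ when that sequence is not unimodal. The refined construction must therefore focus the mass transfer on a single pair of indices (or a small collection tailored to the active constraints) and ensure that the tail of $\q$ carries enough mass to be moved upward without producing negative entries; verifying that this is always possible, and that the resulting vector attains cost exactly $\delta$, is the combinatorial bookkeeping step that the approximate-majorization theorem resolves.
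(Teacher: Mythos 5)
Your reduction to equal dimensions via padding and your lower bound are correct and complete: the variational form of the total-variation distance with $S$ the $k$ largest entries of $\r$, combined with $\norm{\r}_{(k)}\ge\norm{\p}_{(k)}$, gives $\tfrac12\norm{\r-\q}_1\ge\norm{\p}_{(k)}-\norm{\q}_{(k)}$ for every feasible $\r$, and the restriction of the maximum to $k\in[\sr(\psi^{AB})]$ is justified exactly as you say. Note that the paper itself does less: after the same padding observation it simply cites Ref.~[Zanoni2023, Thm.~3] (which in turn rests on the approximate-majorization result of Horodecki et al.), so your attempt is a more self-contained derivation of that cited result rather than a different route from the paper's proof.

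The gap is in your upper-bound construction as literally described. Transferring $\delta$ units of mass ``upward to indices at or above $k^\star$'' does not in general produce $\r^\star\succ\p$: any prefix $[k]$ lying strictly above where the added mass lands has its partial sum unchanged, so a positive (even if sub-$\delta$) deficit $\norm{\p}_{(k)}-\norm{\q}_{(k)}$ at such a $k<k^\star$ is not repaired. The correct construction is simpler than your hedging suggests: add the entire $\delta$ to the \emph{largest} entry $q_1$ and remove total mass $\delta$ from the smallest entries (zeroing them from the bottom up). The resulting vector is still non-increasing, satisfies $\norm{\r^\star}_{(k)}=\min\{\norm{\q}_{(k)}+\delta,\,1\}\ge\norm{\p}_{(k)}$ for every $k$, and has $\tfrac12\norm{\r^\star-\q}_1=\delta$ exactly; feasibility of removing $\delta$ from the tail follows from $\delta=\norm{\p}_{(k^\star)}-\norm{\q}_{(k^\star)}\le 1-\norm{\q}_{(k^\star)}\le 1-q_1$. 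Your closing appeal to the approximate-majorization theorem would of course close the argument by citation (as the paper does), but as a self-contained proof the step you sketch ``concretely'' is the one that fails, and the pointwise-max recipe you worry about is a red herring once the top-loaded construction above is used.
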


The purified distance between two states $\rho$, $\sigma \in \md\left(AB\right)$ is defined as~\cite{Tomamichel2010,Gour2020}
\begin{equation}\label{eq:purifieddistance}
    P\left(\rho, \sigma\right) \eqdef \sqrt{1 - F^2\left(\rho, \sigma\right)},
\end{equation}
where $F\left(\rho, \sigma\right) \eqdef \norm{\sqrt{\rho}\sqrt{\sigma}}_1$ is the fidelity. 
Analogously to what was done above with the trace distance, one can define two conversion distances based on the purified distance~\cite{Zanoni2023}: For $\rho\in \md(AB), \sigma\in \md(A'B')$, the purified conversion distance is defined as
\begin{equation}\label{eq:PurConvDist}
    P\left(\rho \xrightarrow{\locc} \sigma\right) \eqdef \min_{\tau\in\md(A'B')}\left\{ P\left(\tau, \sigma\right) : \rho\xrightarrow{\locc}\tau \right\}
\end{equation}
and for $\psi\in \pure(AB), \phi \in \pure(A'B')$, the purified star conversion distance is defined as~\cite{Zanoni2023}
\begin{equation}
    P_\star\left(\psi \xrightarrow{\locc} \phi\right) \eqdef \min_{\r\in\prob(|A|)}\Big\{P(\r,\q)\;:\;\r\succ\p\Big\},
\end{equation}
where $\p,\q \in \prob^\downarrow(|A|)$ are the Schmidt coefficients of $\psi$ and $\phi$, respectively,  $P\left(\r, \q\right) \eqdef \sqrt{1 - F^2(\r, \q)}$ is the classical version of the purified distance, and $F(\r, \q)\eqdef \sum_x \sqrt{r_x q_x}$. The extension to states of different systems is done as in Eq.~\eqref{eq:TstarGen}.
 According to Ref.~\cite{Zanoni2023}, the following Theorem holds (compare also Ref.~\cite[Lem.~2]{Vidal2000} and Ref.~\cite[Eq.~(131)]{Kumagai2017}).
\begin{theorem}\label{thm:EquivPurifDist}
	Let $\psi\in\pure(AB),\phi\in\pure(A'B')$. Then,
	\be\label{eq12p70}
	P\left(\psi^{AB}\xrightarrow{\locc}\phi^{A'B'}\right)=P_{\star}\left(\psi^{AB} \xrightarrow{\locc} \phi^{A'B'}\right).
	\ee
\end{theorem}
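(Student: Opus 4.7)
The plan is to prove the two inequalities $P(\psi\xrightarrow{\locc}\phi)\le P_\star(\psi\xrightarrow{\locc}\phi)$ and the reverse; throughout, one may put $\psi$ and $\phi$ on a common Hilbert space via the padding-by-$\ketbra{11}{11}$ trick already used in Eq.~\eqref{eq:TstarGen}, which leaves Schmidt spectra invariant up to zero padding.

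For the easy direction $P\le P_\star$, I would fix any $\r\in\prob^\downarrow(|A|)$ with $\r\succ\p$. By Nielsen's theorem, one can deterministically convert $\psi$ via $\locc$ to a pure state $\psi_\r$ with Schmidt vector $\r$; by further local unitaries (themselves in $\locc$) one aligns its Schmidt bases with those of $\phi$ so that $|\langle\psi_\r|\phi\rangle|=\sum_x\sqrt{r_xq_x}=F(\r,\q)$. Hence $P(\psi_\r,\phi)=P(\r,\q)$, and since $\psi_\r$ is feasible on the left-hand side, taking the infimum over admissible $\r$ yields the inequality.

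For the harder direction, let $\tau\in\md(A'B')$ be any state with $\psi\xrightarrow{\locc}\tau$. I would record the classical outcomes of the $\locc$ protocol that realizes $\tau$ to obtain a pure-state ensemble $\{(p_i,\tau_i)\}_i$ with $\tau=\sum_ip_i\ketbra{\tau_i}{\tau_i}$, and invoke the Jonathan--Plenio/Lo--Popescu characterization of $\locc$ ensemble transformations to conclude that the non-increasingly sorted Schmidt vectors $\s_i$ of the $\tau_i$ satisfy $\bar\s\eqdef\sum_ip_i\s_i\succ\p$. Since $\phi$ is pure, $F^2(\tau,\phi)=\sum_ip_i|\langle\phi|\tau_i\rangle|^2\le\sum_ip_iF^2(\s_i,\q)$, using that $F(\s_i,\q)$ is the maximum pure-state overlap with fixed Schmidt spectra over choices of local Schmidt bases. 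Expanding $F^2(\r,\q)=\sum_xr_xq_x+2\sum_{x<y}\sqrt{r_xr_y}\sqrt{q_xq_y}$ exhibits $F^2(\cdot,\q)$ as a sum of a linear part and concave geometric means, so Jensen gives $\sum_ip_iF^2(\s_i,\q)\le F^2(\bar\s,\q)$; then by the rearrangement inequality $F^2(\bar\s,\q)\le F^2(\bar\s^\downarrow,\q)$. Majorization being permutation invariant, $\bar\s^\downarrow\succ\p$, and chaining the bounds yields $F^2(\tau,\phi)\le\max_{\r\in\prob^\downarrow,\,\r\succ\p}F^2(\r,\q)$, i.e., $P(\tau,\phi)\ge P_\star(\psi\xrightarrow{\locc}\phi)$. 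Taking the infimum over $\tau$ closes the argument.

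The hard part will be the appeal to the Jonathan--Plenio/Lo--Popescu theorem on $\locc$ ensemble transformations, which is not among the preliminaries; an alternative would be to argue directly that the infimum defining $P(\psi\xrightarrow{\locc}\phi)$ is attained by a pure $\tau$ (so that only Nielsen's theorem is needed), but that optimality claim requires its own justification since LOCC-reachable pure states from $\psi$ form a strict subset of the decompositions of LOCC-reachable mixed states. A secondary, purely technical, point is the concavity of $F^2(\cdot,\q)$, which reduces to concavity of the bivariate geometric mean $\sqrt{r_xr_y}$.
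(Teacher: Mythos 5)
Your argument is correct, but it takes a genuinely different route from the paper. The paper does not reprove this equivalence at all: its proof is two lines, reducing the unequal-dimension case to the equal-dimension one via the padding identity $P(\rho\xrightarrow{\locc}\sigma)=P(\rho\otimes\ketbra{11}{11}\xrightarrow{\locc}\ketbra{11}{11}\otimes\sigma)$ (proved in the appendix) and then citing Ref.~\cite{Zanoni2023} for the same-space statement (the paper itself points to \cite[Lem.~2]{Vidal2000} and \cite[Eq.~(131)]{Kumagai2017} as precedents). You instead reconstitute the classic Vidal--Jonathan--Nielsen argument: the direction $P\le P_\star$ via Nielsen's theorem plus Schmidt-basis alignment is exactly right, and the converse via (i) a pure-branch decomposition of any LOCC-reachable $\tau$, (ii) the necessity part of the Jonathan--Plenio ensemble criterion $\sum_i p_i\s_i\succ\p$, (iii) the bound $|\langle\phi|\tau_i\rangle|^2\le F^2(\s_i,\q)$ (von Neumann's trace inequality / maximal overlap at fixed Schmidt spectra), and (iv) concavity of $F^2(\cdot,\q)$ plus Jensen, is sound; note that $\bar\s=\sum_ip_i\s_i$ is automatically non-increasing as a convex combination of sorted vectors, so your rearrangement step is superfluous. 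What the paper's route buys is brevity and no new machinery; what yours buys is a self-contained proof, at the price of importing the Jonathan--Plenio theorem, which is indeed not among the preliminaries. Two ways to tighten your write-up within the paper's toolkit: (1) the appeal to Jonathan--Plenio can be replaced by strong monotonicity of the Vidal monotones $E_{(k)}$ under LOCC, which the paper already cites as \cite{Vidal1999} and which gives $\sum_ip_i\|\s_i\|_{(k)}\ge\|\p\|_{(k)}$ for all $k$ directly; (2) make explicit that "recording the classical outcomes" presupposes a fine-grained protocol --- a generic LOCC realization of $\tau$ may discard local ancillas, so the conditional states need not be pure; one should first refine the protocol (equivalently, use a product Kraus decomposition $\{A_j\otimes B_j\}$ of the LOCC channel) so that the branches applied to the pure input $\psi$ are pure and form a valid decomposition of $\tau$. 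Neither point is a fatal gap, but both deserve a sentence.
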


\section{Single-shot entanglement manipulation with LOCC}
\subsection{Pure state entanglement manipulation and the star conversion distance}\label{subsec:StarSingleShot}
The star conversion distance can be used to define and calculate an $\eps$-single-shot distillable entanglement, which we will do in the following.
For any $\eps\in[0,1]$ and $\psi\in\pure(AB)$, the $\eps$-single-shot distillable entanglement is defined as
\begin{align}\label{eq:PureSingleDistDef}
	&\distill^\eps\left(\psi^{AB}\right)\\
	&\eqdef\max_{m\in\mbb{N}}\Big\{\log m\;:\;T_\star\left(\psi^{AB}\xrightarrow{\locc} \Phi_m\right)\leq\eps\Big\}. \nonumber
\end{align}
From the closed formula for $T_\star$, we get the following result.

\begin{theorem}\label{thm:PureSingleDist}
Let $\eps\in[0,1)$, $\psi\in\pure(AB)$, $d\eqdef\sr(\psi^{AB})$, and $\p\in\prob^\downarrow(|A|)$ be the Schmidt coefficients of $\psi^{AB}$. The $\eps$-single-shot distillable entanglement of $\psi^{AB}$ is then given by
\be\label{1236}
\distill^\eps\left(\psi^{AB}\right)=\min_{k\in\{\ell,\ldots,d\}} \log\left\lfloor\frac{k}{\|\p\|_{(k)}-\eps}\right\rfloor\;,
\ee
where $\ell\in[d]$ is the integer satisfying 
$\|\p\|_{(\ell-1)}\leq\eps<\|\p\|_{(\ell)}$. 
\end{theorem}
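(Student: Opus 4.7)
The strategy is to reduce $\distill^\eps(\psi^{AB})$ to an integer program via the closed-form expression for $T_\star$ in Theorem~\ref{thm:StarDist}, and then to solve that program explicitly.

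The Schmidt coefficients of $\Phi_m$ form the vector $\q_m=(1/m,\ldots,1/m)$ (padded with zeros if necessary), so $\|\q_m\|_{(k)}=\min\{k/m,1\}$. First I would apply Theorem~\ref{thm:StarDist} to obtain
\begin{equation}
T_\star\!\left(\psi^{AB}\!\xrightarrow{\locc}\!\Phi_m\right) = \max_{k\in[d]}\bigl\{\|\p\|_{(k)}-\min\{k/m,1\}\bigr\},
\end{equation}
and then unpack $T_\star\leq\eps$ into a family of inequalities indexed by $k\in[d]$. For $k\geq m$ the $k$-th term equals $\|\p\|_{(k)}-1\leq 0\leq\eps$ and is vacuous; for $k<m$ with $\|\p\|_{(k)}\leq\eps$ (equivalently $k\leq\ell-1$) the condition $\|\p\|_{(k)}-k/m\leq\eps$ is again automatic. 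The nontrivial constraints therefore arise precisely for $k\in\{\ell,\ldots,d\}$ with $k<m$; because $m$ is a positive integer, each such $\|\p\|_{(k)}-k/m\leq\eps$ is equivalent to
\begin{equation}
m\leq f(k)\eqdef\Bigl\lfloor\tfrac{k}{\|\p\|_{(k)}-\eps}\Bigr\rfloor.
\end{equation}

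Next I would identify the optimal $m$ with $g\eqdef\min_{k\in\{\ell,\ldots,d\}}f(k)$. On the one hand, $m=g$ is feasible because $g\leq f(k)$ for every $k\in\{\ell,\ldots,d\}$, and in particular for those with $k<m$. On the other hand, suppose $m>g$ and let $k^\star$ achieve the outer minimum; the elementary bound $\|\p\|_{(k^\star)}-\eps\leq 1$ yields $f(k^\star)\geq k^\star$, so $k^\star\leq g<m$, whence the constraint at $k^\star$ is active and forces $m\leq g$, a contradiction. Taking logarithms gives the stated formula.

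I expect the main obstacle to be this last step: a priori the set of active constraints depends on $m$ (only indices $k<m$ contribute), so it is not immediately evident that minimizing over the full range $\{\ell,\ldots,d\}$ returns the correct answer. The bound $f(k)\geq k$, which follows simply from $\|\p\|_{(k)}-\eps\leq 1$, resolves this by guaranteeing that the outer argmin always lies strictly below its own value and hence inside the active range at the candidate optimum $m=g$.
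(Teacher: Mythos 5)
Your proof is correct, and it follows the same overall route as the paper: apply Thm.~\ref{thm:StarDist} to write $T_\star\left(\psi^{AB}\xrightarrow{\locc}\Phi_m\right)=\max_{k\in[d]}\left\{\|\p\|_{(k)}-\min\{k,m\}/m\right\}$, discard the constraints that are automatically satisfied (those with $k\ge m$ or $k\le\ell-1$), and convert each remaining one into the integer bound $m\le f(k)\eqdef\lfloor k/(\|\p\|_{(k)}-\eps)\rfloor$, which is legitimate because $\|\p\|_{(k)}-\eps>0$ for $k\ge\ell$. Where you genuinely differ is in how the $m$-dependence of the active constraint set is resolved: the paper spends most of its proof on a case analysis around the zero-error threshold $\tilde m$ (the largest $m$ with $T_\star=0$), first establishing that either $T_\star=0$ or $T_\star=\max_k\{\|\p\|_{(k)}-k/m\}$ and then treating the cases $\distill^\eps=\log\tilde m$ and $\distill^\eps>\log\tilde m$ separately, whereas your single observation $f(k)\ge k$ (from $\|\p\|_{(k)}-\eps\le 1$) places the minimizing index $k^\star$ strictly below any feasible candidate $m>g$, so the constraint at $k^\star$ is active and optimality of $m=g=\min_{k\in\{\ell,\ldots,d\}}f(k)$ follows in one stroke. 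Your version buys a shorter argument that sidesteps the threshold bookkeeping entirely, and the supporting steps you use (the equivalence $\|\p\|_{(k)}\le\eps\Leftrightarrow k\le\ell-1$, feasibility of $m=g$, and $g\ge\ell\ge1$) all check out, so no gap remains.
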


For any $\eps\in[0,1]$ and $\psi\in\pure(AB)$, the $\eps$-single-shot entanglement cost is defined as
\begin{align}\label{eq:DefCostPure}
&\cost^\eps\left(\psi^{AB}\right)\\
&\eqdef\min_{m\in\mbb{N}}\Big\{\log m\;:\;T_\star\left(\Phi_m \xrightarrow{\locc} \psi^{AB}\right)\leq\eps\Big\}. \nonumber
\end{align}
From the closed formula for $T_\star$, we get the following result.
\begin{theorem}\label{thm:PureSingleCost}
Let $\eps\in[0,1)$, $\psi\in\pure(AB)$, $d\eqdef\sr(\psi^{AB})$, and $\p\in\prob^\downarrow(|A|)$ be the Schmidt coefficients of $\psi^{AB}$.
The $\eps$-single-shot entanglement cost of $\psi^{AB}$ is then given by
\be\label{1242}
\cost^\eps\left(\psi^{AB}\right)=\log m,
\ee
where $m\in[d]$ is the integer satisfying 
$
\|\p\|_{(m-1)}<1-\eps\leq\|\p\|_{(m)}
$.
\end{theorem}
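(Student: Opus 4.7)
The plan is to combine the closed form for $T_\star$ provided by Theorem~\ref{thm:StarDist} with the definition~\eqref{eq:DefCostPure}. The Schmidt coefficients of $\Phi_m$ form the uniform vector of length $m$, so Theorem~\ref{thm:StarDist} immediately yields
\begin{equation}
T_\star\!\left(\Phi_m\xrightarrow{\locc}\psi^{AB}\right)=\max_{k\in[m]} f(k),\qquad f(k)\eqdef \frac{k}{m}-\|\p\|_{(k)}.
\end{equation}
The theorem will then follow once I (i) identify this maximum explicitly and (ii) apply the monotonicity of $k\mapsto\|\p\|_{(k)}$ to locate the smallest admissible $m$.

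The crux, which I expect to be the only non-routine step, is showing that this maximum is always attained at $k=m$, so that $T_\star(\Phi_m\to\psi^{AB}) = 1 - \|\p\|_{(m)}$. I would argue this by fixing any $k<m$ and verifying $f(m)-f(k)=(m-k)/m-\sum_{i=k+1}^m p_i\ge 0$ via a short case split on whether $p_{k+1}$ exceeds the uniform value $1/m$: if $p_{k+1}\ge 1/m$, then since $\p$ is non-increasing, $\|\p\|_{(k)}\ge k p_{k+1}\ge k/m$, giving $\sum_{i=k+1}^m p_i\le 1-k/m$; if instead $p_{k+1}<1/m$, then each $p_i$ with $i\ge k+1$ is strictly smaller than $1/m$ and the same bound holds. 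A conceptually cleaner variant is to notice that the differences $f(k)-f(k-1)=1/m-p_k$ are non-decreasing in $k$, so $f$ is discretely convex on $[m]$ and its maximum sits at an endpoint; the same case split then disposes of the comparison $f(m)\ge f(1)$.

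With $T_\star(\Phi_m\to\psi^{AB})=1-\|\p\|_{(m)}$ in hand, the constraint $T_\star\le\eps$ appearing in~\eqref{eq:DefCostPure} reduces to the single scalar inequality $\|\p\|_{(m)}\ge 1-\eps$. Since $k\mapsto\|\p\|_{(k)}$ is non-decreasing and equals $1$ at $k=d$, a minimum such $m$ exists in $[d]$ and is characterized uniquely by the sandwich $\|\p\|_{(m-1)}<1-\eps\le\|\p\|_{(m)}$. Plugging this $m$ back into~\eqref{eq:DefCostPure} delivers $\cost^\eps(\psi^{AB})=\log m$ as claimed, and the hypothesis $\eps<1$ merely ensures that $m\ge 1$ so the logarithm is defined.
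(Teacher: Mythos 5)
Your proof is correct and follows essentially the same route as the paper: both start from the closed form of Theorem~\ref{thm:StarDist} applied to $\Phi_m\xrightarrow{\locc}\psi^{AB}$ and reduce the family of constraints indexed by $k\in[m]$ to the single scalar inequality $\|\p\|_{(m)}\ge 1-\eps$, then finish by monotonicity of $k\mapsto\|\p\|_{(k)}$. The only cosmetic difference is that you show the maximum of $k/m-\|\p\|_{(k)}$ is attained at $k=m$ directly (equivalently $T_\star(\Phi_m\to\psi^{AB})=1-\|\p\|_{(m)}$), whereas the paper rewrites each constraint as $m\ge k/(\|\p\|_{(k)}+\eps)$ and shows that this sequence is non-decreasing in $k$; both hinge on the same elementary estimate $\|\p\|_{(k)}\ge k\,p_{k+1}$.
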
 

As explained in detail in App.~\ref{app:computability}, the above Theorems allow us to efficiently compute both the $\eps$-single-shot distillable entanglement  and the $\eps$-single-shot entanglement cost for multiple copies of a given pure state.

\begin{theorem}\label{thm:effComp}
    Let $n\in\mbb{N}$, $\eps\in[0,1)$, and $\psi\in\pure(AB)$. This implies that both $\distill^\eps\left(\psi^{\otimes n}\right)$ and $\cost^\eps\left(\psi^{\otimes n}\right)$ can be computed efficiently.   
\end{theorem}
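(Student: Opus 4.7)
The plan is to reduce both computations to the evaluation of a polynomial (in $n$) number of Ky-Fan norms $\|\p^{\otimes n}\|_{(k)}$, exploiting the fact that the Schmidt spectrum of $\psi^{\otimes n}$ is organized by types. Let $d\eqdef\sr(\psi^{AB})$ and $\p=(p_1,\ldots,p_d)\in\prob^\downarrow(d)$ be the Schmidt coefficients of $\psi$. The Schmidt coefficients of $\psi^{\otimes n}$ are then the products $v(\mathbf{t})\eqdef\prod_{i=1}^{d}p_i^{t_i}$ indexed by type vectors $\mathbf{t}=(t_1,\ldots,t_d)\in\mbb{N}^d$ with $\sum_i t_i=n$, each occurring with multinomial multiplicity $m(\mathbf{t})\eqdef\binom{n}{t_1,\ldots,t_d}$. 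The number of such types is $\binom{n+d-1}{d-1}=O(n^{d-1})$, which is polynomial in $n$ since $d$ is fixed by $\psi$.

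First I would enumerate all types, compute $v(\mathbf{t})$ (compared via $\sum_i t_i\log p_i$ to avoid underflow) and $m(\mathbf{t})$, and sort them by non-increasing value, yielding $v_1\geq v_2\geq\cdots\geq v_T$ with multiplicities $m_1,\ldots,m_T$ and $T=O(n^{d-1})$. A single cumulative pass produces the breakpoints $K_j\eqdef\sum_{j'\leq j}m_{j'}$ together with the associated Ky-Fan norms $A_j\eqdef\|\p^{\otimes n}\|_{(K_j)}=\sum_{j'\leq j}v_{j'}m_{j'}$. Between two consecutive breakpoints $K_{j-1}$ and $K_j$ the Ky-Fan norm is the linear function $\|\p^{\otimes n}\|_{(k)}=A_{j-1}+v_j(k-K_{j-1})$, so the entire piecewise-linear curve $k\mapsto\|\p^{\otimes n}\|_{(k)}$ is encoded by polynomially many parameters.

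For $\cost^\eps(\psi^{\otimes n})$, Theorem~\ref{thm:PureSingleCost} demands the unique $m$ with $\|\p^{\otimes n}\|_{(m-1)}<1-\eps\leq\|\p^{\otimes n}\|_{(m)}$. A linear or binary scan over the $A_j$ locates the interval $(K_{j-1},K_j]$ containing $m$, and inverting the linear formula on that interval recovers $m$ in constant time. For $\distill^\eps(\psi^{\otimes n})$, Theorem~\ref{thm:PureSingleDist} asks for the minimum of $\lfloor k/(\|\p^{\otimes n}\|_{(k)}-\eps)\rfloor$ over $k\in\{\ell,\ldots,d^n\}$, a range of exponential size. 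On each interval $(K_{j-1},K_j]$ the continuous objective is a M\"obius-type function $k\mapsto k/(c_j+v_jk)$ with $c_j\eqdef A_{j-1}-v_jK_{j-1}-\eps$; its derivative $c_j/(c_j+v_jk)^2$ has constant sign on the interval (the denominator is positive because $k\geq\ell$), so the minimum is attained at an endpoint, and the monotonicity survives application of the integer-valued floor. Thus the exponential minimization collapses to the $O(T)$ breakpoints, and the cutoff index $\ell$ is found during the same scan by detecting where the cumulative sum first exceeds $\eps$.

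The main obstacle is arithmetic precision rather than combinatorics: both the multinomial coefficients and the type values span exponential dynamic range, so every comparison and accumulation must be performed in the logarithmic domain, with the final output recovered as a single logarithm (which is in any case what the theorems return). With these precautions, the polynomial-time enumeration of types, the sort by value, and the constant-work-per-breakpoint evaluation of the two objectives yield poly$(n)$ algorithms for both $\cost^\eps(\psi^{\otimes n})$ and $\distill^\eps(\psi^{\otimes n})$, proving the theorem.
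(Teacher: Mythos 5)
Your reduction is correct and, in its skeleton, is the same as the paper's App.~C: exploit the type-class structure of $\p^{\otimes n}$ so that only $\binom{n+d-1}{d-1}=O(n^{d-1})$ distinct Schmidt values occur, sort them together with their multinomial multiplicities, and encode the piecewise-linear curve $k\mapsto\|\p^{\otimes n}\|_{(k)}$ by its breakpoints; the cost $\cost^\eps(\psi^{\otimes n})$ is then recovered, exactly as in the paper's Algorithm~5, by locating $1-\eps$ among the cumulative sums and inverting the linear segment. Where you genuinely deviate is the distillation minimization. The paper proves that $f(k)=k/(\|\p^{\otimes n}\|_{(k)}-\eps)$ is unimodal on $\{\ell,\ldots,d^n\}$ via the non-decreasing auxiliary function $g(k)=\|\p^{\otimes n}\|_{(k)}-k(\p^{\otimes n})^{\downarrow}_{k+1}-\eps$ and locates the minimizer by bisection in $O(n\log d)$ Ky-Fan evaluations, whereas you only use that $f$ is a M\"obius function, hence monotone, on each linearity segment, so the integer minimum on every segment sits at an endpoint and polynomially many evaluations suffice; this is more elementary (no unimodality lemma) at the price of $O(n^{d-1})$ evaluations, which is immaterial since the sort already costs that much. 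One point you should make explicit: the candidate set must contain $k=\ell$ itself and not only the breakpoints $K_j$. If $\ell$ lies strictly inside a segment on which $f$ is increasing, the minimum over $\{\ell,\ldots,d^n\}$ is attained at $\ell$ and at no breakpoint, and because the reported quantity is $\log\lfloor f(k)\rfloor$ the discrepancy can survive the floor; your phrase ``attained at an endpoint'' does cover this (the left endpoint of the first feasible segment is $\ell$), but ``collapses to the $O(T)$ breakpoints'' does not, so add $\ell$ to the scan — a one-line fix after which the argument is complete. Your observations that the floor commutes with the minimization (floor being non-decreasing) and that comparisons should be done in the log domain match, and slightly refine, the paper, which treats real arithmetic as exact.
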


The main idea behind the proof is to use that whilst $\p^{\otimes n}$ has a number of entries that is exponential in $n$, it only has a polynomial number of distinct entries, which allows to determine $\|\p^{\otimes n}\|_{(k)}$ and therefore $\distill^\eps\left(\psi^{\otimes n}\right)$ and $\cost^\eps\left(\psi^{\otimes n}\right)$. In App.~\ref{app:computability}, we provide explicit algorithms to determine these quantities. 

Moreover, Thms.~\ref{thm:PureSingleDist} and \ref{thm:PureSingleCost} also allow us to obtain the second-order asymptotics of the entanglement cost and the distillable entanglement: Let the Gaussian cumulative distribution function with mean value $\mu$ and variance $\nu$ be denoted by
\begin{align}
	\Phi_{\mu,\nu}(x) \eqdef \frac{1}{\sqrt{2\pi \nu}} \int_{-\infty}^{x} e^{-\frac{(t-\mu)^2}{2\nu}} \diff t
\end{align}
and the entropy variance $V(\p)$ of a probability distribution $\p$ by
\begin{align}
	V(\p)\eqdef \sum_i p_i (-\log p_i - H(\p))^2,
\end{align}
where $H(\p)\eqdef-\sum_i p_i \log p_i $ is the Shannon entropy. As a shorthand notation, we will also use $\Phi$ to denote $\Phi_{0,1}$. 
We begin by stating the following Lemma, which is a consequence of Ref.~\cite[Lem.~12]{Kumagai2017} (see also Ref.~\cite[Lem.~16]{chubb2018beyond}).
\begin{lemma}\label{lem: second-order 1}
	For any distribution $\p$ such that $V(\p) > 0$, any natural number $n$, and $\eps \in [0,1)$, let
	\begin{align}
		f_{n,\eps}(\p):=\min \left\{k: \|\p^{\otimes n}\|_{(k)} > \eps\right\}, \nonumber \\
		f'_{n,\eps}(\p):=\min \left\{k: \|\p^{\otimes n}\|_{(k)} \geq \eps\right\}.
	\end{align}
	Then we have that
	\begin{align}
		&\lim_{n\to \infty} \frac{\log {f'_{n,\eps}(\p)} - n H(\p)}{\sqrt{n V(\p)}}\nonumber\\
		&=\lim_{n\to \infty} \frac{\log {f_{n,\eps}(\p)} - n H(\p)}{\sqrt{n V(\p)}} \nonumber\\
		&= \Phi^{-1}(\eps).
	\end{align}    
\end{lemma}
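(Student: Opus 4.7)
The plan is to reduce the statement to the classical central limit theorem applied to the log-likelihood $L_n \eqdef -\log \p^{\otimes n}(X_1,\ldots,X_n)$, where $X_1,\ldots,X_n$ are drawn i.i.d.\ from $\p$. Since $\mathbb{E}[L_n] = nH(\p)$ and $\mathrm{Var}(L_n) = nV(\p) > 0$, the standard CLT yields
\begin{align}
    \Pr\left[L_n \le nH(\p) + \sqrt{nV(\p)}\,z\right] \xrightarrow{n\to\infty} \Phi(z)
\end{align}
for every $z \in \mathbb{R}$. The key observation connecting this to the Ky--Fan norms is that sorting the entries of $\p^{\otimes n}$ in decreasing order is the same as listing sequences in order of increasing $L_n$. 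Writing $N(t) \eqdef |\{x^n : L_n(x^n)\le t\}|$, one then has $\|\p^{\otimes n}\|_{(N(t))} = \Pr[L_n \le t]$, and since every sequence in $\{L_n \le t\}$ carries probability at least $2^{-t}$, also $N(t) \le 2^{t}$.

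With this dictionary in hand, I would first prove the upper bound on $f_{n,\eps}$. Fix $\delta > 0$ and set $t_+ \eqdef nH(\p) + \sqrt{nV(\p)}\,(\Phi^{-1}(\eps) + \delta)$. By the CLT, $\Pr[L_n \le t_+] > \eps$ for all sufficiently large $n$, so $\|\p^{\otimes n}\|_{(N(t_+))} > \eps$ and hence $f_{n,\eps}(\p) \le N(t_+) \le 2^{t_+}$. Rearranging and letting $\delta \to 0^+$ gives
\begin{align}
    \limsup_{n\to\infty} \frac{\log f_{n,\eps}(\p) - nH(\p)}{\sqrt{nV(\p)}} \le \Phi^{-1}(\eps).
\end{align}

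For the matching lower bound on $f'_{n,\eps}$, I would use that for any finite set $S$ of sequences and any threshold $t$, splitting $S$ into $S \cap \{L_n \le t\}$ and $S \cap \{L_n > t\}$ gives $\Pr[S] \le \Pr[L_n \le t] + |S|\cdot 2^{-t}$, since the excluded sequences each have probability strictly less than $2^{-t}$. Fix $\delta > 0$ and take $t_- \eqdef nH(\p) + \sqrt{nV(\p)}\,(\Phi^{-1}(\eps) - 2\delta)$; for large $n$ the CLT yields $\Pr[L_n \le t_-] < \eps - \delta$, so any $S$ with $|S| \le \delta\,2^{t_-}$ satisfies $\Pr[S] < \eps$. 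Applied to the top-$k$ set with $k = \lfloor \delta\,2^{t_-}\rfloor$, this forces $f'_{n,\eps}(\p) > k$, whence $\liminf_{n\to\infty}\frac{\log f'_{n,\eps}(\p) - nH(\p)}{\sqrt{nV(\p)}} \ge \Phi^{-1}(\eps) - 2\delta$. Letting $\delta \to 0^+$ and invoking $f'_{n,\eps}(\p) \le f_{n,\eps}(\p)$ (the condition $\ge\eps$ is weaker than $>\eps$) pinches both limits to $\Phi^{-1}(\eps)$.

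The main obstacle I anticipate is administrative rather than conceptual: handling ties in the sorted probabilities of $\p^{\otimes n}$ so that the staircase $N(t)$ meshes cleanly with the strict versus non-strict thresholds defining $f_{n,\eps}$ and $f'_{n,\eps}$, and confirming that the corner case $\eps = 0$ (where $\Phi^{-1}(0) = -\infty$) behaves correctly because then $f_{n,0} = f'_{n,0} = 1$. The $\delta$-slack in the choice of $t_\pm$ together with continuity of $\Phi^{-1}$ on $(0,1)$ absorb these issues.
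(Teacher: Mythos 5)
Your route is genuinely different from, and more self-contained than, the paper's: the paper invokes Ref.~\cite[Lem.~12]{Kumagai2017} (see also Ref.~\cite[Lem.~16]{chubb2018beyond}) as a black box --- which is precisely the statement $\lim_{n}\|\p^{\otimes n}\|_{(k_n(x))}=\Phi(x)$ --- and then derives the upper bound by a direct $\delta$-argument and the lower bound by a contradiction argument along a subsequence. You instead reprove that input from scratch via the classical CLT for the log-likelihood $L_n$ together with the counting dictionary $\|\p^{\otimes n}\|_{(N(t))}=\Pr[L_n\le t]$ and $N(t)\le 2^{t}$; the dictionary and your upper bound on $f_{n,\eps}$ are correct, so your argument would be elementary and self-contained.

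However, one step of your lower bound fails as written. With $t_-=nH(\p)+\sqrt{nV(\p)}\,(\Phi^{-1}(\eps)-2\delta)$, the CLT gives $\Pr[L_n\le t_-]\to\Phi(\Phi^{-1}(\eps)-2\delta)$, and since the standard normal density never exceeds $1/\sqrt{2\pi}<1/2$ we have $\Phi(\Phi^{-1}(\eps))-\Phi(\Phi^{-1}(\eps)-2\delta)\le 2\delta/\sqrt{2\pi}<\delta$; hence $\Pr[L_n\le t_-]>\eps-\delta$ for all sufficiently large $n$, which is the opposite of the inequality you assert. The repair is routine precisely because the tail budget enters $\log f'_{n,\eps}$ only as an additive constant: either set $t_-\eqdef nH(\p)+\sqrt{nV(\p)}\,\Phi^{-1}(\eps-2\delta)$, so that the limiting probability is $\eps-2\delta$ and a budget of $\delta$ suffices, or keep your $t_-$ and take the top-$k$ set with $k=\lfloor\gamma\,2^{t_-}\rfloor$ for any fixed $0<\gamma<\eps-\Phi(\Phi^{-1}(\eps)-2\delta)$ (this gap is strictly positive). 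In either case $\log f'_{n,\eps}\ge t_-+O(1)$, giving $\liminf_{n\to\infty}\bigl(\log f'_{n,\eps}(\p)-nH(\p)\bigr)/\sqrt{nV(\p)}\ge\Phi^{-1}(\eps-2\delta)$ (respectively $\Phi^{-1}(\eps)-2\delta$), and letting $\delta\to0^{+}$ with continuity of $\Phi^{-1}$ on $(0,1)$, together with $f'_{n,\eps}\le f_{n,\eps}$ and the degenerate case $\eps=0$, completes the proof.
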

By combining this Lemma with Thm.~\ref{thm:PureSingleCost}, one immediately obtains the following Proposition. 

\begin{proposition}\label{prop:SecondOrderPureCost}
	For any pure state $\psi\in\pure(AB)$ with Schmidt vector $\p$, $V(\p) > 0$, and $\eps \in [0,1)$, it holds that
	\begin{align}
		\cost^\eps(\psi^{\otimes n}) = n H(\p) - \Phi^{-1}(\eps) \sqrt{n V(\p)} + o(\sqrt{n}).
	\end{align}   
\end{proposition}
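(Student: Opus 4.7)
The plan is to read off the cost as a Ky-Fan-norm threshold from Theorem~\ref{thm:PureSingleCost} applied to $\psi^{\otimes n}$, and then invoke Lemma~\ref{lem: second-order 1} with parameter $1-\eps$ in place of $\eps$. First I would observe that the non-increasingly ordered Schmidt coefficients of $\psi^{\otimes n}$ are precisely the non-increasingly ordered entries of $\p^{\otimes n}$, so that $\|(\text{Schmidt of }\psi^{\otimes n})\|_{(k)} = \|\p^{\otimes n}\|_{(k)}$. By Theorem~\ref{thm:PureSingleCost},
\begin{equation}
\cost^\eps(\psi^{\otimes n}) = \log m_n,\qquad \|\p^{\otimes n}\|_{(m_n-1)} < 1-\eps \le \|\p^{\otimes n}\|_{(m_n)}.
\end{equation}

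Next I would identify $m_n$ with one of the quantities appearing in Lemma~\ref{lem: second-order 1}. Since $k\mapsto \|\p^{\otimes n}\|_{(k)}$ is monotonically non-decreasing, the integer $m_n$ is the smallest $k$ such that $\|\p^{\otimes n}\|_{(k)} \ge 1-\eps$; that is,
\begin{equation}
m_n = f'_{n,1-\eps}(\p).
\end{equation}
Applying Lemma~\ref{lem: second-order 1} with parameter $1-\eps\in(0,1]$ in place of $\eps$ (the cases of interest, i.e.\ $\eps\in(0,1)$, fall in its stated range) gives
\begin{equation}
\log m_n = nH(\p) + \Phi^{-1}(1-\eps)\,\sqrt{nV(\p)} + o(\sqrt{n}).
\end{equation}

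Finally, the symmetry of the standard normal CDF yields $\Phi^{-1}(1-\eps) = -\Phi^{-1}(\eps)$, so substituting above gives the claimed expansion
\begin{equation}
\cost^\eps(\psi^{\otimes n}) = nH(\p) - \Phi^{-1}(\eps)\,\sqrt{nV(\p)} + o(\sqrt{n}).
\end{equation}

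There is essentially no hard step: the work was already absorbed into Theorem~\ref{thm:PureSingleCost} (which collapses the cost to a Ky-Fan-norm threshold) and into Lemma~\ref{lem: second-order 1} (which supplies the Gaussian expansion of that threshold via the classical central limit theorem applied to the log-probabilities of $\p^{\otimes n}$). The only subtlety to flag is the reindexing $\eps\leftrightarrow 1-\eps$ together with the symmetry identity for $\Phi^{-1}$, and the degenerate boundary $\eps=0$, which would require $\Phi^{-1}(1)=+\infty$ and therefore is not meaningfully covered; the assumption $V(\p)>0$ ensures the Gaussian fluctuations have nontrivial scale so the second-order term is genuine.
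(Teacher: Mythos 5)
Your proposal is correct and follows exactly the paper's route: identify $\cost^\eps(\psi^{\otimes n})=\log f'_{n,1-\eps}(\p)$ via Thm.~\ref{thm:PureSingleCost}, then apply Lem.~\ref{lem: second-order 1} together with $\Phi^{-1}(1-\eps)=-\Phi^{-1}(\eps)$. Your remark about the degenerate boundary $\eps=0$ is a fair observation, but otherwise there is nothing to add.
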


With slightly more work, it is also possible to obtain the second-order asymptotics of the distillable entanglement.
\begin{proposition}\label{prop:SecondOrderPureDist}
	For any pure state $\psi\in\pure(AB)$ with Schmidt vector $\p$, $V(\p) > 0$, and $\eps \in [0,1)$, it holds that
	\begin{align}
		\distill^\eps(\psi^{\otimes n}) = n H(\p) + \Phi^{-1}(\eps) \sqrt{n V(\p)} + o(\sqrt{n}).
	\end{align}
\end{proposition}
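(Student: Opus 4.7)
The plan is to derive matching upper and lower bounds for $\distill^\eps(\psi^{\otimes n})$ directly from the closed-form expression in Theorem~\ref{thm:PureSingleDist} together with the asymptotics of $f_{n,\eps}(\p)$ given by Lemma~\ref{lem: second-order 1}, in the same spirit as the proof of Proposition~\ref{prop:SecondOrderPureCost} from Theorem~\ref{thm:PureSingleCost}.

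For the lower bound, I would observe that for every $k\geq\ell$ appearing in the minimum of Eq.~\eqref{1236} one has $0<\|\p^{\otimes n}\|_{(k)}-\eps\leq 1-\eps$, so using $\log\lfloor x\rfloor\geq \log x - 1$ for $x\geq 1$,
\begin{align}
\log\left\lfloor\frac{k}{\|\p^{\otimes n}\|_{(k)}-\eps}\right\rfloor \geq \log k - \log(1-\eps) - 1.
\end{align}
Since the right-hand side is increasing in $k$ and the range of the minimum is $k\geq \ell$, taking the minimum on both sides gives $\distill^\eps(\psi^{\otimes n}) \geq \log\ell - \log(1-\eps) - 1 = \log f_{n,\eps}(\p) + O(1)$. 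Lemma~\ref{lem: second-order 1} then yields $\log f_{n,\eps}(\p) = nH(\p) + \Phi^{-1}(\eps)\sqrt{nV(\p)} + o(\sqrt{n})$, which, since $O(1) = o(\sqrt{n})$, produces the required lower bound.

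For the upper bound, the subtle point is that evaluating the minimum at $k=\ell$ directly could leave a vanishingly small denominator $\|\p^{\otimes n}\|_{(\ell)}-\eps$, spoiling the estimate. To sidestep this, for any fixed $\eta\in(0,1-\eps)$ I would take the trial index $k_n \eqdef f_{n,\eps+\eta}(\p)$, which is valid since $\|\p^{\otimes n}\|_{(\sr(\psi^{\otimes n}))}=1>\eps+\eta$ forces $k_n\leq \sr(\psi^{\otimes n})$, and which satisfies $\|\p^{\otimes n}\|_{(k_n)}>\eps+\eta$ by definition. This gives
\begin{align}
\distill^\eps(\psi^{\otimes n}) \leq \log\left\lfloor\frac{k_n}{\|\p^{\otimes n}\|_{(k_n)}-\eps}\right\rfloor \leq \log k_n + \log(1/\eta).
\end{align}
Applying Lemma~\ref{lem: second-order 1} with $\eps+\eta\in(0,1)$ in place of $\eps$ yields $\log k_n = nH(\p) + \Phi^{-1}(\eps+\eta)\sqrt{nV(\p)} + o(\sqrt{n})$. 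Taking $\limsup_n$ for fixed $\eta$ absorbs the $\log(1/\eta)$ term into the $o(\sqrt{n})$ correction, after which sending $\eta\to 0^+$ and invoking continuity of $\Phi^{-1}$ at $\eps$ produces the matching upper bound.

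The principal obstacle, handled as above, is the potential smallness of $\|\p^{\otimes n}\|_{(k)}-\eps$ near the critical index $\ell$; the key is to decouple the large-$n$ limit from the limit $\eta\to 0^+$, so that Lemma~\ref{lem: second-order 1} is only invoked at fixed values of its parameter rather than along a shrinking sequence.
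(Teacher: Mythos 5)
Your proposal is correct and follows essentially the same route as the paper's proof: the lower bound via $\log\lfloor x\rfloor\ge\log x-1$ and the crude estimate $\|\p^{\otimes n}\|_{(k)}\le 1$, which reduces the minimum to $\log f_{n,\eps}(\p)$, and the upper bound via a trial index whose Ky-Fan norm exceeds $\eps$ by a definite margin, both combined with Lem.~\ref{lem: second-order 1}. The only difference is bookkeeping in the upper bound — the paper evaluates at $k_n(x)$ with $x=\Phi^{-1}(\eps+2\delta)$ and then sets $\delta=1/n$, whereas you keep the slack $\eta$ fixed, take $\limsup_n$, and only then send $\eta\to0^+$ using continuity of $\Phi^{-1}$ — which is an equivalent (and arguably slightly cleaner) way of organizing the same estimate.
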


The above two Propositions are the analogues of Ref.~\cite[Eqs.~(136) and (137)]{Kumagai2017}, where the error $\eps$ was measured with $P$ instead of $T_\star$. Since the considered distance measures are topologically equivalent, one immediately recovers the asymptotic entanglement cost and distillable entanglement of pure entangled states~\cite{Horodecki2009}.

\subsection{Single-shot entanglement cost of mixed states}
In this section, we compute an $\eps$-single-shot entanglement cost of bipartite mixed states. To this end, we review and study a convenient conversion distance first: On bipartite states $\psi\in \pure(AB)$, and for any $k\in [|A|]$, let
\be\label{eq:vidal}
E_{(k)}(\psi^{AB})\eqdef 1-\|\p\|_{(k)}\;,
\ee 
where $\p$ contains the Schmidt coefficients of $\psi^{AB}$. Applying a convex-roof extension, for mixed states $\rho\in\md(AB)$, define
\begin{align}\label{kyfan}
E_{(k)}\left(\rho^{AB}\right)\eqdef&\inf\sum_xp_xE_{(k)}\left(\psi^{AB}_x\right)\nonumber \\
=& \inf \left(1-\sum_x p_x\left\|\tr_A\left[\psi^{AB}_x\right]\right\|_{(m)}\right),
\end{align}
where the infimums are over all pure-state decompositions $\rho^{AB}=\sum_xp_x\psi^{AB}_x$~\cite{Terhal2000,Hayashi2006, Buscemi2011, Yue2019}. In the proof of Thm.~\ref{lem1291}, we will see that these infimums are attained and that it is sufficient to consider decompositions into at most $|AB|^2$ pure states.

According to Refs.~\cite{Vidal1999, Nielsen1999}, for $\psi,\phi \in \pure(AB)$, $\psi^{AB}\xrightarrow{\locc}\phi^{AB}$ iff
\be
 E_{(k)}(\psi^{AB})\geq E_{(k)}(\phi^{AB})\;\;\forall\;k\in[|A|]\;.
\ee 
As we will see now, this operational interpretation can be partially extended to mixed states.
\begin{theorem}\label{lem1291} 
Let $\rho\in\md(AB)$ and $m\in\mbb{N}$. Then,
\be
P^2\left(\Phi_m\xrightarrow{\locc} \rho^{AB} \right)=E_{(m)}\left(\rho^{AB}\right),
\ee
where $E_{(m)}\left(\rho^{AB}\right)$ is defined in Eq.~\eqref{kyfan} with $k=m$.
\end{theorem}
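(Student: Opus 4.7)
The plan is to prove the equality $P^2\!\left(\Phi_m\xrightarrow{\locc}\rho^{AB}\right)=E_{(m)}(\rho^{AB})$ via two matching inequalities, using Uhlmann's theorem together with the direct bound $|\langle\phi|\psi\rangle|^2\leq\|\psi^A\|_{(m)}$ valid whenever $\phi$ has Schmidt rank at most $m$ (an immediate Cauchy--Schwarz step through the rank-$m$ projector onto the Schmidt support of $\phi$ on $A$). The structural starting point is that every state $\tau$ LOCC-reachable from $\Phi_m$ admits a decomposition $\tau=\sum_y q_y\,\phi_y$ in which each $\phi_y$ is pure with Schmidt rank at most $m$: unfolding an LOCC protocol along its classical-message branches turns it into a pure-to-pure conversion on each branch, and Schmidt rank cannot grow under local operations. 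Conversely, Nielsen's theorem gives deterministic conversion $\Phi_m\to\phi_y$ whenever $\phi_y$ has Schmidt rank $\leq m$ (one checks that the uniform Schmidt vector of $\Phi_m$ is majorized by any Schmidt vector supported on at most $m$ entries), and shared randomness sampling $y$ with probability $q_y$ then produces any such $\tau$.

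For the direction $P^2\geq E_{(m)}$, I would fix any reachable $\tau$ and form the canonical purification $|\tau\rangle_{ABE}=\sum_y\sqrt{q_y}\,|\phi_y\rangle|y\rangle_E$. Uhlmann supplies a purification $|\rho'\rangle_{ABE}$ of $\rho$ on the same $E$ with $F(\tau,\rho)=|\langle\tau|\rho'\rangle|$; expanding $|\rho'\rangle=\sum_y|\xi_y\rangle|y\rangle_E$ yields a pure-state decomposition $\rho=\sum_y r_y\psi_y$ with $r_y=\||\xi_y\rangle\|^2$ and $|\psi_y\rangle=|\xi_y\rangle/\sqrt{r_y}$. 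A Cauchy--Schwarz step gives
\begin{equation*}
F^2(\tau,\rho)=\Bigl|\sum_y\sqrt{q_y r_y}\langle\phi_y|\psi_y\rangle\Bigr|^2\leq\sum_y r_y\,|\langle\phi_y|\psi_y\rangle|^2,
\end{equation*}
and the bound above yields $|\langle\phi_y|\psi_y\rangle|^2\leq\|\psi_y^A\|_{(m)}$. Combined with the definition of $E_{(m)}$ as an infimum over decompositions, this gives $F^2(\tau,\rho)\leq 1-E_{(m)}(\rho)$, which maximizes over $\tau$ to $P^2\geq E_{(m)}$.

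For the matching direction $P^2\leq E_{(m)}$, I would pick an optimizer $\rho=\sum_y r_y\psi_y$ of the $E_{(m)}$ infimum -- which is attained because Caratheodory bounds the support to at most $|AB|^2$ pure terms and the objective is continuous on a compact set -- select for each $y$ a Schmidt-rank-$\leq m$ pure state $\phi_y$ with $\langle\phi_y|\psi_y\rangle=\sqrt{\|\psi_y^A\|_{(m)}}$ after a phase adjustment, and tune the weights to $q_y=r_y\|\psi_y^A\|_{(m)}/(1-E_{(m)}(\rho))$. These are exactly the weights that saturate the Cauchy--Schwarz step above, so $\tau=\sum_y q_y\phi_y$, reachable from $\Phi_m$ by shared randomness plus branch-wise Nielsen conversion, satisfies $F^2(\tau,\rho)\geq 1-E_{(m)}(\rho)$. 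The denominator is nonzero because $\|\psi^A\|_{(m)}\geq m/|A|>0$ forces $E_{(m)}(\rho)<1$ for any $m\geq 1$.

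The main obstacle is pairing the purification of $\tau$ correctly with a pure-state decomposition of $\rho$: Uhlmann only supplies an abstract optimizing purification, and one must translate it into a concrete decomposition via a basis expansion on the auxiliary system $E$. Caratheodory makes the infimum attained, while the Cauchy--Schwarz tightness condition pinpoints exactly the mixing weights $q_y$ of the optimal reachable $\tau$ in terms of the optimal decomposition of $\rho$, exposing the clean variational structure behind the equality.
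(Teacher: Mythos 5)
Your proof is correct, and it reaches the result by a route that overlaps with but does not coincide with the paper's. Both arguments rest on the same structural facts: the states reachable from $\Phi_m$ by LOCC are exactly the mixtures of pure states with Schmidt rank at most $m$ (Schmidt-rank monotonicity plus Nielsen's theorem and shared randomness), and Uhlmann's theorem together with the purification--decomposition correspondence converts the fidelity optimization into an optimization over pure-state decompositions of $\rho$. The difference is in how the pure-state overlap bound is obtained and how the optimization is organized. The paper first settles the pure-state case by invoking Thm.~\ref{thm:EquivPurifDist} ($P=P_\star$) and the approximate-majorization closed form, obtaining $\max_{\sr(\phi)\le m}|\langle\phi|\psi\rangle|^2=\norm{\p}_{(m)}$ as a corollary, and then runs a single chain of equalities (with a fixed purifying system of dimension $\ge|AB|^2$ and a classical Cauchy--Schwarz step eliminating the weight vector $\q$) that simultaneously yields the formula and the attainment statements. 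You instead prove the overlap bound $|\langle\phi|\psi\rangle|^2\le\norm{\psi^A}_{(m)}$ directly by projecting through the rank-$m$ local support of $\phi$, and you split the theorem into two inequalities, exhibiting an explicit optimal $\tau$ (branch-wise truncations $\phi_y$ with weights $q_y\propto r_y\norm{\psi_y^A}_{(m)}$) for achievability. Your version is more elementary and self-contained --- it does not need the star-conversion-distance machinery at all and makes the variational structure (Cauchy--Schwarz saturation) explicit --- while the paper's version buys, as a by-product of the same computation, the attainment of the infimum in Eq.~\eqref{kyfan} and the $|AB|^2$ bound on the number of terms, which are then reused in Thm.~\ref{cmaint}.

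Two small points to tighten. First, in the converse direction the purifying register must be large enough to purify $\rho$; if the LOCC protocol has fewer branches than $\mathrm{rank}(\rho^{AB})$, pad the decomposition of $\tau$ with zero-weight branches (the paper fixes $|E|\ge|AB|^2$ once and for all). Second, your attainment claim for the convex-roof infimum is stated rather tersely; either cite the standard convex-roof/Carath\'eodory compactness argument, or note that attainment is not actually needed for your achievability direction --- taking an $\eps$-optimal decomposition gives $F^2\ge 1-E_{(m)}(\rho)-\eps$ for every $\eps>0$, which suffices since $P^2\left(\Phi_m\xrightarrow{\locc}\rho^{AB}\right)$ is an infimum over reachable states.
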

Note that a similar characterization of $$P^2\left(\Phi_m\xrightarrow{\locc} \rho^{AB} \right)$$ previously appeared in Ref.~\cite[Lem.~1]{Buscemi2011} where Refs.~\cite{Bowen2007,Hayashi2006} were credited. However, it was not expressed in terms of $E_{(m)}$. In the proof of the Theorem, we will show that this is possible due to certain properties of the Schmidt number $\sr(\rho^{AB})$ defined as~\cite{Terhal2000}
\begin{align}\label{eq:SchmidtNumberTerhal}
    \sr(\rho^{AB})=\inf_{\{p_x, \psi_x\}} \max_{x\in[k]:p_x\ne0} \sr(\psi^{AB}_x),
\end{align}
where the infimum is over all pure-state decompositions of $\rho^{AB}=\sum_{x\in[k]}p_x\psi_x^{AB}$. 

Whilst this result is interesting in itself, it allows us to characterize the $\eps$-single-shot entanglement cost: 
According to Ref.~\cite[Def.~1]{Buscemi2011}, for any $\eps\in[0,1]$, the $\eps$-single-shot entanglement cost is defined as 
\begin{align}\label{econe}
&\cost^{\eps}\left(\rho^{AB}\right)\\
&\eqdef\min\Big\{\log m\;:\;P^2\left(\Phi_m\xrightarrow{\locc} \rho^{AB}\right)\leq \eps\Big\}.\nonumber
\end{align}
Note that at first glance, for pure states, this definition conflicts with Eq.~\eqref{eq:DefCostPure}. However, as we will show later, the two definitions coincide. In a bit of abuse of notation, we will thus not differentiate between the two, e.g., by adding a subscript to denote the distance according to which we determine the allowed error in the transformation. 

To provide formulas for the $\eps$-single-shot entanglement cost, it is convenient to utilize the concept of \textit{classical extensions} of a bipartite density matrix $\rho\in\md(AB)$: With any decomposition $\{p_x,\rho^{AB}_x\}_{x\in[k]}$ of $\rho^{AB}$, i.e., $\rho^{AB}=\sum_{x\in[k]}p_x\rho_x^{AB}$, one associates a classical-quantum-state (see Ref.~\cite[Eq.~(3)]{Buscemi2011})
\be
\rho^{XAB}\eqdef\sum_{x\in[k]} p_x\ketbra{x}{x}^X\otimes\rho_{x}^{AB},
\ee
where $X$ is a classical system of dimension $k$. Such an extension will be called a \textit{regular extension} of $\rho^{AB}$ if all $\rho_x^{AB}$ are \textit{pure states}.

We further denote by $H_{\max}$ the conditional max-entropy, i.e., 
\begin{align}\label{eq:condMaxEnt}
    H_{\max}(A|B)_{\rho}\eqdef \max_{\tau\in\md(B)} \log\tr\left[\Pi_\rho^{AB} \left(I^A\otimes \tau^B \right) \right],
\end{align}
where $\Pi_\rho^{AB}$ is the projection onto the support of $\rho^{AB}=\tr_C\left[\rho^{ABC}\right]$, and by $H_{\max}^{\eps}$ its smoothed version defined as
\begin{align}\label{eq:smoothCondMaxEnt}
    H_{\max}^{\eps}(A|B)_{\rho}=\min_{\omega \in \mb_{\eps}(\rho^{AB})} H_{\max}(A|B)_{\omega},
\end{align} 
where 
\begin{align}\label{eq:epsBall}
    \mb_\eps\left(\rho^A\right)=\left\{\tau\in\md(A): \frac{1}{2}\norm{\tau^A-\rho^A}_1\le\eps\right\}.
\end{align} 
For a classical extension 
\begin{align}
    \rho^{XAB}=\sum_{x\in [k]} p_x\ketbra{x}{x}^X\otimes\rho_x^{AB}
\end{align} 
of a bipartite state $\rho\in\md(AB)$, this implies that
\begin{align}\label{1260}
H_{\max}(A|X)_{\rho} &=\max_{x\in [k]:p_x\ne0}\log\tr\left[\Pi_{\rho_x}^A\right],
\end{align}
where $\rho_x^A\eqdef\tr_B\left[\rho_x^{AB}\right]$, and
\be\label{omex}
H_{\max}^{\eps}(A|X)_\rho=\min_{\omega\in\mb_{\eps}\left(\rho^{XA}\right)}\max_{x\in [k]:q_x\ne0}\log\tr\left[\Pi_{\omega_x}^A\right]\;,
\ee
with
\begin{align}
    \omega^{XA}=\sum_{x\in [k]} q_x\ketbra{x}{x}^X\otimes\omega_x^A.
\end{align}
It was pointed out in Ref.~\cite{Buscemi2011} (see also the proof of Thm.~\ref{lem1291}) that 
\begin{align}\label{eq:costSchmidtNumber}
    &\cost^{\eps}\left(\rho^{AB}\right)\\
    &=\min_{\omega\in\md(AB)}\Big\{\log \sr\left(\omega^{AB}\right):\;P^2\left(\omega^{AB},\rho^{AB}\right)\leq \eps\Big\}\nonumber \\
    &=\inf_{\omega^{XAB}}\Big\{H_{\max}(A|X)_{\omega}:\;P^2\left(\omega^{AB},\rho^{AB}\right)\leq \eps\Big\},\nonumber
\end{align}
where the infimum is over all regular extensions of $\rho^{AB}$.

This means that the $\eps$-single-shot entanglement cost can be expressed in terms of a smoothed version of the Schmidt number/conditional max-entropy. As the main result of Ref.~\cite{Buscemi2011}, it was further shown in their Thm.~1 that the $\eps$-single-shot entanglement cost can be lower and upper bounded by a slight modification of $H_{\max}^{\eps}$  (take Eq.~\eqref{eq:smoothCondMaxEnt}, but relax the requirement in Eq.~\eqref{eq:epsBall} that $\tau$ is a density operator to the requirement that it is positive semidefinite). In the following, we will show that (with our slight change in the smoothing), the $\eps$-single-shot entanglement cost can be expressed \textit{exactly} in terms of the conditional max-entropy. To this end, we introduce some notation and a Lemma first.

For every classical-quantum-state 
\begin{align}\label{eq:cqState}
	\rho^{XA}=\sum_{x\in [k]} p_x\ketbra{x}{x}^X\otimes\rho_x^{A},
\end{align} 
define 
\begin{align}\label{eq:mPrune}
	\rho^{(m)}=\sum_{x\in [k]} p_x\ketbra{x}{x}^X\otimes\rho_x^{(m)},
\end{align} 
with $\sigma^{(m)}$ an $m$-pruned version of $\sigma^A$ defined as 
\begin{align}
	\sigma^{(m)}\eqdef\frac{\Pi_m^A\sigma^A\Pi_m^A}{\tr\left[\sigma^A\Pi_m^A\right]},
\end{align}
where $\Pi_m^A$ is a projection to a subspace spanned by $m$ orthogonal eigenvectors corresponding to the $m$ largest eigenvalues of $\sigma^A$~\footnote{Note that due to the possibility of degenerate eigenvalues, $\Pi_m$ and thus $\sigma^{(m)}$ are not necessarily unique. However, for our purpose, this is irrelevant.}.

\begin{lemma}\label{lem:HMaxCQ}
	Let $\rho\in\md(XA)$ be a classical-quantum-state as in Eq.~\eqref{eq:cqState}, and for any $m\in[|A|]$, let $\rho^{(m)}$ be as defined in Eq.~\eqref{eq:mPrune}. Then, for any $\eps\in[0,1]$, 
	\begin{align}
		H&_{\max}^{\eps}(A|X)_\rho \\=&\min_{m\in[|A|]}\left\{\log m: \frac{1}{2}\norm{\rho^{(m)}-\rho^{XA}}_1\le\eps\right\}\nonumber\\
		=&\min_{m\in[|A|]}\left\{\log m: \sum_{x\in[k]}p_x \left\|\rho^A_x\right\|_{(m)}\ge 1-\eps\right\}.\nonumber
	\end{align}
\end{lemma}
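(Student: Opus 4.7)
The plan is to prove the two equalities in turn: I would first establish the equivalence of the two minima on the right-hand side by a direct computation of $\tfrac{1}{2}\norm{\rho^{(m)}-\rho^{XA}}_1$, and then sandwich $H_{\max}^{\eps}(A|X)_\rho$ between the value $\log m^\star$ from above and below, where $m^\star$ denotes the corresponding minimizer.

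For the equivalence of the two minima, since $\rho^{XA}$ and $\rho^{(m)}$ share the same classical part, the trace distance splits as $\tfrac{1}{2}\norm{\rho^{(m)}-\rho^{XA}}_1 = \tfrac{1}{2}\sum_x p_x\,\norm{\rho_x^{(m)}-\rho_x^A}_1$. For each $x$ I would work in an eigenbasis of $\rho_x^A$ in which $\Pi_m^A$ is also diagonal, so that both $\rho_x^A$ and $\rho_x^{(m)}=\Pi_m^A\rho_x^A\Pi_m^A/\tr[\rho_x^A\Pi_m^A]$ are diagonal. A direct calculation on eigenvalues then yields $\tfrac{1}{2}\norm{\rho_x^{(m)}-\rho_x^A}_1 = 1-\norm{\rho_x^A}_{(m)}$, so summing over $x$ gives $\tfrac{1}{2}\norm{\rho^{(m)}-\rho^{XA}}_1 = 1-\sum_x p_x\norm{\rho_x^A}_{(m)}$, from which the equivalence of the two constraints is immediate.

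For the upper bound, let $m^\star$ be the minimizer on the right-hand side. Using Eq.~\eqref{omex}, I would take $\omega^{XA} = \rho^{(m^\star)}$ as a feasible candidate in the smoothing: it lies in $\mb_{\eps}(\rho^{XA})$ by the defining constraint of $m^\star$, and each $\rho_x^{(m^\star)}$ has rank at most $m^\star$ by construction, so Eq.~\eqref{1260} gives $H_{\max}(A|X)_{\rho^{(m^\star)}}\le \log m^\star$ and hence $H_{\max}^{\eps}(A|X)_\rho \le \log m^\star$.

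For the lower bound, which I expect to be the main obstacle, let $\omega^{XA} = \sum_x q_x\ketbra{x}{x}^X\otimes\omega_x^A$ be any cq-state that is feasible in Eq.~\eqref{omex} and set $M\eqdef \max_{x:q_x\neq 0}\tr[\Pi_{\omega_x}^A]$. For each $x$ I would choose a rank-$M$ projector $Q_x^A$ on $A$ containing the support of $\omega_x^A$, and apply the variational characterization of the trace distance globally with the effect $\Xi^{XA} = \sum_x \ketbra{x}{x}^X\otimes Q_x^A$, which satisfies $0\le\Xi^{XA}\le I^{XA}$. This yields $\tr[\Xi^{XA}\omega^{XA}] = \sum_x q_x = 1$ and $\tr[\Xi^{XA}\rho^{XA}] = \sum_x p_x\tr[Q_x^A\rho_x^A] \le \sum_x p_x\norm{\rho_x^A}_{(M)}$, so $\eps\ge \tfrac{1}{2}\norm{\omega^{XA}-\rho^{XA}}_1 \ge 1-\sum_x p_x\norm{\rho_x^A}_{(M)}$. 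Hence $M$ is a feasible choice in the right-hand side, giving $H_{\max}^{\eps}(A|X)_\rho\ge \min_m\{\log m: \sum_x p_x\norm{\rho_x^A}_{(m)}\ge 1-\eps\}$. The subtle point here is that applying the variational bound block by block (i.e.\ per-$x$ on $\norm{q_x\omega_x^A-p_x\rho_x^A}_1$) would lose a factor of two; the sharp constant is only recovered by exploiting the global normalization $\sum_x q_x=1$ through a single effect that dominates the support of every $\omega_x^A$ simultaneously.
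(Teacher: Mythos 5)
Your proposal is correct and follows essentially the same route as the paper: the per-block eigenvalue computation giving $\tfrac12\|\rho_x^{(m)}-\rho_x^A\|_1=1-\|\rho_x^A\|_{(m)}$ is identical, $\rho^{(m^\star)}$ is the same feasible smoothing candidate for the upper bound, and your lower bound via the block-diagonal effect $\Xi^{XA}$ and trace-norm duality is the paper's contradiction argument (which bounds $\tr\left[\left(\rho^{XA}-\omega^{XA}\right)_-\Pi^{XA}_\omega\right]\le\eps$ using the support projectors $\Pi_{\omega_x}^A$) run directly. The only cosmetic differences are that you pad the support projectors to rank $M$ and avoid the contradiction phrasing; your closing remark about a "factor of two" loss in a per-block argument is not quite accurate (the per-block bound $\tr[Q\Delta]\le\tr[\Delta_+]$ summed over blocks recovers the same constant since the traces cancel), but this does not affect the validity of your proof.
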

This Lemma shows that $H_{\max}^{\eps}(A|X)_\rho$ can be directly evaluated by calculating $|A|$ trace distances, whilst a priori, it is defined as an optimization problem over an $\eps$-ball. Whilst this might be of independent interest, it allows us to prove the promised Theorem.   

\begin{theorem}\label{cmaint}
For $\rho\in\md(AB)$, the $\eps$-single-shot entanglement cost is given by
\be\label{eq:mainThm}
\cost^\eps\left(\rho^{AB}\right)= \inf_{\rho^{XAB}} H_{\max}^{\eps}(A|X)_\rho\;,
\ee
where the infimum is over all classical systems $X$ and all classical extensions $\rho^{XAB}$ of $\rho^{AB}$. Moreover, the infimum is attained for a classical extension with $|X|=|AB|^2$ and can also be taken over all regular extensions of $\rho^{AB}$. 
\end{theorem}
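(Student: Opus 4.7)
The plan is to chain Theorem~\ref{lem1291}, the convex-roof definition of $E_{(m)}$ in Eq.~\eqref{kyfan}, and Lemma~\ref{lem:HMaxCQ} so as to identify $\cost^\eps(\rho^{AB})$ directly with the target infimum, first over regular extensions, then over arbitrary classical extensions, and finally to settle the cardinality bound via Carath\'eodory's theorem.

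First I would combine the definition~\eqref{econe} with Thm.~\ref{lem1291} to obtain $\cost^\eps(\rho^{AB})=\min\{\log m : E_{(m)}(\rho^{AB})\le\eps\}$. Unfolding the convex-roof extension~\eqref{kyfan}, and using that for any pure bipartite $\psi_x^{AB}$ the two marginals share a spectrum so that $\|\tr_A\psi_x^{AB}\|_{(m)}=\|\rho_x^A\|_{(m)}$ with $\rho_x^A\eqdef\tr_B[\psi_x^{AB}]$, the constraint $E_{(m)}(\rho^{AB})\le\eps$ is equivalent to the existence of a regular extension $\rho^{XAB}=\sum_x p_x\ketbra{x}{x}^X\otimes\psi_x^{AB}$ of $\rho^{AB}$ satisfying $\sum_x p_x\|\rho_x^A\|_{(m)}\ge 1-\eps$. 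Applying Lemma~\ref{lem:HMaxCQ} to the classical-quantum marginal $\rho^{XA}$, this is precisely $H_{\max}^\eps(A|X)_\rho\le\log m$. Exchanging the minimizations over $m$ and over the regular extension then yields
\[
\cost^\eps(\rho^{AB}) \;=\; \inf_{\rho^{XAB}\text{ regular}} H_{\max}^\eps(A|X)_\rho .
\]

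Next I would lift the infimum to arbitrary classical extensions $\rho^{XAB}=\sum_x p_x\ketbra{x}{x}^X\otimes\rho_x^{AB}$. For each $x$, pick any pure-state decomposition $\rho_x^{AB}=\sum_y q_{y|x}\psi_{xy}^{AB}$ and form the regular refinement $\rho'^{\,YAB}=\sum_{x,y} p_x q_{y|x}\ketbra{xy}{xy}^Y\otimes\psi_{xy}^{AB}$. Convexity of the Ky-Fan norm on Hermitian operators (it is a norm) gives $\sum_y q_{y|x}\|\tr_B\psi_{xy}^{AB}\|_{(m)}\ge\|\rho_x^A\|_{(m)}$ for every $x$; averaging over $x$, the sufficient condition from Lemma~\ref{lem:HMaxCQ} for the coarse extension transfers to the refinement. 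Hence $H_{\max}^\eps(A|Y)_{\rho'}\le H_{\max}^\eps(A|X)_\rho$, so the infimum over regular extensions and over general classical extensions coincide, both equal to $\cost^\eps(\rho^{AB})$.

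Finally, Carath\'eodory's theorem applied to the convex set of density matrices in $\md(AB)$ (of real dimension $|AB|^2-1$, with the pure states as extreme points) shows that every density matrix admits a pure-state decomposition of length at most $|AB|^2$, so the infimum is attained by a regular extension with $|X|=|AB|^2$. The main subtlety I expect is reconciling the trace-distance ball that defines $H_{\max}^\eps$ with the purified-distance constraint in the definition of $\cost^\eps$: Lemma~\ref{lem:HMaxCQ} handles this by identifying the optimal smoothed state as the $m$-pruned version of $\rho^{XA}$, whose trace distance to $\rho^{XA}$ equals exactly $1-\sum_x p_x\|\rho_x^A\|_{(m)}$, the very quantity produced by the convex-roof appearing on the $\cost^\eps$ side. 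Once this dictionary is in place, the remainder of the argument is purely bookkeeping.
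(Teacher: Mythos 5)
Your main chain---the definition in Eq.~\eqref{econe} plus Thm.~\ref{lem1291} to get $\cost^\eps(\rho^{AB})=\min\{\log m\,:\,E_{(m)}(\rho^{AB})\le\eps\}$, then the convex roof, Lem.~\ref{lem:HMaxCQ}, and the exchange of the two minimizations over $m$ and over the regular extension---is the same route the paper takes. Where you genuinely differ is the passage from regular extensions to arbitrary classical extensions: the paper proves this direction by contradiction, taking the optimal smoothing $\omega^{XA}$ of a hypothetically better classical extension and running the projector argument with $\Pi_\omega^{XA}$ as in the proof of Lem.~\ref{lem:HMaxCQ}, whereas you refine each $\rho_x^{AB}$ into pure states and use the triangle inequality for the Ky-Fan norm, $\sum_y q_{y|x}\|\psi_{xy}^A\|_{(m)}\ge\|\rho_x^A\|_{(m)}$, so that the feasibility condition of Lem.~\ref{lem:HMaxCQ} transfers from the coarse extension to its regular refinement and hence $H_{\max}^\eps(A|Y)_{\rho'}\le H_{\max}^\eps(A|X)_\rho$. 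That step is correct and arguably cleaner than the paper's: it shows directly that every classical extension is dominated by a regular one, with no contradiction argument needed.

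Two points need tightening. First, your claim that ``$E_{(m)}(\rho^{AB})\le\eps$ is equivalent to the existence of a regular extension with $\sum_x p_x\|\rho_x^A\|_{(m)}\ge1-\eps$'' tacitly uses that the infimum in Eq.~\eqref{kyfan} is attained; this is true, but it is established inside the proof of Thm.~\ref{lem1291} (not by the bare statement), so it should be cited explicitly---without attainment the ``$\le\eps$'' condition only guarantees extensions achieving $1-\eps-\delta$ for every $\delta>0$. Second, the closing Carath\'eodory sentence does not prove what you need: the fact that $\rho^{AB}$ admits \emph{some} pure-state decomposition of length $|AB|^2$ says nothing about whether an \emph{optimal} decomposition for the convex roof (equivalently, an optimal regular extension) can be taken of that length, nor that the optimum is attained at all. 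Both the attainment and the cardinality bound $|X|=|AB|^2$ come from the argument inside the proof of Thm.~\ref{lem1291} (Carath\'eodory applied to the set of states of Schmidt number at most $m$, followed by the Uhlmann-fidelity computation), so the correct move is to import that statement rather than to re-apply Carath\'eodory to $\md(AB)$ at this stage. With those two citations in place, your proof goes through.
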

It is worth noticing that both in the above Theorem as well as in Ref.~\cite[Thm.~1]{Buscemi2011}, the $\eps$ in $H_{\max}^{\eps}$ stands for smoothing with respect to the trace distance whilst the $\eps$ in $\cost^\eps$ corresponds to an error in conversion measured by the square of the purified distance. This is in contrast to Eq.~\eqref{eq:costSchmidtNumber} where, e.g., also the Schmidt number is smoothed with respect to the square of the purified distance. 

For a pure state $\rho^{AB}=\psi^{AB}$, the optimization over classical extensions $\rho^{XAB}$ in Thm.~\ref{cmaint} is trivial, since all of them are of the form $\rho^{XAB}=\sigma^X\otimes\psi^{AB}$. This allows us to simplify the expression for $\cost^\eps\left(\psi^{AB}\right)$:  Let $$H_{\max}(\rho) = \log \tr[\Pi_\rho]$$ be the max-entropy~\cite{Datta2009} and $$H_{\max}^{\eps}(\rho^A)\eqdef \min_{\omega \in \mb_{\eps}(\rho^{A})} H_{\max}(\omega^A)$$ its $\eps$-smoothed version. Via Eqs.~\eqref{eq:smoothCondMaxEnt} and~\eqref{1260}, we thus obtain the following Corollary. 
\begin{corollary}\label{cor:pureCostHmax}
	Let $\psi\in\pure(AB)$ and $\rho^A=\tr_B\left[\psi^{AB}\right]$. It then holds that
	\begin{align}
		\cost^\eps\left(\psi^{AB}\right)= H_{\max}^{\eps}(\rho^A).
	\end{align}
\end{corollary}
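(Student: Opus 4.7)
The plan is to apply Theorem \ref{cmaint} to the pure case, exploit the triviality of pure-state decompositions to eliminate the infimum, and then use Lemma \ref{lem:HMaxCQ} to identify the resulting conditional max-entropy with the marginal one.

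First I would invoke Theorem \ref{cmaint} to write $\cost^\eps(\psi^{AB}) = \inf H_{\max}^\eps(A|X)_\rho$, with the infimum taken over all classical extensions $\rho^{XAB}$ of $\psi^{AB}$. Since $\psi^{AB}$ is a rank-one state, any convex decomposition $\psi^{AB} = \sum_x p_x \rho_x^{AB}$ forces $\rho_x^{AB} = \psi^{AB}$ whenever $p_x > 0$. Hence every classical extension is of the product form $\rho^{XAB} = \sigma^X \otimes \psi^{AB}$ for some classical state $\sigma^X$, and tracing out $B$ yields the cq state $\rho^{XA} = \sigma^X \otimes \rho^A$ with $\rho_x^A = \rho^A$ for every $x$ in the support of $\sigma^X$.

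Next I would compute $H_{\max}^\eps(A|X)_\rho$ on this cq state via Lemma \ref{lem:HMaxCQ}. Because every $\rho_x^A$ coincides with $\rho^A$, the smoothing condition $\sum_x p_x \|\rho_x^A\|_{(m)} \geq 1-\eps$ collapses to $\|\rho^A\|_{(m)} \geq 1-\eps$, so that
\begin{equation}
H_{\max}^\eps(A|X)_\rho = \min_{m\in[|A|]}\bigl\{\log m : \|\rho^A\|_{(m)} \geq 1-\eps\bigr\},
\end{equation}
independently of the chosen $\sigma^X$. In particular, the infimum over classical extensions in Theorem \ref{cmaint} is vacuous and attained by any $\sigma^X$.

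Finally I would identify the right-hand side with $H_{\max}^\eps(\rho^A)$. Using Eq.~\eqref{1260} for a trivial register $X$, or equivalently observing that $H_{\max}(\omega) = \log \tr[\Pi_\omega] = \log \rank(\omega)$, the smoothed quantity $H_{\max}^\eps(\rho^A)$ is precisely the logarithm of the minimal rank of a state in $\mb_\eps(\rho^A)$. The $m$-pruned state $\rho^{(m)} = \Pi_m \rho^A \Pi_m / \tr[\Pi_m \rho^A]$ has rank at most $m$ and trace distance $1 - \|\rho^A\|_{(m)}$ to $\rho^A$, and it is optimal among rank-$\leq m$ approximants (alternatively, this is exactly Lemma \ref{lem:HMaxCQ} applied to the trivial cq state $\ketbra{1}{1}^X \otimes \rho^A$). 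Therefore $H_{\max}^\eps(\rho^A) = \min_m\{\log m : \|\rho^A\|_{(m)} \geq 1-\eps\}$, which matches the expression derived above and completes the proof. No genuine obstacle arises: the only point requiring mild care is verifying that smoothing a product cq state $\sigma^X \otimes \rho^A$ reduces to smoothing $\rho^A$ alone, which is precisely what Lemma \ref{lem:HMaxCQ} delivers.
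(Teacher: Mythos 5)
Your proof is correct and follows essentially the same route as the paper: Thm.~\ref{cmaint} plus the observation that every classical extension of a pure state is of the product form $\sigma^X\otimes\psi^{AB}$, with Lem.~\ref{lem:HMaxCQ} used to identify the resulting quantity with $H_{\max}^{\eps}(\rho^A)$ (the paper invokes the same lemma with a trivial $X$ right after the corollary). Your version just spells out the reduction a bit more explicitly than the paper's terse derivation, which is fine.
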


This equips the smoothed max-entropy with an operational interpretation in terms of the $\eps$-single-shot entanglement cost of pure states. Moreover, by applying Lem.~\ref{lem:HMaxCQ} (with a trivial classical system $X$), we find that 
\begin{align}\label{eq:smoothHMax}
	H_{\max}^{\eps}(\rho^A)=\min_{m\in[|A|]}\left\{\log m:  \left\|\rho^A\right\|_{(m)}\ge 1-\eps\right\}.
\end{align} 
Now notice that $\left\|\rho^A\right\|_{(m)}=\norm{\p}_{(m)}$,  where $\p\in\prob^\downarrow(|A|)$ are the Schmidt coefficients of $\psi^{AB}$.  As claimed earlier, this also shows that the two definitions of $\cost^\eps$ provided in Eqs.~\eqref{eq:DefCostPure} and~\eqref{econe} indeed coincide on pure states (see Thm.~\ref{thm:PureSingleCost}), and Prop.~\ref{prop:SecondOrderPureCost} thus applies to both. This is interesting, since in Eq.~\eqref{eq:DefCostPure}, $\eps$ was bounding the star conversion distance $T_\star$, whilst, in Eq.~\eqref{econe}, it was bounding the square of the purified distance $P^2$. Moreover, we note that the previous discussion also implies that the quantity $f'_{n,\eps}(\p)$ in Lem.~\ref{lem: second-order 1} has an operational interpretation in terms of the smoothed max-entropy.

At this point, we want to mention that Ref.~\cite{Regula2019} investigated a variant of the $\eps$-single-shot distillable entanglement too: The authors of Ref.~\cite{Regula2019} defined the fidelity of distillation as
\begin{align}\label{eq:fidDist}
	F(\rho,m):= \sup_{\Lambda\in\locc} \tr(\Lambda(\rho) \Phi_m)
\end{align}
and their variant of the $\eps$-single-shot distillable entanglement as
\begin{align}
	E_{D}^{(1),\eps}(\rho):= \log \max \{m\geq 2: F(\rho,m) \geq 1-\eps\}.
\end{align}
Comparing this definition to our definition in Eq.~\eqref{econe}, we see that they differ in the way in which the allowed error $\eps$ is introduced: Whilst we followed the convention of Ref.~\cite{Buscemi2011} and demanded that $1-F^2\le \eps$, in Ref.~\cite{Regula2019}, it was demanded that $1-F\le \eps$. Importantly, for pure states, there exists a closed-form formula for $E_{D}^{(1),\eps}(\psi)$ too \cite[Thm.~15]{Regula2019} which is given in terms of a distillation norm~\cite{Regula2018}. This definition does therefore not coincide with the ones we used. Nevertheless, in App.~\ref{app:Regula}, we discuss in detail how this allows us to efficiently compute $E_{D}^{(1),\eps}(\psi^{\otimes n})$ in a manner that is very similar to how we can compute $\distill^\eps\left(\psi^{\otimes n}\right)$. To conclude the discussion concerning the entanglement manipulation of quantum states, we want to mention that from Refs.~\cite{Vidal2000} and \cite{Kumagai2017}, also analogs of our Thm.~\ref{thm:PureSingleCost}/Cor.~\ref{cor:pureCostHmax} and Thm.~\ref{thm:PureSingleDist} can be extracted, again with definitions of the error that differ from ours. Our choices of the conversion distance lead to particularly compact formulas.

\subsection{Single-shot entanglement cost of channels}\label{subsec:channelCost}
\begin{figure}[t]\centering    \includegraphics[width=0.9\linewidth]{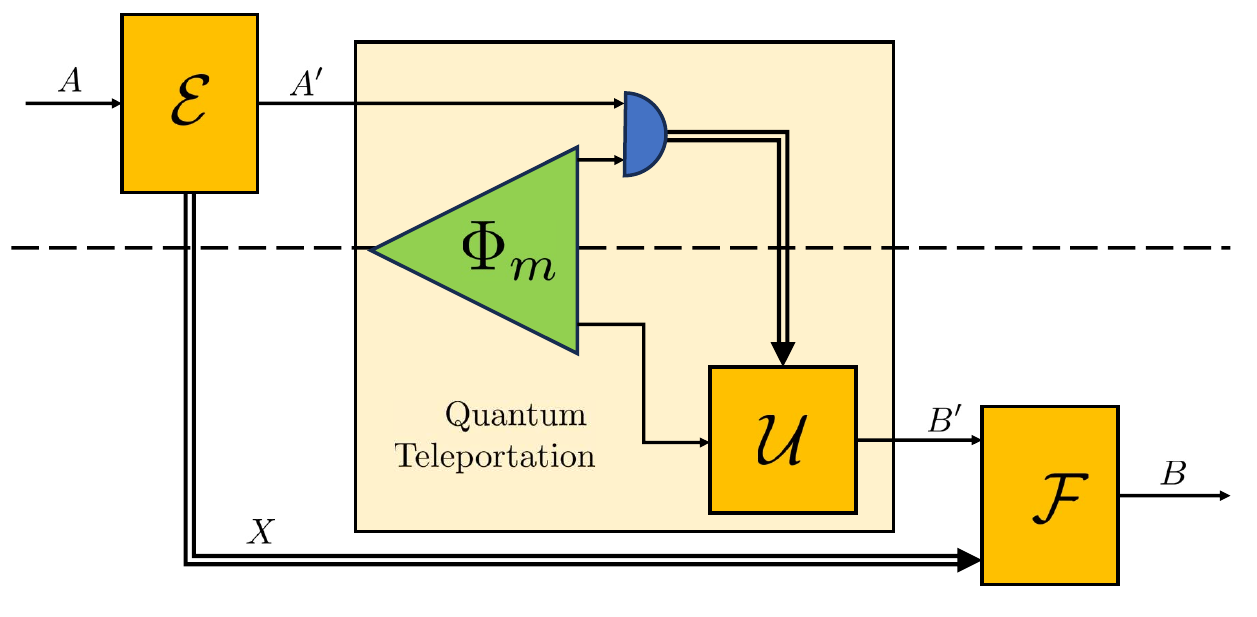}
	\caption{Optimal simulation of a quantum channel $\mN^{A\to B}$ if given access to $\Phi_m$ and LOCC using quantum teleportation. Solid lines represent quantum systems, double lines classical systems, and the dashed line the spacial separation between Alice and Bob.}
	\label{simcha}
\end{figure} 
\begin{figure*}[t]\centering    \includegraphics[width=0.9\textwidth]{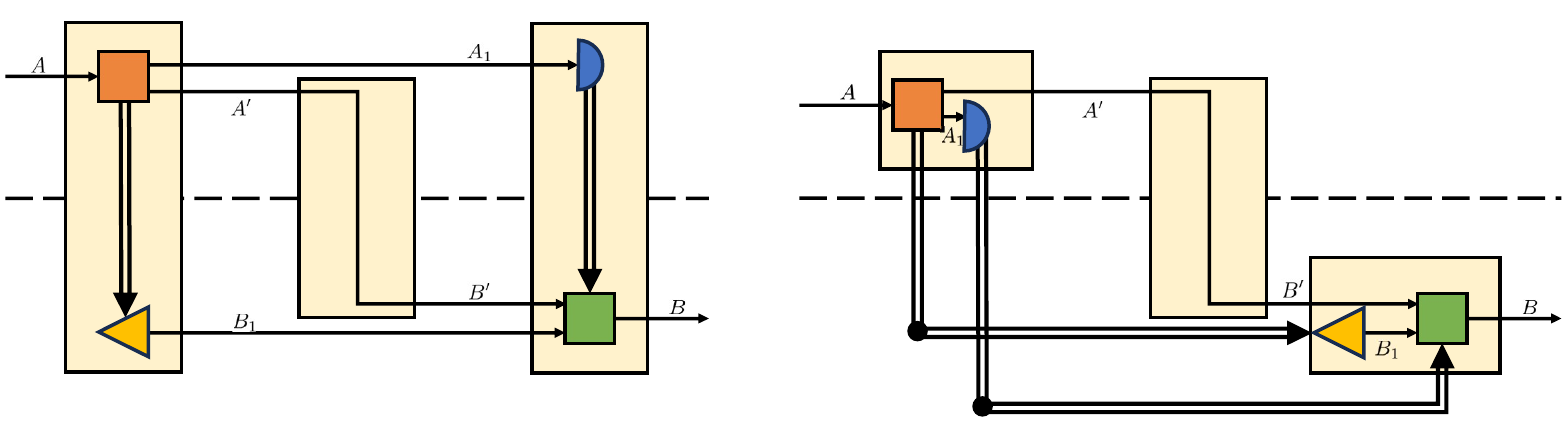}
	\caption{Left: LOCC superchannel converting $\id^{A'\to B'}$ (center) to a channel $\mN\in \cptp(A\to B)$. Right: Any such superchannel can be realized by the following protocol: Alice applies a quantum instrument, sends the quantum output through the identity channel, and Bob applies a channel on its output that is conditioned on the classical outcome of Alice's instrument. Solid lines represent quantum systems, double lines classical systems, and the dashed line the spacial separation between Alice and Bob.}
	\label{fig:Superchannel}
\end{figure*} 

In the following section, we will bound the entanglement that it costs to simulate an arbitrary channel between two parties with a given precision. Since any quantum state can be identified with its corresponding replacement channel, this can be seen as a generalization of the results presented in the previous section. Moreover, quantum teleportation~\cite{Bennett1993} demonstrates that LOCC and one ebit can be used to simulate one identity qubit channel. It will therefore become apparent when we talk about optimal protocols that our results are closely related to teleportation too, see Fig.~\ref{simcha}.

As in the state case, we will start by defining how we quantify the error of a simulation $\mN'$ of a given channel $\mN$: For any channel $\mN\in\cptp(A\to B)$ and a maximally entangled state $\Phi_{m}$, we define the channel conversion fidelity as
\begin{align}\label{eq:ConvFidChan}
	&F\left(\Phi_m\xrightarrow{\locc}\mN^{A\to B}\right)\\
	&\eqdef\sup_{\Theta}\min_{\psi\in\pure(A\tA)}F\left(\Theta\left[\Phi_m\right](\psi^{A\tA}),\mN^{\tA\to B}(\psi^{A\tA})\right),\nonumber
\end{align}
where the supremum is over all LOCC super-channels $\Theta$ that map the state $\Phi_m$ to a channel in $\cptp(\tA\to B)$. Analogously to the $\eps$-single-shot entanglement cost of a bipartite state defined in Eq.~\eqref{econe}, we use the channel conversion fidelity to define the $\eps$-single-shot entanglement cost of the channel $\mN^{A\to B}$ as 
\be\label{defcost}
\cost^\eps(\mN)\eqdef\inf_{m\in\mbb{N}}\Big\{\log m\;:\;P^2\left(\Phi_m\xrightarrow{\locc}\mN\right)\leq\eps\Big\}\;,
\ee
where 
\begin{align}
	P^2\left(\Phi_m\xrightarrow{\locc}\mN\right)\eqdef 1-F^2\left(\Phi_m\xrightarrow{\locc}\mN\right).
\end{align}

To bound $\cost^\eps(\mN)$, we will use that the supremum and minimum in the definition of the channel conversion fidelity can be exchanged: This is the content of the following Lemma, which can be shown with the help of Ref.~\cite[Prop.~8]{wang2019converse} and~\cite[Lem.~II.3]{leditzky2018approaches}.

\begin{lemma}\label{lem:MinMaxFid}
	Let $F\left(\Phi_m\xrightarrow{\locc}\mN^{A\to B}\right)$ be defined as in Eq.~\eqref{eq:ConvFidChan}. It holds that
	\begin{align}\label{eq:ConvFidChanExch}
		&F\left(\Phi_m\xrightarrow{\locc}\mN^{A\to B}\right)  \\
		&= \min_{\psi \in \pure(A\tilde A)} \sup_{\Theta}  F\left(\Theta[\Phi_m](\psi^{A\tilde A}), \mN^{\tilde A \to B}(\psi^{A\tilde A})\right) \nonumber
	\end{align}    
	where the supremum is again over all LOCC super-channels $\Theta$ that map the state $\Phi_m$ to a channel in $\cptp(\tA\to B)$.
\end{lemma}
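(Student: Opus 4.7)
The inequality $\sup_\Theta\min_\psi\le\min_\psi\sup_\Theta$ is immediate, so the task is to establish the reverse bound via a minimax exchange.

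The first step is to reparametrize the supremum: let $\mathcal C\subset\cptp(\tA\to B)$ denote the set of channels of the form $\Theta[\Phi_m]$ for some LOCC super-channel $\Theta$. Because LOCC is convex and the map $\Theta\mapsto\Theta[\Phi_m]$ is affine, $\mathcal C$ is convex; by passing to its closure, which is compact in the natural topology on $\cptp$, one may take the supremum over a convex compact set without changing its value. Writing $f(\mathcal K,\psi)\eqdef F(\mathcal K(\psi^{A\tA}),\mN^{\tA\to B}(\psi^{A\tA}))$, the claim reduces to $\sup_{\mathcal K\in\mathcal C}\min_\psi f=\min_\psi\sup_{\mathcal K\in\mathcal C}f$.

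Second, I would extend the inner minimum from pure states to all $\rho\in\md(A\tA)$ without changing its value. Joint concavity of the fidelity, together with the fact that both $\mathcal K(\rho)$ and $\mN(\rho)$ are affine in $\rho$, makes $\rho\mapsto f(\mathcal K,\rho)$ concave on the convex compact set $\md(A\tA)$; its minimum is therefore attained at an extreme point, which is a pure state. Thus one may freely replace $\min_{\psi\in\pure(A\tA)}$ by $\min_{\rho\in\md(A\tA)}$ on both sides, placing the input-state variable in a convex compact domain.

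The main obstacle is the minimax swap itself. A direct application of Sion's theorem fails because $f$ is concave (not convex) in $\rho$. The standard workaround is to use the SDP/Uhlmann representation of the fidelity, which rewrites $F(\omega,\sigma)$ as the supremum of a linear functional over an auxiliary matrix $X$ subject to a block positive-semidefinite constraint involving $(\omega,\sigma)$. Under this representation, $\sup_{\mathcal K}f(\mathcal K,\rho)$ becomes a joint supremum over $(\mathcal K,X)$ with a linear objective and constraints that are jointly affine in $(\mathcal K,X)$ for each fixed $\rho$, and affine in $\rho$ for each fixed $(\mathcal K,X)$. This is precisely the structure treated in Prop.~8 of Ref.~\cite{wang2019converse} and Lem.~II.3 of Ref.~\cite{leditzky2018approaches}, which combine the SDP form of $F$ with a Sion-type minimax to yield the swap in channel-simulation settings of this kind. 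Invoking these results gives $\sup_{\mathcal K\in\mathcal C}\min_\rho f=\min_\rho\sup_{\mathcal K\in\mathcal C}f$. Concavity in $\rho$ then permits one to revert the outer minimum back to a minimum over pure states, yielding Eq.~\eqref{eq:ConvFidChanExch}.
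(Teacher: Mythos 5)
Your reduction to a minimax statement, the observation that the easy inequality $\sup_\Theta\min_\psi\le\min_\psi\sup_\Theta$ is automatic, and the extension of the inner minimum from pure to mixed inputs by joint concavity of the fidelity are all fine, but the decisive step — the swap itself — is not established by your argument. The cited results (Prop.~8 of Ref.~\cite{wang2019converse} and Lem.~II.3 of Ref.~\cite{leditzky2018approaches}) are not a generic ``linear objective plus affine constraints $\Rightarrow$ Sion-type swap'' lemma of the kind you invoke. What their proofs (and the paper) actually supply is a \emph{concavity} statement: for fixed channels $\mN,\mM$, the map $\rho\mapsto\tilde D_{1/2}\bigl(\mM(\phi_\rho)\|\mN(\phi_\rho)\bigr)$, with $\phi_\rho$ a purification of the channel-input marginal $\rho$, is concave in $\rho$; this is proved by purifying a convex mixture into an extra register, using isometric invariance, data processing under dephasing, and the direct-sum property. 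Combined with joint convexity of $\tilde D_{1/2}=-2\log F$ in the superchannel variable and the fact that the objective depends on the pure input only through its marginal on $\tA$, Sion's theorem then applies in the divergence picture, and monotonicity of $-2\log$ translates the result back to the fidelity. None of this structure appears in your proof.

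Moreover, the SDP/Uhlmann rewriting cannot repair the curvature problem in your parametrization. The semidefinite representation $F(\omega,\sigma)=\tfrac12\sup\{\tr X+\tr X^{\dagger}\}$ subject to the block constraint on $(\omega,X,\sigma)$ is exactly the standard proof that $F$ is \emph{jointly concave}; a sup-projection of an affinely constrained set in which the feasible region of the maximizing variables $(\mathcal K,X)$ depends on $\rho$ yields a function that is again concave in $\rho$, and the resulting extended-valued objective is not quasi-convex in the minimizing variable, so no Sion-type theorem applies to the structure you describe. Since the fidelity is jointly concave, no representation will make $\rho^{A\tA}\mapsto F\bigl((\id^A\otimes\mathcal K)(\rho^{A\tA}),(\id^A\otimes\mN)(\rho^{A\tA})\bigr)$ convex over bipartite mixed states; the convexity you need only emerges after reparametrizing by the marginal $\rho^{\tA}$ through purifications (where $\log F$, hence $F$, is convex precisely because $\tilde D_{1/2}$ is concave there). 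To close the gap, replace the SDP step by this purified-input parametrization and the concavity argument of the cited references, apply Sion with $\Theta$ as the convex (min, in the divergence picture) variable and $\rho^{\tA}\in\md(\tA)$ as the concave (max) variable, and then convert back to the fidelity by monotonicity.
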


Having access to $\Phi_m$ and LOCC, Alice and Bob can use quantum teleportation to simulate the identity channel $\id^{A'\to B'}$, where $|A'|=|B'|=m$. Conversely, $\id^{A'\to B'}$ allows us to create $\Phi_m$.  Therefore, $\Phi_m\xleftrightarrow{\locc}\id^{A'\to B'}$, i.e., the state $\Phi_m$ is equivalent to the channel $\id^{A'\to B'}$. In Eqs.~\eqref{eq:ConvFidChan} and~\eqref{eq:ConvFidChanExch}, we can thus replace $\Theta\left[\Phi_m\right]$  with $\Theta\left[\id^{A'\to B'}\right]$ and take the supremums over all LOCC super-channels that map the identity channel $\id^{A'\to B'}$ to a channel in $\cptp(\tA\to B)$. Importantly, every such super-channel can be expanded as 
\ba\label{2p51}
\Theta\left[\id^{A'\to B'}\right]&=\mF^{B'X\to B}\circ\id^{A'\to B'}\circ\mE^{\tA\to A'X}\\
&=\mF^{B'X\to B}\circ\mE^{\tA\to B'X}\\
&=\sum_{x\in[k]}\mF_{(x)}^{B'\to B}\circ\mE_x^{\tA\to B'}\;,
\ea
where $X$ is a classical system (that can be exchanged via LOCC), $\mE^{\tA\to A'X}\in\cptp(\tA\to A'X)$, and $\mF^{B'X\to B}\in\cptp(B'X\to B)$ (see Fig.~\ref{fig:Superchannel}). In the last line, we utilized that this can be seen as the average of an instrument $\{\mE_x^{\tA\to B'}\}_x$ and a channel $\mF_{(x)}^{B'\to B}$ conditioned on its classical outcome $x$. From this follows that the super-channels that we need to consider in Eqs.~\eqref{eq:ConvFidChan} and~\eqref{eq:ConvFidChanExch} are of the form shown in Fig.~\ref{simcha}, highlighting the close relation to quantum teleportation.

The Theorem below shows that the conversion fidelity is closely related to the monotones from Eq.~\eqref{kyfan}, which can be extended to quantum channels in the usual manner, i.e.,
\be
E_{(k)}\left(\mN^{A\to B}\right)\eqdef\max_{\psi\in\pure(A\tA)}E_{(k)}\left(\mN^{\tA\to B}(\psi^{A\tA})\right).
\ee
Using this definition, we get the following result, which is the channel analog of Thm.~\ref{lem1291}.

\begin{lemma}\label{lem:combined}
	Let $\mN\in\cptp(A\to B)$ be a quantum channel. It then holds that
	\begin{align}
		&1-E_{(m)}\left(\mN^{A\to B}\right)\nonumber\\
		&\leq F\left(\Phi_m\xrightarrow{\locc}\mN^{A\to B}\right)\nonumber \\
		&\leq \sqrt{1-E_{(m)}\left(\mN^{A\to B}\right)}\;.
	\end{align}
\end{lemma}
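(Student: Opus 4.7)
The plan is to prove the two inequalities separately, reducing both to the state-level identity of Theorem~\ref{lem1291}. Lemma~\ref{lem:MinMaxFid} first rewrites
\[ F\left(\Phi_m\xrightarrow{\locc}\mN\right)=\min_{\psi\in\pure(A\tA)}\sup_\Theta F\left(\Theta[\Phi_m](\psi^{A\tA}),\mN^{\tA\to B}(\psi^{A\tA})\right), \]
so it suffices to sandwich the inner supremum, for each fixed $\psi$, between $1-E_{(m)}(\mN(\psi^{A\tA}))$ from below and $\sqrt{1-E_{(m)}(\mN(\psi^{A\tA}))}$ from above, and then to minimize in $\psi$ using $\max_\psi E_{(m)}(\mN(\psi))=E_{(m)}(\mN)$.

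The upper bound is immediate. Any state of the form $\Theta[\Phi_m](\psi^{A\tA})$ is LOCC-reachable from $\Phi_m$, because Alice may append $\psi^{A\tA}$ as a local pure state for free and then execute the LOCC pre- and post-processing of $\Theta$. The inner supremum is therefore dominated by $F(\Phi_m\xrightarrow{\locc}\mN^{\tA\to B}(\psi^{A\tA}))$, which equals $\sqrt{1-E_{(m)}(\mN(\psi^{A\tA}))}$ by Theorem~\ref{lem1291}; minimizing over $\psi$ together with monotonicity of $\sqrt{\cdot}$ yields $F(\Phi_m\xrightarrow{\locc}\mN)\le\sqrt{1-E_{(m)}(\mN)}$.

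For the lower bound I will build, for each $\psi$, an explicit LOCC super-channel $\Theta$ with $F(\Theta[\Phi_m](\psi^{A\tA}),\mN(\psi^{A\tA}))\ge 1-E_{(m)}(\mN(\psi^{A\tA}))$. Pick a finite pure-state decomposition $\mN(\psi^{A\tA})=\sum_x p_x\phi_x^{AB}$ attaining the infimum in the definition of $E_{(m)}$, so that $E_{(m)}(\mN(\psi^{A\tA}))=\sum_x p_x(1-\|\q_x\|_{(m)})$ with $\q_x$ the Schmidt vector of $\phi_x$. Choose a Stinespring isometry $V:\tA\to B'E$ of $\mN$ adapted to this decomposition, so that $(I^A\otimes V)|\psi\rangle^{A\tA}=\sum_x\sqrt{p_x}|\phi_x\rangle^{AB'}|x\rangle^E$ for an orthonormal basis $\{|x\rangle\}$ of $E$. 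Define $\Theta$ operationally as follows: Alice applies $V$ locally, measures $E$ in $\{|x\rangle\}$ to obtain classical outcome $x$, applies to her local system $B'$ a channel $\mT_x$ that projects onto the $m$ leading Schmidt vectors of $\phi_x$ (mapping the orthogonal complement into a fixed state of the same $m$-dimensional subspace), and finally uses $\Phi_m$ together with the classical outcome $x$ to teleport the resulting $m$-dimensional system to Bob, who emits it as $B$. By construction this is a bona fide LOCC super-channel on $\Phi_m$, and its averaged output is $\Theta[\Phi_m](\psi^{A\tA})=\sum_x p_x(\id^A\otimes\mT_x)(\phi_x^{AB})$. Strong concavity of the fidelity combined with the elementary pure-state bound $F(\mT_x(\phi_x),\phi_x)\ge\|\q_x\|_{(m)}$, which follows from $(\id^A\otimes\mT_x)(\phi_x)\succeq(I^A\otimes\Pi_m^{(x)})\phi_x(I^A\otimes\Pi_m^{(x)})$, then gives $F(\Theta[\Phi_m](\psi),\mN(\psi))\ge\sum_x p_x\|\q_x\|_{(m)}=1-E_{(m)}(\mN(\psi^{A\tA}))$; minimizing over $\psi$ completes the lower bound.

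The main obstacle is the lower bound. The super-channel structure forces $\Theta[\Phi_m](\psi^{A\tA})$ to have $A$-marginal exactly $\psi^A$, so a naive argument that only invokes Theorem~\ref{lem1291} loses the square root (and yields only the upper bound). The fix is to classicalize the environment of the Stinespring dilation on Alice's side: the recorded label $x$ enables a per-branch Schmidt truncation on her local copy $B'$, and while individual branches violate the $A$-marginal constraint, averaging over $x$ automatically restores it. That the infimum in $E_{(m)}$ is attained by a decomposition of size at most $|AB|^2$, as stated after Eq.~\eqref{kyfan}, keeps the construction finite-dimensional.
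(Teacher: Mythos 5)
Your argument is correct, and it splits naturally into a half that mirrors the paper and a half that improves on it. For the lower bound, your protocol is essentially the paper's own construction in different clothing: the paper works with an operator-sum representation $\{N_x\}$ of $\mN$, an instrument $M_x=V_x^*N_x$ (plus a completing element) and isometries $V_x$ onto $m$-dimensional subspaces, then uses superadditivity of the fidelity over positive decompositions and Lem.~\ref{lem:MinMaxFid}; measuring the Stinespring environment adapted to the optimal pure-state decomposition of $\mN^{\tA\to B}(\psi^{A\tA})$, truncating each branch to its $m$ leading Schmidt vectors, and teleporting is the same protocol, and your use of strong concavity of the fidelity plays the role of the paper's superadditivity step. (Your appeal to Hughston--Jozsa--Wootters-type freedom is implicit in choosing the dilation ``adapted to the decomposition''; the paper makes this correspondence between Kraus representations and decompositions explicit in its Eq.~\eqref{eq:KrausDec}.) For the upper bound, however, you take a genuinely shorter route: instead of the paper's argument via the rank-$\le m$ Kraus structure of $\Theta[\Phi_m]$, Uhlmann's theorem and Cauchy--Schwarz, you simply observe that $\Theta[\Phi_m](\psi^{A\tA})$ is LOCC-reachable from $\Phi_m$ (Alice prepares $\psi^{A\tA}$ locally, then the pre- and post-processing of $\Theta$ are LOCC), so Thm.~\ref{lem1291} immediately caps the branch fidelity by $\sqrt{1-E_{(m)}\left(\mN^{\tA\to B}(\psi^{A\tA})\right)}$; minimizing over $\psi$ gives the claim, and in fact only weak duality $\sup_\Theta\min_\psi\le\min_\psi\sup_\Theta$ is needed there, with Lem.~\ref{lem:MinMaxFid} reserved for the lower bound. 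The one assumption you should state explicitly is that the reference system $A$ sits on Alice's side of the cut, so that preparing $\psi^{A\tA}$ is indeed a local operation and the bipartition in Thm.~\ref{lem1291} is $A|B$; this is consistent with the paper's setup (and with how Thm.~\ref{thm231} is interpreted), so it is a matter of wording rather than a gap. Likewise, spelling out that Bob's final step is an $x$-dependent embedding isometry of the teleported $m$-dimensional system into $B$ would make the claim $\Theta[\Phi_m](\psi^{A\tA})=\sum_x p_x(\id^A\otimes\mT_x)(\phi_x^{AB})$ airtight.
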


This allows us to provide the promised bounds on $\cost^\eps(\mN^{A\to B})$. 

\begin{theorem}\label{thm231} 
	Let $\mN\in\cptp(A\to B)$ be a quantum channel and $\eps\in[0,1)$. Then
	\begin{align}
		&\max_{\psi\in\pure(A\tA)}\inf_{\sigma^{XAB}} H_{\max}^{\eps}(A|X)_\sigma \nonumber \\
		&\leq \cost^\eps(\mN^{A\to B}) \nonumber \\
		&\leq\max_{\psi\in\pure(A\tA)}\inf_{\sigma^{XAB}} H_{\max}^{\eps/2}(A|X)_\sigma,
	\end{align}
	where the infimums are over all classical systems $X$ and all classical extensions $\sigma^{XAB}$ of $\sigma^{AB}=\mN^{\tA\to B}(\psi^{A\tA})$. Again, the infimums are attained for a regular/classical extension with $|X|=|AB|^2$.
\end{theorem}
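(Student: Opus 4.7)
My plan is to reduce Theorem~\ref{thm231} to the state-level result of Theorem~\ref{cmaint} via the bounds supplied by Lemma~\ref{lem:combined}. The starting point is to square the inequalities $1 - E_{(m)}(\mN) \le F(\Phi_m \xrightarrow{\locc} \mN) \le \sqrt{1 - E_{(m)}(\mN)}$ and use $P^2 = 1 - F^2$ to obtain the sandwich
\[
E_{(m)}(\mN)\ \le\ P^2\!\left(\Phi_m \xrightarrow{\locc} \mN\right)\ \le\ 2 E_{(m)}(\mN) - E_{(m)}(\mN)^2\ \le\ 2 E_{(m)}(\mN).
\]
Substituting into the definition~\eqref{defcost} of $\cost^\eps(\mN)$ immediately gives
\[
\inf\!\{\log m : E_{(m)}(\mN) \le \eps\}\ \le\ \cost^\eps(\mN)\ \le\ \inf\!\{\log m : E_{(m)}(\mN) \le \eps/2\},
\]
because $P^2 \le \eps$ forces $E_{(m)}(\mN) \le \eps$, while $E_{(m)}(\mN) \le \eps/2$ forces $P^2 \le \eps$.

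Next, I would unpack $E_{(m)}(\mN) = \max_{\psi \in \pure(A\tA)} E_{(m)}(\mN^{\tA\to B}(\psi^{A\tA}))$ and commute the outer infimum over $m$ with the maximization over $\psi$. Because $E_{(m)}$ is non-increasing in $m$ (the Ky-Fan norm $\|\p\|_{(m)}$ is non-decreasing in $m$), the smallest $m$ for which $\max_\psi E_{(m)}(\mN(\psi)) \le \eps'$ coincides with $\max_\psi \inf\{\log m : E_{(m)}(\mN(\psi)) \le \eps'\}$. For each fixed $\psi$, Theorem~\ref{lem1291} identifies $E_{(m)}(\mN(\psi)) = P^2(\Phi_m \xrightarrow{\locc} \mN(\psi))$, so the inner quantity is precisely the state-level single-shot cost $\cost^{\eps'}(\mN(\psi))$, which Theorem~\ref{cmaint} expresses as $\inf_{\sigma^{XAB}} H_{\max}^{\eps'}(A|X)_\sigma$ with $\sigma^{AB} = \mN(\psi)$, the infimum being attained on a classical or regular extension with $|X| = |AB|^2$. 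Setting $\eps' = \eps$ and $\eps' = \eps/2$ then produces the two sides of the desired inequality, and the attainment statement descends verbatim from the state-level theorem.

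The main obstacle I expect is rigorously justifying the min--max swap: one must verify that $\max_\psi E_{(m)}(\mN(\psi))$ is actually attained on the compact set of pure inputs. This should follow from the upper semicontinuity of $E_{(m)}$ (as an infimum of continuous functions, via the convex-roof definition in~\eqref{kyfan}) together with the continuity of $\psi \mapsto \mN(\psi)$, which together make $\psi \mapsto E_{(m)}(\mN(\psi))$ upper semicontinuous and hence attain its maximum on the compact pure-state manifold. Beyond this topological check and the bookkeeping of integer-valued infima, the argument is a clean chain through Lemma~\ref{lem:combined}, Theorem~\ref{lem1291}, and Theorem~\ref{cmaint}, with no further substantive difficulties anticipated.
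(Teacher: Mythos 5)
Your proposal is correct and follows essentially the same route as the paper: sandwich the channel conversion error via Lem.~\ref{lem:combined} (your $P^2\le 2E_{(m)}$ step is just the squared form of the paper's $\sqrt{1-\eps}\le 1-\eps/2$ manipulation), exchange the maximization over $\psi$ with the infimum over $m$, and then invoke Thm.~\ref{lem1291} and Thm.~\ref{cmaint} pointwise in $\psi$. The only delicate step is indeed the max--inf swap, but your appeal to monotonicity of $E_{(m)}$ in $m$ already suffices (the paper proves the same identity by contradiction), and the compactness/semicontinuity worry about attaining the maximum over $\psi$ is not actually needed since the argument goes through with suprema and integer-valued, bounded $m$.
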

In the limit of $\eps$ approaching zero, this Theorem provides the zero-error single-shot entanglement cost of a quantum channel. With the help of Thm.~\ref{cmaint}, we can express the Theorem as 
\begin{align}
	&\max_{\psi\in\pure(A\tA)}\cost^{\eps}\left(\mN^{\tA\to B}\left(\psi^{A \tA}\right)\right) \nonumber \\
	\leq& \cost^\eps(\mN^{A\to B}) \nonumber \\
	\leq&\max_{\psi\in\pure(A\tA)}\cost^{\eps/2}\left(\mN^{\tA\to B}\left(\psi^{A \tA}\right)\right) .
\end{align}
This shows that $\cost^\eps(\mN^{A\to B})$ is lower bounded by the cost of the most expensive state that we can create with its help. This is to be expected for a consistent definition, since otherwise one might be able to build an entanglement perpetuum mobile. It is however not obvious that this should also be an upper bound since we intend to simulate a channel on an unknown input state and potentially do not have access to another correlated system (such as system $A$ in Eq.~\eqref{eq:ConvFidChan}): Simply replacing the input state with (an approximation of) the corresponding output state of the channel we intend to simulate is thus not an option. We conclude this section by noting that with the help of postselection techniques~\cite{Christandl2009}, one can recover the asymptotic entanglement cost of a quantum channel~\cite{Berta2013} from Thm.~\ref{thm231}.

\section{One-way single-shot entanglement manipulation}\label{sec:OneWay}

\subsection{State distillation}
In the following, we will explore a resource measure with respect to one-way LOCC. While this measure may not exhibit monotonicity under arbitrary LOCC, it can still be a valuable tool for providing bounds on the distillable entanglement of mixed bipartite states, as we will see in the following.

The most general one-way LOCC operation that  Alice and Bob can perform is for Alice to apply a quantum instrument $\{\mE_x\}_{x\in[n]}$, with $\mE_x\in\cp(A\to A')$ and $\sum_{x=1}^n\mE_x\in\cptp(A\to A')$, send the outcome $x$ to Bob, who then applies a quantum channel $\mF_{(x)}\in\cptp(B\to B')$ that depends on the outcome $x$ received from Alice~\cite{Chitambar2014}. The overall operation can be described by the quantum channel
\be\label{13111c}
\mN^{AB\to A'B'}\eqdef\sum_{x=1}^n\mE_x^{A\to A'}\otimes\mF_{(x)}^{B\to B'}\;.
\ee

\begin{definition}\label{def:cohInf}
	The \emph{coherent information of entanglement} of a state $\rho\in\md(AB)$ is defined as
	\be\label{13p149}
	E_{\to}\left(\rho^{AB}\right)\eqdef\sup_{\mE\in\cptp(A\to AX)}I\big(A\ra BX\big)_{\mE(\rho)}\;,
	\ee
	where the supremum includes a supremum over all classical systems $X$ with arbitrary dimension and
	\be
	I(A\ra B)_\rho\eqdef -H(A|B)_\rho
	\ee 
	is the coherent information~\cite{Schumacher1996}.
\end{definition}

As promised, we will now show that the coherent information of entanglement is a resource measure with respect to one-way LOCC.

\begin{theorem}
	Let $\rho\in\md(AB)$ with $m=|A|=|B|$, $\sigma\in\md(A'B')$, and $\mN\in\locc_1(AB\to A'B')$. The coherent information of entanglement $E_\to$ is
	\begin{enumerate}
		\item monotonic under one-way LOCC, i.e., $$E_{\to}\left(\mN^{AB\to A'B'}\left(\rho^{AB}\right)\right)\leq E_{\to}\left(\rho^{AB}\right)\;,$$
		\item  non-negative, i.e., $E_\to(\rho^{AB})\geq 0$, with equality if $\rho^{AB}$ is separable,
		\item strongly monotonic under one-way LOCC, i.e., for any ensemble $\{p_y,\sigma_y^{A'B'}\}$ that can be obtained from $\rho^{AB}$ using one-way LOCC and subselection, it holds that
		$$E_{\to}\left(\rho^{AB}\right)
		\ge\sum_y p_y E_{\to}\left(\sigma^{A'B'}_y\right),$$
		\item convex,
		\item bounded by $E_\to\left(\rho^{AB}\right)\le \log(m)=E_\to\left(\Phi_m\right)$,
		\item and superadditive, i.e., $$ \qquad E_\to\left(\rho^{AB}\otimes\sigma^{A'B'}\right)\geq E_\to\left(\rho^{AB}\right)+E_\to\left(\sigma^{A'B'}\right).$$  
	\end{enumerate}
\end{theorem}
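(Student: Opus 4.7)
The plan is to verify the six properties in turn, using primarily the chain rule for the von Neumann entropy, strong subadditivity, and data processing of the coherent information, together with the device of attaching an auxiliary classical register on Alice's side to absorb either instrument outcomes or the labels of a decomposition of $\rho^{AB}$.

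For one-way LOCC monotonicity (item 1), decompose $\mN=\sum_x\mE_x^{A\to A'}\otimes \mF_{(x)}^{B\to B'}$ and pick a near-optimal $\mE^*\in\cptp(A'\to A'X')$ for $\sigma\eqdef \mN(\rho^{AB})$. The composed Alice-side channel $\mathcal{H}(\cdot)\eqdef\sum_x\mE^*(\mE_x(\cdot))\otimes\ketbra{x}{x}^Y$ is CPTP since $\{\mE_x\}$ is an instrument, and the state $\xi'\eqdef (\mathcal{H}\otimes\id^B)(\rho^{AB})$ satisfies $E_\to(\rho^{AB})\ge I(A'\ra BX'Y)_{\xi'}$. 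Using a classical copy of $Y$, Bob's classically-controlled channel $\mathcal{F}:\tau^B\otimes\ketbra{y}{y}^Y\mapsto \mF_{(y)}(\tau)\otimes\ketbra{y}{y}^Y$ turns $\xi'$ into a state $\xi''$ whose $A'B'X'$-marginal is $\mE^*(\sigma)$. Data processing on Bob's side together with discarding the classical $Y$ via strong subadditivity then yields $I(A'\ra BX'Y)_{\xi'}\ge I(A'\ra B'X'Y)_{\xi''}\ge I(A'\ra B'X')_{\mE^*(\sigma)}=E_\to(\sigma)$. Strong monotonicity (item 3) uses the same construction with $\mE^*$ replaced by a family of near-optimal channels $\mathcal{G}_y$ for the conditional states $\sigma_y$; the classicality of $Y$ makes the final coherent information $I(A''\ra B'X'Y)$ of the analogous post-Bob state equal to $\sum_y p_y E_\to(\sigma_y)$, and the same data-processing step delivers $E_\to(\rho^{AB})\ge \sum_y p_y E_\to(\sigma_y)$.

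For item 2, choosing $\mE$ to replace Alice's system with a fixed pure state and trivial $X$ gives $I(A\ra B)=H(B)-H(AB)=0$, so $E_\to\ge 0$. If $\rho^{AB}=\sum_k q_k\alpha_k^A\otimes\beta_k^B$ is separable, for any Alice channel $\mE:A\to AX$ introduce the classical decomposition label to form $\tilde\tau^{ABXK}\eqdef\sum_k q_k\mE(\alpha_k)^{AX}\otimes\beta_k^B\otimes\ketbra{k}{k}^K$. Strong subadditivity gives $H(A|BX)_{\mE(\rho)}=H(A|BX)_{\tilde\tau}\ge H(A|BXK)_{\tilde\tau}=\sum_k q_k H(A|X)_{\mE(\alpha_k)}\ge 0$, where the last step uses that classical-quantum states have non-negative conditional entropy. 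Hence $I(A\ra BX)_{\mE(\rho)}\le 0$ for every $\mE$, so $E_\to(\rho^{AB})=0$. Convexity (item 4) is the analogous trick applied to a convex decomposition $\rho=\sum_i p_i\rho_i$: the classical extension $\tau^{ABXI}\eqdef\sum_i p_i\mE(\rho_i)\otimes\ketbra{i}{i}^I$ combined with strong subadditivity gives $H(A|BX)_{\mE(\rho)}\ge \sum_i p_i H(A|BX)_{\mE(\rho_i)}$, which rearranges to $I(A\ra BX)_{\mE(\rho)}\le \sum_i p_i I(A\ra BX)_{\mE(\rho_i)}\le \sum_i p_i E_\to(\rho_i)$; taking a supremum over $\mE$ concludes.

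Items 5 and 6 are direct. Araki--Lieb yields $I(A\ra BX)_{\mE(\rho)}=-H(A|BX)\le H(A)\le \log m$ for any Alice channel $\mE:A\to AX$, and for $\rho=\Phi_m$ the choice $\mE=\id$ with trivial $X$ evaluates to $\log m$, proving item 5. Superadditivity (item 6) follows by taking tensor products of near-optimal channels: $\mE_1\otimes\mE_2$ on $\rho^{AB}\otimes\sigma^{A'B'}$ produces a product output on which the coherent information is additive, so $E_\to(\rho\otimes\sigma)\ge E_\to(\rho)+E_\to(\sigma)$. The main obstacle is the careful bookkeeping in items 1 and 3: Alice creates the outcome register $Y$ locally, but Bob needs it to implement his conditional channel, so one must check that each data-processing inequality respects the grouping into Alice's and Bob's subsystems. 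The resolution is that $Y$ is classical and can therefore be freely copied to Bob and subsequently discarded via strong subadditivity, after which all remaining steps reduce to routine manipulations of the conditional von Neumann entropy.
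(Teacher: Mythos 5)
Your arguments for items 2, 4, 5 and 6 are correct (the strong-subadditivity-with-flag route to convexity and non-positivity on separable states, and the Araki--Lieb bound for item 5, are harmless variants of the paper's use of convexity of the coherent information and of the flag decomposition), and your handling of Bob's side in items 1 and 3 --- copying the classical outcome register, applying the controlled channel, and discarding it by data processing --- is sound. The gap is the very first inequality of item 1 (inherited by item 3): you assert $E_\to\left(\rho^{AB}\right)\ge I\big(A'\ra BX'Y\big)_{\xi'}$ for $\xi'=(\mathcal{H}\otimes\id^B)(\rho^{AB})$ with $\mathcal{H}=\sum_x\mE^*\circ\mE_x\otimes\ketbra{x}{x}^Y\in\cptp(A\to A'X'Y)$. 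But by definition $E_\to\left(\rho^{AB}\right)$ is a supremum only over channels in $\cptp(A\to AX)$, i.e. Alice's quantum output must be the original system $A$, whereas your $\mathcal{H}$ has quantum output $A'$. Since $\mN\in\locc_1(AB\to A'B')$ with arbitrary $A'$, one can have $|A'|>|A|$ (Alice's instrument components $\mE_x:A\to A'$ may map into a strictly larger space). For $|A'|\le|A|$ an isometric embedding repairs the step, but for $|A'|>|A|$ the asserted inequality is exactly the nontrivial claim that enlarging Alice's quantum output cannot increase the supremum, and it does not follow from the definition.

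The paper devotes a separate argument to precisely this point: it first shows, by refining the classical register and using data processing, that the supremum over $\cptp(A\to A'Z)$ can be restricted to instruments whose CP components have a single Kraus operator $M_x:A\to A'$; it then uses the polar decomposition $M_x=V_xN_x$, with $N_x:A\to A$ forming a valid instrument and $V_x$ isometries, and the invariance of the coherent information under the resulting $Z$-controlled isometry, to conclude $I\big(A'\ra BZ\big)_{\mE(\rho)}=I\big(A\ra BZ\big)_{\mN(\rho)}\le E_\to\left(\rho^{AB}\right)$. Some argument of this kind (or an equivalent dimension-reduction lemma) must be supplied to justify your first step; once it is, the remainder of your proofs of items 1 and 3 goes through.
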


The $\eps$-single-shot distillable entanglement under one-way LOCC (and the error bounded by the trace distance) is defined as
\begin{align}\label{f1131}
	&\distill_{\to}^{\eps}\left(\rho^{AB}\right)\\
	&\quad\eqdef\max\Big\{\log m\;:\; T\left(\rho^{AB}\xrightarrow{\locc_1}  \Phi_m\right)\leq\eps\Big\}\;.\nonumber
\end{align}
A simple formula for the above expression is presently not available. However, we can provide an upper bound.
\begin{theorem}\label{epiu}
	Let $\rho\in\md(AB)$ and $\eps\in(0,1/2)$. Then, the one-way $\eps$-single-shot distillable entanglement is bounded by
	\be\label{singleed}
	\distill^\eps_{\to}\left(\rho^{AB}\right)\leq\frac1{1-2\eps}E_\to\left(\rho^{AB}\right)+\frac{1+\eps}{1-2\eps}h\left(\frac\eps{1+\eps}\right)\;,
	\ee 
	where $h(x)\eqdef-x\log x-(1-x)\log(1-x)$ is the binary Shannon entropy.
\end{theorem}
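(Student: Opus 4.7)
The plan is to follow a standard ``monotone plus continuity'' template. Suppose a one-way LOCC channel $\mN\in\locc_1(AB\to A'B')$ with $|A'|=|B'|=m$ achieves $T(\mN(\rho^{AB}),\Phi_m)\le\eps$, and let $\sigma^{A'B'}=\mN(\rho^{AB})$. Since $E_\to$ is monotonic under one-way LOCC (already established in the preceding theorem), we immediately obtain
\begin{equation}
E_\to(\rho^{AB})\;\ge\;E_\to(\sigma^{A'B'}).
\end{equation}
The task then reduces to lower bounding $E_\to(\sigma^{A'B'})$ in terms of $\log m$, using that $\sigma^{A'B'}$ is $\eps$-close (in trace distance) to $\Phi_m$.

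For the lower bound, I would simply take the identity channel as the optimizer $\mE$ in Def.~\ref{def:cohInf}, which gives
\begin{equation}
E_\to(\sigma^{A'B'})\;\ge\;I(A'\rangle B')_\sigma\;=\;-H(A'|B')_\sigma.
\end{equation}
Since $H(A'|B')_{\Phi_m}=-\log m$, the key ingredient is the Alicki--Fannes--Winter continuity bound for the conditional von~Neumann entropy: for any two bipartite states at trace distance at most $\eps$,
\begin{equation}
\bigl|H(A'|B')_\sigma-H(A'|B')_{\Phi_m}\bigr|\;\le\;2\eps\log m+(1+\eps)\,h\!\left(\tfrac{\eps}{1+\eps}\right).
\end{equation}
Applied here, this yields
\begin{equation}
-H(A'|B')_\sigma\;\ge\;(1-2\eps)\log m-(1+\eps)\,h\!\left(\tfrac{\eps}{1+\eps}\right),
\end{equation}
and chaining the inequalities gives $E_\to(\rho^{AB})\ge(1-2\eps)\log m-(1+\eps)h(\eps/(1+\eps))$. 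Rearranging (using $1-2\eps>0$, which is precisely the assumption $\eps\in(0,1/2)$) produces exactly the claimed bound after taking the supremum over all achievable $\log m$.

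I don't expect any serious obstacle: the argument is essentially a textbook weak-converse, and both ingredients (monotonicity of $E_\to$, and AFW continuity of $H(A|B)$) are available off-the-shelf. The only subtlety worth double-checking is that AFW is applied on the \emph{output} systems $A'B'$ of dimension $m$, so that the $\log|A'|=\log m$ factor is what produces the $2\eps\log m$ term and, upon rearrangement, the prefactor $1/(1-2\eps)$; this is also why the result breaks down at $\eps=1/2$. One could plausibly tighten the additive term by replacing AFW with a sharper estimate tailored to maximally entangled targets, but the plain AFW bound already suffices to reproduce the statement as written.
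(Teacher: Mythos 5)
Your proposal is correct and follows essentially the same route as the paper's proof: monotonicity of $E_\to$ under one-way LOCC, lower bounding $E_\to(\sigma)$ by the coherent information $I(A'\rangle B')_\sigma$, and the Alicki--Fannes--Winter continuity bound for $H(A'|B')$ (the paper cites Ref.~[Lem.~2] of Winter 2016) together with $I(A'\rangle B')_{\Phi_m}=\log m$, then rearranging using $1-2\eps>0$. No substantive difference.
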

This result should be compared to Ref.~\cite[Lem.~4]{Buscemi2010}, where also an upper bound on a variant of the one-way $\eps$-single-shot distillable entanglement is provided (again with a slightly different definition of the allowed error in terms of the fidelity). The main difference is that in our bound, the optimization over instruments (contained in $E_\to\left(\rho^{AB}\right)$) is independent of $\eps$, whilst the corresponding optimization in their bound is not. The bound provided in Thm.~\ref{epiu} recovers the exact asymptotic solution given in Ref.~\cite[Thm.~13]{Devetak2005}. 

For lower bounds on the one-way $\eps$-single-shot distillable entanglement, see again Ref.~\cite{Buscemi2010} as well as Ref.~\cite[Prop.~21]{Wilde2017}. Even though Ref.~\cite{Wilde2017} defined the conversion distance using the fidelity, their bound holds for our definition too due to the following Lemma. 
\begin{lemma}\label{lem1331}
    Let $\rho\in\md(AB)$, and $\Phi_m\in\md(A'B')$ be the maximally entangled state with $m\eqdef|A'|=|B'|$. Then, 
    \begin{align}\label{13102}
    T\left(\rho\xrightarrow{\locc_1}  \Phi_m\right)&=P^2\left(\rho\xrightarrow{\locc_1}  \Phi_m\right) \nonumber\\
    &=1-\sup_{\mN\in\locc_1}\tr\left[\Phi_m\mN\left(\rho\right)\right]\;,
    \end{align}
    where the supremum is over all $\mN\in\locc_1(AB\to A'B')$, and $P$ is the purified distance as given in Eq.~\eqref{eq:purifieddistance}.
\end{lemma}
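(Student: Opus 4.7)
\textbf{Proof plan for Lemma~\ref{lem1331}.} The plan exploits two features of the target state $\Phi_m$: it is pure, and it is fixed by the isotropic twirl $\mT(\cdot)\eqdef\int\!dU\,(U\otimes\bar U)(\cdot)(U^\dagger\otimes\bar U^\dagger)$, which itself belongs to $\locc_1$. I will establish the $P^2$ identity directly from purity of $\Phi_m$, and then derive the $T$ identity via two matching inequalities, with the upper one provided by twirling.

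For the $P^2$ identity, the key observation is that $\sqrt{\Phi_m}=\Phi_m$, so for every $\tau\in\md(A'B')$ the fidelity collapses to $F(\tau,\Phi_m)=\|\sqrt{\tau}\,\Phi_m\|_1=\sqrt{\tr[\Phi_m\tau]}$, and hence $P^2(\tau,\Phi_m)=1-\tr[\Phi_m\tau]$. Plugging this pointwise identity into the definition~\eqref{eq:PurConvDist} restricted to $\locc_1$ and pulling the constant $1$ out of the infimum yields $P^2(\rho\xrightarrow{\locc_1}\Phi_m)=1-\sup_{\mN\in\locc_1}\tr[\Phi_m\mN(\rho)]$.

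For the trace-distance identity, the lower bound follows from Holevo--Helstrom applied to the effect $\Phi_m$: for every $\tau$ with $\rho\xrightarrow{\locc_1}\tau$, one has $\tfrac12\|\tau-\Phi_m\|_1\ge\tr[\Phi_m(\Phi_m-\tau)]=1-\tr[\Phi_m\tau]$, and taking the infimum over $\tau$ produces $T(\rho\xrightarrow{\locc_1}\Phi_m)\ge 1-\sup_{\mN\in\locc_1}\tr[\Phi_m\mN(\rho)]$. For the matching upper bound I twirl: for any achievable $\tau$, the twirled state $\mT(\tau)$ is again reachable by $\locc_1$ and is isotropic with Bell fidelity $p=\tr[\Phi_m\tau]$, i.e.\ $\mT(\tau)=p\,\Phi_m+(1-p)(I-\Phi_m)/(m^2-1)$. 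Its eigenvalues are $p$ (multiplicity $1$) and $(1-p)/(m^2-1)$ (multiplicity $m^2-1$), from which an elementary computation gives $\tfrac12\|\mT(\tau)-\Phi_m\|_1=1-p$. Taking the infimum over $\tau$ closes the argument.

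The one potentially subtle point is verifying that $\mT$ really can be realized by $\locc_1$. This is handled by replacing the continuous Haar average with a finite unitary $2$-design on Alice's side: Alice samples $U$ from the design using local randomness, transmits its classical label to Bob, and Bob applies $\bar U$ on his side. A $2$-design suffices because $\mT$ is determined by the second moments of the Haar measure. Everything else reduces to elementary spectral arithmetic on isotropic states, so I expect no other real obstacle.
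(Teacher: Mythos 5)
Your proposal is correct and follows essentially the same route as the paper: the Horodecki isotropic twirl (implementable within $\locc_1$ since Alice can sample locally and send the label to Bob) reduces the trace-distance computation to isotropic states, exactly as in the paper's proof, with your Helstrom-type lower bound $\tr\left[\Phi_m(\Phi_m-\tau)\right]\le\tfrac12\norm{\tau-\Phi_m}_1$ playing the role of the paper's data-processing step. You additionally spell out the $P^2$ identity via purity of $\Phi_m$ and the finite $2$-design realization of the twirl, both of which are fine and if anything more explicit than the paper's own argument.
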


\subsection{Channel distillation}

In Def.~\ref{def:cohInf}, we introduced the coherent information of entanglement of a quantum state and subsequently showed that it is a measure of entanglement under one-way LOCC. The following Definition contains the generalization to quantum channels.
\begin{definition}
	Let $\mN\in\cptp(A\to B)$ be a quantum channel. Its coherent information of entanglement is then defined as 
	\be\label{defcic}
	E_{\to}\left(\mN^{A\to B}\right)\eqdef\max_{\psi\in\pure(A\tA)}E_{\to}\left(\mN^{\tA\to B}\left(\psi^{A\tA}\right)\right)\;.
	\ee
\end{definition}
Similarly, one can define the coherent information of a quantum channel $\mN^{A\to B}$ as
\be\label{eq:CohInfChan}
I(A\ra B)_\mN\eqdef \max_{\psi\in\pure(A\tA)}I\big(A\ra B\big)_{\mN^{\tA\to B}\left(\psi^{A\tA}\right)}\;.
\ee
Interestingly, $E_{\to}\left(\mN^{A\to B}\right)$ and $I(A\ra B)_\mN$ coincide.
\begin{theorem}\label{thm:equivCohInf}
	Let $\mN\in\cptp(A\to B)$ be a quantum channel. It then holds that
	\be
	E_{\to}\left(\mN^{A\to B}\right)=I(A\ra B)_\mN\;.
	\ee
\end{theorem}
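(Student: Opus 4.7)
The plan is to establish the two inequalities separately. The direction $E_{\to}(\mN^{A\to B})\ge I(A\ra B)_\mN$ is immediate: for any $\psi\in\pure(A\tA)$, choosing $\mE=\id^{A\to A}$ with trivial classical register in the definition of $E_\to(\mN^{\tA\to B}(\psi^{A\tA}))$ yields exactly $I(A\ra B)_{\mN(\psi)}$, and maximizing over $\psi$ on both sides gives the bound.

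For the opposite direction, fix $\psi\in\pure(A\tA)$ and any $\mE\in\cptp(A\to AX)$ with classical $X$, viewed as an instrument $\{\mE_x\}$ on $A$ with Kraus representations $\mE_x(\sigma)=\sum_i K_{x,i}\sigma K_{x,i}^\dagger$. I would refine $\mE$ into the finer instrument $\mE'\in\cptp(A\to AXI)$ with one outcome per Kraus operator,
\begin{equation}
\mE'(\sigma)=\sum_{x,i}K_{x,i}\sigma K_{x,i}^\dagger\otimes\ketbra{x}{x}^X\otimes\ketbra{i}{i}^I.
\end{equation}
The original register $X$ of $\mE$ is obtained from the finer $XI$ by discarding $I$, so data processing of the conditional von-Neumann entropy gives $I(A\ra BX)_{\mE(\mN(\psi))}\le I(A\ra BXI)_{\mE'(\mN(\psi))}$, and it suffices to bound the finer quantity.

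Because $\mE'$ acts on $A$ while $\mN$ acts on $\tA$, the two commute, producing
\begin{equation}
\mE'(\mN(\psi))=\sum_{x,i} q_{x,i}\,\mN(\tpsi_{x,i}^{A\tA})\otimes\ketbra{x}{x}^X\otimes\ketbra{i}{i}^I,
\end{equation}
where $q_{x,i}\tpsi_{x,i}^{A\tA}=(K_{x,i}\otimes I^{\tA})\psi^{A\tA}(K_{x,i}\otimes I^{\tA})^\dagger$ is rank one, so each $\tpsi_{x,i}\in\pure(A\tA)$. Since $XI$ is classical, the conditional entropy decomposes as a convex average, giving
\begin{equation}
I(A\ra BXI)_{\mE'(\mN(\psi))}=\sum_{x,i}q_{x,i}\,I(A\ra B)_{\mN(\tpsi_{x,i})}\le I(A\ra B)_\mN,
\end{equation}
where the last step applies the definition of $I(A\ra B)_\mN$ to each pure input $\tpsi_{x,i}$. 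Taking the supremum over $\mE$ and the maximum over $\psi$ then yields $E_\to(\mN^{A\to B})\le I(A\ra B)_\mN$, completing the proof.

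The main subtlety I anticipate is resisting the naive strategy of purifying each mixed conditional $\mE_x(\psi)/p_x$ onto a larger reference $AR_x\tA$: enlarging the reference does not generally preserve the ordering of coherent informations, since a Stinespring computation produces $I(AR_x\ra B) - I(A\ra B) = H(R_xE) - H(E)$, whose sign is not controlled. Refining the instrument to rank-one Kraus components sidesteps this issue by producing pure conditional states without enlarging the $A$ system, making the reduction to the channel coherent information $I(A\ra B)_\mN$ immediate.
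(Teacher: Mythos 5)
Your proof is correct, and it reaches the hard inequality by a slightly different route than the paper. Both arguments share the same skeleton: the easy direction by taking the trivial pre-processing, and the hard direction by decomposing the coherent information over the classical register via the direct-sum identity $I(A\ra BX)_{\sum_x p_x\rho_x\otimes\ketbra{x}{x}}=\sum_x p_x I(A\ra B)_{\rho_x}$ (Eq.~\eqref{eq:FlagCohInf}). The difference is how the conditional states are handled. The paper relaxes $\mE^{A\to AX}(\psi^{A\tA})$ to an \emph{arbitrary} classical-quantum state $\sigma^{A\tA X}$, so the conditional inputs $\sigma_x^{A\tA}$ are mixed, and then closes the argument by invoking convexity of the coherent information to show that the maximum of $I(A\ra B)_{\mN(\sigma)}$ over mixed $\sigma\in\md(A\tA)$ equals the maximum over pure states, i.e.\ $I(A\ra B)_\mN$. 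You instead refine the instrument to rank-one Kraus components, paying a data-processing (strong-subadditivity) inequality $I(A\ra BX)\le I(A\ra BXI)$ for the finer register, so that the conditional states $\tpsi_{x,i}$ are pure and the bound $I(A\ra B)_{\mN(\tpsi_{x,i})}\le I(A\ra B)_\mN$ follows directly from the definition, with no convexity needed. What each buys: the paper's relaxation is marginally more general (it bounds the objective for arbitrary cq inputs on $A\tA X$, not only those produced by an instrument acting on $\psi$) and is shorter once convexity of $I(A\ra B)$ is available; your rank-one refinement avoids that convexity lemma and keeps everything at the level of pure conditional states — and it is, in fact, the same device the paper uses elsewhere, namely in the reduction to single-Kraus CP maps (Eq.~\eqref{eq:SingleKraus}) inside the proof of monotonicity of $E_\to$, so the two proofs draw on the same toolbox while distributing the work differently. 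Your closing remark about why a naive purification of the mixed conditionals fails is also apt: neither proof takes that route, the paper sidestepping it with convexity and you with the refinement.
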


Consider a channel $\mN\in\cptp(A\to B)$. The most general bipartite state $\rho\in\md(A_3 B_3)$ to which this channel can be converted with the help of $\locc_1$ is given by
\begin{align}
\rho^{A_3B_3}=\mE^{A_2B\to A_3B_3}\circ\mN^{A\to B}\left(\sigma^{AA_2}\right)\;,
\end{align}
where $\mE$ is in $\locc_1$. Since we can purify $\sigma$ by enlarging system $A_2$ and adapting $\mE$ accordingly, w.l.o.g., we will assume that $\sigma$ is pure. For any natural number $m$, let $|A_3|=|B_3|=m$ and we thus define the conversion distance
\begin{align}
	&T\left(\mN^{A\to B}\xrightarrow{\locc_1}\Phi_m\right) \\
	&\eqdef\frac12\inf_{\substack{\mE\in\locc_1\\\psi\in\pure}}\left\|\Phi_m^{A_3B_3}\!-\!\mE^{A_2B\to A_3B_3}\circ\mN^{A\to B}\left(\psi^{AA_2}\right)\right\|_1, \nonumber
\end{align}
where a priori, the infimum also includes an infimum over $|A_2|$.
Since the Schmidt rank of $\psi^{AA_2}$ cannot exceed $|A|$, w.l.o.g., we can fix $|A_2|=|A|$ (and adapt $\mE$ accordingly).
Observe that the equation above implies that (see Eq.~\eqref{eq:convdist})
\begin{align}\label{2.3n}
	&T\left(\mN^{A\to B}\xrightarrow{\locc_1}\Phi_m\right)\\
	&=\min_{\psi,\phi\in\pure}
	T\left(\mN^{A\to B}\left(\psi^{A\tA}\right)\xrightarrow{\locc_1}\Phi_m\right).\nonumber
\end{align}
The one-way $\eps$-single-shot distillable entanglement is then defined as (cf. Eq.~\eqref{f1131})
\be\label{edone}
\distill^{\eps}_\to\left(\mN\right)\eqdef\max\Big\{\log m:\;T\left(\mN\xrightarrow{\locc_1}  \Phi_m\right)\leq\eps\Big\}.
\ee

Therefore, from Eqs.~\eqref{f1131} and~\eqref{2.3n}, we get that 
\begin{align}
	&\distill^{\eps}_\to\left(\mN^{A\to B}\right)=\max_{\psi,\phi\in\pure} \distill^{\eps}\left(\mN^{A\to B}\left(\psi^{A\tA}\right)\right).
\end{align}
Combining Thm.~\ref{epiu} and Thm.~\ref{thm:equivCohInf}, we thus obtain
\begin{theorem}\label{thm:DistChan}
	Let $\mN\in\cptp(A\to B)$ be a quantum channel. It then holds that
	\be
	\distill^\eps_{\to}\left(\mN\right)\leq\frac1{1-2\eps}I(A\ra B)_\mN+\frac{1+\eps}{1-2\eps}h\left(\frac\eps{1+\eps}\right)\;.
	\ee
\end{theorem}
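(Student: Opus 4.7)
The plan is to combine three ingredients that have already been established earlier in the section. The first is the channel-to-state reduction displayed immediately before the statement, namely
$$\distill^{\eps}_\to\left(\mN^{A\to B}\right) = \max_{\psi\in\pure(A\tA)} \distill^{\eps}\left(\mN^{\tA\to B}\left(\psi^{A\tA}\right)\right),$$
which itself follows from Eqs.~\eqref{f1131} and~\eqref{2.3n}. The second is Thm.~\ref{epiu}, giving the single-shot one-way state-distillation bound in terms of the coherent information of entanglement $E_\to$. The third is Thm.~\ref{thm:equivCohInf}, identifying $E_\to(\mN^{A\to B})$ with $I(A\ra B)_\mN$.

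First I would apply Thm.~\ref{epiu} to each state $\mN^{\tA\to B}(\psi^{A\tA})$ separately, yielding for every $\psi\in\pure(A\tA)$ the inequality
$$\distill^{\eps}\!\left(\mN^{\tA\to B}(\psi^{A\tA})\right) \leq \frac{1}{1-2\eps}\, E_\to\!\left(\mN^{\tA\to B}(\psi^{A\tA})\right) + \frac{1+\eps}{1-2\eps}\, h\!\left(\frac{\eps}{1+\eps}\right).$$
Because the $h$-term on the right is independent of $\psi$, taking the maximum over $\psi$ passes through it cleanly, and by Def.~\ref{defcic} the resulting supremum of $E_\to$ over purifications equals $E_\to(\mN^{A\to B})$. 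A final appeal to Thm.~\ref{thm:equivCohInf} replaces $E_\to(\mN^{A\to B})$ by $I(A\ra B)_\mN$, producing the claimed bound.

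Since every nontrivial piece is already encapsulated in the cited theorems, there is no real obstacle — the proof is essentially a single chain of substitutions. The only point demanding minor care is ensuring the hypotheses of Thm.~\ref{epiu} are met, in particular that $\eps\in(0,1/2)$, which is exactly the implicit range here because the bound blows up otherwise; this range is preserved trivially by the channel-to-state reduction since $\eps$ is not altered.
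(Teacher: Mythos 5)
Your proposal is correct and follows essentially the same route as the paper: the channel-to-state reduction from Eqs.~\eqref{f1131} and~\eqref{2.3n}, then Thm.~\ref{epiu} applied to each output state, the definition in Eq.~\eqref{defcic} to identify the maximum over purifications with $E_\to(\mN)$, and finally Thm.~\ref{thm:equivCohInf} to replace $E_\to(\mN)$ by $I(A\ra B)_\mN$. Your remark about the range $\eps\in(0,1/2)$ inherited from Thm.~\ref{epiu} is also appropriate.
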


\section{Discussion and outlook}
In this work, we studied entanglement distillation and dilution of states and channels in the single-shot regime~\cite{Bennett1996a, Bennett1996b, Buscemi2011, Das2020, Bauml2019, Wang2023, Kim2021,Kim2021b, fang2019non, Fang2020nogo, Regula2021, Takagi2022, Fang2022}. By restricting the allowed error $\eps$ with the recently introduced star conversion distance~\cite{Zanoni2023}, we determined surprisingly compact closed-form expressions for the $\eps$-single-shot entanglement cost and the $\eps$-single-shot distillable entanglement of pure states, which allowed us to obtain second-order asymptotics~\cite{Kumagai2017}. Since these results are based on (approximate) majorization~\cite{Marshall2011,Horodecki2018}, we expect that similar results can be obtained in other majorization-based resource theories such as coherence~\cite{Baumgratz2014,Streltsov2017}, non-uniformity~\cite{GMNSYH15}, or quantum thermodynamics~\cite{Janzing2000, Egloff2015, Aberg2013, BHORS13,Lostaglio2019}.

For mixed states, we expressed the $\eps$-single-shot entanglement cost introduced in Ref.~\cite{Buscemi2011} in terms of a smoothed version of the conditional max-entropy. As a Corollary, we showed that this equips the smoothed max-entropy with an operational interpretation in terms of the $\eps$-single-shot entanglement cost of pure states. On pure states, it coincides with the $\eps$-single-shot entanglement cost based on the star conversion distance. It is an interesting open question why this is the case. Based on these results, we provided bounds on the entanglement cost of quantum channels that coincide in the zero-error limit. Concerning entanglement distillation, we introduced the coherent information of entanglement and used it to upper bound both the one-way $\eps$-single-shot distillable entanglement of states and channels.

Our work thus contributes to a better understanding of how entanglement can be optimally used to implement a desired quantum channel and how this is related to entropic quantities. This is highly relevant to optimize technological applications in which entanglement plays a role, which will lead to a better understanding of the relevance of entanglement for quantum advantages.

\begin{acknowledgments}
T. T. and G. G. acknowledge support from the Natural
Sciences and Engineering Research Council of Canada
(NSERC). T. T. acknowledges support from the Pacific
Institute for the Mathematical Sciences (PIMS). The research and findings may not reflect those of the Institute.
\end{acknowledgments}

%

\newpage

\onecolumngrid
\newpage
\appendix
\section{Additional remarks}
In the Appendix, for completeness, we provide a few additional comments. 
Technically Eq.~\eqref{eq:equiv} was not shown in Ref.~\cite{Zanoni2023}, but it follows directly from what was shown: As expected for a consistent definition, and because one can always append and remove separable states reversibly,
\begin{align} \label{eq:ConvDistCons}
	&T \left(\rho^{AB} \xrightarrow{\locc} \sigma^{A'B'}\right) \nonumber \\
	&=T \left(\rho^{AB} \otimes \ketbra{11}{11}^{A'B'} \xrightarrow{\locc} \ketbra{11}{11}^{AB}\otimes \sigma^{A'B'}\right).
\end{align}
A technical proof is the following:
\begin{align}
	& T \left(\rho^{AB} \otimes \ketbra{11}{11}^{A'B'} \xrightarrow{\locc} \ketbra{11}{11}^{AB}\otimes \sigma^{A'B'}\right) \nonumber \\
	&= \inf_{\tau\in\md(AA'BB')}\left\{ \frac12 \norm{\tau^{AA'BB'}- \ketbra{11}{11}^{AB}\otimes \sigma^{A'B'}}_1 :\;\rho^{AB} \otimes \ketbra{11}{11}^{A'B'}\xrightarrow{\locc} \tau^{AA'BB'}\right\} \nonumber \\
	&\le \inf_{\tau\in\md(A'B')}\left\{ \frac12 \norm{\ketbra{11}{11}^{AB}\otimes\tau^{A'B'}- \ketbra{11}{11}^{AB}\otimes \sigma^{A'B'}}_1 :\;\rho^{AB} \otimes \ketbra{11}{11}^{A'B'}\xrightarrow{\locc} \ketbra{11}{11}^{AB}\otimes\tau^{A'B'}\right\} \nonumber \\
	&=\inf_{\tau\in\md(A'B')}\left\{ \frac12 \norm{\tau^{A'B'}- \sigma^{A'B'}}_1   :\;\rho^{AB} \xrightarrow{\locc} \tau^{A'B'}\right\} \nonumber \\
	&=T \left(\rho^{AB} \xrightarrow{\locc} \sigma^{A'B'}\right)
\end{align}
and
\begin{align}
	&T \left(\rho^{AB} \otimes \ketbra{11}{11}^{A'B'} \xrightarrow{\locc} \ketbra{11}{11}^{AB}\otimes \sigma^{A'B'}\right) \nonumber \\
	&= \inf_{\tau\in\md(AA'BB')}\left\{ \frac12 \norm{\tau^{AA'BB'}- \ketbra{11}{11}^{AB}\otimes \sigma^{A'B'}}_1 :\;\rho^{AB} \otimes \ketbra{11}{11}^{A'B'}\xrightarrow{\locc} \tau^{AA'BB'}\right\} \nonumber \\
	&\ge \inf_{\tau\in\md(AA'BB')}\left\{ \frac12 \norm{\tr_{AB}\tau^{AA'BB'}- \sigma^{A'B'}}_1 :\;\rho^{AB} \otimes \ketbra{11}{11}^{A'B'}\xrightarrow{\locc} \tau^{AA'BB'}\right\} \nonumber \\
	&\ge \inf_{\tau\in\md(AA'BB')}\left\{ \frac12 \norm{\tr_{AB}\tau^{AA'BB'}- \sigma^{A'B'}}_1 :\;\rho^{AB} \otimes \ketbra{11}{11}^{A'B'}\xrightarrow{\locc} \tr_{AB}\tau^{AA'BB'}\right\} \nonumber \\
	&=\inf_{\tau\in\md(A'B')}\left\{ \frac12 \norm{\tau^{A'B'}- \sigma^{A'B'}}_1   :\;\rho^{AB} \xrightarrow{\locc} \tau^{A'B'}\right\} \nonumber \\
	&=T \left(\rho^{AB} \xrightarrow{\locc} \sigma^{A'B'}\right).
\end{align}
We, therefore, find with the help of Ref.~\cite[Thm.~3]{Zanoni2023} that
\begin{align}
	&\frac{1}{2}T_\star^2 \left(\psi^{AB} \xrightarrow{\locc} \phi^{A'B'}\right)\nonumber\\
	&=\frac{1}{2}T_\star^2 \left(\psi^{AB} \otimes \ketbra{11}{11}^{A'B'} \xrightarrow{\locc} \ketbra{11}{11}^{AB}\otimes \phi^{A'B'}\right) \nonumber \\
	&\le T \left(\psi^{AB} \otimes \ketbra{11}{11}^{A'B'} \xrightarrow{\locc} \ketbra{11}{11}^{AB}\otimes \phi^{A'B'}\right) \nonumber \\
	&= T \left(\psi^{AB} \xrightarrow{\locc} \phi^{A'B'}\right) \nonumber \\
	&= T \left(\psi^{AB} \otimes \ketbra{11}{11}^{A'B'} \xrightarrow{\locc} \ketbra{11}{11}^{AB}\otimes \phi^{A'B'}\right) \nonumber \\
	&\le \sqrt{2T_\star\left(\psi^{AB} \otimes \ketbra{11}{11}^{A'B'} \xrightarrow{\locc} \ketbra{11}{11}^{AB}\otimes \phi^{A'B'}\right)} \nonumber \\
	&= \sqrt{2T_\star\left(\psi^{AB} \xrightarrow{\locc} \phi^{A'B'}\right)}.
\end{align}

Analogously to Eq.~\eqref{eq:ConvDistCons}, and as expected, it also holds that
\begin{align}\label{eq:PurConvDistCons}
	&P\left(\rho^{AB} \xrightarrow{\locc} \sigma^{A'B'}\right) \nonumber\\
	&=P\left(\rho^{AB}\otimes\ketbra{11}{11}^{A'B'} \xrightarrow{\locc} \ketbra{11}{11}^{AB} \otimes\sigma^{A'B'}\right),
\end{align}
which follows from exactly the same arguments.

\section{Proofs of the results in the main text}\label{sec:proofs}
\setcounter{theorems}{0} 
In the following, we provide the proofs of the results presented in the main text, which we repeat for readability. 

\begin{theorem}
Let $\psi\in \pure(AB), \phi\in \pure(A'B')$, and $\p\in\prob^\downarrow(|A|), \q\in\prob^\downarrow(|A'|)$ be their corresponding Schmidt coefficients. Then,
\ba
T_{\star}\left(\psi^{AB}\to\phi^{A'B'}\right)=&\max_{k\in[\sr(\psi^{AB})]}\big\{\|\p\|_{(k)}-\|\q\|_{(k)}\big\}\;.
\ea
\end{theorem}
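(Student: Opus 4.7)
The plan is to view the definition of $T_\star$ as a purely classical approximate-majorization problem: minimize $\tfrac{1}{2}\|\r-\q\|_1$ over probability vectors $\r\in\prob(|A|)$ with $\r\succ\p$. I would prove the matching upper and lower bounds separately, with the heavy lifting on the upper bound supplied by the extremal majorization construction of Horodecki, Oppenheim, and Sparaciari~\cite{Horodecki2018}.

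For the lower bound, the main tool is the elementary inequality $\|\r\|_{(k)}-\|\q\|_{(k)}\leq \tfrac{1}{2}\|\r-\q\|_1$, valid for any two probability vectors on a common index set. This follows by letting $S$ be the indices of the $k$ largest entries of $\r$ and noting $\|\q\|_{(k)}\geq \sum_{i\in S}q_i$, so that $\|\r\|_{(k)}-\|\q\|_{(k)}\leq \sum_{i\in S}(r_i-q_i)\leq \sum_i (r_i-q_i)_+=\tfrac{1}{2}\|\r-\q\|_1$ (using that both vectors sum to $1$). Combined with $\|\r\|_{(k)}\geq \|\p\|_{(k)}$ for any feasible $\r$, this gives $\tfrac{1}{2}\|\r-\q\|_1\geq \|\p\|_{(k)}-\|\q\|_{(k)}$; maximizing over $k$ and then minimizing over $\r$ yields $T_\star\geq \max_k\{\|\p\|_{(k)}-\|\q\|_{(k)}\}$.

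The upper bound is the main obstacle and requires an explicit construction: I would write down a specific $\r^\star$, built from $\q^\downarrow$ by a water-filling procedure that raises the top-$k$ partial sum of $\q$ to $\|\p\|_{(k)}$ whenever the majorization constraint would be violated, and then pays for this by reducing the smallest entries of $\q$. The claim is that the resulting $\r^\star$ is feasible and saturates $\tfrac{1}{2}\|\r^\star-\q\|_1=\max_k\{\|\p\|_{(k)}-\|\q\|_{(k)}\}$, which is essentially the extremal-distribution characterization of Ref.~\cite{Horodecki2018}. The delicate part is simultaneous feasibility for \emph{all} $k$ together with tightness of the $\ell_1$ cost; I would prove this by induction on $k$, rebalancing mass only locally so that each partial sum changes monotonically and the mass removed from the tail exactly matches the mass added to the head at each step. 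Finally, the restriction of the maximum to $k\in[\sr(\psi^{AB})]$ is cosmetic: for $k\geq \sr(\psi^{AB})$ one has $\|\p\|_{(k)}=1$, so the expression $1-\|\q\|_{(k)}$ is non-increasing in $k$ and the maximum over $[|A|]$ is attained at some $k\leq \sr(\psi^{AB})$. The case $|A|\neq|A'|$ is reduced to the equal-dimensional one via Eq.~\eqref{eq:TstarGen} and zero-padding of the Schmidt vectors, after which the same argument applies verbatim.
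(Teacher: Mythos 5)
Your argument is correct, but it is not the route the paper takes. The paper's own proof is essentially a two-line reduction: it observes that the padded states $\psi^{AB}\otimes\ketbra{11}{11}^{A'B'}$ and $\ketbra{11}{11}^{AB}\otimes\phi^{A'B'}$ have Schmidt vectors $\e_1^{(|A'|)}\otimes\p$ and $\e_1^{(|A|)}\otimes\q$, and then invokes the equal-dimension closed formula of Ref.~\cite[Thm.~3]{Zanoni2023} (itself obtained from the approximate-majorization results of Ref.~\cite{Horodecki2018}). You instead re-derive that closed formula from first principles and only share with the paper the final padding step via Eq.~\eqref{eq:TstarGen}; your lower bound (via $\norm{\r}_{(k)}-\norm{\q}_{(k)}\le\frac12\norm{\r-\q}_1$ together with $\r\succ\p$) is complete, and your reduction of the maximum to $k\in[\sr(\psi^{AB})]$ is the right observation. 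What your route buys is self-containedness and transparency about where the formula comes from; what the paper's buys is brevity by outsourcing the equal-dimension case.

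One concrete suggestion on the only part you leave as a sketch: the induction/water-filling is unnecessary, and raising several partial sums separately risks overshooting the $\ell_1$ budget unless all added mass is placed on the first entry. With $\eps^\star\eqdef\max_{k}\big\{\norm{\p}_{(k)}-\norm{\q}_{(k)}\big\}\ge 0$, take the steepest $\eps^\star$-approximation of $\q$: set $r_1=q_1+\eps^\star$, remove total mass $\eps^\star$ from the smallest entries of $\q$, and keep everything else. Then $r_1\le1$ since $\eps^\star\le1-\norm{\q}_{(1)}$, the vector stays ordered, its Ky-Fan sums are $\norm{\r}_{(k)}=\min\big\{\norm{\q}_{(k)}+\eps^\star,\,1\big\}\ge\norm{\p}_{(k)}$ for every $k$ (using your observation that the maximum over all $k$ is already attained for some $k\le\sr(\psi^{AB})$), so $\r\succ\p$, and $\frac12\norm{\r-\q}_1=\eps^\star$. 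This closes the upper bound in one step, without appealing to the inductive rebalancing you describe.
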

\begin{proof}
    We notice that the (by convention ordered) Schmidt coefficients of 
    \begin{align}
        \psi^{AB}\otimes\ketbra{11}{11}^{A'B'}
    \end{align}
    are given by
    \begin{align}
         \e_1^{(|A'|)}\otimes\p,
    \end{align}
    where $\e_1^{(|A'|)}=(1,0,...,0)\in \prob^\downarrow(|A'|)$. The Schmidt coefficients of 
    \begin{align}
        \ketbra{11}{11}^{AB}\otimes \phi^{A'B'}
    \end{align}
    on the other hand, are given by
    \begin{align}
        \e_1^{(|A|)}\otimes\q.
    \end{align}
     The claim then follows from Eq.~\eqref{eq:TstarGen} and  Ref.~\cite[Thm.~3]{Zanoni2023}.
\end{proof}

\begin{theorem} 
	Let $\psi\in\pure(AB),\phi\in\pure(A'B')$. Then,
	\be
	P\left(\psi^{AB}\to\phi^{A'B'}\right)=P_{\star}\left(\psi^{AB}\to\phi^{A'B'}\right)\;.
	\ee
\end{theorem}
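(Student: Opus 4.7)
My plan is to establish the two inequalities $P\leq P_\star$ and $P\geq P_\star$ separately, after using the padding identity (analogous to Eq.~\eqref{eq:PurConvDistCons}) to reduce to the case $|A|=|A'|$, $|B|=|B'|$ without loss of generality.

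For the direction $P\leq P_\star$, given any feasible $\r\succ\p$, Nielsen's theorem supplies a deterministic LOCC protocol converting $\psi^{AB}$ to a pure state with Schmidt vector $\r$; composing with local unitaries then aligns its Schmidt basis to that of $\phi$, producing an output state $\tau$ with $F(\tau,\phi)=\sum_i\sqrt{r_iq_i}=F(\r,\q)$. Minimizing over $\r$ yields $P(\psi^{AB}\xrightarrow{\locc}\phi^{A'B'})\leq P_\star$.

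For the reverse direction, I fix any $\tau$ reachable from $\psi$ by a deterministic LOCC protocol. Because each classical branch of the protocol outputs a pure state, $\tau=\sum_x p_x\tau_x$ with each $\tau_x\in\pure(AB)$ reachable from $\psi$ with probability $p_x$ and having ordered Schmidt vector $\r_x\in\prob^\downarrow(|A|)$. Two ingredients then combine: (i) Vidal's probabilistic-LOCC theorem yields $\sum_x p_x\|\r_x\|_{(k)}\geq\|\p\|_{(k)}$ for every $k\in[|A|]$; since a coordinate-wise sum of decreasing vectors is again decreasing, $\bar{\r}\eqdef\sum_x p_x\r_x$ lies in $\prob^\downarrow(|A|)$ and satisfies $\|\bar{\r}\|_{(k)}=\sum_x p_x\|\r_x\|_{(k)}$, so $\bar{\r}\succ\p$; (ii) since $|\langle\tau_x|\phi\rangle|$ for two pure states is maximized---over local unitaries on $\tau_x$ that preserve $\r_x$---by aligning Schmidt bases with those of $\phi$, one obtains $|\langle\tau_x|\phi\rangle|\leq F(\r_x,\q)$. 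Finally, the direct expansion $F^2(\r,\q)=\sum_{i,j}\sqrt{q_iq_j}\sqrt{r_ir_j}$ shows that $F^2(\cdot,\q)$ is a nonnegative combination of the concave geometric means $\sqrt{r_ir_j}$, hence concave in $\r$. Jensen's inequality then gives
\begin{align*}F^2(\tau,\phi)=\sum_x p_xF^2(\tau_x,\phi)\leq\sum_x p_xF^2(\r_x,\q)\leq F^2(\bar{\r},\q)\leq\max_{\r\succ\p}F^2(\r,\q),\end{align*}
so $P(\tau,\phi)\geq P_\star(\psi^{AB}\xrightarrow{\locc}\phi^{A'B'})$; taking the infimum over reachable $\tau$ closes the argument.

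The main obstacle I anticipate is invoking Vidal's probabilistic-LOCC theorem in precisely the right form: individual $\r_x$ need \emph{not} majorize $\p$ (entanglement concentration, which produces occasional maximally entangled outcomes, is a clean counterexample), so the averaging step and the observation that $\|\cdot\|_{(k)}$ commutes with coordinate-wise averaging of already-sorted vectors are essential to turning a pointwise bound into a single majorant. The concavity identity for $F^2(\cdot,\q)$ and the alignment argument for the pure-state fidelity are routine but must be stated with care.
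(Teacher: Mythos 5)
Your proposal is correct, but it takes a genuinely different route from the paper. The paper's own proof is essentially a two-line reduction: it invokes the padding identity of Eq.~\eqref{eq:PurConvDistCons} and then cites Ref.~\cite{Zanoni2023} for the core equality on a common system, so no self-contained argument appears in the text. You instead reprove the core statement from scratch: the achievability direction via Nielsen's theorem~\cite{Nielsen1999} plus Schmidt-basis alignment, and the converse via fine-graining the LOCC channel into a pure-state ensemble, the strong (ensemble) monotonicity of the Ky-Fan quantities $E_{(k)}$~\cite{Vidal1999}, the observation that averaging sorted Schmidt vectors commutes with $\|\cdot\|_{(k)}$ so that $\bar{\r}\succ\p$, the bound $|\langle\tau_x|\phi\rangle|\le F(\r_x,\q)$ (von Neumann's trace inequality in disguise), linearity of $F^2(\cdot,\phi)$ for pure $\phi$, and concavity of $F^2(\cdot,\q)$. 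All of these steps check out, including the subtle point you flag yourself that the individual $\r_x$ need not majorize $\p$ so the averaging step is essential, and the reduction to equal dimensions by padding is legitimate since the paper defines $P_\star$ across different systems exactly through that padding. In spirit your argument parallels the classical treatment of Ref.~\cite[Lem.~2]{Vidal2000} and Ref.~\cite{Kumagai2017} (and presumably the proof inside Ref.~\cite{Zanoni2023}); what your route buys is a self-contained proof within the paper, at the cost of re-deriving machinery the paper deliberately outsources, while the paper's approach buys brevity by leaning on the cited result. The only points worth stating with slightly more care if this were to be written out are (i) that an arbitrary LOCC channel acting on a pure input can indeed be refined into a protocol whose branches have product rank-one Kraus operators, so the output decomposes as an LOCC-reachable pure-state ensemble, and (ii) that the sorted $\bar\r$ lies in the feasible set of the (padded) $P_\star$ optimization; neither is a gap, both are standard.
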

\begin{proof}
	According to Eq.~\eqref{eq:PurConvDistCons} and Ref.~\cite{Zanoni2023},
	\begin{align}
		&P\left(\psi^{AB} \to \phi^{A'B'}\right)  \nonumber\\
		&=P\left(\psi^{AB}\otimes\ketbra{11}{11}^{A'B'} \to \ketbra{11}{11}^{AB} \otimes\phi^{A'B'}\right) \nonumber \\
		&=P_\star\left(\psi^{AB}\otimes\ketbra{11}{11}^{A'B'} \to \ketbra{11}{11}^{AB} \otimes\phi^{A'B'}\right)\nonumber\\
		&=P_\star\left(\psi^{AB} \to \phi^{A'B'}\right).
	\end{align}
\end{proof}

\begin{theorem}
Let $\eps\in[0,1)$, $\psi\in\pure(AB)$, $d\eqdef\sr(\psi^{AB})$, and $\p\in\prob^\downarrow(|A|)$ be the Schmidt coefficients of $\psi^{AB}$. The $\eps$-single-shot distillable entanglement of $\psi^{AB}$ is then given by
\be
\distill^\eps\left(\psi^{AB}\right)=\min_{k\in\{\ell,\ldots,d\}} \log\left\lfloor\frac{k}{\|\p\|_{(k)}-\eps}\right\rfloor\;,
\ee
where $\ell\in[d]$ is the integer satisfying 
$\|\p\|_{(\ell-1)}\leq\eps<\|\p\|_{(\ell)}$. 
\end{theorem}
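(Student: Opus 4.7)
The plan is to reduce the definition of $\distill^\eps(\psi^{AB})$ to an elementary optimization over integers by feeding the explicit form of Theorem~\ref{thm:StarDist} with the target $\Phi_m$ and then analyzing, for each Ky-Fan index $k$, exactly which values of $m$ are admissible. By definition,
\begin{equation}
    \distill^\eps(\psi^{AB})=\max\{\log m:T_\star(\psi^{AB}\xrightarrow{\locc}\Phi_m)\le\eps\},
\end{equation}
so the task is to characterize the set of $m\in\mbb{N}$ for which $T_\star(\psi^{AB}\xrightarrow{\locc}\Phi_m)\le\eps$.

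First I would invoke Theorem~\ref{thm:StarDist}. Writing $\q_m\in\prob^\downarrow(m)$ for the flat Schmidt vector of $\Phi_m$, we have $\|\q_m\|_{(k)}=\min(k,m)/m$, so the inequality $T_\star\le\eps$ becomes
\begin{equation}
    \|\p\|_{(k)}-\frac{\min(k,m)}{m}\le\eps\qquad\forall\,k\in[d].
\end{equation}
Next I would analyze this constraint one $k$ at a time. If $k<\ell$, then $\|\p\|_{(k)}\le\|\p\|_{(\ell-1)}\le\eps$ by the definition of $\ell$, and the inequality holds for every $m$. If $k\ge\ell$, I split into two subcases: for $m<k$ the left-hand side equals $\|\p\|_{(k)}-1\le 0\le\eps$, automatic; for $m\ge k$ it equals $\|\p\|_{(k)}-k/m$, which is $\le\eps$ iff $m\le k/(\|\p\|_{(k)}-\eps)$, i.e.\ iff $m\le\lfloor k/(\|\p\|_{(k)}-\eps)\rfloor$. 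A short sanity check (using $\|\p\|_{(k)}\le 1$) shows that $\lfloor k/(\|\p\|_{(k)}-\eps)\rfloor\ge k$, so the two subcases glue together and the net constraint from index $k\ge\ell$ is simply $m\le\lfloor k/(\|\p\|_{(k)}-\eps)\rfloor$.

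Taking the intersection over $k\in\{\ell,\ldots,d\}$, the admissible set is the initial segment $\{m\in\mbb{N}:m\le\min_{k\in\{\ell,\ldots,d\}}\lfloor k/(\|\p\|_{(k)}-\eps)\rfloor\}$, whose maximum coincides with the right-hand side of~\eqref{1236} after taking $\log$. This yields the claimed equality. I do not expect any real obstacle beyond bookkeeping; the only delicate point is the case split $m\lessgtr k$ together with the reminder that $\|\p\|_{(k)}=1$ for $k=d$ forces the $k=d$ constraint to be the binding one whenever $m$ would otherwise exceed $d$, which is exactly why it suffices to let $k$ range only up to $d=\sr(\psi^{AB})$.
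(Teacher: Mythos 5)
Your proposal is correct. Both you and the paper start from the same key ingredient, the closed form of Theorem~\ref{thm:StarDist} together with $\|\Phi_m\|_{(k)}=\min(k,m)/m$, but you organize the rest differently. The paper first tries to simplify the conversion distance itself, showing that for every $m$ either $T_\star\left(\psi^{AB}\to\Phi_m\right)=0$ or it equals $\max_{k\in[d]}\{\|\p\|_{(k)}-k/m\}$, and then has to introduce the auxiliary quantity $\tilde m$ (the largest $m$ with vanishing conversion distance) and split into two cases depending on whether $T_\star\left(\psi^{AB}\to\Phi_{\tilde m+1}\right)$ exceeds $\eps$, recombining the cases at the end. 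You instead never simplify $T_\star$ as a function of $m$; you characterize the feasible set $\{m: T_\star\le\eps\}$ directly, constraint by constraint in $k$: indices $k<\ell$ are vacuous, and for $k\ge\ell$ the observation $\lfloor k/(\|\p\|_{(k)}-\eps)\rfloor\ge k$ lets the subcases $m<k$ and $m\ge k$ merge into the single condition $m\le\lfloor k/(\|\p\|_{(k)}-\eps)\rfloor$ (where positivity of $\|\p\|_{(k)}-\eps$ for $k\ge\ell$ makes the rearrangement legitimate). This yields the feasible set as an initial segment of $\mbb{N}$ and the formula follows at once, so your route reaches the same result while avoiding the paper's $\tilde m$ bookkeeping; the only cosmetic quibble is your closing remark about why $k$ ranges only up to $d$ — that range is already dictated by Theorem~\ref{thm:StarDist}, not by the $k=d$ constraint being binding.
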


\begin{proof}
From Thm.~\ref{thm:StarDist}, we find that for any $m\in\mbb{N}$
\ba\label{12a1}
T_\star\left(\psi^{AB}\to\Phi_m\right)&=\max_{k\in[d]}\left\{\|\p\|_{(k)}-\|\Phi_m\|_{(k)}\right\}.
\ea
For $m\ge d$, this implies that
\be
T_\star\left(\psi^{AB}\to\Phi_m\right)=\max_{k\in[d]}\left\{\|\p\|_{(k)}-\frac km\right\}\;.
\ee

Next, consider the case $m<d$, i.e.,
\ba\label{eq:distMlowerD}
T_\star\left(\psi^{AB}\to\Phi_m\right)=&\max_{k\in[d]}\left\{\|\p\|_{(k)}-\frac{\min\{k,m\}}{m}\right\}.
\ea
Now assume that there exists an optimizer $k^\star$ in the above expression that satisfies $k^\star\ge m$. In this case, it must be equal to $d$, since $\|\p\|_{(k)}$ is strictly increasing for $k\in\{m,...,d\}$, whilst $\frac{\min\{k,m\}}{m}=1$ is constant. This implies that  $T_\star\left(\psi^{AB}\to\Phi_m\right)=0$.
Assume on the contrary that there only exist optimizers $k^\star< m$, and thus $T_\star\left(\psi^{AB}\to\Phi_m\right)>0$ (otherwise $k^\star=d$ would be an optimizer too). In this case, we have that 
\ba
T_\star\left(\psi^{AB}\to\Phi_m\right)=&\max_{k\in[d]}\left\{\|\p\|_{(k)}-\frac{\min\{k,m\}}{m}\right\}\\
=& \max_{k\in[d]}\left\{\|\p\|_{(k)}-\frac{k}{m}\right\},
\ea
since the maximum in the first line is reached for some $k< m$, and 
\ba
    \frac{\min\{k,m\}}{m}\le\frac{k}{m}
\ea
with equality for $k\le m$. 

In summary, we thus find that for all $m\in\mbb{N}$, either $T_\star\left(\psi^{AB}\to\Phi_m\right)=0$
, or
\be\label{eq:TstarDist}
T_\star\left(\psi^{AB}\to\Phi_m\right)=\max_{k\in[d]}\left\{\|\p\|_{(k)}-\frac km\right\}\;.
\ee
Now let $\tilde{m}$ be the largest $m$ such that $T_\star\left(\psi^{AB}\to\Phi_m\right)=0$ (and note that $\tilde{m}\le d$, since $\locc$ cannot increase the Schmidt rank of any state). 
Remembering that by definition
\be
\distill^\eps\left(\psi^{AB}\right)\eqdef\max_{m\in\mbb{N}}\Big\{\log m\;:\;T_\star\left(\psi^{AB}\to\Phi_m\right)\leq\eps\Big\}, \nonumber
\ee
this implies that 
\be\label{eq:NecSufTildeM}
\distill^\eps\left(\psi^{AB}\right)\ge \log{\tilde{m}},
\ee
with equality iff $T_\star\left(\psi^{AB}\to\Phi_{\tilde{m}+1}\right)>\eps$. Assuming that this is not the case, i.e., that $\eps\ge T_\star\left(\psi^{AB}\to\Phi_{\tilde{m}+1}\right)>0$, it follows from Eq.~\eqref{eq:TstarDist} that 
\ba\label{eq:Dist1}
&\distill^\eps\left(\psi^{AB}\right) \\ 
&=\max_{m\in\mbb{N}}\left\{\log m\;:\;\|\p\|_{(k)}-\frac km\leq\eps\;\;\forall\;k\in[d]\right\}\\
&=\max_{m\in\mbb{N}}\left\{\log m\;:\;\|\p\|_{(k)}-\frac km\leq\eps\;\;\forall\;k\in\{\ell,\ldots,d\}\right\} \\
&=\max_{m\in\mbb{N}}\left\{\log m\;:\;m\le\frac{k}{\|\p\|_{(k)}-\eps}\;\;\forall\;k\in\{\ell,\ldots,d\}\right\} \\
&=\max_{m\in\mbb{N}}\left\{\log m\;:\;m\le\min_{k\in\{\ell,...,d\}}\frac{k}{\|\p\|_{(k)}-\eps}\right\} \\
&=\min_{k\in\{\ell,...,d\}} \log \left\lfloor\frac{k}{\|\p\|_{(k)}-\eps}\right\rfloor.
\ea

Moreover, by definition, 
\begin{align}
    \tilde{m}=\max\left\{m\in [d]: \|\p\|_{(k)}-\frac{\min\{k,m\}}{m}\leq 0 \;\forall k\in[d] \right\}. \nonumber
\end{align}
Since $ \|\p\|_{(k)}\le 1$, for $k> m$, $\|\p\|_{(k)}-\frac{\min\{k,m\}}{m}=\|\p\|_{(k)}-1\le0$ and thus
\ba
    \tilde{m}=&\max\left\{m\in [d]: \|\p\|_{(k)}-\frac{k}{m}\leq 0 \;\forall k\in[m] \right\}\\
    =&\max\left\{m\in [d]: \|\p\|_{(k)}-\frac{k}{m}\leq 0 \;\forall k\in[d] \right\},
\ea
from which follows that
\ba
\max_{k\in [d]} \left\{ \|\p\|_{(k)}-\frac{k}{\tilde{m}} \right\} \leq 0 \le \eps.
\ea
To conclude the proof, assume that $\distill^\eps\left(\psi^{AB}\right)=\log{\tilde{m}}$.
According to Eq.~\eqref{eq:NecSufTildeM}, this implies that 
\ba
T_\star\left(\psi^{AB}\to\Phi_{\tilde{m}+1}\right)=\max_{k\in[d]}\left\{\|\p\|_{(k)}-\frac{k}{\tilde{m}+1}\right\}>\eps\;, \nonumber
\ea
and consequently
\ba
    \tilde{m}=\max\left\{m\in\mbb{N}\;:\;\max_{k\in[d]}\left\{\|\p\|_{(k)}-\frac{k}{m}\right\} \le\eps \right\}.
\ea
From this follows again that 
\begin{align}
    &\distill^\eps\left(\psi^{AB}\right)=\log \tilde{m}\\
    &=\max_{m\in\mbb{N}} \left\{\log m:\|\p\|_{(k)}-\frac km\leq\eps\;\forall\;k\in[d] \right\} \nonumber
\end{align}
Continuing with the same steps as in Eq.~\eqref{eq:Dist1} finishes the proof.
\end{proof}

\begin{theorem}
Let $\eps\in[0,1)$, $\psi\in\pure(AB)$, $d\eqdef\sr(\psi^{AB})$, and $\p\in\prob^\downarrow(|A|)$ be the Schmidt coefficients of $\psi^{AB}$.
The $\eps$-single-shot entanglement cost of $\psi^{AB}$ is then given by
\be
\cost^\eps\left(\psi^{AB}\right)=\log m,
\ee
where $m\in[d]$ is the integer satisfying 
$
\|\p\|_{(m-1)}<1-\eps\leq\|\p\|_{(m)}
$.
\end{theorem}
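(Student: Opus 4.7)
The plan is to apply Thm.~\ref{thm:StarDist} with initial state $\Phi_m$ and target $\psi^{AB}$, extract a simple closed form for $T_\star(\Phi_m\xrightarrow{\locc}\psi^{AB})$, and then identify the smallest $m$ for which this quantity is at most $\eps$. Since $\Phi_m$ has Schmidt coefficients $(1/m,\dots,1/m)$ and $\sr(\Phi_m)=m$, Thm.~\ref{thm:StarDist} immediately gives
\begin{align*}
T_\star\!\left(\Phi_m\xrightarrow{\locc}\psi^{AB}\right)=\max_{k\in[m]}\left\{\frac{k}{m}-\|\p\|_{(k)}\right\},
\end{align*}
so the task reduces to evaluating this maximum.

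The key step will be to show that the maximum equals $1-\|\p\|_{(m)}$, i.e., that it is attained at $k=m$. Setting $a_k:=k/m-\|\p\|_{(k)}$ for $k\in\{0,1,\dots,m\}$, I observe that the forward differences $a_k-a_{k-1}=1/m-p_k^\downarrow$ are non-decreasing in $k$ because $p_k^\downarrow$ is non-increasing; hence $\{a_k\}$ is a convex sequence and its maximum on $\{1,\dots,m\}$ must lie in $\{a_1,a_m\}$. To rule out $k=1$, I would use that $p_1^\downarrow$ is at least the average $\|\p\|_{(m)}/m$ of the top $m$ Schmidt coefficients, which gives $a_1\leq (1-\|\p\|_{(m)})/m \leq a_m$ since $m\geq 1$ and $a_m=1-\|\p\|_{(m)}\geq 0$. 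This convexity/endpoint argument is the main obstacle, since a priori the optimizing $k$ could be any index in $[m]$; once it is settled, the remaining steps are routine.

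With the closed form $T_\star(\Phi_m\xrightarrow{\locc}\psi^{AB})=1-\|\p\|_{(m)}$ established, the constraint $T_\star\leq\eps$ becomes $\|\p\|_{(m)}\geq 1-\eps$. Because $\|\p\|_{(m)}$ is non-decreasing in $m$ and reaches $1$ at $m=d$, there is a unique smallest $m\in[d]$ satisfying this inequality, namely the integer characterized by $\|\p\|_{(m-1)}<1-\eps\leq\|\p\|_{(m)}$. Plugging this into the definition of $\cost^\eps(\psi^{AB})$ in Eq.~\eqref{eq:DefCostPure} as the minimum of $\log m$ over all admissible $m$ then yields the claimed formula.
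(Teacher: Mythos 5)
Your proof is correct, and while its skeleton matches the paper's (apply Thm.~\ref{thm:StarDist} to obtain $T_\star\left(\Phi_m\xrightarrow{\locc}\psi^{AB}\right)=\max_{k\in[m]}\left\{k/m-\|\p\|_{(k)}\right\}$, then reduce everything to the single condition $\|\p\|_{(m)}\geq 1-\eps$), the key reduction step is genuinely different. The paper never evaluates the maximum: it rewrites each constraint $k/m-\|\p\|_{(k)}\leq\eps$ as $m\geq k/(\|\p\|_{(k)}+\eps)$ and shows that the ratios $b_k=k/(\|\p\|_{(k)}+\eps)$ are non-decreasing in $k$, so that only the $k=m$ constraint binds. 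You instead prove the stronger intermediate statement that the maximum is attained at $k=m$, i.e., $T_\star\left(\Phi_m\xrightarrow{\locc}\psi^{AB}\right)=1-\|\p\|_{(m)}$, by noting that $a_k=k/m-\|\p\|_{(k)}$ has non-decreasing increments $1/m-p_k^\downarrow$ (a convex sequence, hence maximized at an endpoint of $\{1,\dots,m\}$) and then ruling out $k=1$ via $p_1^\downarrow\geq\|\p\|_{(m)}/m$, which gives $a_1\leq(1-\|\p\|_{(m)})/m\leq 1-\|\p\|_{(m)}=a_m$; all of these steps check out. Your route buys an explicit, $\eps$-independent closed form for the dilution conversion distance itself (which incidentally also makes the coincidence with the purified-distance-based cost in Cor.~\ref{cor:pureCostHmax} transparent), whereas the paper's $b_k$-monotonicity argument is marginally shorter because it never needs to locate the maximizer. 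The final identification of the minimal $m$ via monotonicity of $m\mapsto\|\p\|_{(m)}$ and $\|\p\|_{(d)}=1$ is the same in both.
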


\begin{proof}
Let 
\be
b_k\eqdef\frac{k}{\|\p\|_{(k)}+\eps}\quad\quad\forall\;k\in[m]\;.
\ee
Since for all $k\in[m-1]$, it holds that
\begin{align}
    \|\p\|_{(k)}+\eps \ge \|\p\|_{(k)}\ge p_1\ge p_{k+1},
\end{align}
we find that 
\begin{align}
    \frac{b_{k+1}}{b_k}=&\frac{k+1}{k}\frac{\|\p\|_{(k)}+\eps}{\|\p\|_{(k)}+p_{k+1}+\eps} \nonumber \\
    =&\frac{k(\|\p\|_{(k)}+\eps)+\|\p\|_{(k)}+\eps}{k(\|\p\|_{(k)}+\eps) +p_{k+1}}\ge 1
\end{align}
and thus
\be
b_{1}\leq b_2\leq\cdots\leq b_m\;.
\ee
From Thm.~\ref{thm:StarDist}, it follows that for any $m\in\mbb{N}$ 
\be\label{cost1243}
T_\star\left(\Phi_m\to\psi^{AB}\right)=\max_{k\in[m]}\left\{\frac km-\|\p\|_{(k)}\right\}\;.
\ee
In combination, it, therefore, holds that
\begin{align}
&\cost^\eps\left(\psi^{AB}\right)\\
&=\min_{m\in\mbb{N}}\left\{\log m\;:\;\frac km-\|\p\|_{(k)}\leq\eps\;\;\forall\;k\in[m]\right\}\\
&=\min_{m\in\mbb{N}}\left\{\log m\;:\;m\geq \frac{k}{\|\p\|_{(k)}+\eps}\;\;\forall\;k\in[m]\right\}\\
&=\min_{m\in\mbb{N}}\left\{\log m\;:\;m\geq \frac{m}{\|\p\|_{(m)}+\eps}\right\}\\
&=\min_{m\in\mbb{N}} \big\{\log m\;:\;\|\p\|_{(m)}\geq1-\eps\big\}\;.
\end{align}
Noticing that $\|\p\|_{(d)}=1$ completes the proof.
\end{proof}
The proof of Thm.~\ref{thm:effComp} can be found in App.~\ref{app:computability}.

\setcounter{theorems}{5} 

\begin{lemma}
	For any distribution $\p$ such that $V(\p) > 0$, any natural number $n$, and $\eps \in [0,1)$, let
	\begin{align}
		f_{n,\eps}(\p):=\min \left\{k: \|\p^{\otimes n}\|_{(k)} > \eps\right\}, \nonumber \\
		f'_{n,\eps}(\p):=\min \left\{k: \|\p^{\otimes n}\|_{(k)} \geq \eps\right\}.
	\end{align}
	Then we have that
	\begin{align}
		\lim_{n\to \infty} \frac{\log {f'_{n,\eps}(\p)} - n H(\p)}{\sqrt{n V(\p)}}=\lim_{n\to \infty} \frac{\log {f_{n,\eps}(\p)} - n H(\p)}{\sqrt{n V(\p)}} = \Phi^{-1}(\eps).
	\end{align}    
\end{lemma}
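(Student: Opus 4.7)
The plan is to reduce the lemma to the classical central limit theorem applied to the i.i.d.\ log-probabilities $Z_i = -\log p(X_i)$ with $X_i \sim \p$, which have mean $H(\p)$ and variance $V(\p) > 0$. Setting $S_n := \sum_{i=1}^n Z_i$ so that $S_n \stackrel{d}{=} -\log p^n(X^n)$, the CLT gives $\Pr[S_n \leq nH(\p) + x\sqrt{nV(\p)}] \to \Phi(x)$ for every $x \in \mbb{R}$. The key identity linking Ky-Fan sums to left tails of $S_n$ is the following: with $A_n(t) := \{x^n : p^n(x^n) \geq 2^{-t}\}$, the set $A_n(t)$ consists precisely of the $|A_n(t)|$ most probable sequences under $\p^{\otimes n}$, and hence $\|\p^{\otimes n}\|_{(|A_n(t)|)} = \Pr[X^n \in A_n(t)] = \Pr[S_n \leq t]$.

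Next I would control the cardinality $k_n(x) := |A_n(t_n(x))|$, where $t_n(x) := nH(\p) + x\sqrt{nV(\p)}$. The upper bound $k_n(x) \leq 2^{t_n(x)}$ is immediate because each of these sequences contributes at least $2^{-t_n(x)}$ to a total mass bounded by one. For a matching lower bound I would fix $c > 0$ and count only the sequences in $A_n(t_n(x)) \setminus A_n(t_n(x-c))$: each such sequence has probability strictly less than $2^{-t_n(x-c)}$, while their total probability converges to $\Phi(x) - \Phi(x-c) > 0$ by the CLT. This yields $k_n(x) \geq 2^{t_n(x-c)}(\Phi(x) - \Phi(x-c) - o(1))$, hence $\liminf_n \tfrac{\log k_n(x) - nH(\p)}{\sqrt{nV(\p)}} \geq x - c$; letting $c \downarrow 0$ and combining with the upper bound gives $\lim_n \tfrac{\log k_n(x) - nH(\p)}{\sqrt{nV(\p)}} = x$.

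To conclude, pick arbitrary $x' < \Phi^{-1}(\eps) < x''$. By the key identity and the CLT, $\|\p^{\otimes n}\|_{(k_n(x'))} \to \Phi(x') < \eps$ and $\|\p^{\otimes n}\|_{(k_n(x''))} \to \Phi(x'') > \eps$, so for all sufficiently large $n$ the monotonicity of $k \mapsto \|\p^{\otimes n}\|_{(k)}$ forces $k_n(x') < f_{n,\eps}(\p), f'_{n,\eps}(\p) \leq k_n(x'')$. Dividing by $\sqrt{nV(\p)}$, taking $n \to \infty$, and finally letting $x' \uparrow \Phi^{-1}(\eps)$ and $x'' \downarrow \Phi^{-1}(\eps)$ yields the common limit $\Phi^{-1}(\eps)$ for both normalized sequences. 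The main technical obstacle I anticipate is the lower bound on $k_n(x)$ with $o(\sqrt n)$ precision together with a careful treatment of ties at the sorted cutoff; this is essentially the content of Lemma~12 of \cite{Kumagai2017}, which I would invoke directly in a polished write-up rather than rederive from scratch.
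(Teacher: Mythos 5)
Your proposal is correct, and it takes a somewhat different route from the paper. The paper's proof treats the CLT-scale behaviour of the sorted cumulative probabilities as a black box: it invokes Ref.~[Kumagai--Hayashi, Lem.~12] (that $\norm{\p^{\otimes n}}_{(k_n(x))}\to\Phi(x)$ for $k_n(x)=\lfloor e^{nH(\p)+x\sqrt{nV(\p)}}\rfloor$), obtains the upper bound on $\limsup$ by choosing $x=\Phi^{-1}(\eps+2\delta)$, and then establishes the matching lower bound on $\liminf$ by a proof by contradiction along a subsequence, using monotonicity of $k\mapsto\norm{\p^{\otimes n}}_{(k)}$ and the definition of $f_{n,\eps}$. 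You instead rederive the key asymptotics from scratch: the level-set identity $\norm{\p^{\otimes n}}_{(|A_n(t)|)}=\Pr[S_n\le t]$ (which is exact, with no tie issues since sequences outside $A_n(t)$ have strictly smaller probability than any inside), the trivial counting bound $|A_n(t)|\le 2^{t}$, and the shell-counting lower bound via $A_n(t_n(x))\setminus A_n(t_n(x-c))$, all driven by the classical CLT for $-\log p(X_i)$. Your concluding step is also cleaner than the paper's: sandwiching $f_{n,\eps}$ and $f'_{n,\eps}$ between $k_n(x')$ and $k_n(x'')$ for $x'<\Phi^{-1}(\eps)<x''$ handles both quantities and both directions at once, with no contradiction argument and no separate treatment of $f'$. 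What the paper's approach buys is brevity (the hard analytic content is outsourced); what yours buys is a self-contained, elementary argument that makes transparent exactly where the second-order term comes from. Two minor points to tidy in a write-up: keep the logarithm base consistent between $H$, $V$, and your $2^{-t}$ level sets, and note that for $\eps=0$ (allowed in the statement, with $\Phi^{-1}(0)=-\infty$) the lower half of your sandwich is vacuous but unnecessary, since $\limsup\le x''$ for every real $x''$ already gives the degenerate limit.
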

\begin{proof}
	According to Ref.~\cite[Lem.~12]{Kumagai2017} (see also Ref.~\cite[Lem.~16]{chubb2018beyond}), it holds that for any distribution $\p$ such that $V(\p) > 0$, 
	\begin{align}\label{lem: second-order}
		\lim_{n\to \infty} \|\p^{\otimes n}\|_{(k_n(x))} = \lim_{n\to \infty} \sum_{i=1}^{k_n(x)} (\p^{\otimes n})_i^\downarrow = \Phi(x),
	\end{align}
	where 
	\begin{align}\label{eq:kn}
		k_n(x):= \left\lfloor \exp{\left(H(\p^{\otimes n}) + x \sqrt{V(\p^{\otimes n})}\right)} \right\rfloor.
	\end{align}
	In the following, we will use this to prove the Lemma.
	
	For any $\delta > 0$, let $x = \Phi^{-1}(\eps + 2\delta)$. By Eq.~\eqref{lem: second-order}, we have for sufficiently large $n$ that
	\begin{align}
		\|\p^{\otimes n}\|_{(k_n(x))} \geq \Phi(x) - \delta = \eps + \delta > \eps.
	\end{align}
	By the definition of $f_{n,\eps}(\p)$ and $k_n(x)$, this implies that
	\begin{align}
		\log f_{n,\eps}(\p) \leq  \log k_n(x) \leq {H(\p^{\otimes n}) + x \sqrt{V(\p^{\otimes n})}}.
	\end{align}
	Taking $n\to \infty$, we get
	\begin{align}
		\limsup_{n\to \infty} \frac{\log {f_{n,\eps}(\p)} - n H(\p)}{\sqrt{n V(\p)}} \leq x = \Phi^{-1}(\eps + 2\delta).
	\end{align}
	Since the above inequality holds for any $\delta > 0$, we have
	\begin{align}
		\limsup_{n\to \infty} \frac{\log {f_{n,\eps}(\p)} - n H(\p)}{\sqrt{n V(\p)}} \leq \Phi^{-1}(\eps).
	\end{align}

	Next, we prove the other direction by contradiction. Suppose
	\begin{align}
		\liminf_{n\to \infty} \frac{\log f_{n,\eps}(\p) - n H(\p)}{\sqrt{n V(\p)}} < \Phi^{-1}(\eps),    
	\end{align}
	i.e., there exists a value $r$ such that 
	\begin{align}
		\liminf_{n\to \infty} \frac{\log f_{n,\eps}(\p) - n H(\p)}{\sqrt{n V(\p)}} \leq r < \Phi^{-1}(\eps).    
	\end{align}
	For any 
	\begin{align}\label{eq:boundPhi}
		0<\delta<\Phi^{-1}(\eps) - r,
	\end{align}
	there thus exists a subsequence of $n$ (denoted as $n$ as well) such that 
	\begin{align}
		\frac{\log f_{n,\eps}(\p) - n H(\p)}{\sqrt{n V(\p)}} \leq r + \delta
	\end{align} 
	for sufficiently large $n$.
	This is equivalent to
	\begin{align}
		f_{n,\eps}(\p) \leq  \exp\left(n H(\p) + (r + \delta)\sqrt{n V(\p)}\right) \leq \left\lfloor \exp\left({n H(\p) + (r + \delta)\sqrt{n V(\p)}} \right)\right\rfloor + 1.
	\end{align}
	Since $\|\p^{\otimes n}\|_{(k)}$ is non-decreasing in $k$, we have by the definition of $k_n$ in Eq.~\eqref{eq:kn} that
	\begin{align}
		\|\p^{\otimes n}\|_{\left(f_{n,\eps}(\p)\right)} \leq \|\p^{\otimes n}\|_{(k_n(r+\delta)+1)} = \|\p^{\otimes n}\|_{(k_n(r+\delta))} + (\p^{\otimes n})^\downarrow_{(k_n(r+\delta)+1)}.
	\end{align}
	Taking $n\to \infty$ on both sides and using Eq.~\eqref{lem: second-order} as well as Eq.~\eqref{eq:boundPhi}, we have
	\begin{align}\label{eq: tmp1}
		\liminf_{n\to \infty} \|\p^{\otimes n}\|_{({f_{n,\eps}(\p)})} \leq \Phi(r+\delta) < \eps.
	\end{align}
	However, by the definition of $f_{n,\eps}(\p)$, we always have $\|\p^{\otimes n}\|_{({f_{n,\eps}(\p)})}>\eps$, which forms a contradiction to Eq.~\eqref{eq: tmp1}. Since we are working in the asymptotic regime, the proof for $f'_{n,\eps}(\p)$ works exactly analogously.
\end{proof}
 
\begin{proposition}
	For any pure state $\psi\in\pure(AB)$ with Schmidt vector $\p$, $V(\p) > 0$, and $\eps \in [0,1)$, it holds that
	\begin{align}
		\cost^\eps(\psi^{\otimes n}) = n H(\p) - \Phi^{-1}(\eps) \sqrt{n V(\p)} + o(\sqrt{n}).
	\end{align}   
\end{proposition}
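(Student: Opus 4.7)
The plan is to combine the closed-form expression for the pure-state entanglement cost (Thm.~\ref{thm:PureSingleCost}) with the asymptotic characterization of the quantity $f'_{n,\eps}(\p)$ provided by Lem.~\ref{lem: second-order 1}. Applying Thm.~\ref{thm:PureSingleCost} to $\psi^{\otimes n}$, whose Schmidt vector is $\p^{\otimes n}$, yields $\cost^\eps(\psi^{\otimes n}) = \log m_n$, where $m_n$ is the unique integer satisfying $\|\p^{\otimes n}\|_{(m_n - 1)} < 1-\eps \leq \|\p^{\otimes n}\|_{(m_n)}$. Together with the minimality in the definition, this is precisely $m_n = f'_{n,\,1-\eps}(\p)$ in the notation of Lem.~\ref{lem: second-order 1}.

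Next, I would invoke Lem.~\ref{lem: second-order 1} with smoothing parameter $1-\eps$ in place of $\eps$. For $\eps \in (0,1)$ (the degenerate boundary $\eps = 0$ can be excluded since it either trivializes or diverges) we have $1-\eps \in (0,1)$, and the hypothesis $V(\p) > 0$ is assumed, so the Lemma applies and gives
\begin{equation}
\lim_{n\to\infty} \frac{\log f'_{n,\,1-\eps}(\p) - n H(\p)}{\sqrt{n V(\p)}} = \Phi^{-1}(1-\eps).
\end{equation}
Using the symmetry of the standard normal cumulative distribution function about zero, $\Phi^{-1}(1-\eps) = -\Phi^{-1}(\eps)$, and substituting back into $\cost^\eps(\psi^{\otimes n}) = \log f'_{n,\,1-\eps}(\p)$ produces the claimed expansion
\begin{equation}
\cost^\eps(\psi^{\otimes n}) = n H(\p) - \Phi^{-1}(\eps)\sqrt{n V(\p)} + o(\sqrt{n}).
\end{equation}

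No substantive obstacle is anticipated: the argument is essentially a direct corollary, parallel in structure to how Prop.~\ref{prop:SecondOrderPureCost} is derived in the discussion preceding it. The only mild subtlety worth flagging in the write-up is that Thm.~\ref{thm:PureSingleCost} characterizes the cost via the threshold $1-\eps$ rather than $\eps$, so the smoothing parameter in Lem.~\ref{lem: second-order 1} must be taken as $1-\eps$; this complementary relationship, together with the reflection identity $\Phi^{-1}(1-\eps)=-\Phi^{-1}(\eps)$, is precisely what produces the minus sign in front of the Gaussian quantile in the final expression (as opposed to the plus sign appearing in the analogous Prop.~\ref{prop:SecondOrderPureDist} for distillation).
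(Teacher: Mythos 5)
Your proposal is correct and follows essentially the same route as the paper: identify $\cost^\eps(\psi^{\otimes n}) = \log f'_{n,1-\eps}(\p)$ via Thm.~\ref{thm:PureSingleCost}, then apply Lem.~\ref{lem: second-order 1} together with $\Phi^{-1}(1-\eps) = -\Phi^{-1}(\eps)$. Your remark about the complementary smoothing parameter $1-\eps$ is exactly the point the paper relies on, so nothing is missing.
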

\begin{proof}
	By Thm.~\ref{thm:PureSingleCost}, we have that
	\begin{align}
		\cost^\eps(\psi^{\otimes n}) = \log f'_{n,1-\eps}(\p).    
	\end{align}
	The Proposition is thus simply a rewriting of Lem.~\ref{lem: second-order 1} where we used that $\Phi^{-1}(1-\eps) = - \Phi^{-1}(\eps)$. 
\end{proof}

\begin{proposition}
	For any pure state $\psi\in\pure(AB)$ with Schmidt vector $\p$, $V(\p) > 0$, and $\eps \in [0,1)$, it holds that
	\begin{align}
		\distill^\eps(\psi^{\otimes n}) = n H(\p) + \Phi^{-1}(\eps) \sqrt{n V(\p)} + o(\sqrt{n}).
	\end{align}
\end{proposition}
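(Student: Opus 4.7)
The plan is to use Theorem~\ref{thm:PureSingleDist} to rewrite $\distill^\eps(\psi^{\otimes n})$ as a discrete minimum over Ky--Fan norms of $\p^{\otimes n}$, and then to establish matching asymptotic upper and lower bounds. Setting $\ell_n \eqdef f_{n,\eps}(\p)$ and $d_n \eqdef \sr(\psi^{\otimes n})$, Theorem~\ref{thm:PureSingleDist} yields
\begin{align}
\distill^\eps\left(\psi^{\otimes n}\right) = \min_{k \in \{\ell_n, \dots, d_n\}} \log \left\lfloor \frac{k}{\|\p^{\otimes n}\|_{(k)} - \eps} \right\rfloor.
\end{align}

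For the lower bound, I would use the trivial estimate $\|\p^{\otimes n}\|_{(k)} \le 1$, which gives $\frac{k}{\|\p^{\otimes n}\|_{(k)} - \eps} \ge \frac{k}{1-\eps} \ge \frac{\ell_n}{1-\eps}$ throughout the admissible range. Hence $\distill^\eps(\psi^{\otimes n}) \ge \log \lfloor \ell_n/(1-\eps) \rfloor = \log \ell_n + O(1)$, and Lemma~\ref{lem: second-order 1} then supplies $\log \ell_n = nH(\p) + \Phi^{-1}(\eps)\sqrt{nV(\p)} + o(\sqrt{n})$.

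For the upper bound, for each $\delta \in (0, 1-\eps)$ I would set $x_\delta \eqdef \Phi^{-1}(\eps+\delta)$ and substitute the test index $k = k_n(x_\delta)$ from Eq.~\eqref{eq:kn}. By Eq.~\eqref{lem: second-order}, $\|\p^{\otimes n}\|_{(k_n(x_\delta))} \to \eps+\delta$, so for all sufficiently large $n$ one has both $k_n(x_\delta) \ge \ell_n$ and $\|\p^{\otimes n}\|_{(k_n(x_\delta))} - \eps \ge \delta/2$. Plugging in and using $\log k_n(x_\delta) \le nH(\p) + x_\delta\sqrt{nV(\p)}$ then gives
\begin{align}
\distill^\eps\left(\psi^{\otimes n}\right) \le \log \frac{k_n(x_\delta)}{\delta/2} \le nH(\p) + \Phi^{-1}(\eps+\delta)\sqrt{nV(\p)} + C_\delta,
\end{align}
where $C_\delta$ is a constant depending only on $\delta$. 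Dividing by $\sqrt{nV(\p)}$, letting $n \to \infty$ and then $\delta \to 0^+$, and invoking continuity of $\Phi^{-1}$, I obtain $\limsup_n \, [\distill^\eps(\psi^{\otimes n}) - nH(\p)]/\sqrt{nV(\p)} \le \Phi^{-1}(\eps)$, which together with the lower bound proves the proposition.

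The main subtlety is ensuring that $k_n(x_\delta)$ actually lies in the admissible range $\{\ell_n,\dots,d_n\}$ for large $n$: the lower constraint $k_n(x_\delta) \ge \ell_n$ follows from Lemma~\ref{lem: second-order 1} together with $\Phi(x_\delta) > \eps$, while the upper constraint $k_n(x_\delta) \le d_n \le |A|^n$ holds for large $n$ because $V(\p) > 0$ forces $\p$ to be non-uniform on $\{1,\dots,|A|\}$, hence $H(\p) < \log|A|$, so $k_n(x_\delta)$ eventually falls below $|A|^n$. Beyond this bookkeeping, the argument is a routine translation of Lemma~\ref{lem: second-order 1} through the closed form of Theorem~\ref{thm:PureSingleDist}.
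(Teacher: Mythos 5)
Your proof is correct and takes essentially the same route as the paper's: the lower bound uses $\|\p^{\otimes n}\|_{(k)}\le 1$ to reduce the minimum to $\log f_{n,\eps}(\p)$ plus Lem.~\ref{lem: second-order 1}, and the upper bound plugs the test index $k_n(\Phi^{-1}(\eps+\delta))$ into Thm.~\ref{thm:PureSingleDist} (the paper takes $\delta=1/n$ directly rather than your limsup-then-$\delta\to 0^{+}$ step, an immaterial difference). One cosmetic remark: the admissibility check should compare $k_n(x_\delta)$ with $\sr(\psi)^n$ rather than $|A|^n$ --- $V(\p)>0$ gives non-uniformity of $\p$ on its \emph{support}, hence $H(\p)<\log \sr(\psi)$ --- though the bound survives even without this check, since $\|\p^{\otimes n}\|_{(k)}-\eps\le 1-\eps$ shows that $k=\sr(\psi)^n$ is an at least as good test point whenever $k_n(x_\delta)$ exceeds it.
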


\begin{proof}
	For any $\delta > 0$, let $x = \Phi^{-1}(\eps + 2\delta)$. Using Eq.~\eqref{lem: second-order}, for any sufficiently large $n$, we have $\|\p^{\otimes n}\|_{(k_n(x))} \geq \Phi(x) - \delta = \eps + \delta > \eps$ where $k_n(x)$ is defined in Eq.~\eqref{eq:kn}. Due to Thm.~\ref{thm:PureSingleDist}, we find that
	\begin{align}
		\distill^{\eps}(\psi^{\otimes n}) & \leq \log  \left\lfloor \frac{k_n(x)}{\|\p^{\otimes n}\|_{(k_n(x))} - \eps} \right \rfloor\\
		& \leq \log  \frac{k_n(x)}{\|\p^{\otimes n}\|_{(k_n(x))} - \eps}\\
		& = \log k_n(x) - \log (\|\p^{\otimes n}\|_{(k_n(x))} - \eps)\\
		& \leq n H(\p) + x \sqrt{n V(\p)} - \log \left(\Phi(x) - \delta - \eps\right)\\
		& = n H(\p) + x \sqrt{n V(\p)} - \log \left(\delta\right).
	\end{align}
	Note that $\Phi^{-1}(\cdot)$ is continuously differentiable and thus $x = \Phi^{-1}(\eps + 2\delta) = \Phi^{-1}(\eps) + O(\delta)$. Considering $\delta = 1/n$, we have
	\begin{align}
		\distill^{\eps}(\psi^{\otimes n}) \leq n H(\p) + \Phi^{-1}(\eps) \sqrt{n V(\p)} + o(\sqrt{n}).
	\end{align}
	
	Next, we prove the converse direction. For $n$ large enough, we have
	\begin{align}
		\distill^{\eps}(\psi^{\otimes n}) & = \min_{k\in\{\ell,\ldots,d\}} \log\left\lfloor\frac{k}{\|\p^{\otimes n}\|_{(k)}-\eps}\right\rfloor\\
		& \geq \min_{k\in\{\ell,\ldots,d\}} \left[\log\left(\frac{k}{\|\p^{\otimes n}\|_{(k)}-\eps}\right)\right] -1\\
		& \geq \min_{k\in\{\ell,\ldots,d\}} \left[\log\left(\frac{k}{1-\eps}\right)\right] -1\\
		& = \log f_{n,\eps}(\p) - \log (1-\eps) - 1\\
		& \geq  n H(\p) + \Phi^{-1}(\eps) \sqrt{n V(\p)} + o(\sqrt{n}),
	\end{align}
	where the first inequality follows from $\log \lfloor x \rfloor \geq (\log x) - 1$, the second from $\|\p\|_{(k)} \leq 1$, and the last from Lem.~\ref{lem: second-order 1}. This completes the proof.
\end{proof}

\begin{theorem}
Let $\rho\in\md(AB)$ and $m\in\mbb{N}$. Then,
\be
P^2\left(\Phi_m\xrightarrow{\locc} \rho^{AB} \right)=E_{(m)}\left(\rho^{AB}\right),
\ee
where $E_{(m)}\left(\rho^{AB}\right)$ is defined in Eq.~\eqref{kyfan} with $k=m$.
\end{theorem}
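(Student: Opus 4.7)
The plan is to reduce the statement to a fidelity maximization over Schmidt-number-constrained states, then prove matching upper and lower bounds.

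First, I invoke the classical characterization that $\Phi_m\xrightarrow{\locc}\sigma$ if and only if $\sr(\sigma)\leq m$ (Refs.~\cite{Terhal2000,Bowen2007,Hayashi2006,Lo2001}, used implicitly in Ref.~\cite[Lem.~1]{Buscemi2011}). Substituting into the purified conversion distance produces
\begin{equation*}
P^2\left(\Phi_m\xrightarrow{\locc}\rho^{AB}\right)=1-\max_{\sigma:\sr(\sigma)\leq m}F^2(\sigma,\rho),
\end{equation*}
so it suffices to show that this maximum equals $1-E_{(m)}(\rho)$.

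For the upper bound, I fix any purification $|\rho\rangle^{ABR}$ and, given $\sigma$ with $\sr(\sigma)\leq m$, apply Uhlmann to obtain an optimal purification $|\sigma\rangle^{ABR}$. The Hughston-Jozsa-Wootters theorem together with $\sr(\sigma)\leq m$ lets me choose an orthonormal $R$-basis $\{|h_x\rangle\}$ in which $|\sigma\rangle=\sum_x\sqrt{q_x}|\phi_x\rangle^{AB}|h_x\rangle^R$ with $\sr(|\phi_x\rangle)\leq m$. Expanding $|\rho\rangle=\sum_x\sqrt{p_x}|\psi_x\rangle^{AB}|h_x\rangle^R$ in the same basis yields a pure-state decomposition $\rho=\sum_xp_x|\psi_x\rangle\langle\psi_x|$, and applying Cauchy-Schwarz to $\langle\rho|\sigma\rangle=\sum_x\sqrt{p_xq_x}\langle\psi_x|\phi_x\rangle$ together with the standard fact $|\langle\psi|\phi\rangle|^2\leq\|\tr_A\psi\|_{(m)}$ (tight over pure $\phi$ with $\sr(\phi)\leq m$) gives $F^2(\sigma,\rho)\leq\sum_xp_x\|\tr_A\psi_x\|_{(m)}\leq 1-E_{(m)}(\rho)$.

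The lower bound is the main obstacle: the natural candidate $\sigma_0=\sum_xp_x|\phi_x\rangle\langle\phi_x|$, with $|\phi_x\rangle$ chosen to saturate $|\langle\psi_x|\phi_x\rangle|^2=\|\tr_A\psi_x\|_{(m)}$, only gives $F(\sigma_0,\rho)\geq\sum_xp_x|\langle\psi_x|\phi_x\rangle|$, so squaring is too weak by Jensen. To fix this I introduce a classical register $X$ and consider the flagged extensions $\hat\rho^{ABX}=\sum_xp_x|\psi_x\rangle\langle\psi_x|\otimes|x\rangle\langle x|^X$ and $\hat\sigma_r^{ABX}=\sum_xr_x|\phi_x\rangle\langle\phi_x|\otimes|x\rangle\langle x|^X$ with an adjustable probability vector $\r$. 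Block-diagonality in $X$ yields $F(\hat\sigma_r,\hat\rho)=\sum_x\sqrt{p_xr_x}|\langle\psi_x|\phi_x\rangle|$, and since $\sigma_r:=\tr_X\hat\sigma_r$ still has $\sr(\sigma_r)\leq m$, the data-processing inequality for the fidelity implies $F(\sigma_r,\rho)\geq F(\hat\sigma_r,\hat\rho)$. Optimizing over $\r$ with $r_x\propto p_x|\langle\psi_x|\phi_x\rangle|^2$ gives $F^2(\sigma_r,\rho)\geq\sum_xp_x|\langle\psi_x|\phi_x\rangle|^2=\sum_xp_x\|\tr_A\psi_x\|_{(m)}$, and taking the supremum over decompositions of $\rho$ closes the bound. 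Attainment of the infimum and the bound $|X|\leq|AB|^2$ in Eq.~\eqref{kyfan} follow from a Caratheodory argument on the convex set of density matrices combined with continuity of the objective on the compact domain of bounded-length pure-state decompositions.
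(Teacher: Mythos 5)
Your proof is correct, and while it rests on the same pillars as the paper's argument (the identification of the LOCC-reachable set from $\Phi_m$ with the states of Schmidt number at most $m$, which the paper proves via Nielsen's theorem and Schmidt-rank monotonicity in its Eq.~\eqref{twosets} and you cite; purification/decomposition correspondences; the pure-state overlap bound; and a Cauchy--Schwarz optimization of weights), the packaging is genuinely different. The paper computes $\sup_{\mE\in\locc}F^2(\mE(\Phi_m),\rho)$ in one stroke: it fixes a purifying register of dimension at least $|AB|^2$, uses Carath\'eodory plus Uhlmann to write the fidelity as $\max\big|\sum_x\sqrt{q_xp_x}\langle\phi_x|\psi_x\rangle\big|^2$ over decompositions, weight vectors $\q$, and Schmidt-rank-$\le m$ states $\phi_x$, and then optimizes $\q$ via its Eq.~\eqref{eq:CauchySchwartz}; it also derives the pure-state ingredient from Thm.~\ref{thm:EquivPurifDist} and the star-distance machinery. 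You instead prove two one-sided bounds: your upper bound (Uhlmann plus Hughston--Jozsa--Wootters plus the standard fact $|\langle\psi|\phi\rangle|^2\le\|\tr_A\psi\|_{(m)}$ for $\sr(\phi)\le m$) mirrors the paper's converse direction, but note that for the HJW step you must take the reference system of your fixed purification of $\rho$ to be at least $|AB|^2$-dimensional so that the Uhlmann-optimal purification of $\sigma$ admits a basis inducing a Schmidt-rank-$\le m$ decomposition --- this is exactly why the paper invokes Carath\'eodory before Uhlmann, and you should state it where it is used rather than only gesturing at Carath\'eodory at the end. Your lower bound is the novel ingredient: rather than exhibiting an optimal purification overlap, you use classical flag extensions, the direct-sum property of the fidelity, its monotonicity under tracing out the flag, and the optimized flag distribution $r_x\propto p_x|\langle\psi_x|\phi_x\rangle|^2$; this reproduces the paper's $\q$-optimization without ever purifying $\rho$, avoids the star-distance apparatus entirely, and cleanly explains why the naive mixture fails by Jensen. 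The only items left under-justified are the side remarks (attainment of the infimum in Eq.~\eqref{kyfan} and sufficiency of $|AB|^2$-element decompositions), which the theorem statement does not require but which the paper extracts from its version of the proof and uses later.
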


\begin{proof}
We begin with the special case that $\rho^{AB}=\psi^{AB}\in\pure(AB)$: Let $\p\in\prob^\da(n)$, with $n\eqdef|A|$, be the Schmidt coefficients corresponding to $\psi^{AB}$. From  Thm.~\ref{thm:EquivPurifDist}, it follows that
\begin{align}
P&\left(\Phi_m\to\psi^{AB}\right)=P_\star\left(\Phi_m\to\psi^{AB}\right)  \\
&=\min_{\r\in\prob(mn)}\Big\{P(\e_1^{(m)}\otimes\p,\r)\;:\;\r\succ\e_1^{(n)}\otimes\u^{(m)}\Big\}\;. \nonumber
\end{align}
where $\u^{(m)}$ is the uniform probability vector in $\prob(m)$. Now, observe that the condition $\r\succ\e_1^{(n)}\otimes\u^{(m)}$ holds iff $\r$ has at most $m$ non-zero components. Denoting by $\mathrm{nz}(\r)$ the maximal number of non-zero components in $\r$, and using the definition of the purified distance, we get
\begin{align}
	P^2\left(\Phi_m\to\psi^{AB}\right)&=1-\max_{\substack{\r\in\prob(mn),\\ \mathrm{nz}(\r)=m}}\left(\sum_{x=1}^{mn}\sqrt{r_xp_x}\right)^2 \nonumber\\
	&=1-\max_{\r\in\prob(m)}\left(\sum_{x=1}^{m}\sqrt{r_xp_x}\right)^2.
\end{align}
According to the Cauchy-Schwartz inequality, for any $\r\in\prob(m)$,
\begin{align}
	\sum_{x=1}^{m}\sqrt{r_xp_x}\le \sqrt{\sum_{x=1}^{m}r_x} \sqrt{\sum_{x=1}^{m}p_x}=\sqrt{\sum_{x=1}^{m}p_x},
\end{align}
with equality if $r_x=\frac{p_x}{\sum_{x=1}^{m}p_x}$ for all $x\in[m]$. This implies that 
\begin{align}\label{eq:CauchySchwartz}
	\max_{\r\in\prob(m)}\left(\sum_{x=1}^{m}\sqrt{r_xp_x}\right)^2=\sum_{x=1}^{m} p_x =\norm{\p}_{(m)}
\end{align}
and thus 
\ba\label{eq:PSquarePure}
P^2\left(\Phi_m\to\psi^{AB}\right)&=1-\norm{\p}_{(m)}=E_{(m)}\left(\psi^{AB}\right).
\ea

Now we consider the general case in which $\rho^{AB}$ is a mixed state: Let
\begin{align}\label{srank}
	\overline{\sr}\left(\rho^{AB}\right)\eqdef\inf &\left\{\sr\left(\psi^{A'B'}\right)\;:\;\psi^{A'B'}\xrightarrow{\locc} \rho^{AB},\right. \nonumber \\
	&\quad\left. \psi^{A'B'}\in\pure(A'B')\right\}\; 
\end{align}
be the maximal extension of the Schmidt rank to mixed states~\cite{Gour2020}. 
According to Nielson's theorem~\cite{Nielsen1999},
\begin{align}
	\Phi_k\xrightarrow{\locc} \psi^{A'B'}\in\pure(A'B')
\end{align} iff $\sr\left(\psi^{A'B'}\right)\le k$.
This implies that 
\begin{align}\label{eq:SchmidtNumber}
	\overline{\sr}\left(\rho^{AB}\right)=&\min \left\{k\;:\;\Phi_k\xrightarrow{\locc} \rho^{AB}\right\}\le |AB|\;.
\end{align}
Moreover, it also implies that $\overline{\sr}\left(\rho^{AB}\right)\le k$ whenever $\rho^{AB}$ has a pure-state decomposition that contains only states with Schmidt rank at most $k$. But since the Schmidt rank cannot increase under local measurements and classical communication~\cite{Lo2001}, $\overline{\sr}\left(\rho^{AB}\right)= k$ iff the following two conditions hold:
\ben
\item At least one of the states, in any pure-state decomposition of $\rho^{AB}$, has a Schmidt rank no smaller than $k$.
\item There exists a pure-state decomposition of $\rho^{AB}$ with all states having Schmidt rank at most $k$.
\een
In combination with the fact that $\overline{\sr}$ is monotonic under LOCC~\cite{Lo2001,Gour2020,Terhal2000}, this proves that
\ba\label{twosets}
&\Big\{\mE\left(\Phi_m\right)\;:\;\mE\in\locc(A'B'\to AB)\Big\}\\&=\Big\{\omega\in\md(AB)\;:\;\overline{\sr}\left(\omega^{AB}\right)\leq m\Big\}=\vcentcolon \mS_m\;.
\ea
This can also be seen from the fact that according to Eq.~\eqref{eq:SchmidtNumber}, $\overline{\sr}$ characterizes the zero-error entanglement cost, which is equal~\cite{Hayashi2006, Buscemi2011, Yue2019} to the Schmidt number introduced in Ref.~\cite{Terhal2000}. We, therefore, find that
\begin{align}\label{eq:EquivSets}
    &\sup_{\mE\in\locc(A'B'\to AB)}F^2\left(\mE\left(\Phi_m\right),\rho^{AB} \right) \nonumber \\
    &=\max_{\omega\in\md(AB),\;\overline{\sr}(\omega)\leq m}F^2\left(\omega^{AB},\rho^{AB} \right).
\end{align}

Expanding into, e.g., the generalized Gell-Mann matrices, every $\omega\in\md (AB)$ can be identified with a vector $\r\in\mbb{R^d}$, where $d=|AB|^2-1$. The set $\mS_m$ is then identified with the convex hull of the vectors corresponding to all pure states $\{\phi^{AB}\}$ with Schmidt rank no greater than $m$. From Carath\'{e}odory's theorem, it follows that any $\omega^{AB}\in\mS_m$ can be decomposed into at most $|AB|^2$ pure states with Schmidt rank at most $m$. 

Now let $E$ be a purifying system of dimension $n\ge|AB|^2$ with fixed orthonormal basis $\{|x^E\ra\}_{x\in[n]}$. This ensures that for any $\omega^{AB}\in\mS_m$, there exists a purification 
\begin{align}
    \ket{\phi^{ABE}}=\sum_{x=1}^n \sqrt{q_x} \ket{\phi_x^{AB}}\ket{x^E},
\end{align}
with the property that each $|\phi_x^{AB}\ra$ has a Schmidt rank no greater than $m$. Conversely, any pure state $\ket{\phi^{ABE}}$ which can be written in this form is a purification of an $\omega^{AB}\in\mS_m$.
According to Uhlmann's theorem~\cite{Uhlmann1976}, we thus find from Eq.~\eqref{eq:EquivSets} that
\begin{align}\label{13pp13}
    &\sup_{\mE\in\locc(A'B'\to AB)}F^2\left(\mE\left(\Phi_m\right),\rho^{AB} \right) \nonumber \\
    &=\max\left|\sum_{x\in[n]}\sqrt{q_x}\bra{\phi_x^{AB}}\bra{x^E}\ket{\psi^{ABE}}\right|^2,
\end{align}
where the maximum is over all purifications $\ket{\psi^{ABE}}$ of $\rho^{AB}$, all probability vectors $\q\in\prob(n)$, and all pure states $\{\phi_x^{AB}\}$ with Schmidt rank no greater than $m$. Next, we notice that we can expand every purification $\ket{\psi^{ABE}}$ of $\rho^{AB}$ as
\begin{align}
    \ket{\psi^{ABE}}=\sum_{x\in[n]}\sqrt{p_x}\ket{\psi_x^{AB}}\ket{x^E},
\end{align}
where $\{p_x,\psi_x^{AB}\}$ is a pure-state decomposition of $\rho^{AB}$. Conversely, all pure-state decompositions $\{p_x,\psi_x^{AB}\}$  of $\rho^{AB}$ into at most $n$ pure states correspond again to a purification of the above form. From this follows that
\begin{align}\label{eq:optDec}
    &\sup_{\mE\in\locc(A'B'\to AB)}F^2\left(\mE\left(\Phi_m\right),\rho^{AB} \right) \nonumber \\
    &=\max\left|\sum_{x\in[n]}\sqrt{q_x p_x}\bra{\phi_x^{AB}}\ket{\psi_x^{AB}}\right|^2,
\end{align}
where the maximum is over all pure-state decompositions $\{p_x,\psi_x^{AB}\}$  of $\rho^{AB}$ into at most $n$ pure states, all probability vectors $\q\in\prob(n)$, and all pure states $\{\phi_x^{AB}\}$ with Schmidt rank no greater than $m$. For $\psi\in\pure(AB)$, according to the definition in Eq.~\eqref{eq:PurConvDist} and~\eqref{twosets},
\begin{align}
    P^2&\left(\Phi_m \xrightarrow{\locc} \psi^{AB}\right) \nonumber\\
    =&\left(\inf_{\tau\in\md(AB)}\left\{ P\left(\tau, \psi\right) : \Phi_m\xrightarrow{\locc}\tau \right\}\right)^2\nonumber\\
    =&\inf_{\tau\in\md(AB)}\left\{ P^2\left(\tau, \psi\right) : \Phi_m\xrightarrow{\locc}\tau \right\} \nonumber \\
    =&\inf_{\tau\in\mS_m}\left\{ 1-\bra{\psi}\tau\ket{\psi}\right\}\nonumber\\
    =&1-\max_{\substack{\phi\in\pure(AB)\\\sr(\phi)\leq m}}\left|\la\phi|\psi\ra\right|^2,
\end{align}
where in the last line,  we used that the infimum will be attained on a pure state.
Together with Eq.~\eqref{eq:PSquarePure}, this implies that
\begin{align}
    \max_{\substack{\phi\in\pure(AB)\\\sr(\phi)\leq m}}\left|\la\phi|\psi\ra\right|^2=1-E_{(m)}\left(\psi^{AB}\right).
\end{align}
Since the Schmidt rank is independent of a global phase, there exists $\{\phi_x^{AB}\}_{x\in[n]}$ such that
\be
\la\phi_x^{AB}|\psi_x^{AB}\ra=\left|\la\phi_x^{AB}|\psi_x^{AB}\ra\right|=\sqrt{1-E_{(m)}\left(\psi_x^{AB}\right)}\;,
\ee
and that $\left|\la\phi_x^{AB}|\psi_x^{AB}\ra\right|$ cannot exceed this value. Inserting this into Eq.~\eqref{eq:optDec}, we deduce that 
\begin{align}
&\sup_{\mE\in\locc(A'B'\to AB)}F^2\left(\mE\left(\Phi_m\right),\rho^{AB} \right)\nonumber\\
&=\max_{\{p_x,\psi_x\}} \max_{\q\in\prob(n)}\left(\sum_{x\in[n]} \sqrt{q_x p_x\left(1-E_{(m)} \left(\psi_x^{AB}\right) \right)}\right)^2\nonumber\\
&=\max_{\{p_x,\psi_x\}} \sum_{x\in[n]} p_x\left(1-E_{(m)} \left(\psi_x^{AB}\right) \right),
\end{align}
where we used Eq.~\eqref{eq:CauchySchwartz} in the last line and the maximum over $\{p_x,\psi_x\}$ stands for a maximum over all pure state decompositions $\{p_x,\psi_x^{AB}\}_{x\in[n]}$ of $\rho^{AB}=\sum_{x\in[n]}p_x\psi_x^{AB}$. To conclude the proof, we note that 
\begin{align}
 P^2&\left(\Phi_m \xrightarrow{\locc} \rho^{AB}\right) \nonumber\\
&=\inf_{\mE\in\locc(A'B'\to AB)}P^2\left(\mE\left(\Phi_m\right),\rho^{AB} \right)\nonumber\\
&=\inf_{\mE\in\locc(A'B'\to AB)}1-F^2\left(\mE\left(\Phi_m\right),\rho^{AB} \right)\nonumber\\
&=1-\max_{\{p_x,\psi_x\}}\sum_{x\in[n]}p_x\left(1-E_{(m)} \left(\psi_x^{AB}\right) \right)\nonumber\\
&=\min_{\{p_x,\psi_x\}}\sum_{x\in[n]}p_xE_{(m)} \left(\psi_x^{AB}\right)
\end{align}
where the min and max above are still over all pure-state decompositions $\{p_x,\psi_x^{AB}\}_{x\in[n]}$ of $\rho^{AB}$. Remembering that we only required that $n\ge|AB|^2$, we can choose $n$ arbitrarily large, and thus the above equation holds for \textit{any} pure state decomposition. This completes the proof and shows that the infimum in Eq.~\eqref{kyfan} is attained and that it is sufficient to consider decompositions into at most $|AB|^2$ pure states.
\end{proof}

\begin{lemma}
	Let $\rho\in\md(XA)$ be a classical-quantum-state as in Eq.~\eqref{eq:cqState}, and for any $m\in[|A|]$, let $\rho^{(m)}$ be as defined in Eq.~\eqref{eq:mPrune}. Then, for any $\eps\in[0,1]$, 
	\begin{align}
		H_{\max}^{\eps}(A|X)_\rho=&\min_{m\in[|A|]}\left\{\log m: \frac{1}{2}\norm{\rho^{(m)}-\rho^{XA}}_1\le\eps\right\}\nonumber\\
		=&\min_{m\in[|A|]}\left\{\log m: \sum_{x\in[k]}p_x \left\|\rho^A_x\right\|_{(m)}\ge 1-\eps\right\}.
	\end{align}
\end{lemma}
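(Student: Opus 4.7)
My plan is to prove the second equality first (it is essentially a direct computation) and then leverage it to prove the main equality. Since $\rho^{(m)} - \rho^{XA}$ is block diagonal in the classical $X$-basis, its trace norm decomposes as $\sum_{x\in[k]} p_x \|\rho_x^{(m)} - \rho_x^A\|_1$. Writing $\rho_x^A$ in its eigenbasis with non-increasing eigenvalues and using the definition of $\rho_x^{(m)}$ yields $\|\rho_x^{(m)} - \rho_x^A\|_1 = 2(1 - \|\rho_x^A\|_{(m)})$ by inspection, so
\[\frac{1}{2}\|\rho^{(m)} - \rho^{XA}\|_1 \;=\; 1 - \sum_{x\in[k]} p_x \|\rho_x^A\|_{(m)}.\]
This immediately gives the equivalence of the two conditions in the minimum, and incidentally shows that the value is independent of the non-unique choice of $\Pi_m$ defining $\rho_x^{(m)}$.

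Let $m^{\star}$ denote the minimum on the right-hand side. For the upper bound $H_{\max}^{\eps}(A|X)_\rho \le \log m^{\star}$ I take $\omega = \rho^{(m^{\star})}$ as the smoothing candidate: by definition of $m^{\star}$ this lies in $\mb_{\eps}(\rho^{XA})$, it is a classical-quantum state with components $\rho_x^{(m^{\star})}$ of rank at most $m^{\star}$, so Eq.~\eqref{1260} applied to $\rho^{(m^{\star})}$ yields $H_{\max}(A|X)_{\rho^{(m^{\star})}} \le \log m^{\star}$.

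The substantive step is the lower bound. Take any feasible classical-quantum $\omega = \sum_x q_x |x\rangle\langle x|^X \otimes \omega_x^A$ in $\mb_{\eps}(\rho^{XA})$ with $H_{\max}(A|X)_\omega \le \log m$, which by Eq.~\eqref{1260} means $\mathrm{rank}(\omega_x^A) \le m$ whenever $q_x \ne 0$. Let $P_x^A$ denote the projector onto the support of $\omega_x^A$ (with $P_x^A \eqdef 0$ when $q_x = 0$) and set $\tilde{\Pi} \eqdef \sum_x |x\rangle\langle x|^X \otimes (I^A - P_x^A)$. Since $\tilde{\Pi}\,\omega = 0$, the variational identity $\|\sigma-\rho\|_1 \ge 2\tr[Q(\rho-\sigma)]$ valid for any $0 \le Q \le I$, combined with the Ky Fan bound $\tr[P\sigma] \le \|\sigma\|_{(m)}$ for rank-at-most-$m$ projectors, gives
\[\|\omega - \rho^{XA}\|_1 \;\ge\; 2\tr[\tilde{\Pi}\,\rho^{XA}] \;=\; 2\sum_x p_x\bigl(1 - \tr[P_x^A \rho_x^A]\bigr) \;\ge\; 2\Big(1 - \sum_x p_x \|\rho_x^A\|_{(m)}\Big).\]
Feasibility then forces $\sum_x p_x \|\rho_x^A\|_{(m)} \ge 1 - \eps$, so $m \ge m^{\star}$, establishing $H_{\max}^{\eps}(A|X)_\rho \ge \log m^{\star}$. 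The main obstacle is engineering the projector $\tilde{\Pi}$: its block structure (forced by the classical-quantum form) must be matched to the rank constraint on each $\omega_x$, and the Ky Fan bound must be tight precisely on the truncation $\rho^{(m^{\star})}$ used in the upper bound, so that the two sides meet.
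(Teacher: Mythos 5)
Your proof is correct and takes essentially the same route as the paper's: the same eigenvalue computation giving $\tfrac12\|\rho^{(m)}-\rho^{XA}\|_1=1-\sum_{x}p_x\|\rho^A_x\|_{(m)}$, the pruned state $\rho^{(m^\star)}$ as the smoothing candidate for the upper bound, and, for the lower bound, the support projectors of the $\omega_x$ combined with a Ky-Fan bound and a trace-distance estimate. The only difference is cosmetic: the paper phrases the lower bound as a contradiction using the rank-$(m-1)$ projectors $\Pi_{\omega_x}$ directly, while you argue directly with the complementary projector $\tilde{\Pi}$ and conclude $m\ge m^\star$.
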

\begin{proof}
	If $\rho^{(m)}\ne\rho^{XA}$, there exists an $x\in{[|A|]}$ such that $p_x\ne0$ and the rank of $\rho_x^{(m)}$ equals $m$, from which follows that (see Eq.~\eqref{1260})
	\begin{align}
		H_{\max}(A|X)_{\rho^{(m)}}=\log m.
	\end{align}
	Using the definition of $H_{\max}^{\eps}(A|X)_\rho$ (see Eq.~\eqref{eq:smoothCondMaxEnt}), we thus obtain
	\begin{align}
		H_{\max}^{\eps}(A|X)_\rho=&\min_{\omega\in\md(XA)}\left\{H_{\max}(A|X)_\omega: \frac{1}{2}\norm{\omega^{XA}-\rho^{XA}}_1\le \eps \right\} \nonumber \\
		\le& \min_{m\in[|A|]}\left\{H_{\max}(A|X)_{\rho^{(m)}}: \frac{1}{2}\norm{\rho^{(m)}-\rho^{XA}}_1\le \eps \right\} \nonumber \\
		=&\min_{m\in[|A|]}\left\{\log m: \frac{1}{2}\norm{\rho^{(m)}-\rho^{XA}}_1\le \eps \right\}.
	\end{align}

	To show that indeed we have an equality in the above equation, we notice that by definition, $\rho^{(m)}_x$ and $\rho_x^A$ commute.
	Denoting the eigenvalues of $\rho_x^A$ by $\{\lambda_{y|x}\}$ we get that the only potentially non-zero eigenvalues of $\rho_x^{(m)}$ are 
	\begin{align}
		\left\{ \lambda_{y|x}^\da/\left\|\rho^A_x\right\|_{(m)} \right\}_{y\in[m]}.
	\end{align}
	From this follows that
	\begin{align}\label{12060}
		\frac12&\left\|\rho_x^A-\rho_x^{(m)}\right\|_1\nonumber\\
		=& \frac12\left[ \sum_{y=1}^m \left| \lambda_{y|x}^\da - \lambda_{y|x}^\da/\left\|\rho^A_x\right\|_{(m)}\right|+\sum_{y=m+1}^{|A|}\lambda_{y|x}^\da \right]\nonumber \\
		=& \frac12 \left[ \left\|\rho^A_x\right\|_{(m)}\left(1/\left\|\rho^A_x\right\|_{(m)}-1\right) +1-\left\|\rho^A_x\right\|_{(m)} \right] \nonumber \\
		=&1-\left\|\rho^A_x\right\|_{(m)}
	\end{align}
	and thus
	\begin{align}
		\frac{1}{2}\norm{\rho^{(m)}-\rho^{XA}}_1=&1-\sum_{x\in[k]}p_x \left\|\rho^A_x\right\|_{(m)}.
	\end{align}
	This proves that the two optimization problems in the Lemma have the same solution, i.e., 
	\begin{align}\label{eq:altExpLemma}
		\min_{m\in[|A|]}\left\{\log m: \frac{1}{2}\norm{\rho^{(m)}-\rho^{XA}}_1\le\eps\right\}
		=&\min_{m\in[|A|]}\left\{\log m: \sum_{x\in[k]}p_x \left\|\rho^A_x\right\|_{(m)}\ge 1-\eps\right\}.
	\end{align}

	Now suppose by contradiction that 
	\begin{align}
		H_{\max}^{\eps}(A|X)_\rho<\min_{m\in[|A|]}\left\{\log m: \sum_{x\in[k]}p_x \left\|\rho^A_x\right\|_{(m)}\ge 1-\eps\right\}
	\end{align}
	and let $m$ be the smallest natural number such that $\sum_xp_x\left\|\rho^A_x\right\|_{(m)}\geq 1-\eps$. Further, let 
	\begin{align}
		\omega^{XA}=\sum_x q_x\ketbra{x}{x}^X\otimes\omega_x^A \in\mb_{\eps}\left(\rho^{XA}\right)
	\end{align} be such that $H_{\max}^\eps(A|X)_\rho=H_{\max}(A|X)_\omega$. 
	Then, according to the assumption,
	\be\label{cgf}
	\log m>H_{\max}(A|X)_\omega=\log\max_{x\in[k]:q_x\ne0}\tr\left[\Pi_{\omega_x}^A\right]
	\ee
	so that $\tr\left[\Pi_{\omega_x}^A\right]<m$ for all $x \in[k]:q_x\ne0$. In particular, this means that for any $x\in[k]:q_x\ne0$, $\left\|\rho^A_x\right\|_{(m-1)}\geq \tr\left[\rho^A_x\Pi_{\omega_x}^A\right]$ since $\Pi_{\omega_x}^A$ has a rank strictly smaller than $m$. Therefore, with 
	\begin{align}\label{eq:ProjOmega}
		\Pi^{XA}_\omega\eqdef\sum_{x\in[k]:q_x\ne0}\ketbra{x}{x}^X\otimes\Pi_{\omega_x}^A,
	\end{align} 
	we have 
	\ba
	\sum_x&p_x\left\|\rho^A_x\right\|_{(m-1)}\\
	&\geq  \tr\left[\rho^{XA}\Pi^{XA}_\omega\right]\\
	&=\tr\left[\omega^{XA}\Pi^{XA}_\omega\right]+\tr\left[\left(\rho^{XA}-\omega^{XA}\right)\Pi^{XA}_\omega\right]\\
	&\geq 1-\tr\left[\left(\rho^{XA}-\omega^{XA}\right)_-\Pi^{XA}_\omega\right]\\
	&\geq 1-\eps\;,
	\ea
	where we used the fact that $\Pi^{XA}_\omega\leq I^{XA}$ and $\tr\left(\rho^{XA}-\omega^{XA}\right)_-=\frac12\|\omega^{XA}-\rho^{XA}\|_1\leq\eps$. This is in contradiction to the definition of $m$ as the smallest natural number such that $\sum_xp_x\left\|\rho^A_x\right\|_{(m)}\geq 1-\eps$, which completes the proof.

\end{proof}

\begin{theorem}
	For $\rho\in\md(AB)$, the $\eps$-single-shot entanglement cost is given by
	\be
	\cost^\eps\left(\rho^{AB}\right)= \inf_{\rho^{XAB}} H_{\max}^{\eps}(A|X)_\rho\;,
	\ee
	where the infimum is over all classical systems $X$ and all classical extensions $\rho^{XAB}$ of $\rho^{AB}$. Moreover, the infimum is attained for a classical extension with $|X|=|AB|^2$ and can also be taken over all regular extensions of $\rho^{AB}$. 
\end{theorem}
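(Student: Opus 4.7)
My plan is to reduce the theorem to an interplay between two results already in hand: the identity $P^2\left(\Phi_m\xrightarrow{\locc}\rho^{AB}\right)=E_{(m)}\left(\rho^{AB}\right)$ from Theorem~\ref{lem1291}, and the combinatorial expression for $H_{\max}^{\eps}$ on classical-quantum states from Lemma~\ref{lem:HMaxCQ}. The guiding observation is that both objects can be rephrased as the smallest $m$ such that a weighted sum of Ky-Fan norms $\sum_x p_x\|\rho_x^A\|_{(m)}$ exceeds $1-\eps$, so matching them up is essentially a bookkeeping exercise once the extension structure is handled correctly.

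First, I would combine the definition of $\cost^\eps$ from Eq.~\eqref{econe} with Theorem~\ref{lem1291} and the convex-roof definition of $E_{(m)}$ to obtain
\begin{align}
\cost^{\eps}\left(\rho^{AB}\right)
=\min\bigl\{\log m:\,E_{(m)}\left(\rho^{AB}\right)\le\eps\bigr\}
=\inf_{\rho^{XAB}\text{ regular}}\,\min_{m\in[|A|]}\!\left\{\log m:\,\sum_x p_x\|\rho_x^A\|_{(m)}\ge 1-\eps\right\}\!,
\end{align}
where in the last equality I swap the order of the two minimizations (over $m$ and over pure-state decompositions) and use that the convex-roof infimum is attained by some regular extension with at most $|AB|^2$ terms, as established in the Carathéodory argument inside the proof of Theorem~\ref{lem1291}. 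By Lemma~\ref{lem:HMaxCQ} applied to the reduced classical-quantum marginal $\rho^{XA}$, the inner minimum is precisely $H_{\max}^{\eps}(A|X)_\rho$, which gives the desired equality in Eq.~\eqref{eq:mainThm} when the infimum is restricted to regular extensions, and simultaneously proves that a regular extension with $|X|=|AB|^2$ attains it.

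Next, I would lift the infimum from regular to arbitrary classical extensions. The inclusion $\inf_{\text{classical}}\le\inf_{\text{regular}}$ is immediate. For the opposite direction, take any classical extension $\sigma^{XAB}=\sum_x p_x \ketbra{x}{x}^X\otimes\sigma_x^{AB}$ and refine it by decomposing each $\sigma_x^{AB}=\sum_y q_{y|x}\psi_{xy}^{AB}$ into pure states, producing a regular extension indexed by $XY$. Lemma~\ref{lem:HMaxCQ} expresses both quantities in terms of weighted Ky-Fan sums, and the triangle (convexity) inequality for the norm $\|\cdot\|_{(m)}$ gives
\begin{align}
\|\sigma_x^A\|_{(m)}\le\sum_y q_{y|x}\bigl\|\tr_B[\psi_{xy}^{AB}]\bigr\|_{(m)},
\end{align}
so every $m$ feasible for $\sigma^{XAB}$ remains feasible for the refinement, whence $H_{\max}^{\eps}(A|XY)_{\sigma'}\le H_{\max}^{\eps}(A|X)_\sigma$. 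Consequently the two infima coincide.

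The step I expect to be the main obstacle is precisely this refinement argument. It is tempting to appeal to an abstract data-processing monotonicity of $H_{\max}^{\eps}$ under enlargement of the conditioning register, but that is delicate because the $\eps$-ball in the definition of $H_{\max}^{\eps}$ depends on the full joint state and changes when we pass from $XA$ to $XYA$. The concrete characterization in Lemma~\ref{lem:HMaxCQ} sidesteps this subtlety by converting the smoothed entropy into an explicit threshold condition on weighted Ky-Fan norms, where convexity of $\|\cdot\|_{(m)}$ closes the argument cleanly.
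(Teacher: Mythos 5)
Your proof is correct, and its first half follows the same route as the paper: you rewrite $\cost^\eps\left(\rho^{AB}\right)=\min\big\{\log m: E_{(m)}\left(\rho^{AB}\right)\le\eps\big\}$ via Thm.~\ref{lem1291}, exchange the minimization over $m$ with the optimization over pure-state decompositions (legitimate because the optimum over decompositions is attained, by the Carath\'eodory bound $|X|=|AB|^2$ from the proof of Thm.~\ref{lem1291}, which you correctly invoke), and then identify the inner minimum with $H_{\max}^{\eps}(A|X)_\rho$ through Lem.~\ref{lem:HMaxCQ}; this is exactly the paper's ``$I=J$'' step followed by Eq.~\eqref{13p253}. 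Where you genuinely depart is the passage from regular to general classical extensions. The paper argues by contradiction, rerunning the projector/smoothing-ball argument from the proof of Lem.~\ref{lem:HMaxCQ} with an optimal $\omega^{XA}$ in the $\eps$-ball of the hypothetical better classical extension and routing the contradiction back through the optimal regular extension. You instead refine each $\sigma_x^{AB}$ into pure states and use subadditivity of the Ky-Fan norm, $\|\sigma_x^A\|_{(m)}\le\sum_y q_{y|x}\big\|\tr_B\left[\psi_{xy}^{AB}\right]\big\|_{(m)}$, so that any $m$ feasible for the classical extension remains feasible for the refined regular extension, and Lem.~\ref{lem:HMaxCQ} then gives $H_{\max}^{\eps}(A|XY)_{\sigma'}\le H_{\max}^{\eps}(A|X)_{\sigma}$ directly; since the refinement is itself a regular extension of $\rho^{AB}$ (indeed $\sum_{x,y}p_x q_{y|x}\psi_{xy}^{AB}=\rho^{AB}$), this yields the missing inequality between the two infima. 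Your variant buys transparency: it stays entirely at the level of the threshold characterization, never touches the smoothing ball of the classical extension (precisely the subtlety you flag), and exhibits the monotonicity of $H_{\max}^{\eps}$ under refinement of the classical register explicitly, whereas the paper's version mainly reuses machinery already set up for Lem.~\ref{lem:HMaxCQ}. The only items worth making explicit in a final write-up are the attainment needed for the exchange of minimizations (which you cite) and the one-line verification that the refined state is a valid regular extension.
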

\begin{proof}
We begin by showing a useful mathematical identity. For $\eps\in[0,1]$, let
\begin{align}
	I\eqdef&\min_{m\in[|A|]} \left\{\log m: \max_{\rho^{XAB}} \sum_xp_x\left\|\rho^A_x\right\|_{(m)}\ge 1-\eps\right\}, \nonumber\\
	J\eqdef&\min_{\rho^{XAB}} \min_{m\in[|A|]} \left\{\log m:  \sum_xp_x\left\|\rho^A_x\right\|_{(m)}\ge 1-\eps\right\},
\end{align}
where the optimizations over $\rho^{XAB}$ run over all regular extensions  $\rho^{XAB}$ of $\rho^{AB}$ with $|X|=|AB|^2$. Using the same convention, let 
\begin{align}
	\mathfrak{Y}\eqdef&\left\{\log m: m\in[|A|], \max_{\rho^{XAB}} \sum_xp_x\left\|\rho^A_x\right\|_{(m)}\ge 1-\eps\right\}, \nonumber \\
	\mathfrak{Y}_\rho\eqdef&\left\{\log m: m\in[|A|],  \sum_xp_x\left\|\rho^A_x\right\|_{(m)}\ge 1-\eps\right\}. \nonumber \\
\end{align}
Since $ \max_{\rho^{XAB}} \sum_xp_x\left\|\rho^A_x\right\|_{(m)}\ge\sum_xp_x\left\|\rho^A_x\right\|_{(m)}\ge 1-\eps$, it follows that $\mathfrak{Y}_\rho\subset\mathfrak{Y}$  and thus $\min \mathfrak{Y}\le \min \mathfrak{Y}_\rho$ for all regular extensions $\rho^{XAB}$. We therefore find that
\begin{align}
	I=\min \mathfrak{Y}\le \min_{\rho^{XAB}}\min \mathfrak{Y}_\rho=J.
\end{align}
From the definition of $I=\vcentcolon\log m^\star$, it further follows that there exists a regular extension $\tilde{\rho}^{XAB}$ of $\rho^{AB}$ such that
\begin{align}
	\sum_x\tilde{p}_x\left\|\tilde{\rho}^A_x\right\|_{(m^\star)}\ge 1-\eps.
\end{align}
Now assume that $I<J$, i.e.,
\begin{align}
	\log m^\star < \min_{\rho^{XAB}} \min_{m\in[|A|]} \left\{\log m:  \sum_xp_x\left\|\rho^A_x\right\|_{(m)}\ge 1-\eps\right\} \le  \min_{m\in[|A|]} \left\{\log m:  \sum_x\tilde{p}_x\left\|\tilde{\rho}^A_x\right\|_{(m)}\ge 1-\eps\right\}\le \log m^\star, 
\end{align}
which is a contradiction. We thus showed that $I=J$.
	
Next, we notice that from Thm.~\ref{lem1291} and its proof, it follows that 
\be
P^2\left(\Phi_m\xrightarrow{\locc} \rho^{AB} \right)=E_{(m)}\left(\rho^{AB}\right)=\min_{\rho^{XAB}}\left(1-\sum_xp_x\left\|\rho^A_x\right\|_{(m)}\right)=1-\max_{\rho^{XAB}}\sum_xp_x\left\|\rho^A_x\right\|_{(m)},
\ee
where the optimizations are over all classical systems $X$ and all regular extensions $\rho^{XAB}$ of $\rho^{AB}$, with the optimal values attained for $|X|=|AB|^2$. The $\eps$-single-shot entanglement cost as defined in Eq.~\eqref{econe} can thus be expressed as
\begin{align}\label{13p253}
    \cost^\eps\left(\rho^{AB}\right)=&\min_{m\in[|A|]} \Big\{\log m:\;\max_{\rho^{XAB}} \sum_xp_x\left\|\rho^A_x\right\|_{(m)}\geq 1-\eps\Big\}\nonumber \\ 
    =&\min_{\rho^{XAB}} \min_{m\in[|A|]}\Big\{\log m:\;\sum_xp_x\left\|\rho^A_x\right\|_{(m)}\geq 1-\eps\Big\}\nonumber \\
    =&\min_{\rho^{XAB}} H_{\max}^{\eps}(A|X)_\rho,
\end{align}
where we used Lem.~\ref{lem:HMaxCQ} in the last line. 

To conclude the proof, we show that the infimum can also be taken over all classical extensions. Since the set of regular extensions $\rho^{XAB}$ is a subset of the set of classical extensions $\tilde{\rho}^{XAB}$,
	\begin{align}
		\inf_{\rho^{XAB}} H_{\max}^{\eps}(A|X)_\rho \ge \inf_{\tilde{\rho}^{XAB}} H_{\max}^{\eps}(A|X)_{\tilde{\rho}}.
	\end{align}
	The proof of the converse is very similar to the proof of Lem.~\ref{lem:HMaxCQ}: Let $\rho^{XAB}$ be a \textit{regular} extension of $\rho^{AB}$ satisfying $H_{\max}^{\eps}(A|X)_\rho=\inf_{\rho^{XAB}} H_{\max}^{\eps}(A|X)_\rho$ (which exists according to the previous parts of the proof) and let $m$ be the smallest integer such that 
	$\sum_{x\in[|A|]}p_x \left\|\rho^A_x\right\|_{(m)}\ge 1-\eps$.
	Now assume by contradiction that there exists a \textit{classical} extension $\tilde{\rho}^{XAB}$ of $\rho^{AB}$ such that $H_{\max}^{\eps}(A|X)_{\tilde{\rho}}<H_{\max}^{\eps}(A|X)_\rho=\log m$ (see Lem.~\ref{lem:HMaxCQ}).  Denoting by
	\begin{align}
		\omega^{XA}=\sum_x q_x\ketbra{x}{x}^X\otimes\omega_x^A \in\mb_{\eps}\left(\tilde{\rho}^{XA}\right)
	\end{align} a state satisfying $H_{\max}^\eps(A|X)_{\tilde{\rho}}=H_{\max}(A|X)_\omega$, we find that 
	\begin{align}
		\log m > H_{\max}(A|X)_\omega=\log \max_{x\in[|A|]:q_x\ne0}\tr\left[\Pi_{\omega_x}^A\right]
	\end{align}
	and thus $ m > \tr\left[ \Pi_{\omega_x}^A\right] \forall x\in[|A|]:q_x\ne 0$. With $\Pi_{\omega}^{XA}$ defined as in Eq.~\eqref{eq:ProjOmega}, this implies again that 
	\begin{align}
		\sum_x&p_x\left\|\rho^A_x\right\|_{(m-1)} \nonumber\\
		&\geq  \tr\left[\rho^{XA}\Pi^{XA}_\omega\right]\nonumber\\
		&=\tr\left[\omega^{XA}\Pi^{XA}_\omega\right]+\tr\left[\left(\rho^{XA}-\omega^{XA}\right)\Pi^{XA}_\omega\right]\nonumber\\
		&\geq 1-\tr\left[\left(\rho^{XA}-\omega^{XA}\right)_-\Pi^{XA}_\omega\right]\nonumber\\
		&\geq 1-\eps\;,
	\end{align}
	which is a contradiction to the assumption.
\end{proof}

The proof of Cor.~\ref{cor:pureCostHmax} was provided in the main text.
\setcounter{theorems}{12}
\begin{lemma}
	Let $F\left(\Phi_m\xrightarrow{\locc}\mN^{A\to B}\right)$ be defined as in Eq.~\eqref{eq:ConvFidChan}. It holds that
	\begin{align}
		&F\left(\Phi_m\xrightarrow{\locc}\mN^{A\to B}\right) = \min_{\psi \in \pure(A\tilde A)} \sup_{\Theta}  F\left(\Theta[\Phi_m](\psi^{A\tilde A}), \mN^{\tilde A \to B}(\psi^{A\tilde A})\right) \nonumber
	\end{align}    
	where the supremum is again over all LOCC super-channels $\Theta$ that map the state $\Phi_m$ to a channel in $\cptp(\tA\to B)$.
\end{lemma}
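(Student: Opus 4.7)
The plan is to swap the supremum and minimum by invoking Sion's minimax theorem~\cite{Sion1958}. The inequality
\begin{equation*}
F\!\left(\Phi_m\xrightarrow{\locc}\mN^{A\to B}\right)\leq\min_{\psi\in\pure(A\tA)}\sup_{\Theta}F\!\left(\Theta[\Phi_m](\psi^{A\tA}),\mN^{\tA\to B}(\psi^{A\tA})\right)
\end{equation*}
holds automatically as the standard maxmin inequality, so only the converse direction requires work. For this, I will first recast the inner minimization over pure states as one over a convex compact set, and then verify the convex-analytic hypotheses of Sion.

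For any fixed LOCC superchannel $\Theta$, the fidelity $F(\Theta[\Phi_m](\psi^{A\tA}),\mN^{\tA\to B}(\psi^{A\tA}))$ depends on $\psi\in\pure(A\tA)$ only through the marginal $\rho^{\tA}\eqdef\tr_A[\psi^{A\tA}]$. Indeed, two purifications of $\rho^{\tA}$ differ by an isometry $U$ on $A$, and since both $\Theta[\Phi_m]$ and $\mN$ act only on $\tA$, this isometry is carried through to the outputs as conjugation by $U\otimes I^B$, under which fidelity is invariant. Writing $g_{\Theta}(\rho^{\tA})$ for this common value, the minimization over pure states becomes a minimization over $\rho^{\tA}\in\md(\tA)$, which is a convex and compact set.

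The heart of the argument is establishing the right convex-analytic properties of $g_{\Theta}(\rho)$. Concavity in $\Theta$ is immediate: the map $\Theta\mapsto\Theta[\Phi_m](\psi_\rho)$ is affine and fidelity is concave in its first argument when the second is fixed. The \emph{convexity} of $g_{\Theta}(\cdot)$ in $\rho$ is subtler. Given $\rho=\sum_i p_i\rho_i$ with purifications $\psi_i^{\tA A}$, form the classical-quantum mixed purification
\begin{equation*}
\tilde\rho^{\tA AX}\eqdef\sum_i p_i\,\ketbra{\psi_i}{\psi_i}^{\tA A}\otimes\ketbra{i}{i}^X,
\end{equation*}
whose marginal on $\tA$ equals $\rho$. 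The orthogonality of the $\{\ketbra{i}{i}^X\}$ makes fidelity additive on the resulting blocks, giving $F(\mN(\tilde\rho),\Theta[\Phi_m](\tilde\rho))=\sum_i p_i\,g_{\Theta}(\rho_i)$. Any pure purification of $\rho$ can be chosen so that it traces down to $\tilde\rho$ upon discarding an auxiliary system, so monotonicity of fidelity under partial trace yields
\begin{equation*}
g_{\Theta}(\rho)\leq F\!\left(\mN(\tilde\rho),\Theta[\Phi_m](\tilde\rho)\right)=\sum_i p_i\,g_{\Theta}(\rho_i).
\end{equation*}
I expect this step to be the main technical point: one is naively tempted to deduce concavity in $\rho$ from joint concavity of the fidelity in its two arguments, but that would give the wrong inequality, and it is precisely the \emph{convex} direction -- obtained via data processing -- that makes Sion's theorem applicable.

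With convexity in $\rho$ and concavity in $\Theta$ in hand, the remaining ingredients fall into place: the set of LOCC superchannels is convex (Alice and Bob can mix strategies via shared classical randomness), $\md(\tA)$ is convex and compact, and $g_{\Theta}(\rho)$ is continuous in both arguments. Sion's minimax theorem~\cite{Sion1958} therefore yields
\begin{equation*}
\sup_{\Theta}\min_{\rho^{\tA}\in\md(\tA)}g_{\Theta}(\rho)=\min_{\rho^{\tA}\in\md(\tA)}\sup_{\Theta}g_{\Theta}(\rho).
\end{equation*}
Since $\sup_{\Theta}g_{\Theta}(\rho)$ still depends only on $\rho^{\tA}$, the right-hand side can be rewritten as a minimum over pure $\psi\in\pure(A\tA)$, establishing the claimed identity.
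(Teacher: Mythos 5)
Your proof is correct, and its skeleton is the same as the paper's: reduce the pure-state minimization to a minimization over reduced states $\rho^{\tA}$ via purification-independence, establish the right curvature in each variable, and invoke Sion's minimax theorem over the convex set of LOCC superchannels and the compact convex set $\md(\tA)$. The difference is where the curvature comes from. The paper passes to the sandwiched R\'enyi divergence $\tilde D_{1/2}=-2\log F$ and reproduces the concavity-in-$\rho$ argument of Refs.~\cite{wang2019converse,leditzky2018approaches} (ensemble purification with a flag register, isometric invariance, data processing under dephasing, direct-sum property), which amounts to log-convexity of the fidelity in $\rho$; you instead prove the plain convexity of $F$ in $\rho$ directly, using the direct-sum property of the fidelity on block-diagonal states together with monotonicity under discarding the extra purifying register. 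Since Sion only needs quasiconvexity, your weaker statement suffices, and your route is a bit more elementary and self-contained, at the price of not exhibiting the stronger log-convexity that the divergence formulation gives for free; you also correctly identified that joint concavity of $F$ would push in the wrong direction and that the data-processing step is what saves the day, which is exactly the crux in the paper's version as well. One small phrasing point: a purification of $\rho$ on the fixed system $A\tA$ does not itself trace down to your $\tilde\rho^{\tA AX}$; you first need the purification-independence you established to pass to the enlarged purification $\sum_i\sqrt{p_i}\,\ket{\psi_i}^{\tA A}\ket{i}^{X}\ket{i}^{X'}$ and then trace out $X'$ — worth stating explicitly, but it is not a gap in the argument.
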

\begin{proof}
	We begin by noting that 
	\begin{align}
		\tilde D_{1/2}(\rho\|\sigma)= -2\log F(\rho,\sigma),
	\end{align}
	where $\tilde D_{1/2}$ denotes the sandwiched Rényi relative entropy of order 1/2~\cite{Wilde2014,Mueller2013}. The data-processing inequality of $\tilde D_{1/2}$ follows directly from the analogous property of the fidelity and that $\tilde D_{1/2}$ obeys the direct-sum property, i.e., 
	\begin{align}
		\tilde D_{1/2}\left(\sum_i p_i \ketbra{i}{i}\otimes \rho_i \Big\|\sum_i p_i \ketbra{i}{i}\otimes \sigma_i\right) & = \sum_i p_i \tilde D_{1/2}(\rho_i\|\sigma_i)
	\end{align}
	too. According to (the proofs of) Ref.~\cite[Prop.~8]{wang2019converse} and~\cite[Lem.~II.3]{leditzky2018approaches}, it therefore holds that $$\tilde D_{1/2}(\mN^{A'\to B}(\phi_\rho^{AA'})\|\mM^{A'\to B}(\phi_\rho^{AA'}))$$ is concave in $\rho^{A'}$, where $\phi_\rho^{AA'}$ is a purification of $\rho^{A'}$ and $\mM^{A'\to B}$, $\mN^{A'\to B}$ are quantum channels. In short, their argument is the following: For any convex combination $\rho^{A'} = \sum_i p_i \rho^{A'}_i$, suppose $\rho^{A'}_i$ has a purification $\ket{\phi_i^{AA'}}$. Then
	\begin{align}
		\ket{\psi^{PAA'}} = \sum_i \sqrt{p_i} \ket{i^P} \otimes \ket{\phi_i^{AA'}}
	\end{align}
	is a purification of the average state $\rho^{A'}$. Since all purifications are related by an isometry, there exists an isometric channel $\mW^{A \to PA}$ such that $\mW^{A\to PA}(\phi_\rho^{AA'}) = \psi^{PAA'}$.
	Then we have
	\begin{align}
		\tilde D_{1/2} & \left(\mN^{A'\to B}(\phi_\rho^{AA'})\Big\|\mM^{A'\to B}(\phi_\rho^{AA'})\right)\\
		& = \tilde D_{1/2} \left(\mW^{A\to PA}\mN^{A'\to B}(\phi_\rho^{AA'})\Big\|\mW^{A\to PA}\mM^{A'\to B}(\phi_\rho^{AA'})\right)\\
		& = \tilde D_{1/2} \left(\mN^{A'\to B}\mW^{A\to PA}(\phi_\rho^{AA'})\Big\|\mM^{A'\to B}\mW^{A\to PA}(\phi_\rho^{AA'})\right)\\
		& =    \tilde D_{1/2} \left(\mN^{A'\to B}(\psi^{PAA'})\Big\|\mM^{A'\to B}(\psi^{PAA'})\right)\\
		& \geq \tilde D_{1/2}\left(\sum_i p_i \ketbra{i}{i}^P \otimes \mN^{A'\to B}(\phi_i^{AA'})\Big\|\sum_i p_i \ketbra{i}{i}^P \otimes \mM^{A'\to B}(\phi_i^{AA'})\right)\\
		& = \sum_i p_i \tilde D_{1/2}\left(\mN^{A'\to B}(\phi_i^{AA'})\|\mM^{A'\to B}(\phi_i^{AA'})\right),
	\end{align}
	where the first equality follows from the isometric invariance of the divergence, the second equality follows as $\mW^{A\to PA}$ commutes with $\mN^{A'\to B}$ and $\mM^{A'\to B}$, the inequality follows from the data-processing inequality of $\tilde D_{1/2}$ under the dephasing channel $\sum_i \ketbra{i}{i}^P \cdot \ketbra{i}{i}^P$, and the last equality follows from the direct-sum property of $\tilde D_{1/2}$. 
	
	We thus find that
	\begin{align}
		& \inf_{\Theta \in \locc} \max_{\psi \in \pure(A\tilde A)} \tilde D_{1/2}\left(\Theta[\Phi_m](\psi^{A\tilde A})\Big\| \mN^{\tilde A \to B}(\psi^{A\tilde A})\right) \\
		& = \inf_{\Theta \in \locc} \max_{\rho^A \in \md(A)} \tilde D_{1/2}\left(\Theta[\Phi_m](\phi_\rho^{A\tilde A})\Big\| \mN^{\tilde A \to B}(\phi_\rho^{A\tilde A})\right)\\
		& = \max_{\rho^A \in \md(A)} \inf_{\Theta \in \locc}  \tilde D_{1/2}\left(\Theta[\Phi_m](\phi_\rho^{A\tilde A})\Big\| \mN^{\tilde A \to B}(\phi_\rho^{A\tilde A})\right)\\
		& = \max_{\psi \in \pure(A\tilde A)} \inf_{\Theta \in \locc}  \tilde D_{1/2}\left(\Theta[\Phi_m](\psi^{A\tilde A})\Big\| \mN^{\tilde A \to B}(\psi^{A\tilde A})\right),
	\end{align}
	where the first and last equalities follow because the objective function is invariant with respect to the purification. The second equality follows from the above argument showing that the objective function is concave in $\rho^A$, the fact that the objective function is convex in $\Theta$, and that we are thus allowed to apply Sion's minimax theorem (clearly $\md(A)$ is compact and both sets over which we optimize are convex)~\cite[Cor.~3.3]{Sion1958}.
	Remembering that $\tilde D_{1/2}(\rho\|\sigma) = -2 \log F(\rho,\sigma)$ and thus, e.g.,
	\begin{align}
		& \inf_{\Theta \in \locc} \max_{\psi \in \pure(A\tilde A)} \tilde D_{1/2}\left(\Theta[\Phi_m](\psi^{A\tilde A})\Big\| \mN^{\tilde A \to B}(\psi^{A\tilde A})\right) \\
		&= \inf_{\Theta \in \locc} \max_{\psi \in \pure(A\tilde A)} -2 \log F\left(\Theta[\Phi_m](\psi^{A\tilde A}), \mN^{\tilde A \to B}(\psi^{A\tilde A})\right) \\
		&= \inf_{\Theta \in \locc} \max_{\psi \in \pure(A\tilde A)} -2 \log F\left(\Theta[\Phi_m](\psi^{A\tilde A}), \mN^{\tilde A \to B}(\psi^{A\tilde A})\right) \\
		&=-2 \log \sup_{\Theta \in \locc} \min_{\psi \in \pure(A\tilde A)} F\left(\Theta[\Phi_m](\psi^{A\tilde A}), \mN^{\tilde A \to B}(\psi^{A\tilde A})\right) \\
		&=-2 \log F(\Phi_m \to \mN^{A\to B})
	\end{align}
	completes the proof.
\end{proof}

\begin{lemma}
	Let $\mN\in\cptp(A\to B)$ be a quantum channel. It then holds that
	\begin{align}
		1-E_{(m)}\left(\mN^{A\to B}\right)	\leq F\left(\Phi_m\xrightarrow{\locc}\mN^{A\to B}\right)
		\leq \sqrt{1-E_{(m)}\left(\mN^{A\to B}\right)}\;.
	\end{align}
\end{lemma}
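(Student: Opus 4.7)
My plan is to reduce both inequalities to the state-level identity of Theorem~\ref{lem1291}, using Lemma~\ref{lem:MinMaxFid} to exchange the outer minimum and supremum when needed.

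For the upper bound, I fix a maximizer $\psi^*\in\pure(A\tA)$ of $E_{(m)}(\mN(\psi))$, so that $E_{(m)}(\mN(\psi^*))=E_{(m)}(\mN)$, and write
\begin{equation*}
F\left(\Phi_m\xrightarrow{\locc}\mN\right) = \sup_\Theta \min_\psi F\!\left(\Theta[\Phi_m](\psi),\mN(\psi)\right) \le \sup_\Theta F\!\left(\Theta[\Phi_m](\psi^*),\mN(\psi^*)\right).
\end{equation*}
The key observation is that for any LOCC super-channel $\Theta$, the state $\Theta[\Phi_m](\psi^*)\in\md(AB)$ is reachable from $\Phi_m$ by LOCC alone, because Alice can locally prepare $\psi^*$ and the two parties then execute the LOCC protocol realizing $\Theta$. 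Theorem~\ref{lem1291} (combined with $F^2=1-P^2$) then bounds the squared fidelity of $\mN(\psi^*)$ against any LOCC-reachable target by $1-E_{(m)}(\mN(\psi^*))=1-E_{(m)}(\mN)$, establishing the upper bound.

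For the lower bound, I invoke Lemma~\ref{lem:MinMaxFid} to exchange the outer minimum and supremum, giving $F(\Phi_m\to\mN)=\min_\psi\sup_\Theta F(\Theta[\Phi_m](\psi),\mN(\psi))$. It then suffices to prove, for each fixed $\psi$, the sub-claim $\sup_\Theta F(\Theta[\Phi_m](\psi),\mN(\psi))\ge 1-E_{(m)}(\mN(\psi))$, because $E_{(m)}(\mN(\psi))\le E_{(m)}(\mN)$ implies the stated bound after taking the outer minimum. Theorem~\ref{lem1291} supplies a target $\tau_\psi^{AB}$ with $\Phi_m\xrightarrow{\locc}\tau_\psi$ and $F(\tau_\psi,\mN(\psi))^2=1-E_{(m)}(\mN(\psi))$; my goal is to realize $\tau_\psi$, or a suitable surrogate, as $\Theta_\psi[\Phi_m](\psi)$ for an LOCC super-channel $\Theta_\psi$.

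The main obstacle is the marginal constraint $\Theta_\psi[\Phi_m](\psi)^A=\psi^A$ imposed by the super-channel structure (the system $A$ is carried through untouched), which the optimal $\tau_\psi$ from Theorem~\ref{lem1291} need not satisfy. This mismatch is precisely the source of the gap between $\sqrt{1-E_{(m)}(\mN)}$ and $1-E_{(m)}(\mN)$ in the lemma: the elementary inequality $\sqrt{x}\ge x$ for $x\in[0,1]$ provides enough slack that, even if enforcing the correct $A$-marginal degrades the achievable fidelity from $\sqrt{1-E_{(m)}(\mN(\psi))}$ down to $1-E_{(m)}(\mN(\psi))$, the weaker lower bound still follows. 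I would carry out the construction of $\Theta_\psi$ either via a partial-teleportation-style protocol that uses $\Phi_m$ to transmit the information in an $m$-dimensional subspace and Alice's local processing of $\tA$ to handle the rest, or by exhibiting an LOCC-simulable channel whose Choi state is a marginal-corrected version of $\tau_\psi$. Verifying the fidelity lower bound uniformly in $\psi$ is the principal technical step I anticipate.
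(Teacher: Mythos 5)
Your upper bound is correct and takes a genuinely different route from the paper. The observation that $A$ and $\tA$ both sit on Alice's side, so that she can prepare $\psi^*\in\pure(A\tA)$ locally and then run the LOCC realization of $\Theta$, indeed shows that $\Theta[\Phi_m](\psi^{*})$ lies in the set $\{\mE(\Phi_m):\mE\in\locc\}$; Thm.~\ref{lem1291} then gives $F^2\big(\Theta[\Phi_m](\psi^{*}),\mN(\psi^{*})\big)\le 1-E_{(m)}\big(\mN(\psi^{*})\big)=1-E_{(m)}(\mN)$ for every $\Theta$, which is the claimed $\sqrt{1-E_{(m)}(\mN)}$ bound. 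The paper instead proves this half directly, via the rank-$\le m$ structure of the Kraus operators of $\Theta[\Phi_m]$, Uhlmann's theorem and Cauchy--Schwarz; your reduction is shorter but leans on the full strength of Thm.~\ref{lem1291}, whose proof contains essentially the same set characterization.

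The lower bound, however, has a genuine gap. You correctly identify the obstruction (the super-channel acts only on $\tA$, so the optimizer $\tau_\psi$ from Thm.~\ref{lem1291} is in general not realizable as $\Theta[\Phi_m](\psi)$), but you then assert that the slack in $\sqrt{x}\ge x$ absorbs whatever fidelity is lost by enforcing this constraint. That is not an argument: nothing in your proposal shows that the constrained quantity $\sup_\Theta F\big(\Theta[\Phi_m](\psi),\mN(\psi)\big)$ is at least $1-E_{(m)}\big(\mN(\psi)\big)$, and the two suprema are a priori unrelated; the ``partial-teleportation'' and ``marginal-corrected Choi state'' constructions are exactly the nontrivial content of the lemma and are left unspecified. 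The paper closes this step with an explicit protocol: fix an operator-sum representation $\{N_x\}$ of $\mN$, let Alice apply the instrument with Kraus operators $M_x=V_x^*N_x$ (plus a completing element), where $P_x=V_xV_x^*$ projects onto the span of the $m$ largest eigenvalues of $N_x\rho^{\tA}N_x^*$ with $\rho^{\tA}=\tr_A[\psi^{A\tA}]$, send $x$ to Bob, who applies the isometry $V_x$ into an $m$-dimensional subspace (implementable with $\Phi_m$ and LOCC). Because each branch $\mN_x(\psi^{A\tA})$ is a subnormalized pure state, the superadditivity $F(P_0+P_1,Q_0+Q_1)\ge F(P_0,Q_0)+F(P_1,Q_1)$ gives $F\ge\sum_x\tr\big[\rho^{\tA}N_x^*P_xN_x\big]=\sum_x\big\|N_x\rho^{\tA}N_x^*\big\|_{(m)}$ directly as an overlap (not its square root) --- this is where the gap between the two bounds in the lemma actually originates. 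Optimizing over operator-sum representations (equivalently, over pure-state decompositions via Hughston--Jozsa--Wootters, as in Eq.~\eqref{eq:KrausDec}) and invoking Lem.~\ref{lem:MinMaxFid} then yields $F\big(\Phi_m\xrightarrow{\locc}\mN\big)\ge 1-E_{(m)}(\mN)$. Without such a construction, and a verification uniform in $\psi$, your lower bound remains unproven.
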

\begin{proof}
	We begin by noticing that from Eq.~\eqref{2p51}, it follows that every channel $\mM\in\cptp(A\to B)$ that is obtained from an LOCC super-channel $\Theta$ via  $\mM^{A\to B}=\Theta\left[\Phi_m^{A'B'}\right]$ has an operator sum representation consisting of Kraus operators whose rank do not exceed $m$. Let 
	\be
	\mM^{A\to B}(\cdot)=\sum_{y\in[k]}M_y(\cdot)M_y^*
	\ee
	be such an operator sum representation, i.e., $\rank(M_y)\leq m$. Further, observe that the pure state
	\be
	|\chi^{ABR}\ra\eqdef \sum_{y\in[k]}M_{y}^{\tA\to B}|\psi^{A\tA}\ra|y^R\ra
	\ee
	is a purification of the density matrix $\mM^{\tA\to B}(\psi^{A\tA})$. Similarly, any Kraus decomposition $\{N_z\}$ of $\mN^{\tA\to B}$ defines a purification of $\mN^{\tA\to B}(\psi^{A\tA})$ via
	\be
	|\varphi^{ABR'}\ra\eqdef\sum_{z\in[n]}N_{z}^{\tA\to B}|\psi^{A\tA}\ra|z^{R'}\ra\;.
	\ee
	By padding the shorter list of Kraus operators with zeros, w.l.o.g, we can choose $R'=R$.
	Hence, from Uhlmann's theorem~\cite{Uhlmann1976}, we get
	\ba
	F\left(\mM^{\tA\to B}(\psi^{A\tA}),\mN^{\tA\to B}(\psi^{A\tA})\right)=\max_U\big|\la\chi|U|\varphi\ra\big|,
	\ea
	where the maximum is over all unitaries $U$ acting on $R$. Now, observe that
	\ba
	U^{R\to R}|\varphi^{ABR}\ra&=\sum_{z\in[n]}N_{z}^{\tA\to B}|\psi^{A\tA}\ra U^{R\to R}|z^{R}\ra\\
	&=\sum_{y\in[k]}L_y^{\tA\to B}|\psi^{A\tA}\ra|y^R\ra\;,
	\ea
	where
	\be
	L_{y}^{\tA\to B}\eqdef\sum_{z\in[n]}\la y|U|z\ra N_z^{\tA\to B}\;.
	\ee
	Since $U$ is unitary, also $\{L_y\}_{y\in[k]}$ forms an operator-sum representation of $\mN$. With this notation
	\be
	\la\chi|U|\varphi\ra=\sum_{y\in[k]}\la\psi^{A\tA}|M_y^*L_y|\psi^{A\tA}\ra.
	\ee
	Since for each $y$ we have $\rank(M_y)\leq m$,  there exists a projection $\Pi_y$ acting on $B$ such that $\tr[\Pi_y]=m$ and $M_y=\Pi_yM_y$. Let 
	\be
	|\phi^{ABR}\ra\eqdef\sum_{y\in[k]}\Pi_y^B L_y^{\tA\to B}|\psi^{A\tA}\ra|y^{R}\ra\;
	\ee
	and observe that $|\phi\ra$ is in general not normalized.
	This implies that 
	\ba
	|\la\chi|U|\varphi\ra|&=|\la\chi|\phi\ra|
	\leq \||\chi\ra\|\||\phi\ra\|=\||\phi\ra\|\; ,
	\ea
	where we used the Cauchy-Schwarz inequality and the fact that $|\chi\ra$ is normalized.
	Now, observe that
	\ba
	\||\phi\ra\|^2&=\sum_{y}\la\psi^{A\tA}|L_{y}^{*}\Pi_{y}L_{y}|\psi^{A\tA}\ra\\
	\Gg{\rho^{\tA}\eqdef\tr_{A}\left[\psi^{A\tA}\right]\to}&\leq\sum_{y}\left\|L_{y}\rho^{\tA}L_{y}^{*}\right\|_{(m)}\;.
	\ea
	Hence,
	\ba\label{eq:BoundChanFid}
	F^2\left(\Phi_m^{A'B'}\xrightarrow{\locc}\mN^{A\to B}\right)&\leq \min_{\rho\in\md(\tA)}\sup_{\{L_y\}}\sum_{y}\left\|L_{y}\rho^{\tA}L_{y}^{*}\right\|_{(m)},
	\ea
    where the supremum is over all operator-sum representations of $\mN$.
	
	Next, we will show that the right-hand side of the above equation can be expressed in terms of $E_{(m)}$. 
	To this end, for a fixed $\psi^{A\tA}\in\pure(A\tA)$, let $$\rho^{AB}=\mN^{\tA\to B} \left(\psi^{A\tA}\right).$$ Now every unnormalized pure state decomposition of $\rho^{AB}=\sum_x \xi_x^{AB}$ corresponds to an operator-sum representation $N_x$ of $\mN$ in the sense that $\xi_x^{AB}=\mN_x^{\tA\to B}\left(\psi^{A\tA}\right)\eqdef N_x\psi^{A\tA}N_x^{*}$: That every operator-sum representation defines a pure state decomposition in this way is obvious. Now let $\{\xi_x^{AB}\}$ be a pure state decomposition corresponding to any operator-sum representation $\{N_x\}$ and let $\{\chi_y^{AB}\}$ be an arbitrary pure state decomposition. According to Ref.~\cite{Hughston1993}, this implies that there exists a unitary $U$ such that $\ket{\chi_y^{AB}}=\sum_x U_{yx}\ket{\xi_x^{AB}}$. Due to the unitary freedom in operator-sum representations, $\{M_y=\sum_x U_{yx} N_x\}$ is an operator-sum representation of $\mN$ too, and $\chi_y^{AB}=\mM_y^{\tA\to B}\left(\psi^{A\tA}\right)$. This implies that 
	\begin{align}
		&\inf_{\{N_x\}}\left(1-\sum_{x\in[k]}\left\|\tr_A\left[\mN_x^{\tA\to B} \left(\psi^{A\tA}\right)\right] \right\|_{(m)}\right)\nonumber \\
		=& \inf \left(1-\sum_{x\in[k]} p_x\left\|\tr_A\left[\xi_x^{AB}\right] \right\|_{(m)}\right),
	\end{align}
	where the second infimum is over all normalized pure state decompositions $\{p_x, \xi_x^{AB}\}$ of $\rho^{AB}=\mN^{\tA\to B} \left(\psi^{A\tA}\right)=\sum_x p_i \xi_x^{AB}$. 
	
	Moreover, remember that according to Eq.~\eqref{kyfan}, 
	\begin{align}
		E_{(m)}\left(\rho^{AB}\right)=& \inf \left(1-\sum_x p_x\left\|\tr_A\left[\psi^{AB}_x\right]\right\|_{(m)}\right),
	\end{align}
	where the infimum is over all pure-state decompositions $\rho^{AB}=\sum_x p_x \psi^{AB}_x$. In combination, this shows that 
	\begin{align}\label{eq:KrausDec}
		E_{(m)}\left(\mN^{A\to B}\right)=&\max_{\psi\in\pure(A\tA)} E_{(m)}\left(\mN^{\tA\to B}\left(\psi^{A\tA}\right)\right)\nonumber\\
		=&\max_{\psi\in\pure(A\tA)}\inf_{\{N_x\}}\left(1-\sum_{x\in[k]}\left\|\tr_A\left[\mN_x^{\tA\to B} \left(\psi^{A\tA}\right)\right] \right\|_{(m)}\right) \nonumber \\
		=&\max_{\rho\in \md(\tA)} \inf_{\{N_x\}}\left(1-\sum_{x\in[k]} \left\|\mN_x^{\tA\to B} \left(\rho^{\tA}\right) \right\|_{(m)}\right) \nonumber \\
		=&1-\min_{\rho\in \md(\tA)} \sup_{\{N_x\}} \sum_{x\in[k]} \left\|\mN_x^{\tA\to B} \left(\rho^{\tA}\right) \right\|_{(m)}.
	\end{align}
	Inserting this into Eq.~\eqref{eq:BoundChanFid}, we find that 
	\begin{align}\label{eq:BoundChanFidUp}
				F^2\left(\Phi_m\xrightarrow{\locc}\mN^{A\to B}\right)&\leq 1-E_{(m)}\left(\mN^{A\to B}\right),
	\end{align}
	which finishes the first part of the proof.

	To complete the proof, we must show that $F\left(\Phi_m\xrightarrow{\locc}\mN^{A\to B}\right)\geq 1-E_{(m)}\left(\mN^{A\to B}\right)$.
	To this end, let \be\mN^{\tA\to B}(\cdot)=\sum_{x\in[n+1]}N_x(\cdot)N_x^*\ee be an operator sum representation of $\mN^{\tA\to B}$, where we assume for $x=n+1$ that $\mN_{n+1}^{A\to B}=0$.  Consider a super-channel $\Theta$ of the form given in Eq.~\eqref{2p51} where we choose $k={n+1}$, $\mF_{(x)}^{B'\to B}(\cdot)=V_x(\cdot)V_x^*$ to be isometries, and $\mE_x^{\tA\to B'}(\cdot)= M_x(\cdot)M_x^*$ with
	\be
	M_x\eqdef \begin{cases} V_x^*N_x & \text{for }x\in[n],\\
		\sqrt{I^B-\sum_{x\in[n]}N_x^*P_xN_x} & \text{for }x=n+1,
	\end{cases}
	\ee
	where $P_x\eqdef V_xV_x^*\in\pos(B)$ is a projection to an $m$-dimensional subspace. Since
	\be
	\sum_{x\in[n]}N_x^*P_xN_x\leq \sum_{x\in[n]}N_x^*N_x= I^B\;,
	\ee
	this ensures that $\{M_x\}_{x\in[n+1]}$ is a valid instrument.  Moreover, let $\psi\in \pure(A\tA)$ be fixed but arbitrary and $\rho^{\tA}=\tr_{A}\left[\psi^{A\tA}\right]$.
	Remembering that $F(P_0+P_1,Q_0+Q_1)\ge F(P_0,Q_0)+F(P_1,Q_1)$ whenever $P_0,P_1,Q_0,Q_1\ge 0$, see, e.g., Ref.~\cite[Thm. 3.25]{Watrous2018}, we get that
	\begin{align} \label{eq:lowBoundChanFid}
		&F\left(\sum_{x\in[n+1]}\mF_{(x)}^{B'\to B}\circ\mE_x^{\tA\to B'}(\psi^{A\tA}),\mN^{\tA\to B}(\psi^{A\tA})\right)\nonumber\\
		&\geq\sum_{x\in[n]}F\left(\mF_{(x)}^{B'\to B}\circ\mE_x^{\tA\to B'}(\psi^{A\tA}),\mN^{\tA\to B}_x(\psi^{A\tA})\right)\nonumber\\
		&=\sum_{x\in[n]}\sqrt{\tr\left[\mN^{\tA\to B}_x(\psi^{A\tA})\mF_{(x)}^{B'\to B}\circ\mE_x^{\tA\to B'}(\psi^{A\tA})\right]}\nonumber\\
		&=\sum_{x\in[n]}\left|\tr\left[\rho^{\tA}N_x^*V_xM_x\right]\right|\nonumber\\
		&=\sum_{x\in[n]}\tr\left[\rho^{\tA}N_x^*P_xN_x\right]\;,
	\end{align} 
	where we used in the third line that $\mN_x^{\tA\to B}(\psi^{A\tA})$ is pure. 
	Next, we recall that for a positive semidefinite operator $A$, $\left\|A\right\|_{(m)}$ denotes the sum of its $m$ largest eigenvalues. Taking in Eq.~\eqref{eq:lowBoundChanFid} $P_x$ to be the projection to the $m$-dimensional eigen-subspace corresponding to the $m$ largest eigenvalues of $N_x\rho^AN_x^*$ and utilizing Lem.~\ref{lem:MinMaxFid} thus gives 
	\ba\label{cfda}
	F\left(\Phi_m\xrightarrow{\locc}\mN^{A\to B}\right)&\geq 
	\min_{\rho\in\md(\tA)}\sup_{\{N_x\}}\sum_{x\in[k]}\left\|N_x\rho^{\tA} N_x^*\right\|_{(m)}\\
	&=1-\max_{\rho\in\md(\tA)}\inf_{\{N_x\}}\left(1-\sum_{x\in[k]}\left\|N_x\rho^{\tA} N_x^*\right\|_{(m)}\right),
	\ea
	where the optimizations involving $\{N_x\}$ are over all operator-sum representations of $\mN$. Invoking again Eq.~\eqref{eq:KrausDec} finishes the proof.

	\end{proof}

\begin{theorem}
	Let $\mN\in\cptp(A\to B)$ be a quantum channel and $\eps\in[0,1)$. Then
	\begin{align}
		&\max_{\psi\in\pure(A\tA)}\inf_{\sigma^{XAB}} H_{\max}^{\eps}(A|X)_\sigma 	\leq \cost^\eps(\mN^{A\to B})\leq\max_{\psi\in\pure(A\tA)}\inf_{\sigma^{XAB}} H_{\max}^{\eps/2}(A|X)_\sigma,
	\end{align}
	where the infimums are over all classical systems $X$ and all classical extensions $\sigma^{XAB}$ of $\sigma^{AB}=\mN^{\tA\to B}(\psi^{A\tA})$. Again, the infimums are attained for a regular/classical extension with $|X|=|AB|^2$.
\end{theorem}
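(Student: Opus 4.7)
The plan is to reduce the channel entanglement cost to the state entanglement cost (Thm.~\ref{cmaint}) by composing Lem.~\ref{lem:combined} with the monotonicity structure of $E_{(m)}$.

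First I would unpack the definition $\cost^\eps(\mN)=\inf\{\log m: P^2(\Phi_m\xrightarrow{\locc}\mN)\le\eps\}$ and use Lem.~\ref{lem:combined} to sandwich $P^2$ between $E_{(m)}$ and $2E_{(m)}$. Explicitly, squaring the left inequality of Lem.~\ref{lem:combined} gives $F^2\ge(1-E_{(m)}(\mN))^2$, hence $P^2(\Phi_m\xrightarrow{\locc}\mN)\le 1-(1-E_{(m)}(\mN))^2\le 2E_{(m)}(\mN)$, while squaring the right inequality yields $P^2(\Phi_m\xrightarrow{\locc}\mN)\ge E_{(m)}(\mN)$. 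Consequently,
\begin{align}
\inf\{\log m: E_{(m)}(\mN)\le\eps/2\}\le\cost^\eps(\mN)\le\inf\{\log m: E_{(m)}(\mN)\le\eps\}\,,
\end{align}
where I have used that if $E_{(m)}(\mN)\le\eps/2$ then $P^2\le\eps$, giving the upper bound, and if $P^2\le\eps$ then $E_{(m)}(\mN)\le\eps$, giving the lower bound. (I have swapped the two labels compared to the display: the stronger condition $E_{(m)}\le\eps/2$ yields the upper bound on $\cost^\eps$, while the weaker condition $E_{(m)}\le\eps$ gives the lower bound.)

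Next I would convert the $E_{(m)}$ conditions into conditions involving individual input states. Since by definition $E_{(m)}(\mN)=\max_{\psi\in\pure(A\tA)}E_{(m)}(\mN^{\tA\to B}(\psi^{A\tA}))$ and $k\mapsto E_{(k)}(\rho)$ is non-increasing in $k$ (the Ky-Fan norms are non-decreasing), the minimal $m$ such that $\max_\psi E_{(m)}(\mN(\psi))\le\delta$ equals $\max_\psi$ of the minimal $m$ such that $E_{(m)}(\mN(\psi))\le\delta$. Combined with the state-level identity $P^2(\Phi_m\xrightarrow{\locc}\rho)=E_{(m)}(\rho)$ from Thm.~\ref{lem1291}, this gives
\begin{align}
\inf\{\log m: E_{(m)}(\mN)\le\delta\}=\max_{\psi\in\pure(A\tA)}\cost^\delta\bigl(\mN^{\tA\to B}(\psi^{A\tA})\bigr)\,.
\end{align}

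Finally I would plug in Thm.~\ref{cmaint}, which expresses $\cost^\delta$ of a bipartite state exactly as $\inf_{\sigma^{XAB}} H_{\max}^\delta(A|X)_\sigma$ over classical extensions of that state, with the optimum attained at $|X|=|AB|^2$ (and equivalently on regular extensions). Applied with $\delta=\eps$ for the lower bound and $\delta=\eps/2$ for the upper bound, and with the state $\sigma^{AB}=\mN^{\tA\to B}(\psi^{A\tA})$, this yields the claimed double inequality together with the attainability statement.

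The main technical point is the exchange of the ``$\inf_m$'' with the ``$\max_\psi$'' in the step above; this is where monotonicity of $E_{(k)}$ in $k$ is essential. Everything else is a direct composition of Lem.~\ref{lem:combined}, Thm.~\ref{lem1291}, and Thm.~\ref{cmaint}, so no further heavy lifting is needed.
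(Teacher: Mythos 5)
Your proposal is correct and takes essentially the same route as the paper: sandwich $P^2\left(\Phi_m\xrightarrow{\locc}\mN\right)$ between $E_{(m)}(\mN)$ and (effectively) $2E_{(m)}(\mN)$ via Lem.~\ref{lem:combined}, exchange the $\max_\psi$ with the $\inf_m$ using monotonicity of $E_{(k)}$ in $k$ (the paper proves this exchange by a short contradiction argument), and then apply Thm.~\ref{lem1291} and Thm.~\ref{cmaint}, inheriting attainability from the latter. The only blemish is that your displayed sandwich has $\eps$ and $\eps/2$ transposed, which you already correct in the surrounding text.
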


\begin{proof}
	Notice that for every $\psi\in\pure(A\tA)$, it holds that 
	\begin{align}
		\inf_{m\in\mbb{N}}\Big\{\log m\;:\;E_{(m)}\left(\mN^{\tA\to B}\left(\psi^{A\tA}\right)\right)\leq\eps\Big\} \le \inf_{m\in\mbb{N}}\Big\{\log m\;:\;\max_{\psi\in\pure(A\tA)}E_{(m)}\left(\mN^{\tA\to B}\left(\psi^{A\tA}\right)\right)\leq\eps\Big\},
	\end{align}
	which implies that 
	\begin{align}
		\max_{\psi\in\pure(A\tA)} \inf_{m\in\mbb{N}}\Big\{\log m\;:\;E_{(m)}\left(\mN^{\tA\to B}\left(\psi^{A\tA}\right)\right)\leq\eps\Big\} \le \inf_{m\in\mbb{N}}\Big\{\log m\;:\;\max_{\psi\in\pure(A\tA)}E_{(m)}\left(\mN^{\tA\to B}\left(\psi^{A\tA}\right)\right)\leq\eps\Big\}. \nonumber
	\end{align}
	On the contrary, assuming that 
	\begin{align}
		\max_{\psi\in\pure(A\tA)} \inf_{m\in\mbb{N}}\Big\{\log m\;:\;E_{(m)}\left(\mN^{\tA\to B}\left(\psi^{A\tA}\right)\right)\leq\eps\Big\} < \inf_{m\in\mbb{N}}\Big\{\log m\;:\;\max_{\psi\in\pure(A\tA)}E_{(m)}\left(\mN^{\tA\to B}\left(\psi^{A\tA}\right)\right)\leq\eps\Big\} \nonumber
	\end{align}
	leads to a contradiction: Let $m^\star\in\mbb{N}$ be such that 
	\begin{align}
		\log m^\star = \inf_{m\in\mbb{N}}\Big\{\log m\;:\;\max_{\psi\in\pure(A\tA)}E_{(m)}\left(\mN^{\tA\to B}\left(\psi^{A\tA}\right)\right)\leq\eps\Big\}.
	\end{align}
	This implies that there exists an $\chi\in\pure(A\tA)$ such that 
	$E_{(m^\star)}\left(\mN^{\tA\to B}\left(\chi^{A\tA}\right)\right)\leq\eps$ and $E_{(m)}\left(\mN^{\tA\to B}\left(\chi^{A\tA}\right)\right)>\eps$ for $m=m^\star-1$ (and thus all $m<m^\star$, since $E_{(k)}\ge E_{(k+1)}$). Consequently,
	\begin{align}
		\log m^\star&> \max_{\psi\in\pure(A\tA)} \inf_{m\in\mbb{N}}\Big\{\log m\;:\;E_{(m)}\left(\mN^{\tA\to B}\left(\psi^{A\tA}\right)\right)\leq\eps\Big\} \nonumber \\ &\ge \inf_{m\in\mbb{N}}\Big\{\log m\;:\;E_{(m)}\left(\mN^{\tA\to B}\left(\chi^{A\tA}\right)\right)\leq\eps\Big\}\nonumber \\
		&=\log m^\star,
	\end{align}
	resulting in the promised contradiction. We thus showed that 
	\begin{align}\label{eq:maxOutside}
		\max_{\psi\in\pure(A\tA)} \inf_{m\in\mbb{N}}\Big\{\log m\;:\;E_{(m)}\left(\mN^{\tA\to B}\left(\psi^{A\tA}\right)\right)\leq\eps\Big\} = \inf_{m\in\mbb{N}}\Big\{\log m\;:\;\max_{\psi\in\pure(A\tA)}E_{(m)}\left(\mN^{\tA\to B}\left(\psi^{A\tA}\right)\right)\leq\eps\Big\}. 
	\end{align}
	We now turn to the main part of the proof.
	
	According to Lem.~\ref{lem:combined}, we have
	\be
	P^2\left(\Phi_m\xrightarrow{\locc}\mN^{A\to B}\right)\geq E_{(m)}\left(\mN^{A\to B}\right)\;.
	\ee
	Combining this with the definition of the entanglement cost in Eq.~\eqref{defcost} gives 
	\ba\label{steps}
	\cost^\eps(\mN)&\geq\inf_{m\in\mbb{N}}\Big\{\log m\;:\;E_{(m)}\left(\mN^{A\to B}\right)\leq\eps\Big\}\\
	&=\inf_{m\in\mbb{N}}\Big\{\log m\;:\;\max_{\psi\in\pure(A\tA)}E_{(m)}\left(\mN^{\tA\to B}\left(\psi^{A\tA}\right)\right)\leq\eps\Big\}\\
	\GG{Eq.~\eqref{eq:maxOutside}\to}&= \max_{\psi\in\pure(A\tA)}\inf_{m\in\mbb{N}}\Big\{\log m\;:\;E_{(m)}\left(\mN^{\tA\to B}\left(\psi^{A\tA}\right)\right)\leq\eps\Big\}\\
	\GG{Thm.~\ref{lem1291}\to}&=\max_{\psi\in\pure(A\tA)}\inf_{m\in\mbb{N}}\Big\{\log m\;:\;P^2\left(\Phi_m\xrightarrow{\locc}\mN^{\tA\to B}(\psi^{A\tA})\right)\leq\eps\Big\} \\
	&=\max_{\psi\in\pure(A\tA)}\cost^\eps\left(\mN^{\tA\to B}(\psi^{A\tA})\right)\\
	\GG{Thm.~\ref{cmaint}\to}&=\max_{\psi\in\pure(A\tA)}\inf_{\sigma^{XAB}} H_{\max}^{\eps}(A|X)_\sigma\;,
	\ea
	where the infimum in the last line is over all classical systems $X$ and all regular extensions $\sigma^{XAB}$ of $\mN^{\tA\to B}(\psi^{A\tA})$.

	For the converse inequality, expressing the cost in terms of the fidelity gives
	\ba
	\cost^\eps(\mN)
	&=\inf_{m\in\mbb{N}}\Big\{\log m\;:\;F\left(\Phi_m\xrightarrow{\locc}\mN\right)\geq\sqrt{1-\eps}\Big\}\\
	\Gg{\sqrt{1-\eps}\leq1-\eps/2\to}&\leq \inf_{m\in\mbb{N}}\Big\{\log m\;:\;F\left(\Phi_m\xrightarrow{\locc}\mN\right)\geq1-\eps/2\Big\}\\
	\GG{Lem.~\ref{lem:combined}\to}&\leq\inf_{m\in\mbb{N}}\Big\{\log m\;:\;E_{(m)}\left(\mN\right)\leq\eps/2\Big\}\\
	&=\max_{\psi\in\pure(A\tA)}\cost^{\eps/2}\left(\mN^{\tA\to B}(\psi^{A\tA})\right)\\
	&=\max_{\psi\in\pure(A\tA)}\inf_{\sigma^{XAB}} H_{\max}^{\eps/2}(A|X)_\sigma\;,
	\ea
	where the last two lines follow from the same steps as in Eq.~\eqref{steps} and the infimum is again over all classical systems $X$ and all regular extensions $\sigma^{XAB}$ of $\mN^{\tA\to B}(\psi^{A\tA})$. The fact that the infimums are achieved and can be taken over all classical extensions follows from the analogous statements in Thm.~\ref{cmaint}.
\end{proof}

\setcounter{theorems}{16} 
\begin{theorem}
		Let $\rho\in\md(AB)$ with $m=|A|=|B|$, $\sigma\in\md(A'B')$, and $\mN\in\locc_1(AB\to A'B')$. The coherent information of entanglement $E_\to$ is
		\begin{enumerate}
			\item monotonic under one-way LOCC, i.e., $$E_{\to}\left(\mN^{AB\to A'B'}\left(\rho^{AB}\right)\right)\leq E_{\to}\left(\rho^{AB}\right)\;,$$
			\item  non-negative, i.e., $E_\to(\rho^{AB})\geq 0$, with equality if $\rho^{AB}$ is separable,
			\item strongly monotonic under one-way LOCC, i.e., for any ensemble $\{p_y,\sigma_y^{A'B'}\}$ that can be obtained from $\rho^{AB}$ using one-way LOCC and subselection, it holds that
			\begin{align*}
				E_{\to}&\left(\rho^{AB}\right)
				\ge\sum_y p_y E_{\to}\left(\sigma^{A'B'}_y\right),
			\end{align*}
			\item convex,
			\item bounded by $E_\to\left(\rho^{AB}\right)\le E_\to\left(\Phi_m\right)=\log(m)$,
			\item superadditive, i.e., $$ \qquad E_\to\left(\rho^{AB}\otimes\sigma^{A'B'}\right)\geq E_\to\left(\rho^{AB}\right)+E_\to\left(\sigma^{A'B'}\right).$$  
		\end{enumerate}
	\end{theorem}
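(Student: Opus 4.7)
The approach is to verify the six properties in turn, using standard facts about conditional von Neumann entropy and coherent information applied to the instruments appearing in the definition of $E_\to$. For monotonicity under one-way LOCC, I would write an arbitrary channel $\mN\in\locc_1$ as $\mN=\sum_x\mE_x^{A\to A'}\otimes\mF_{(x)}^{B\to B'}$, store Alice's classical outcome in an auxiliary register $Z$, and absorb any candidate instrument $\mE'\colon A'\to A'X'$ applied to $\mN(\rho)$ into a single instrument on $A$ with output register $X'Z$. The remaining operation on Bob's side, $\sum_x\mF_{(x)}^{B\to B'}\otimes\ketbra{x}{x}^Z$, is CPTP from $BZ$ to $B'$; the data-processing inequality for coherent information under CPTP maps on the conditioning system then gives $I(A'\rangle B'X')\le I(A'\rangle BZX')$, and taking the supremum yields (1). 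For non-negativity, the instrument that replaces $A$ with a fixed pure state produces $I(A\rangle BX)=0$, so $E_\to\ge 0$. When $\rho^{AB}=\sum_y q_y\rho_y^A\otimes\rho_y^B$ is separable, every output $\mE(\rho)$ is separable across $A$ versus $BX$, and non-negativity of conditional entropy on separable states (a standard consequence of strong subadditivity applied to a classical extension) forces $I(A\rangle BX)\le 0$.

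For strong monotonicity, given $\{p_y,\sigma_y^{A'B'}\}$ obtained from $\rho^{AB}$ by one-way LOCC plus subselection, I would use that both parties can store the classical outcome $y$ to form $\tau=\sum_y p_y\sigma_y^{A'B'}\otimes\ketbra{y}{y}^{Y_A}\otimes\ketbra{y}{y}^{Y_B}$, with $Y_A$ on Alice's side and $Y_B$ on Bob's; this $\tau$ is reachable from $\rho$ by one-way LOCC, so (1) gives $E_\to(\rho)\ge E_\to(\tau)$. Taking near-optimal instruments $\mE_y$ for each $\sigma_y$ and combining them into a single instrument on $A'Y_A\to A'Y_A X$ conditioned on $Y_A$, a direct computation exploiting that the conditional entropy of a classical-quantum state decomposes over the classical register yields $I(A'Y_A\rangle B'Y_B X)=\sum_y p_y\,I(A'\rangle B'X)_{\mE_y(\sigma_y)}$; taking the supremum over each $\mE_y$ independently gives $E_\to(\tau)\ge\sum_y p_y E_\to(\sigma_y)$, which combines with the monotonicity step to yield (3). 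Convexity is then immediate: for fixed $\mE$, the map $\rho\mapsto I(A\rangle BX)_{\mE(\rho)}=-H(A|BX)_{\mE(\rho)}$ is convex by concavity of the conditional entropy, and the pointwise supremum of convex functions is convex.

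For the dimensional bound, the trivial instrument applied to $\Phi_m$ yields $I(A\rangle B)_{\Phi_m}=H(B)-H(AB)=\log m$, while the Araki--Lieb inequality $H(AB)\ge H(B)-\log|A|$ gives $H(A|BX)\ge-\log|A|$ for any state, so $I(A\rangle BX)\le\log m$ universally; this establishes both $E_\to(\rho)\le\log m$ and the equality $E_\to(\Phi_m)=\log m$. Superadditivity follows by applying product instruments $\mE_1\otimes\mE_2$ to $\rho\otimes\sigma$ and using the additivity of coherent information on product states, $I(A_1A_2\rangle B_1B_2X_1X_2)=I(A_1\rangle B_1X_1)+I(A_2\rangle B_2X_2)$, then taking the supremum on both sides. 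I expect the main obstacle to lie in strong monotonicity, where one must carefully arrange the classical outcome registers so that Alice can condition her instrument on the outcome (requiring $Y_A$ on her side) while the chain-rule identity for the coherent information still comes out cleanly (benefiting from the correlated copy $Y_B$ on Bob's side); verifying that the resulting $\tau$ is indeed reachable by one-way LOCC and that the classical--quantum chain-rule calculation is correct are the key technical checks.
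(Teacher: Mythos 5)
Your handling of properties 2--6 is sound and, up to cosmetic variations (Araki--Lieb for the upper bound in 5, and non-negativity of $H(A|BX)$ on separable states in 2 instead of the paper's use of convexity of the coherent information), it follows the paper's own route; in particular your strong-monotonicity construction with the two classical copies $Y_A$, $Y_B$ and the block-diagonal chain rule is exactly the paper's argument, and your flag-plus-data-processing bookkeeping for monotonicity is, in spirit, a slightly cleaner version of the paper's joint-convexity computation.

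The gap is in property 1, at the words ``taking the supremum yields (1)''. After you absorb the candidate instrument $\mE'\in\cptp(A'\to A'X')$ together with Alice's part of $\mN$ and the classical flag $Z$ into a single map on Alice's side, that map lies in $\cptp(A\to A'X'Z)$: its \emph{quantum} output is $A'$, not $A$. The definition of $E_\to(\rho^{AB})$ only admits instruments in $\cptp(A\to AX)$, so the quantity $I(A'\ra BZX')$ you have produced is not manifestly dominated by $E_\to(\rho^{AB})$ when $|A'|\neq|A|$, and the hypotheses allow $|A'|>|A|$. One must additionally prove that enlarging the quantum output system does not increase the supremum. The paper devotes the second half of its proof of item 1 to precisely this: it first refines the classical register so that every CP component of the instrument has a single Kraus operator $M_x:A\to A'$ (the refinement can only increase the coherent information, by data processing on the conditioning system), then writes the polar decomposition $M_x=V_xN_x$ with $N_x:A\to A$ and $V_x$ an isometry, and uses the invariance of the coherent information under the controlled isometry $\sum_x V_x\otimes\ketbra{x}{x}$ to bring the instrument back into $\cptp(A\to AZ)$; the case $|A'|\le|A|$ is handled by an isometric embedding. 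Without this step (or an equivalent one) your proof of (1) is incomplete, and the incompleteness propagates to your proof of (3), which invokes (1) to pass from $\rho^{AB}$ to the flagged state $\tau$.
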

	
	\begin{proof}
		1. First observe that since $\mN\in\locc_1(AB\to A'B')$, for any $\mL \in \cptp(A'\to A'X)$  we have that $\mL\circ\mN\in\locc_1(AB\to A'B'X)$. Therefore, 
		\begin{align}\label{13p153}
			E_{\to}&\left(\mN^{AB\to A'B'}\left(\rho^{AB}\right)\right)\nonumber\\
			&=\sup_{\mL\in\cptp(A'\to A'X)}I\big(A'\ra B'X\big)_{\mL\circ\mN(\rho)}\nonumber\\
			&\leq \sup_{\mM\in\locc_1(AB\to A'B'X)}I\big(A'\ra B'X\big)_{\mM(\rho)}\;.
		\end{align}
		According to Eq.~\eqref{13111c}, every $\mM\in\locc_1(AB\to A'B'X)$ can be expressed as
		\be
		\mM^{AB\to A'B'X}\eqdef\sum_{y\in[n]}\mE^{A\to A'X}_y\otimes\mF_{(y)}^{B\to B'}\;,
		\ee
		with each $\mE_y^{A\to A'X}$ being a CP map such that $\sum_{y\in[n]}\mE_y\in\cptp(A\to A'X)$, and each $\mF_{(y)}\in\cptp(B\to B')$. Let further be
		\be\label{eq:cohInfProb}
		q_y\eqdef\tr\left[\mE_y^{A\to A'X}\left(\rho^{AB}\right)\right]
		\ee
		and
		\be \label{eq:cohInfStates}
		\sigma_y^{A'BX}\eqdef\frac1{q_y}\mE_y^{A\to A'X}\left(\rho^{AB}\right)\;.
		\ee
		With these notations, we have 
		\be
		\mM^{AB\to A'B'X}\left(\rho^{AB}\right)=\sum_{y\in[n]}q_y\mF^{B\to B'}_{(y)}\left(\sigma_y^{A'BX}\right)\;.
		\ee
		Remembering that
		\begin{align}\label{eq:ConCohInfRelEnt}
			D&\Big(\tau^{CD}\Big\|I^C\otimes \tau^D\Big)\nonumber \\
			&=\tr\left( \tau^{CD} \log \left(\tau^{CD} \right)\right)-\tr\left( \tau^{CD} \log \left(I^C\otimes\tau^D \right)\right) \nonumber \\
			&=\tr\left( \tau^{CD} \log \left(\tau^{CD} \right)\right)-\tr\left( \tau^{CD}  I^C\otimes\log\left(\tau^D \right)\right) \nonumber \\
			&=\tr\left( \tau^{CD} \log \left(\tau^{CD} \right)\right)-\tr\left( \tau^{D} \log \left(\tau^D \right)\right) \nonumber \\
			&=-H\left( \tau^{CD}\right)+ H\left( \tau^{D}\right) \nonumber \\
			&=-H(C|D)_\tau = I(C\ra D)_\tau,
		\end{align}
		it follows from the joint convexity of the relative entropy, the data processing inequality, and the inequality in Eq.~\eqref{13p153} that
		\begin{align}\label{eq:boundCohInfEnt}
			&E_{\to}\left(\mN^{AB\to A'B'}\left(\rho^{AB}\right)\right)\nonumber\\
			&\leq \sup_{\mM\in\locc_1(AB\to A'B'X)}I\big(A'\ra B'X\big)_{\mM(\rho)} \nonumber \\
			&=\sup D\Big(\sum_{y\in[n]}q_y\mF^{B\to B'}_{(y)}\left(\sigma_y^{A'BX}\right)\Big\| I^{A'}\otimes \sum_{y\in[n]}q_y\mF^{B\to B'}_{(y)}\left(\sigma_y^{BX}\right)\Big) \nonumber \\
			&\le \sup\! \sum_{y\in[n]}q_yD\Big(\mF^{B\to B'}_{(y)}\!\left(\sigma_y^{A'BX}\right)\!\Big\|I^{A'}\!\otimes\! \mF^{B\to B'}_{(y)}\left(\sigma_y^{BX}\right)\!\Big) \nonumber \\
			&\le \sup\! \sum_{y\in[n]}q_yD\Big(\sigma_y^{A'BX}\!\Big\|I^{A'}\!\otimes\! \sigma_y^{BX}\!\Big) \nonumber \\
			&=\sup\! \sum_{y\in[n]}q_y I\big(A'\ra BX\big)_{\sigma_y},
		\end{align}
		where the supremums are over all decompositions described above corresponding to an $\mM\in\locc_1(AB\to A'B'X')$. 
		
		Let $\{\tau_i\}\subset \md(C)$ be a set of states that are mutually orthogonal and $\{r_i\}$ a probability distribution. It is well known (and follows from a straightforward calculation) that this implies that 
		\begin{align}\label{eq:vonNeumannOrthog}
			H\left(\sum_i r_i \tau_i\right)=\sum_i r_iH(\tau_i)-\sum_i r_i \log(r_i).
		\end{align}
		If $\rho^{A'BX}\eqdef\sum_{x\in[n]}p_x\rho^{A'B}_x\otimes\ketbra{x}{x}^X$, i.e., is a quantum-classical-state in $\md(A'BX)$, this implies that
		\be\label{eq:FlagCohInf}
		I(A'\ra BX)_\rho=\sum_{x\in[n]}p_xI(A'\ra B)_{\rho_x}:
		\ee
		Let $\rho_x^{A'BX}=\rho_x^{A'B}\otimes\ketbra{x}{x}$. From Eqs.~\eqref{eq:ConCohInfRelEnt} and~\eqref{eq:vonNeumannOrthog}, we thus find that
		\begin{align}
			I(A'\ra BX)_\rho=&H\left(\rho^{BX}\right)-H\left(\rho^{A'BX}\right) \nonumber \\
			=&\sum_{x\in[n]} p_x H\left(\rho_x^{BX}\right)-\sum_{x\in[n]} p_x \log(p_x) \nonumber \\
			&-\sum_{x\in[n]} p_x H\left(\rho_x^{A'BX}\right)+\sum_{x\in[n]} p_x \log(p_x) \nonumber \\
			=& \sum_{x\in[n]}p_xI(A'\ra B)_{\rho_x},
		\end{align}
		which proves the claim.
		
		From Eq.~\eqref{eq:boundCohInfEnt}, and remembering the notation from Eqs.~\eqref{eq:cohInfProb} and~\eqref{eq:cohInfStates}, it then follows that
		\begin{align}
			&E_{\to}\left(\mN^{AB\to A'B'}\left(\rho^{AB}\right)\right)\nonumber\\
			&\le\sup \sum_{y\in[n]}q_y I\big(A'\ra BX\big)_{\sigma_y}\nonumber \\
			&=\sup  I\big(A'\ra BXY\big)_{\sum_{y\in[n]}q_y\sigma_y^{A'BX}\otimes \ketbra{y}{y}^Y} \nonumber \\
			&= \sup I\big(A'\ra BXY\big)_{\sum_{y\in[n]}\mE_y^{A\to A'X}\left(\rho^{AB}\right)\otimes \ketbra{y}{y}^Y},
		\end{align}
		where in the last line, the supremum is over all classical systems $Y$ with arbitrary dimension $n$, all dimensions of $X$, and all instruments $\{\mE_y^{A\to A'X}\}_{y\in[n]}$. 
		
		Now let $Z\eqdef XY$. Since $\mE^{A\to A'Z}(\rho^A)\eqdef \sum_{y\in[n]}\mE_y^{A\to A'X} \left(\rho^{A}\right)\otimes \ketbra{y}{y}^Y\in \cptp(A\to A'Z)$ for any instrument $\{\mE_y^{A\to A'X}\}_{y\in[n]}$, we find that 
		\begin{align}\label{eq:CohInfSupA}
			E_{\to}&\left(\mN^{AB\to A'B'}\left(\rho^{AB}\right)\right)\nonumber\\
			&\leq \sup_{\mE\in\cptp(A\to A'Z)}I\big(A'\ra BZ\big)_{\mE(\rho)},
		\end{align}
		where the supremum is also over all dimensions of the systems $A'$ and $Z$. We will conclude the proof by showing that the second line in the above equation equals $E_{\to}\left(\rho^{AB}\right)$, i.e., that we can restrict the supremum without loss of generality to the case $A'=A$ (compare to Def.~\ref{def:cohInf}): Observe first that the coherent information remains invariant under local isometries. If $|A'|\leq |A|$, this implies that the supremum over $\cptp(A\to AZ)$ cannot be smaller than the supremum over $\cptp(A\to A'Z)$ (since we can always use an isometric embedding). 
		
		On the other hand, suppose that $|A'|> |A|$. In general, we notice that in Eq.~\eqref{eq:CohInfSupA} (and similarly in Eq.~\eqref{13p149}), the supremum can be restricted to quantum channels of the form 
		\begin{align}\label{eq:SingleKraus}
			\mE^{A\to A'Z}=\sum_{x\in[n]}\mE_x^{A\to A'}\otimes \ketbra{x}{x}^Z,
		\end{align} 
		where each $\mE_x^{A\to A'}$ is a CP map with a single Kraus operator. This follows from Eq.~\eqref{eq:ConCohInfRelEnt} and the data processing inequality of the relative entropy: An arbitrary $\mE^{A\to A'Z}\in \cptp(A\to A'Z)$ can be written as 
		\begin{align}
			\mE^{A\to A'Z}=\sum_{x\in[n],y\in[m]}\mE_{x,y}^{A\to A'}\otimes \ketbra{x}{x}^Z,
		\end{align} 
		where each $\mE_{x,y}^{A\to A'}$ is a CP map with a single Kraus operator $M_{x,y}: A\to A'$. Introducing another classical system $\tilde{Z}$ of dimension $m$, 
		\begin{align}
			\tilde{\mE}^{A\to A'Z \tilde{Z}} \eqdef\sum_{x\in[n],y\in[m]}\mE_{x,y}^{A\to A'}\otimes \ketbra{x}{x}^Z\otimes\ketbra{y}{y}^{\tilde{Z}},
		\end{align} 
		is an element of $\cptp(A\to A'Z\tilde{Z})$ and
		\begin{align}
			\mE^{A\to A'Z}=\tr_{Z'}\circ \tilde{\mE}^{A\to A'Z \tilde{Z}}.
		\end{align}
		It thus follows that
		\begin{align}
			I&\big(A'\ra BZ\tilde{Z}\big)_{\tilde{\mE}(\rho)}\nonumber\\
			=  &D\Big(\tilde{\mE}(\rho^{AB})\Big\|I^{A'}\otimes \tr_{A'}\circ\tilde{\mE}(\rho^{AB})\Big) \nonumber \\
			\ge &D\Big(\tr_{Z'}\circ\tilde{\mE}(\rho^{AB})\Big\|I^{A'}\otimes \tr_{Z'}\circ\tr_{A'}\circ\tilde{\mE}(\rho^{AB})\Big) \nonumber \\
			=&D\Big(\mE(\rho^{AB})\Big\|I^{A'}\otimes \tr_{A'}\circ\mE(\rho^{AB})\Big)\nonumber \\
			=&I\big(A'\ra BZ\big)_{\mE(\rho)}.
		\end{align}
		Since the supremums include a supremum over the classical system anyway, this proves our claim. 
		
		Now consider a channel as in Eq.~\eqref{eq:SingleKraus}, i.e., each $\mE_x^{A\to A'}(\cdot)=M_x(\cdot)M_x^*$ is a CP map with a single Kraus operator $M_x: A\to A'$.
		According to the polar decomposition, each $M_x$ can be expressed as 
		$M_x=V_xN_x$, where each $N_x: A\to A$ is an element of a generalized measurement, and each $V_x: A\to A'$ is an isometry. Defining an isometry 
		\begin{align}
			V^{AZ\to A'Z}\eqdef\sum_x V_x^{A\to A'}\otimes \ketbra{x}{x}^Z,
		\end{align}
		and a channel
		\begin{align}
			\mN^{A\to AZ}(\cdot)\eqdef\sum_{x\in[n]}N_x(\cdot)N_x^*\otimes \ketbra{x}{x}^Z,
		\end{align}
		it holds that $\mE^{A\to A'Z}=\mV^{AZ\to A'Z}\circ\mN^{AZ\to AZ}$, where $\mV^{A\to A'Z}(\cdot)=V(\cdot)V^*$. Using again that the coherent information is invariant under local isometries, we find that 
		\begin{align}
			I\big(A'\ra BZ\big)_{\mE(\rho)}= I\big(A\ra BZ\big)_{\mN(\rho)},
		\end{align}
		which completes the proof of 1.
		
		2. 
		Let 
		\begin{align}
			\tilde{\mE}^{A\to AX}(\rho^A)=\tr\left(\rho^A\right)\ \ketbra{0}{0}^A\otimes \ketbra{0}{0}^X.
		\end{align}
		It then follows that
		\begin{align}\label{eq:NonNeg}
			E_{\to}&\left(\rho^{AB}\right)\nonumber\\
			=&\sup_{\mE\in\cptp(A\to AX)}I\big(A\ra BX\big)_{\mE(\rho)}\nonumber \\
			\ge&I\big(A\ra B\big)_{\tilde{\mE}(\rho)} \nonumber \\
			=& H\left(\rho^B\otimes\ketbra{0}{0}^X\right)-H\left(\ketbra{0}{0}^A\otimes \rho^B \otimes \ketbra{0}{0}^X\right) \nonumber \\
			=&0.
		\end{align}
		
		According to Ref.~\cite[Thm.~1]{Lieb1973b} (see also the announcement in Ref.~\cite{Lieb1973a} and Ref.~\cite{Wehrl1978} for a review), 
		\begin{align}
			I\big(A\ra B\big)_{\rho}=H(\rho^B)-H(\rho^{AB})
		\end{align}
		is convex in $\rho^{AB}$. This immediately implies that if $\rho^{AB}$ is separable, i.e.,
		\begin{align}
			\rho^{AB}=\sum_i p_i \phi_i^A\otimes \psi_i^B,
		\end{align}
		then~\cite{Cerf1997}
		\begin{align}
			I\big(A\ra B\big)_{\rho}\le \sum_i p_i \left(H(\psi_i^B)-H(\phi_i^A\otimes \psi_i^B)\right)=0. 
		\end{align}
		For any $\mE\in\cptp(A\to AX)$, $\mE(\rho^{AB})$ is separable if $\rho^{AB}$ was and thus
		\begin{align}
			E_{\to}\left(\rho^{AB}\right)
			=&\sup_{\mE\in\cptp(A\to AX)}I\big(A\ra BX\big)_{\mE(\rho)}\le 0 
		\end{align}
		on separable states. Together with Eq.~\eqref{eq:NonNeg}, this finishes the proof of 2. It also follows immediately from the operational interpretation of the coherent information detailed in Ref.~\cite{Horodecki2005}.
		
		3. Let $\rho^{AB}\in \md(AB)$ be a state. Any ensemble $\{p_y,\sigma_y^{A'B'}\}$ that can be obtained from it by means of one-way LOCC and subselection is of the form 
		\begin{align}
			p_{y}=&\tr\left( \mM_y^{A\to A'} \otimes \mF_{(y)}^{B\to B'} \left(\rho^{AB}\right)\right),\nonumber\\
			\sigma_{y}^{A'B'}=&\frac{1}{p_{y}} \mM_y^{A\to A'} \otimes \mF_{(y)}^{B\to B'} \left(\rho^{AB}\right),
		\end{align} 
		i.e., Alice applies an instrument $\{\mM_y\}_{y\in[n]}$, with $\mM_y\in\cp(A\to A')$ and $\sum_{y=1}^n\mM_y\in\cptp(A\to A')$, sends the outcome $y$ to Bob, who then, conditioned on $y$, applies a channel $\mF_{(y)}\in\cptp(B\to B')$~\cite{Chitambar2014}. 
		
		Now define a channel $\mN\in \cptp(AB\to A'A_sB' B_s)$ as 
		\begin{align}
			\mN=\sum_{y} \mM_y^{A\to A'} \otimes \ketbra{y}{y}^{A_s}\otimes \mF_{(y)}^{B\to B'}\otimes \ketbra{y}{y}^{B_s},
		\end{align}
		where $A_s$ is a classical system \textit{in Alice's possession} and $B_s$ a classical system \textit{in Bob's possession} that are used to store the outcome of our instrument. 
		
		Let $\mathfrak{C}(A'\to A'X|A_s)$ be the set of channels that can be written as
		\begin{align}
			\mE&^{A'A_s\to A' A_s X}(\tau^{A'A_s})\nonumber\\
			&=\sum_y \mE_{(y)}^{A'\to A' X} \tr_{A_s}\left(\ketbra{y}{y}^{A_s} \tau^{A'A_s}\right)  \otimes \ketbra{1}{1}^{A_s}
		\end{align}
		where each $\mE_{(y)}\in \cptp(A\to A'X)$.
		Clearly, $\mN\in\locc_1$, and $\mathfrak{C}(A'\to A'X|A_s)\subset\cptp(A'A_s \to A' A_s X)$. Observing that 
		\begin{align}
			\mN(\rho^{AB})=\sum_y p_y \sigma_y^{A'B'}\otimes \ketbra{y}{y}^{A_s}\otimes\ketbra{y}{y}^{B_s},
		\end{align}
		it thus follows from monotonicity and Eq.~\eqref{eq:FlagCohInf} that
		\begin{align}
			E_{\to}&\left(\rho^{AB}\right)\nonumber\\
			\ge& E_{\to}\left(\mN\left(\rho^{AB}\right)\right)\nonumber\\
			=&\sup_{\mE\in\cptp(A'A_s\to A'A_sX)}I\big(A'A_s\ra B'B_sX\big)_{\mE\circ\mN(\rho)}\nonumber \\
			\ge&\sup_{\mE\in\mathfrak{C}(A'\to A'X|A_s)}I\big(A'A_s\ra B'B_sX\big)_{\mE\circ\mN(\rho)}\nonumber \\
			=&\sup_{\{\mE_{(y)}\in\cptp(A'\to A'X)\}}I\big(A'A_s\ra B'B_sX\big)_{\sum_y p_y \mE_{(y)}\left(\sigma_y^{A'B'}\right)\otimes \ketbra{1}{1}^{A_s}\otimes\ketbra{y}{y}^{B_s}}\nonumber \\
			=&\sup_{\{\mE_{(y)}\in\cptp(A'\to A'X)\}}\sum_y p_y I\big(A'A_s\ra B'X\big)_{ \mE_{(y)}\left(\sigma_y^{A'B'}\right)\otimes \ketbra{1}{1}^{A_s}}\nonumber \\
			=&\sum_y p_y \sup_{\mE\in\cptp(A'\to A'X)\}}I\big(A'\ra B'X\big)_{ \mE\left(\sigma_y^{A'B'}\right)}\nonumber \\
			=&\sum_y p_y E_{\to}\left(\sigma^{A'B'}\right),
		\end{align}
		which finishes the proof.
		
		4. Convexity is inherited from the convexity of $I\big(A\ra B\big)$~\cite[Thm.~1]{Lieb1973b}: For $\rho^{AB}=t\sigma^{AB}+(1-t)\tau^{AB}$ and $t\in[0,1]$, 
		\begin{align}
			E_{\to}&\left(\rho^{AB}\right)\nonumber\\
			=&\sup_{\mE\in\cptp(A\to AX)}I\big(A\ra BX\big)_{\mE(\rho)}\nonumber\\
			=&\sup_{\mE\in\cptp(A\to AX)}I\big(A\ra BX\big)_{t\mE(\sigma)+(1-t)\mE(\tau)} \nonumber \\ 
			\le&\sup_{\mE\in\cptp(A\to AX)} \left(tI\big(A\ra BX\big)_{\mE(\sigma)}+(1-t)I\big(A\ra BX\big)_{\mE(\tau)}\right) \nonumber \\
			\le &t\sup_{\mE\in\cptp(A\to AX)} I\big(A\ra BX\big)_{\mE(\sigma)}\nonumber \\
			&+ (1-t)\sup_{\mE\in\cptp(A\to AX)}I\big(A\ra BX\big)_{\mE(\tau)} \nonumber \\
			=& t E_{\to}\left(\sigma^{AB}\right)+(1-t) E_{\to}\left(\tau^{AB}\right).
		\end{align}
		
		5. Let $\Phi_m\in\md(AB)$ be the maximally entangled state with $m\eqdef|A|=|B|$. This implies that 
		\begin{align}
			E_{\to}&\left(\Phi^{AB}\right)\ge I\big(A\ra BX\big)_{\Phi_m}=H\left(\Phi_m^B\right) - H\left(\Phi_m^{AB}\right)=\log(m).
		\end{align}
		Moreover, for any state $\rho\in\md(AB)$ and any channel $\mE\in\cptp(A\to AX)$  with $m=|A|=|B|$, we have that $\mE(\rho)$ is quantum-classical and we can thus apply Eq.~\eqref{eq:FlagCohInf} to find that
		\begin{align}
			E_{\to}&\left(\rho^{AB}\right)=\sup_{\mE\in\cptp(A\to AX)}I\big(A\ra BX\big)_{\mE(\rho^{AB})} \nonumber \\
			\le&\sup_{\{p_x,\sigma_x^{AB}\}} \sum_x p_x I\big(A\ra B\big)_{\sigma_x^{AB}} \nonumber \\
			=&\sup_{\{p_x,\sigma_x^{AB}\}} \sum_x p_x \left[H\left(\sigma_x^B\right) - H\left(\sigma_x^{AB}\right)\right] \nonumber \\
			\le&\sup_{\{p_x,\sigma_x^{AB}\}} \sum_x p_x \log(m)\nonumber \\
			=&\log(m).
		\end{align}
		In combination, this proves that
		\be
		E_\to\left(\Phi_m^{AB}\right)=\log(m).
		\ee 
		
		6. From the additivity of the von Neumann entropy and Eq.~\eqref{eq:ConCohInfRelEnt}, it follows that 
		\begin{align}
			I\big(AA'\ra BB'\big)_{\rho^{AB}\otimes\sigma^{A'B'}}=I\big(A\ra B\big)_{\rho^{AB}}+I\big(A'\ra B'\big)_{\sigma^{A'B'}}.
		\end{align}
		This implies that
		\begin{align}
			E_\to&\left(\rho^{AB}\otimes\sigma^{A'B'}\right)\nonumber \\
			=& \sup_{\mE\in\cptp(AA'\to AA'X)}I\big(AA'\ra BB'X\big)_{\mE(\rho^{AB}\otimes\sigma^{A'B'})}  \nonumber\\
			=& \sup_{\mE\in\cptp(AA'\to AA'XX')}I\big(AA'\ra BB'XX'\big)_{\mE(\rho^{AB}\otimes \sigma^{A'B'})}  \nonumber\\
			\ge&\sup_{\substack{\mE\in\cptp(A\to AX)\\
					\mE'\in\cptp(A'\to A'X')}}I\big(AA'\ra BB'XX'\big)_{\mE(\rho^{AB})\otimes \mE'(\sigma^{A'B'})}  \nonumber\\
			=&\sup_{\substack{\mE\in\cptp(A\to AX)\\
					\mE'\in\cptp(A'\to A'X')}}\left[I\big(A\ra BX\big)_{\mE(\rho^{AB})} +I\big(A'\ra B'X'\big)_{\mE'(\sigma^{A'B'})} \right]\nonumber\\
			=& E_\to\left(\rho^{AB}\right)+E_\to\left(\sigma^{A'B'}\right).
		\end{align}
	\end{proof}
	
	\begin{theorem}
		Let $\rho\in\md(AB)$ and $\eps\in(0,1/2)$. Then, the one-way $\eps$-single-shot distillable entanglement is bounded by
		\be
		\distill^\eps_{\to}\left(\rho^{AB}\right)\leq\frac1{1-2\eps}E_\to\left(\rho^{AB}\right)+\frac{1+\eps}{1-2\eps}h\left(\frac\eps{1+\eps}\right)\;,
		\ee 
		where $h(x)\eqdef-x\log x-(1-x)\log(1-x)$ is the binary Shannon entropy.
	\end{theorem}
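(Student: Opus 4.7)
The plan is to combine the one-way LOCC monotonicity of $E_\to$ (already established earlier in the paper) with the Alicki--Fannes--Winter continuity bound for the conditional von Neumann entropy. By the definition in Eq.~\eqref{f1131}, for every $m$ achievable in the supremum defining $\distill^\eps_\to(\rho^{AB})$ there exists $\mN\in\locc_1(AB\to A'B')$ (with $|A'|=|B'|=m$) such that $\tau^{A'B'}\eqdef\mN(\rho^{AB})$ satisfies $\frac12\|\tau-\Phi_m\|_1\leq\eps$. The strategy will be to lower-bound $E_\to(\rho^{AB})$ by something close to $\log m$ through these two states.

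First I would apply monotonicity of $E_\to$ under one-way LOCC to write $E_\to(\rho^{AB})\geq E_\to(\tau^{A'B'})$. Choosing the trivial instrument $\mE=\id$ in the supremum in Def.~\ref{def:cohInf} yields the elementary bound $E_\to(\tau^{A'B'})\geq I(A'\ra B')_\tau=-H(A'|B')_\tau$. Next I would invoke the Alicki--Fannes--Winter inequality~\cite{Winter2016}: whenever $\frac12\|\tau-\sigma\|_1\leq\eps$, it holds that
\be
\bigl|H(A'|B')_\tau-H(A'|B')_\sigma\bigr|\leq 2\eps\log|A'|+(1+\eps)\,h\!\left(\tfrac{\eps}{1+\eps}\right).
\ee
Specializing to $\sigma=\Phi_m$, for which $H(A'|B')_{\Phi_m}=-\log m$, and using $|A'|=m$, this gives
\be
-H(A'|B')_\tau\geq(1-2\eps)\log m-(1+\eps)\,h\!\left(\tfrac{\eps}{1+\eps}\right).
\ee

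Chaining the inequalities produces $E_\to(\rho^{AB})\geq(1-2\eps)\log m-(1+\eps)h(\eps/(1+\eps))$. Since $\eps<1/2$ makes $1-2\eps>0$, we can solve for $\log m$ and take the supremum over all admissible $m$ to obtain the asserted bound. The only mild subtlety is the usual dimension dependence in AFW: here it enters as $\log|A'|=\log m$, which is exactly the quantity being bounded, so the argument only closes because the coefficient $2\eps$ in front is strictly less than $1$. This is the one place where the condition $\eps\in(0,1/2)$ is essential, and I expect it to be the only nontrivial point to highlight; everything else is a direct application of the monotonicity theorem proved in the previous section and of the standard continuity result.
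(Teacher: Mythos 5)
Your proposal is correct and follows essentially the same route as the paper: monotonicity of $E_\to$ under one-way LOCC applied to the output state $\eps$-close to $\Phi_m$, then the bound $E_\to\geq I(A'\ra B')$ via the trivial instrument, and finally the Alicki--Fannes--Winter continuity bound from Ref.~\cite{Winter2016} with $I(A'\ra B')_{\Phi_m}=\log m$, solving for $\log m$ using $1-2\eps>0$. Your remark that the dimension factor $\log|A'|=\log m$ is exactly the quantity being bounded, so the argument closes only because $2\eps<1$, correctly identifies the role of the hypothesis $\eps\in(0,1/2)$.
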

	
	\begin{proof}
		Let $m\in\mbb{N}$ be such that $\distill^\eps_{\to}\left(\rho^{AB}\right)=\log m$, and thus 
		\begin{align}
			T\left(\rho^{AB}\xrightarrow{\locc_1}  \Phi^{A'B'}_m\right)\leq\eps.
		\end{align} 
		This means that $\rho^{AB}\xrightarrow{\locc_1}\sigma^{A'B'}$ for some state $\sigma\in\md(A'B')$ that is $\eps$-close to $\Phi_m^{A'B'}$. Therefore, from the monotonicity of $E_\to$ under one-way LOCC we get that
		\be\label{rsapbp}
		E_\to\left(\rho^{AB}\right)\geq E_\to\left(\sigma^{A'B'}\right)\;.
		\ee
		Next, we use the fact that $\sigma^{A'B'}$ is $\eps$-close to $\Phi_m$ to show that the right-hand side of the equation above cannot be much smaller than $\log(m)$. Indeed, combining
		the continuity of $I(A'\ra B')_\rho\eqdef-H(A'|B')_\rho$ (see  Ref.~\cite[Lem.~2]{Winter2016}), with the fact that $I(A'\ra B')_{\Phi_m}=\log m$ (this follows directly from Eq.~\eqref{eq:ConCohInfRelEnt}), gives
		\begin{align}
			E_\to&\left(\sigma^{A'B'}\right)  \nonumber\\
			&\geq I(A'\ra B')_\sigma \nonumber\\
			&\geq I(A'\ra B')_{\Phi_m}-2\eps\log m-(1+\eps)h\left(\frac\eps{1+\eps}\right) \nonumber\\
			&=(1-2\eps)\log(m)-(1+\eps)h\left(\frac\eps{1+\eps}\right)\;.
		\end{align} 
		The proof is concluded by combining this inequality with Eq.~\eqref{rsapbp}.
	\end{proof}

\begin{lemma}
    Let $\rho\in\md(AB)$, and $\Phi_m\in\md(A'B')$ be the maximally entangled state with $m\eqdef|A'|=|B'|$. Then, 
    \be
    T\left(\rho\xrightarrow{\locc_1}  \Phi_m\right)=P^2\left(\rho\xrightarrow{\locc_1}  \Phi_m\right)
    =1-\sup_{\mN\in\locc_1}\tr\left[\Phi_m\mN\left(\rho\right)\right]\;,
    \ee
    where the supremum is over all $\mN\in\locc_1(AB\to A'B')$, and $P$ is the purified distance as given in Eq.~\eqref{eq:purifieddistance}.
\end{lemma}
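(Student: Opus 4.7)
The plan is to prove both equalities by sandwiching $T$ between $P^2$ and $1-\sup_\mN \tr[\Phi_m \mN(\rho)]$, with a $U\otimes U^*$-twirling step supplying the upper bound. Since $\Phi_m$ is pure, the fidelity simplifies: $F^2(\tau,\Phi_m)=\tr[\Phi_m\tau]$ and hence $P^2(\tau,\Phi_m)=1-\tr[\Phi_m\tau]$. Because $P\ge0$, squaring commutes with infimum, so
\begin{align}
	P^2\!\left(\rho\xrightarrow{\locc_1}\Phi_m\right) &= \inf_{\tau:\,\rho\xrightarrow{\locc_1}\tau} P^2(\tau,\Phi_m) \nonumber\\
	&= 1-\sup_{\mN\in\locc_1}\tr[\Phi_m\mN(\rho)],
\end{align}
which immediately yields the second equality in~\eqref{13102}.

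For the lower bound $T\ge P^2$, I would apply the variational characterization of the trace distance, $\frac12\|\tau-\Phi_m\|_1=\max_{0\le P\le I}\tr[P(\tau-\Phi_m)]$, with the concrete choice $P=I-\Phi_m$. Using $\tr[(I-\Phi_m)\Phi_m]=0$, this gives $\frac12\|\tau-\Phi_m\|_1\ge 1-\tr[\Phi_m\tau]=P^2(\tau,\Phi_m)$ for every candidate $\tau$, and taking the infimum over $\tau$ of the form $\mN(\rho)$, $\mN\in\locc_1$, yields $T\ge P^2$.

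For the matching upper bound $T\le 1-\sup_\mN\tr[\Phi_m\mN(\rho)]$, the key idea is to post-process any $\mN\in\locc_1$ by the $U\otimes U^*$ twirl $\mT(\sigma)=\int dU\,(U\otimes U^*)\,\sigma\,(U\otimes U^*)^\dagger$, which lies in $\locc_1$ (Alice samples $U$ from a unitary $2$-design, broadcasts the classical label to Bob, who applies $U^*$). Because $\Phi_m$ is $U\otimes U^*$-invariant, $\mT(\sigma)$ is an isotropic state of the form $\lambda\Phi_m+\frac{1-\lambda}{m^2-1}(I-\Phi_m)$ with $\lambda=\tr[\Phi_m\sigma]$, and a direct spectral computation gives $\frac12\|\mT(\sigma)-\Phi_m\|_1=1-\lambda$. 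Applying this to $\sigma=\mN(\rho)$ and noting $\mT\circ\mN\in\locc_1$ produces $T\le 1-\tr[\Phi_m\mN(\rho)]$; the supremum over $\mN$ closes the sandwich.

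The only subtle point — and the main ``obstacle'' in a fully rigorous write-up — is justifying that the continuous twirl can be realized (or approximated to arbitrary precision) by $\locc_1$, so that $\mT\circ\mN$ is genuinely in $\locc_1$; this is standard once one replaces the Haar integral by a finite unitary $2$-design, but should be stated explicitly. The rest of the argument reduces to the elementary isotropic-state trace-norm calculation and the variational lower bound already recorded above.
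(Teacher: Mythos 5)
Your proof is correct and follows essentially the same route as the paper: the core step is the $U\otimes\overline{U}$ twirl to isotropic states with $\tfrac12\|\mG(\sigma)-\Phi_m\|_1=1-\tr[\Phi_m\sigma]$, the paper obtaining the matching lower bound via data processing under the twirl whereas you use the test operator $I-\Phi_m$ in the variational formula for the trace norm — an equivalent elementary step. Concerning the only obstacle you flag: the twirl is a second-moment average in $U$, so a finite unitary $2$-design (or the finite group construction of Horodecki–Horodecki) reproduces it \emph{exactly}, hence $\mG\circ\mN\in\locc_1$ holds exactly and no approximation argument is needed.
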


\begin{proof}
	Let $\mG\in\locc_1(A'B'\to A'B')$ be the twirling map introduced in Ref.~\cite{Horodecki1999} acting on any $\omega\in\md(A'B')$ as
	\ba\label{formmg}
	\mG\left(\omega\right)&\eqdef\int_{{U}(m)}dU\;(U\otimes\overline{U})\omega(U\otimes\overline{U})^*
	\ea
	where $dU$ denotes the uniform probability distribution on the unitary group proportional to the Haar measure. It was also shown in Ref.~\cite{Horodecki1999} that
	\ba
	\mG\left(\omega\right)&=\left(1-\tr\left[\Phi_m\omega\right]\right)\tau+\tr\left[\Phi_m\omega\right]\Phi_m\;,
	\ea
	where $\tau\in\md(A'B')$ is given by $\tau=(I-\Phi_m)/(m^2-1)$.
	From this follows that for all $\omega\in\md(A'B')$
	\be\label{13104}
	\frac12\left\|\mG\left(\omega\right)-\Phi_m\right\|_1=\big(1-\tr\left[\Phi_{m}\omega\right]\big)\frac12\left\|\tau-\Phi_{m}\right\|_1=1-\tr\left[\Phi_{m}\omega\right]\;,
	\ee
	where the last equality follows from the fact 
	\begin{align}
		\norm{\tau-\Phi_m}_1=\norm{\frac{I}{m^2-1}-\frac{m^2}{m^2-1}\Phi_m}_1=2.
	\end{align}
	From the data processing inequality and the the fact that  $\Phi_{m}$ is invariant under the twirling map $\mG$, it follows that for all $\mN\in\locc_1(AB\to A'B')$ and all $\rho\in\md(AB)$
	\be\label{13105}
	\frac12\left\|\mN\left(\rho\right)-\Phi_m\right\|_1\geq \frac12\left\|\mG\circ\mN\left(\rho\right)-\Phi_m\right\|_1\;.
	\ee
	Since $\mG\circ\mN$ is also an $\locc_1$ channel ($\mG$ can be implemented with shared randomness) it follows from the inequality above that the conversion distance can be expressed as
	\ba 
	T\left(\rho\xrightarrow{\locc_1}  \Phi_m\right)&=\inf_{\mN\in\locc_1}\frac12\left\|\mG\circ\mN\left(\rho\right)-\Phi_m\right\|_1\\
	\GG{\eqref{13104}\to}&=1-\sup_{\mN\in\locc_1}\tr\left[\Phi_{m}\mN\left(\rho\right)\right]\;.
	\ea
	This completes the proof.
\end{proof}

\setcounter{theorems}{20} 

\begin{theorem}
	Let $\mN\in\cptp(A\to B)$ be a quantum channel. It then holds that
	\be
	E_{\to}\left(\mN^{A\to B}\right)=I(A\ra B)_\mN\;.
	\ee
\end{theorem}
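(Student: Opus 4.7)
My plan is to prove the equality $E_\to(\mN^{A\to B}) = I(A\ra B)_\mN$ by establishing the two inequalities separately. The direction $E_\to(\mN^{A\to B}) \geq I(A\ra B)_\mN$ is immediate: for any $\psi\in\pure(A\tA)$, restricting the supremum in the definition of $E_\to(\mN^{\tA\to B}(\psi^{A\tA}))$ to $\mE=\id^A$ (with a trivial classical register $X$) yields $I(A\ra B)_{\mN^{\tA\to B}(\psi^{A\tA})}$, and maximizing over $\psi\in\pure(A\tA)$ then gives $I(A\ra B)_\mN$.

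For the reverse direction I fix an arbitrary $\psi\in\pure(A\tA)$ and $\mE\in\cptp(A\to AX)$ with $X$ classical. By exactly the reduction performed in the proof of part 1 of the state-level theorem on $E_\to$ (the argument producing Eq.~\eqref{eq:SingleKraus}), one may assume without loss of generality that
\begin{align}
\mE(\sigma^A)=\sum_x M_x\sigma^A M_x^*\otimes\ketbra{x}{x}^X,\qquad \sum_x M_x^*M_x=I^A,
\end{align}
i.e., that $\mE$ is an instrument whose elements are single-Kraus CP maps, flagged by a classical register. Since $\mE$ acts on $A$ and $\mN$ acts on $\tA$, the two commute, and applying them to $\psi^{A\tA}$ produces the quantum-classical state
\begin{align}
(\mE\circ\mN)(\psi)^{AXB}=\sum_x p_x\,\mN^{\tA\to B}(\tilde\psi_x^{A\tA})\otimes\ketbra{x}{x}^X,
\end{align}
where $p_x=\tr[M_x^*M_x\psi^A]$ and $\tilde\psi_x^{A\tA}\in\pure(A\tA)$ is the normalized pure state obtained from $(M_x\otimes I^{\tA})\ket{\psi^{A\tA}}/\sqrt{p_x}$.

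The flag identity Eq.~\eqref{eq:FlagCohInf} then collapses the coherent information into a convex combination,
\begin{align}
I(A\ra BX)_{\mE\circ\mN(\psi)}=\sum_x p_x\, I(A\ra B)_{\mN(\tilde\psi_x)},
\end{align}
and by the defining maximization in Eq.~\eqref{eq:CohInfChan} each summand $I(A\ra B)_{\mN(\tilde\psi_x)}$ is at most $I(A\ra B)_\mN$. Hence $I(A\ra BX)_{\mE\circ\mN(\psi)}\le I(A\ra B)_\mN$; supremizing over $\mE$ and then maximizing over $\psi$ completes the proof.

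The only substantive step is the reduction of $\mE$ to a single-Kraus instrument with a classical flag, and this has already been carried out in the state-level monotonicity argument, so I anticipate no further technical obstacles. Once the $AXB$-marginal is quantum-classical, the flag identity trivializes the chain rule and the bound $I(A\ra B)_{\mN(\tilde\psi_x)}\le I(A\ra B)_\mN$ follows directly from the definition of the channel coherent information.
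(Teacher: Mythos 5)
Your proof is correct. The easy direction is identical to the paper's. For the converse, you and the paper both reduce the problem to the flag identity Eq.~\eqref{eq:FlagCohInf}, but you arrange for the conditional states to be pure by first invoking the single-Kraus-instrument reduction (the argument producing Eq.~\eqref{eq:SingleKraus}) from the monotonicity proof, so that the post-instrument state is $\sum_x p_x\,\mN^{\tA\to B}(\tilde\psi_x^{A\tA})\otimes\ketbra{x}{x}^X$ with each $\tilde\psi_x$ pure, and the bound $I(A\ra B)_{\mN(\tilde\psi_x)}\le I(A\ra B)_\mN$ is then immediate from the definition in Eq.~\eqref{eq:CohInfChan}. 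The paper's own proof of this theorem takes a slightly different route: it relaxes $\mE^{A\to AX}(\psi^{A\tA})$ to an \emph{arbitrary} classical-quantum state $\sigma^{A\tA X}$, applies the flag identity, and then uses convexity of the coherent information (Lieb--Ruskai) to conclude that the maximum over mixed marginals $\sigma^{A\tA}$ is attained on pure states, recovering $I(A\ra B)_\mN$. So your argument trades the convexity step for the instrument-refinement/data-processing step; both ingredients are established in the paper, neither use is circular, and the two proofs are of comparable length. One point worth stating explicitly when you invoke the reduction: passing to single-Kraus elements enlarges the classical register, which is harmless only because the supremum defining $E_\to$ ranges over classical systems of arbitrary dimension; a sentence acknowledging this (as the paper does in the part-1 proof) would make the ``without loss of generality'' airtight. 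Also note that summands with $p_x=0$ should simply be dropped from the flag identity, a triviality but worth a word.
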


\begin{proof}
	From its definition (see Def.~\ref{def:cohInf}) we know that for all $\rho\in\md(AB)$, it holds that $E_\to(\rho^{AB})\geq I(A\ra B)_{\rho}$. Combining this with the definition in Eq.~\eqref{defcic} gives
	\ba
	E_{\to}\left(\mN^{A\to B}\right)&\geq\max_{\psi\in\pure(A\tA)}I(A\ra B)_{\mN^{\tA\to B}\left(\psi^{A\tA}\right)}\\
	&=I(A\ra B)_\mN\;.
	\ea
	
	To obtain the reverse inequality, observe that
	\ba
	\max_{\psi\in\pure(A\tA)}E_\to\left(\mN^{\tA\to B}\left(\psi^{A\tA}\right)\right)&=\sup_{\substack{\mE\in\cptp(A\to AX)\\ \psi\in\pure(A\tA)}}I\big(A\ra BX\big)_{\mE^{A\to AX}\otimes\mN^{\tA\to B}\left(\psi^{A\tA}\right)}\\
	\Gg{\substack{\text{Replacing }\mE^{A\to AX}\left(\psi^{A\tA}\right)\\\text{with arbitrary }\sigma\in\md(A\tA X)}\to}&\leq \sup_{\sigma\in\md(A\tA X)}I\big(A\ra BX\big)_{\mN^{\tA\to B}\left(\sigma^{A\tA X}\right)}\\
	\Gg{\sigma^{A\tA X}=\sum_{x\in[n]}p_x\sigma_x^{A\tA}\otimes \ketbra{x}{x}^X\to}&=\sup_{\sigma\in\md(A\tA X)}\sum_{x\in[n]}p_xI(A\ra B)_{\mN^{\tA\to B}\left(\sigma_x^{A\tA}\right)}\\
	&=\max_{\sigma\in\md(A\tA)}I(A\ra B)_{\mN^{\tA\to B}\left(\sigma^{A\tA}\right)}\\
	&=I(A\ra B)_\mN\;,
	\ea
	where we used that in Eq.~\eqref{eq:CohInfChan}, we can replace the maximum over all pure states in $\pure(A\tA)$ with a maximum over all mixed states in $\md(A\tA)$ because the coherent information is convex.
\end{proof}
Thm.~\ref{thm:DistChan} was proven in the main text.

\section{Computability}\label{app:computability}

In the following, we will provide the proof of Thm.~\ref{thm:effComp} presented in the main text as well as additional related results. To this end, we begin by showing that $\norm{\p^{\otimes n}}_{(k)}$ can be computed efficiently. Due to Thms.~\ref{thm:PureSingleDist} and \ref{thm:PureSingleCost} which express $\distill^\eps\left(\psi^{AB}\right)$ and $\cost^\eps\left(\psi^{AB}\right)$ in terms of the Ky-Fan norm, this will then allow us to derive the promised results. Moreover, we will provide explicit algorithms that can be used to compute all relevant quantities.

\setcounter{theorems}{21} 
\begin{lemma}\label{lem:evalKyFan}
Let $n,k,d  \geq 1$ be integers and $\p \in \prob^\downarrow(d)$. Algorithm~\ref{alg:compute Ky Fan norm} can be used to efficiently compute $\norm{\p^{\otimes n}}_{(k)}$.
\end{lemma}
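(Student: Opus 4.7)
The key observation driving the proof is that although $\p^{\otimes n}$ has $d^n$ entries, its distinct values are parametrized by weak compositions of $n$ into $d$ nonnegative parts. Concretely, every entry of $\p^{\otimes n}$ has the form $p_1^{a_1}p_2^{a_2}\cdots p_d^{a_d}$ for some $(a_1,\dots,a_d)\in\mathbb{N}_0^d$ with $\sum_i a_i = n$, and each such value appears exactly $\binom{n}{a_1,\dots,a_d}$ times. The number of compositions is $\binom{n+d-1}{d-1}$, which is polynomial in $n$ for fixed $d$, and the multinomial coefficients as well as the products themselves are computable in time polynomial in $n$ (using logarithms to avoid overflow and standard fast arithmetic). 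Thus the full multiset of distinct values together with their multiplicities can be generated and stored in time $\mathrm{poly}(n,d)$.

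With this data structure in hand, the plan is to compute $\|\p^{\otimes n}\|_{(k)}$ as follows. First I would enumerate all tuples $(a_1,\dots,a_d)$ with $\sum_i a_i = n$ (e.g.\ by a standard recursive or lexicographic routine), producing for each tuple the pair $\bigl(v=\prod_i p_i^{a_i},\,\mu=\binom{n}{a_1,\dots,a_d}\bigr)$. Next I would sort this list of pairs by $v$ in non-increasing order. Finally I would walk through the sorted list, accumulating multiplicity and weighted sum, stopping once the cumulative multiplicity reaches $k$: if the $j$-th block has cumulative multiplicity $M_{j-1}<k\leq M_j$, then
\begin{equation}
\|\p^{\otimes n}\|_{(k)} \;=\; \sum_{i<j} \mu_i\, v_i \;+\; (k-M_{j-1})\,v_j .
\end{equation}
This is exactly the content of Algorithm~\ref{alg:compute Ky Fan norm}. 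Its total cost is dominated by the generation and sorting of $O(n^{d-1})$ entries, hence polynomial in $n$.

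The only subtleties are bookkeeping rather than mathematical. One must (i) handle the partial-block case correctly when $k$ does not coincide with any $M_j$, (ii) argue that ties in the value $v$ do not affect the result (the Ky-Fan norm is invariant under permutations of equal entries), and (iii) verify that each multinomial weight and product is computed to sufficient precision; this is routine since $\log\binom{n}{a_1,\dots,a_d}$ and $\sum_i a_i\log p_i$ are polynomially bounded in bit size. No step uses more than elementary combinatorics together with the observation about distinct values, so the main conceptual point is simply the polynomial bound on the number of distinct entries of $\p^{\otimes n}$.
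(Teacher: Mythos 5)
Your proposal is correct and follows essentially the same route as the paper's proof: enumerate the $\binom{n+d-1}{d-1}$ distinct values of $\p^{\otimes n}$ with their multinomial multiplicities, sort them non-increasingly, and accumulate multiplicities and weighted sums with the same partial-block formula for the Ky-Fan norm. The bookkeeping points you raise (partial blocks, ties, precision) are consistent with, and no more than refinements of, what the paper's Algorithm~\ref{alg:compute Ky Fan norm} already does.
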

\begin{proof}
Note that the $n$-fold tensor product $\p^{\otimes n}$ has $r:=\binom{n+d-1}{d-1}$ different terms $p_1^{n_1}p_2^{n_2} \cdots p_d^{n_d}$, where the $n_i$ are non-negative integers such that $n_1+n_2+\cdots+n_d = n$ (see multinomial coefficient), which is a polynomial in $n$ for fixed $d$. Each term $p_1^{n_1}p_2^{n_2} \cdots p_d^{n_d}$ repeats $\binom{n}{n_1,n_2,\cdots,n_d} = \frac{n!}{n_1!n_2!\cdots n_d!}$ times. First, we sort these $r$ terms in non-increasing order to obtain the ordered vector $(s_1, s_2, \cdots, s_{r})$. Let $v_i$ be the number of times that  $s_i$ repeats. From this, we get that
\begin{align}
    (\p^{\otimes n})^\downarrow = (\underbrace{s_1}_{v_1 \,\text{times}},\cdots,\underbrace{s_i}_{v_i \,\text{times}}, \cdots, \underbrace{s_r}_{v_r\, \text{times}}).
\end{align}
Let $N_0 \eqdef 0$, $N_k \eqdef \sum_{i=1}^k v_i$, and $P_k \eqdef \sum_{i=1}^k v_i s_i$. Then we have that $\|\p^{\otimes n}\|_{(N_k)} = P_k$ and for any $m \in  [N_k, N_{k+1}]$,
\begin{align}\label{eq:Alg1KyNorm}
\|\p^{\otimes n}\|_{(m)} = s_{k+1} (m - N_k) + P_{k} = s_{k+1}(m - N_{k+1}) + P_{k+1}.
\end{align}
Therefore, a complete algorithm is given in Algorithm~\ref{alg:compute Ky Fan norm}.
\end{proof}

\begin{algorithm}
\caption{Efficient evaluation of Ky-Fan norms}\label{alg:compute Ky Fan norm}
\LinesNumbered
\KwIn{$\p = (p_1,\cdots,p_d) \in \prob^\downarrow(d)$, integers $n,m \geq 1$}
\KwOut{$\norm{\p^{\otimes n}}_{(m)}$}

\BlankLine

Compute the $r\eqdef\binom{n+d-1}{d-1}$ different terms $p_1^{n_1}p_2^{n_2} \cdots p_d^{n_d}$ where $n_1+n_2+\cdots+n_d = n$\;

Sort the $r$ terms in non-increasing order resulting in the vector $(s_1, s_2, \cdots, s_{r})$. Let $v_i$ be the number of times that $s_i$ is repeated\;

Let $N := 0$, $P :=0$\;

\ForEach{$k \in [r]$}{
Let $N \leftarrow N + v_k$, and $P \leftarrow P + v_k s_k$\;

\If{$m < N$}{\Return $s_{k} (m - N) + P$\;
}
}

\end{algorithm}

At this point, let us mention that in Algorithm~\ref{alg:compute Ky Fan norm}, essentially, we first create an ordered vector and then search it. The search part, i.e., determining which interval $m$ belongs to, can be accomplished with (one of the variations of) binary search, see Algorithm~\ref{alg:compute Ky Fan norm bisection}. This reduces the required search time, but as a trade-off, one needs to compute and store all values of $N_k$ and $P_k$.

\begin{algorithm}[H]
\caption{Efficient evaluation of Ky-Fan norms via binary search}\label{alg:compute Ky Fan norm bisection}
\LinesNumbered
\KwIn{$\p = (p_1,\cdots,p_d) \in \prob^\downarrow(d)$, integers $n,m \geq 1$}
\KwOut{$\|\p^{\otimes n}\|_{(m)}$}

\BlankLine

Compute the $r\eqdef\binom{n+d-1}{d-1}$ different terms $p_1^{n_1}p_2^{n_2} \cdots p_d^{n_d}$ where $n_1+n_2+\cdots+n_d = n$\;

Sort the $r$ terms in non-increasing order resulting in the vector $(s_1, s_2, \cdots, s_{r})$. Let $v_i$ be the number of times that $s_i$ is repeated\;

Let $N_0 := 0$, $P_0 := 0$, and $\forall k \in [r]$, $N_k:= \sum_{i=1}^k v_i$, and $P_k := \sum_{i=1}^k v_i s_i$;

Let $a = 0$, $b = r$;

\While{True}{
Let $c = \lfloor (a+b)/2 \rfloor$;

\uIf{$N_c \leq m \leq N_{c+1}$}{
\Return{$s_{c+1} (m - N_c) + P_c$}\;
}\uElseIf{$m < N_c$}{
    $b \leftarrow c$;
}\Else{
 $a \leftarrow c + 1$\;
}

}

\end{algorithm}

In summary, we showed how $\norm{\p^{\otimes n}}_{(k)}$ can be computed efficiently. To compute $\distill^\eps\left(\psi^{\otimes n}\right)$, we must determine the integer $m$ satisfying  $\|\p^{\otimes n}\|_{(m-1)}\le \eps<\|\p^{\otimes n}\|_{(m)}$ (see Thm.~\ref{thm:PureSingleDist}, also for notation). That this can be done efficiently is the content of the following Proposition.

\begin{proposition}
For any $\p \in \prob^\downarrow(d)$, integer $n \geq 1$, and $\eps \in [0,1)$, the integer $\ell(\p^{\otimes n}, \eps):= \min\{m: \|\p^{\otimes n}\|_{(m)} > \eps\}$ can be  computed efficiently by Algorithms~\ref{alg:compute threshold} and \ref{alg:compute the threshold bisection}. 
\end{proposition}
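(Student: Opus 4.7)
The plan is to build directly on the proof of Lemma~\ref{lem:evalKyFan}. As observed there, the vector $\p^{\otimes n}$ has only $r = \binom{n+d-1}{d-1}$ distinct entries, a quantity polynomial in $n$ for fixed $d$. Sorting these distinct values into a non-increasing sequence $s_1 \geq s_2 \geq \cdots \geq s_r$ with multiplicities $v_i$ and setting $N_0 = P_0 = 0$ together with $N_k := \sum_{i=1}^k v_i$ and $P_k := \sum_{i=1}^k v_i s_i$, the map $m \mapsto \|\p^{\otimes n}\|_{(m)}$ is piecewise affine and non-decreasing, with breakpoints at $m = N_k$ taking the value $P_k$ and constant slope $s_{k+1} > 0$ on the interior of each segment $[N_k, N_{k+1}]$ (cf.\ Eq.~\eqref{eq:Alg1KyNorm}). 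Consequently $\ell(\p^{\otimes n},\eps)$ is determined in closed form as soon as the segment containing the first crossing of the level $\eps$ is located.

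Concretely, I would proceed in three steps. First, enumerate and sort the $r$ distinct values, and precompute the prefix sums $\{N_k, P_k\}_{k=0}^{r}$; this is $O(r \log r)$ work. Second, locate the unique index $k^\star \in [r]$ satisfying $P_{k^\star - 1} \leq \eps < P_{k^\star}$ (such a $k^\star$ exists because $P_r = 1 > \eps$). A linear scan over the $P_k$ handles this in $O(r)$ time and yields Algorithm~3; since $\{P_k\}$ is monotone, a standard binary search accomplishes the same in $O(\log r)$ time and yields Algorithm~4. Third, within the located segment, the affine identity $\|\p^{\otimes n}\|_{(m)} = s_{k^\star}(m - N_{k^\star - 1}) + P_{k^\star - 1}$ and the strict inequality defining $\ell$ give the closed-form expression
\begin{equation}
\ell(\p^{\otimes n}, \eps) \;=\; N_{k^\star - 1} \;+\; \left\lfloor \frac{\eps - P_{k^\star - 1}}{s_{k^\star}} \right\rfloor \;+\; 1,
\end{equation}
which by construction satisfies $\|\p^{\otimes n}\|_{(\ell - 1)} \leq \eps$ and $\|\p^{\otimes n}\|_{(\ell)} > \eps$.

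The overall running time is dominated by enumerating and sorting the $O(n^{d-1})$ distinct terms, which is polynomial in $n$ for fixed $d$, and is thus efficient in the sense claimed. There is no substantive obstacle beyond bookkeeping; the argument is essentially a monotone search on the piecewise-linear profile whose structure was already exposed in the proof of Lemma~\ref{lem:evalKyFan}. The only minor subtlety is the boundary case in which $(\eps - P_{k^\star - 1})/s_{k^\star}$ happens to be an integer, where the strict inequality forces the ``$+1$'' in the displayed formula; this is handled automatically by the floor expression above, so no case split is needed.
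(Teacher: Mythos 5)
Your proposal is correct and takes essentially the same route as the paper's proof: it exploits that $\p^{\otimes n}$ has only polynomially many distinct entries, locates the breakpoint interval containing $\ell$ by a linear scan (Algorithm~3) or binary search (Algorithm~4) over the monotone prefix sums $P_k$, and then inverts the affine piece with a floor. Your closed-form expression $N_{k^\star-1}+\bigl\lfloor(\eps-P_{k^\star-1})/s_{k^\star}\bigr\rfloor+1$ is algebraically identical to the paper's formula, which merely parametrizes the same segment by its right endpoint $(N_{k^\star},P_{k^\star})$.
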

 \begin{proof}
 The Proposition/Algorithm~\ref{alg:compute threshold}/Algorithm~\ref{alg:compute the threshold bisection} are essentially variations of Lem.~\ref{lem:evalKyFan}/Algorithm~\ref{alg:compute Ky Fan norm}/Algorithm~\ref{alg:compute Ky Fan norm bisection} and we will thus employ the notation used there. The principal idea is again that the elements of $\p^{\otimes n}$ repeat many times. After calculating again the ordered vector $(s_1, s_2, \cdots, s_{r})$ and the number of times  $v_i$ that  $s_i$ repeats, we do a ``rough'' search and determine 
 the integer $j$ such that $P_{j} \leq \varepsilon < P_{j+1}$, that is, determine which interval $\ell\eqdef \ell(\p^{\otimes n}, \eps)$ falls into (remember that $\|\p^{\otimes n}\|_{(N_k)} = P_k$). This can be done efficiently and implies that $N_j<\ell \le N_{j+1}$. Moreover, according to Eq.~\eqref{eq:Alg1KyNorm}, we know that for any $m$ in this interval,
 \begin{align}
 	\|\p^{\otimes n}\|_{(m)} = s_{j+1} (m - N_j) + P_{j} = s_{j+1}(m - N_{j+1}) + P_{j+1},
 \end{align}
and thus 
 \begin{align}
     \ell=\left \lfloor \frac{\eps - P_{j+1}}{s_{j+1}} + N_{j+1} \right \rfloor + 1.
 \end{align}
	This is exactly what Algorithm~\ref{alg:compute threshold} returns. Algorithm~\ref{alg:compute the threshold bisection} is again the variant employing binary search.
 \end{proof}

\begin{algorithm}[H]
\caption{Efficient evaluation of the threshold}\label{alg:compute threshold}
\LinesNumbered
\KwIn{$\p = (p_1,\cdots,p_d) \in \prob^\downarrow(d)$, integer $n \geq 1$, $\eps \in [0,1)$}
\KwOut{$\ell(\p^{\otimes n}, \eps)$}

\BlankLine

Compute the $r\eqdef\binom{n+d-1}{d-1}$ different terms $p_1^{n_1}p_2^{n_2} \cdots p_d^{n_d}$ where $n_1+n_2+\cdots+n_d = n$\;

Sort the $r$ terms in non-increasing order resulting in the vector $(s_1, s_2, \cdots, s_{r})$. Let $v_i$ be the number of times that $s_i$ is repeated\;

Let $N := 0$, $P :=0$\;

\ForEach{$k \in [r]$}{
Let $N \leftarrow N + v_k$, and $P \leftarrow P + v_k s_k$\;

\If{$\eps < P$}{\Return $\left \lfloor \frac{\eps - P}{s_k} + N \right \rfloor + 1$\;
}
}

\end{algorithm}

\begin{algorithm}[H]
\caption{Efficient evaluation of the threshold via binary search}\label{alg:compute the threshold bisection}
\LinesNumbered
\KwIn{$\p = (p_1,\cdots,p_d) \in \prob^\downarrow(d)$, integer $n \geq 1$, $\eps \in [0,1)$}
\KwOut{$\ell(\p^{\otimes n}, \eps)$}

\BlankLine

Compute the $r\eqdef\binom{n+d-1}{d-1}$ different terms $p_1^{n_1}p_2^{n_2} \cdots p_d^{n_d}$ where $n_1+n_2+\cdots+n_d = n$\;

Sort the $r$ terms in non-increasing order resulting in the vector $(s_1, s_2, \cdots, s_{r})$. Let $v_i$ be the number of times that $s_i$ is repeated\;

Let $N_0 := 0$, $P_0 := 0$, and $\forall k \in [r]$, $N_k:= \sum_{i=1}^k v_i$ and $P_k := \sum_{i=1}^k v_i s_i$;

Let $a = 0$, $b = r$;

\While{True}{
Let $c = \lfloor (a+b)/2 \rfloor$;

\uIf{$P_{c} \leq \eps < P_{c+1}$}{
\Return{$\left\lfloor \frac{\eps - P_c}{s_{c+1}} + N_c \right\rfloor + 1$}\;
}\uElseIf{$\eps < P_c$}{
    $b \leftarrow c$;
}\Else{
 $a \leftarrow c + 1$\;
}

}

\end{algorithm}

After we showed how to determine $\ell=\ell(\p^{\otimes n}, \eps)$, according to Thm.~\ref{thm:PureSingleDist}, what remains to do in order to compute $\distill^\eps\left(\psi^{\otimes n}\right)$ is to solve the optimization problem
\begin{align}
	\min_{k\in\{\ell,\ldots,d^n\}} \log\left\lfloor\frac{k}{\|\p^{\otimes n}\|_{(k)}-\eps}\right\rfloor.
\end{align}
That this can be done efficiently is a consequence of the following Proposition.

\begin{proposition}\label{prop:compDist}
For any $\p \in \prob^\downarrow(d)$, integer $n \geq 1$, and $\eps \in [0,1)$, let $\ell = \ell(\p^{\otimes n}, \eps)$. Then
\begin{align}
f_{\min}\eqdef \min_{k\in \{\ell, \cdots, d^n\}} f(k) \quad \text{with}\quad f(k):= \frac{k}{\|\p^{\otimes n}\|_{(k)}-\eps}   
\end{align}
can be efficiently computed by Algorithm~\ref{alg:compute distillable rate}.
\end{proposition}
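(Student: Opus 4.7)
The plan is to exploit the piecewise-linear structure of the map $k \mapsto \|\p^{\otimes n}\|_{(k)}$ that was already uncovered in the proof of Lem.~\ref{lem:evalKyFan}. Recall the notation there: if $(s_1, \ldots, s_r)$ denotes the distinct entries of $\p^{\otimes n}$ in non-increasing order with multiplicities $(v_1, \ldots, v_r)$, and we set $N_0 \eqdef 0$, $N_j \eqdef \sum_{i=1}^j v_i$, $P_j \eqdef \sum_{i=1}^j v_i s_i$, then $\|\p^{\otimes n}\|_{(k)} = s_{j+1}(k - N_j) + P_j$ for every $k \in [N_j, N_{j+1}]$. The total number of breakpoints is $r+1 = \binom{n+d-1}{d-1} + 1 = O(n^{d-1})$, which is polynomial in $n$ for fixed $d$, and all the $N_j, P_j$ can be tabulated in the preprocessing step of Algorithm~\ref{alg:compute Ky Fan norm}.

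The key observation will be that on each linear piece $[N_j, N_{j+1}]$, the function $f$ is monotonic. Indeed, writing $f(k) = k/(s_{j+1} k + (P_j - s_{j+1} N_j) - \eps)$ and differentiating with respect to $k$, a direct calculation yields
\be
f'(k) = \frac{P_j - s_{j+1} N_j - \eps}{(\|\p^{\otimes n}\|_{(k)} - \eps)^2},
\ee
so the sign of $f'$ is constant along each piece. Consequently, the minimum of $f$ over any subinterval of $[N_j, N_{j+1}]$, whether taken over reals or over integers, is attained at one of the endpoints of that subinterval.

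It therefore suffices to compare $f$ at a list of only polynomially many candidate integers: $k = \ell$, together with $k = N_j$ for every $j$ satisfying $\ell < N_j \leq d^n$. At the breakpoints this is especially cheap since $f(N_j) = N_j/(P_j - \eps)$, and the single value $f(\ell)$ can be obtained by one call to Algorithm~\ref{alg:compute Ky Fan norm}. Returning the minimum of this finite list yields $f_{\min}$ in polynomial time, which is what Algorithm~\ref{alg:compute distillable rate} will implement. The only point requiring mild care is the first piece, where $\ell$ may fall strictly inside $(N_{j_0}, N_{j_0+1})$ rather than at a breakpoint; this is precisely why $f(\ell)$ must be included explicitly in the candidate set alongside the breakpoint values, and no real obstacle arises beyond that bookkeeping.
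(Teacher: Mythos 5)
Your argument is correct, but it proceeds along a genuinely different route than the paper. You exploit the piecewise-linearity of $k\mapsto\|\p^{\otimes n}\|_{(k)}$ between the breakpoints $N_j$: on each linear piece the derivative of $f$ has the constant sign of $P_j-s_{j+1}N_j-\eps$, so $f$ is monotone there and its minimum over $\{\ell,\ldots,d^n\}$ must occur at $\ell$ or at one of the $r=\binom{n+d-1}{d-1}$ breakpoints, which you then enumerate; this is a valid polynomial-time procedure, and the candidate set (including $\ell$ for the partial first piece and $N_r=d^n$ for the last) indeed covers all possible minimizers, since $\|\p^{\otimes n}\|_{(k)}-\eps>0$ for all $k\ge\ell$. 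The paper instead works with the discrete difference $f(k+1)-f(k)$, whose sign is governed by $g(k)\eqdef\|\p^{\otimes n}\|_{(k)}-k(\p^{\otimes n})^\downarrow_{k+1}-\eps$; since $g$ is non-decreasing, $f$ is globally unimodal (decreasing then non-decreasing) on $\{\ell,\ldots,d^n\}$, and the unique turning point is located by bisection on the sign of $g$ in $O(n\log d)$ steps. What each buys: your enumeration is conceptually simpler and reuses the tabulated $(N_j,P_j)$ directly, at the cost of scanning all $O(n^{d-1})$ breakpoints; the paper's unimodality argument yields a logarithmic-depth search and is what the stated Algorithm~\ref{alg:compute distillable rate} actually implements. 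One caveat: your per-piece monotonicity does not by itself establish the global unimodality on which that bisection algorithm relies, so strictly speaking you have proved efficient computability of $f_{\min}$ by a different algorithm rather than correctness of the paper's Algorithm~\ref{alg:compute distillable rate}; as a proof of the mathematical claim, however, it is sound.
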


\begin{proof}
If $\ell = d^n$, the minimum is taken at $\ell$. Otherwise, for any $k\in \{\ell, \cdots, d^n-1\}$,
\begin{align}
f(k+1) - f(k) & = \frac{k+1}{\|\p^{\otimes n}\|_{(k+1)}-\eps} - \frac{k}{\|\p^{\otimes n}\|_{(k)}-\eps}
 = \frac{\|\p^{\otimes n}\|_{(k)}- k (\p^{\otimes n})^\downarrow_{k+1} -\eps}{(\|\p^{\otimes n}\|_{(k+1)}-\eps)(\|\p^{\otimes n}\|_{(k)}-\eps)}.
\end{align}
Let $g(k) := \|\p^{\otimes n}\|_{(k)}- k (\p^{\otimes n})^\downarrow_{k+1} -\eps$ and thus
\begin{align}
    g(k+1) - g(k) = (k+1)((\p^{\otimes n})^\downarrow_{k+1} - (\p^{\otimes n})^\downarrow_{k+2}) \geq 0.
\end{align}
This means that $g(k)$ is non-decreasing in $k$. Now consider three cases: If $g(\ell) \geq 0$, then $g(k) \geq 0$ for all $k\in \{\ell, \cdots, d^n-1\}$. This implies that $f(k)$ is non-decreasing  in $k\in \{\ell, \cdots, d^n-1\}$ (remember that by definition of $\ell$, $\|\p^{\otimes n}\|_{(k)}-\eps>0$). The minimum of $f(k)$ is taken at $\ell$. If $g(d^n - 1) \leq 0$, then $f(k)$ is non-increasing and the minimum of $f(k)$ is taken at $d^n$. 
It remains to consider the case that there is a $k^*\in \{\ell+1, \cdots, d^n-2\}$ such that $g(k^* - 1) < 0 \leq g(k^*)$. In this case, it holds that for any $k \leq k^* - 1$, $g(k) < 0$, and thus $f(k+1) < f(k)$. Moreover, for any $k \geq k^*$, $g(k) \geq 0$ and thus $f(k+1) \geq f(k)$.  That is, $f(k)$ is decreasing in $k$ for $k \in \{\ell,\cdots, k^*\}$ and non-decreasing in $k$ for $k \in \{k^*,\cdots, d^n\}$. So the minimum of $f(k)$ is taken at $k^*$, which can be located via the bisection method. This takes at most $\log (d^n) = n \log d$ steps. 
\end{proof}

\begin{algorithm}
\caption{Efficient evaluation of the distillable entanglement}\label{alg:compute distillable rate}
\LinesNumbered
\KwIn{$\p = (p_1,\cdots,p_d) \in \prob^\downarrow(d)$, integer $n \geq 1$, $\eps \in [0,1)$}
\KwOut{$f_{\min}$}

\BlankLine

Use Algorithm~\ref{alg:compute threshold} to compute $\ell$\;

\If{$\ell = d^n$}{
\Return $f_{\min} = \frac{d^n}{1-\eps}$\;
}

\BlankLine

Use Algorithm~\ref{alg:compute Ky Fan norm} to compute $g(\ell)$\;

\If{$g(\ell) \geq 0$}{
\Return $f_{\min} = f(\ell)$ using Algorithm~\ref{alg:compute Ky Fan norm}\;
}

\BlankLine

Use Algorithm~\ref{alg:compute Ky Fan norm} to compute $g(d^n - 1)$\;

\If{$g(d^n - 1) \leq 0$}{
\Return $f_{\min} = f(d^n) = \frac{d^n}{1-\eps}$\;
}

\BlankLine

Let $a = \ell$ and $b = d^n - 1$\;
\While{$b > a + 1$}{
Let $c = \lfloor (a+b)/2 \rfloor$\;
\uIf{$g(c) \geq 0$}{
$b \leftarrow c$\;
}\Else{$a \leftarrow c$\;}
}

\Return $f_{\min} = f(b)$ using Algorithm~\ref{alg:compute Ky Fan norm}\;

\end{algorithm}

\setcounter{theorems}{4} 
\begin{theorem}
	Let $n\in\mbb{N}$, $\eps\in[0,1)$, and $\psi\in\pure(AB)$. This implies that both $\distill^\eps\left(\psi^{\otimes n}\right)$ and $\cost^\eps\left(\psi^{\otimes n}\right)$ can be computed efficiently.   
\end{theorem}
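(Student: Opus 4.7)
The plan is to reduce both computations to manipulations of the vector $\p^{\otimes n}$ and exploit the fact that, although $\p^{\otimes n}$ has $d^n$ entries, it has only polynomially many \emph{distinct} entries. Specifically, if $\p\in\prob^\downarrow(d)$, then every entry of $\p^{\otimes n}$ is of the form $p_1^{n_1}\cdots p_d^{n_d}$ with $n_1+\cdots+n_d=n$, so there are at most $\binom{n+d-1}{d-1}$ distinct values, a number polynomial in $n$ for fixed $d$. First I would enumerate these distinct values together with their multiplicities and sort them in non-increasing order to obtain a compact representation of $(\p^{\otimes n})^\downarrow$. From this list, partial sums of values weighted by multiplicities yield $\|\p^{\otimes n}\|_{(N_k)}$ at the ``boundary'' indices $N_k$, and the Ky-Fan norm at intermediate $k$ is obtained by linearly extending through the current block of equal values.

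Next, I would apply Thm.~\ref{thm:PureSingleCost} to the state $\psi^{\otimes n}$, whose Schmidt vector is $\p^{\otimes n}$: the entanglement cost is $\log m$ for the unique integer $m$ with $\|\p^{\otimes n}\|_{(m-1)}<1-\eps\le\|\p^{\otimes n}\|_{(m)}$. Since $\|\p^{\otimes n}\|_{(k)}$ is non-decreasing in $k$ and the partial sums $P_k$ at block boundaries are already available from the sorted representation, binary search over the blocks locates which block contains the threshold, and a single closed-form step inside the block pinpoints $m$. This gives $\cost^\eps(\psi^{\otimes n})$ in time polynomial in $n$.

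For the distillable entanglement, Thm.~\ref{thm:PureSingleDist} expresses $\distill^\eps(\psi^{\otimes n})$ as the minimum of $\log\lfloor k/(\|\p^{\otimes n}\|_{(k)}-\eps)\rfloor$ over $k\in\{\ell,\ldots,d^n\}$, where $\ell$ is the first index with $\|\p^{\otimes n}\|_{(\ell)}>\eps$. The threshold $\ell$ can be located using the same block decomposition, again via binary search over the $O(\mathrm{poly}(n))$ blocks plus a single closed-form step. The main obstacle is the a priori exponential size of the search range $\{\ell,\ldots,d^n\}$. The key observation that breaks this obstacle is the following structural fact about
\[
f(k)\eqdef\frac{k}{\|\p^{\otimes n}\|_{(k)}-\eps}:
\]
writing $g(k)\eqdef\|\p^{\otimes n}\|_{(k)}-k(\p^{\otimes n})^\downarrow_{k+1}-\eps$, one checks that $\mathrm{sign}(f(k+1)-f(k))=\mathrm{sign}(g(k))$ and that $g$ is non-decreasing in $k$. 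Hence $f$ is unimodal on $\{\ell,\ldots,d^n\}$, and its minimizer is the first $k$ with $g(k)\ge 0$, which can be located by bisection in $O(\log(d^n))=O(n\log d)$ steps.

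Combining these ingredients yields polynomial-time algorithms for both quantities; the explicit routines are spelled out in Algorithms~\ref{alg:compute Ky Fan norm}--\ref{alg:compute distillable rate}. I expect the unimodality analysis of $f(k)$ to be the main technical content, since the threshold computation and Ky-Fan evaluation are essentially bookkeeping once the distinct-value block structure is in place; everything else follows directly from Thms.~\ref{thm:PureSingleDist} and \ref{thm:PureSingleCost}.
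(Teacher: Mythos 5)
Your proposal is correct and follows essentially the same route as the paper: compress $\p^{\otimes n}$ into its polynomially many distinct values with multiplicities to evaluate Ky-Fan norms and thresholds, apply Thms.~\ref{thm:PureSingleDist} and \ref{thm:PureSingleCost}, and handle the exponential search range for the distillation minimum via the monotone quantity $g(k)$ establishing unimodality of $f(k)$, exactly as in the paper's Lem.~\ref{lem:evalKyFan} and Prop.~\ref{prop:compDist}. The only nuance glossed over is the boundary case where $g(k)<0$ throughout the range (minimum at $k=d^n$), which the paper treats explicitly but which is covered by your unimodality argument.
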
 
\begin{proof}
	That $\distill^\eps\left(\psi^{\otimes n}\right)$ can be computed efficiently follows from Thm.~\ref{thm:PureSingleDist}, Prop.~\ref{prop:compDist}, and the observation that 
	\begin{align}
		\min_{k\in\{\ell,\ldots,d^n\}} \log\left\lfloor\frac{k}{\|\p^{\otimes n}\|_{(k)}-\eps}\right\rfloor=  \log\left\lfloor \min_{k\in\{\ell,\ldots,d^n\}}\frac{k}{\|\p^{\otimes n}\|_{(k)}-\eps}\right\rfloor.
	\end{align}
	Determining $\cost^\eps\left(\psi^{\otimes n}\right)$ is even easier: According to Thm.~\ref{thm:PureSingleCost}, we only need to determine the integer $m\in[d]$ satisfying $	\|\p\|_{(m-1)}<1-\eps\leq\|\p\|_{(m)}$. It is straightforward to see that this can be achieved by running a slight variation of Algorithm~\ref{alg:compute threshold} which is presented in Algorithm~\ref{alg:compute threshold cost}. Of course, a similar adaptation is possible for Algorithm~\ref{alg:compute the threshold bisection}.
\end{proof}

\begin{algorithm}[H]
	\caption{Efficient evaluation of the entanglement cost }\label{alg:compute threshold cost}
	\LinesNumbered
	\KwIn{$\p = (p_1,\cdots,p_d) \in \prob^\downarrow(d)$, integer $n \geq 1$, $\eps \in [0,1)$}
	\KwOut{$m$}

	\BlankLine
	
	Compute the $r\eqdef\binom{n+d-1}{d-1}$ different terms $p_1^{n_1}p_2^{n_2} \cdots p_d^{n_d}$ where $n_1+n_2+\cdots+n_d = n$\;
	
	Sort the $r$ terms in non-increasing order resulting in the vector $(s_1, s_2, \cdots, s_{r})$. Let $v_i$ be the number of times that $s_i$ is repeated\;
	
	Let $N := 0$, $P :=0$\;
	
	\ForEach{$k \in [r]$}{
		Let $N \leftarrow N + v_k$, and $P \leftarrow P + v_k s_k$\;
		
		\If{$1-\eps \le P$}{\Return $\log \left \lceil \frac{1- \eps - P}{s_k} + N \right \rceil$\;
		}
	}
\end{algorithm}

\section{Closed-form formula for Ref.~\cite{Regula2019}}\label{app:Regula}
In this Appendix, we provide an efficient method to evaluate the fidelity of distillation defined in Eq.~\eqref{eq:fidDist} as well as $E_{D}^{(1),\eps}(\psi^{\otimes n})$. For an arbitrary pure state $\ket{\psi}\in \pure(AB)$, let $\p = (p_1,\cdots,p_{|A|})$ be its Schmidt vector (ordered non-increasingly according to our convention). Moreover, let $\sqrt{\p} = (\sqrt{p_1},\cdots,\sqrt{p_{|A|}})$. The fidelity of distillation is then given by~\cite{Regula2019}
\begin{align}
    F(\psi^{\otimes n}, m) = \frac{1}{m} \left\|\sqrt{\p^{\otimes n}}\right\|_{[m]}^2,
\end{align}
where the distillation norm can be expressed as~\cite{Regula2018} 
\begin{align}
 \|\sqrt{\p^{\otimes n}}\|_{[m]} & := \|(\sqrt{\p})^{\otimes n}\|_{(m-k^*)} +\sqrt{k^* (1- \|\p^{\otimes n}\|_{(m-k^*)})}
\end{align}
with 
\begin{align}
    k^* = \argmin_{1\leq k\leq m} \frac{1- \|\p^{\otimes n}\|_{(m-k)}}{k}
\end{align}
and $\|\p^{\otimes n}\|_{(0)} := 0$. To determine the fidelity of distillation, we thus need to find $k^*$ first.

\begin{algorithm}
	\caption{Efficient determination of $k^*$}\label{alg:compute distillable norm}
	\LinesNumbered
	\KwIn{$\p = (p_1,\cdots,p_d) \in \prob^\downarrow(d)$, integer $n \geq 1,m \geq 2$}
	\KwOut{$k^*$ necessary to determine $F(\psi^{\otimes n}, m)$}

	\BlankLine
	Use Algorithm~\ref{alg:compute Ky Fan norm} to compute $t(1)$\;
	
	\If{$t(1) \geq 0$}{
		\Return $k^* = 1$\;
	}
	
	\BlankLine
	Use Algorithm~\ref{alg:compute Ky Fan norm} to compute $t(m-1)$\;
	\If{$t(m-1) \leq 0$}{
		\Return $k^* = m$\;
	}
	
	\BlankLine
	
	Let $a = 1$ and $b = m-1$\;
	
	\While{$b > a + 1$}{
		Let $c = \lfloor (a+b)/2 \rfloor$\;
		\uIf{$t(c) \geq 0$}{
			$b \leftarrow c$\;
		}\Else{$a \leftarrow c$\;}
	}
	
	\Return{$k^* = b$}\;
	
\end{algorithm}

Let 
\begin{align}
h(k) = \frac{1- \|\p^{\otimes n}\|_{(m-k)}}{k}.
\end{align}
Then for any $k \in [m-1]$,
\begin{align}
h(k+1) - h(k) = \frac{\|\p^{\otimes n}\|_{(m-k)} + k (\p^{\otimes n})^\downarrow_{m-k} - 1}{k(k+1)}.
\end{align}
Defining
\begin{align}
    t(k) = \|\p^{\otimes n}\|_{(m-k)} + k (\p^{\otimes n})^\downarrow_{m-k} - 1 = (k+1)\|\p^{\otimes n}\|_{(m-k)} - k \|\p^{\otimes n}\|_{(m-k-1)} - 1,
\end{align}
we have
\begin{align}
    t(k+1) - t(k) = (k+1)((\p^{\otimes n})_{m-k-1}^\downarrow - (\p^{\otimes n})^\downarrow_{m-k}) \geq 0.
\end{align}
So $t(k)$ is non-decreasing in $k \in [m-1]$. Analogously to the proof of Prop.~\ref{prop:compDist}, we can now consider three cases: 
If $t(1) \geq 0$, this implies that $h(k)$ is non-decreasing in $\{1,\cdots,m\}$ and $k^* = 1$. If $t(m-1) \leq 0$, $h(k)$ is non-increasing and we get $k^* = m$. Otherwise, there exists a $k'\in\{2,\cdots,m-2\}$ such that $t(k'-1)<0\le t(k')$ and $h(k)$ is again decreasing in $k$ for $k\in\{1,\cdots,k'\}$ and non-increasing for $k\in\{k',\cdots,m\}$. This implies that $k^* = k'$ which can be located via the bisection method. This takes at most $\log (d^n) = n \log d$ steps. The corresponding Algorithm is provided as Algorithm~\ref{alg:compute distillable norm}. Once $k^*$ is determined, we can use Algorithm~\ref{alg:compute Ky Fan norm} to compute
\begin{align*}
	F(\psi^{\otimes n}, m) = \frac{1}{m} \norm{\sqrt{\p^{\otimes n}}}_{[m]}^2 & = \frac{1}{m} \left(\|(\sqrt{\p})^{\otimes n}\|_{(m-k^*)} +\sqrt{k^* \left(1- \|\p^{\otimes n}\|_{(m-k^*)}\right)}\right)^2. 
\end{align*}

According to Ref.~\cite[Lem.~1]{Vidal2000}, the maximal achievable fidelity between a state obtained by applying LOCC to any initial state $\rho\in \md(AB)$ with a maximally entangled state of dimension $m>|A|$ is given by $\sqrt{|A|/m}$. This implies that 
\begin{align}
	\sqrt{\frac{|A|^n}{m}} \ge F(\psi^{\otimes n}, m),
\end{align} 
and thus $E_{D}^{(1),\eps}(\psi^{\otimes n})\le \log \left\lfloor\frac{|A|^n}{(1-\eps)^2}\right\rfloor$. From this follows that Algorithm~\ref{alg:compute distillable rate 1} efficiently computes $E_{D}^{(1),\eps}(\psi^{\otimes n})$.

\begin{algorithm}[t]
\caption{Efficient evaluation of the distillable entanglement of Ref.~\cite{Regula2019}}\label{alg:compute distillable rate 1}
\LinesNumbered
\KwIn{$\p = (p_1,\cdots,p_d) \in \prob^\downarrow(d)$, integer $n \geq 1$, $\eps \in [0,1)$}
\KwOut{$E_{D}^{(1),\eps}(\psi^{\otimes n})$}

\BlankLine

\If{$F(\psi^{\otimes n}, 2) < 1-\eps$}{\Return{$0$};}

\If{$F(\psi^{\otimes n}, \log \left\lfloor\frac{|A|^n}{(1-\eps)^2}\right\rfloor) \geq 1-\eps$}{\Return{$\log \log \left\lfloor\frac{|A|^n}{(1-\eps)^2}\right\rfloor$};}

\BlankLine

Let $a = 2$ and $b = \log \left\lfloor\frac{|A|^n}{(1-\eps)^2}\right\rfloor$\;

\While{$b > a + 1$}{
Let $c = \lfloor (a+b)/2 \rfloor$\;
\uIf{$F(\psi^{\otimes n},c) < 1-\eps$}{
$b \leftarrow c$\;
}\Else{$a \leftarrow c$\;}
}

\Return{$\log a$}

\end{algorithm}


\begin{thebibliography}{86}%
	\makeatletter
	\providecommand \@ifxundefined [1]{%
		\@ifx{#1\undefined}
	}%
	\providecommand \@ifnum [1]{%
		\ifnum #1\expandafter \@firstoftwo
		\else \expandafter \@secondoftwo
		\fi
	}%
	\providecommand \@ifx [1]{%
		\ifx #1\expandafter \@firstoftwo
		\else \expandafter \@secondoftwo
		\fi
	}%
	\providecommand \natexlab [1]{#1}%
	\providecommand \enquote  [1]{``#1''}%
	\providecommand \bibnamefont  [1]{#1}%
	\providecommand \bibfnamefont [1]{#1}%
	\providecommand \citenamefont [1]{#1}%
	\providecommand \href@noop [0]{\@secondoftwo}%
	\providecommand \href [0]{\begingroup \@sanitize@url \@href}%
	\providecommand \@href[1]{\@@startlink{#1}\@@href}%
	\providecommand \@@href[1]{\endgroup#1\@@endlink}%
	\providecommand \@sanitize@url [0]{\catcode `\\12\catcode `\$12\catcode
		`\&12\catcode `\#12\catcode `\^12\catcode `\_12\catcode `\%12\relax}%
	\providecommand \@@startlink[1]{}%
	\providecommand \@@endlink[0]{}%
	\providecommand \url  [0]{\begingroup\@sanitize@url \@url }%
	\providecommand \@url [1]{\endgroup\@href {#1}{\urlprefix }}%
	\providecommand \urlprefix  [0]{URL }%
	\providecommand \Eprint [0]{\href }%
	\providecommand \doibase [0]{https://doi.org/}%
	\providecommand \selectlanguage [0]{\@gobble}%
	\providecommand \bibinfo  [0]{\@secondoftwo}%
	\providecommand \bibfield  [0]{\@secondoftwo}%
	\providecommand \translation [1]{[#1]}%
	\providecommand \BibitemOpen [0]{}%
	\providecommand \bibitemStop [0]{}%
	\providecommand \bibitemNoStop [0]{.\EOS\space}%
	\providecommand \EOS [0]{\spacefactor3000\relax}%
	\providecommand \BibitemShut  [1]{\csname bibitem#1\endcsname}%
	\let\auto@bib@innerbib\@empty
	\bibitem [{\citenamefont {Plenio}\ and\ \citenamefont
		{Virmani}(2007)}]{Plenio2007}%
	\BibitemOpen
	\bibfield  {author} {\bibinfo {author} {\bibfnamefont {M.~B.}\ \bibnamefont
			{Plenio}}\ and\ \bibinfo {author} {\bibfnamefont {S.}~\bibnamefont
			{Virmani}},\ }\bibfield  {title} {\bibinfo {title} {An introduction to
			entanglement measures},\ }\href@noop {} {\bibfield  {journal} {\bibinfo
			{journal} {Quant. Inf. Comp.}\ }\textbf {\bibinfo {volume} {7}},\ \bibinfo
		{pages} {1} (\bibinfo {year} {2007})}\BibitemShut {NoStop}%
	\bibitem [{\citenamefont {Horodecki}\ \emph {et~al.}(2009)\citenamefont
		{Horodecki}, \citenamefont {Horodecki}, \citenamefont {Horodecki},\ and\
		\citenamefont {Horodecki}}]{Horodecki2009}%
	\BibitemOpen
	\bibfield  {author} {\bibinfo {author} {\bibfnamefont {R.}~\bibnamefont
			{Horodecki}}, \bibinfo {author} {\bibfnamefont {P.}~\bibnamefont
			{Horodecki}}, \bibinfo {author} {\bibfnamefont {M.}~\bibnamefont
			{Horodecki}},\ and\ \bibinfo {author} {\bibfnamefont {K.}~\bibnamefont
			{Horodecki}},\ }\bibfield  {title} {\bibinfo {title} {Quantum entanglement},\
	}\href {https://doi.org/10.1103/RevModPhys.81.865} {\bibfield  {journal}
		{\bibinfo  {journal} {Rev. Mod. Phys.}\ }\textbf {\bibinfo {volume} {81}},\
		\bibinfo {pages} {865} (\bibinfo {year} {2009})}\BibitemShut {NoStop}%
	\bibitem [{\citenamefont {Bennett}\ \emph {et~al.}(1993)\citenamefont
		{Bennett}, \citenamefont {Brassard}, \citenamefont {Cr\'epeau}, \citenamefont
		{Jozsa}, \citenamefont {Peres},\ and\ \citenamefont
		{Wootters}}]{Bennett1993}%
	\BibitemOpen
	\bibfield  {author} {\bibinfo {author} {\bibfnamefont {C.~H.}\ \bibnamefont
			{Bennett}}, \bibinfo {author} {\bibfnamefont {G.}~\bibnamefont {Brassard}},
		\bibinfo {author} {\bibfnamefont {C.}~\bibnamefont {Cr\'epeau}}, \bibinfo
		{author} {\bibfnamefont {R.}~\bibnamefont {Jozsa}}, \bibinfo {author}
		{\bibfnamefont {A.}~\bibnamefont {Peres}},\ and\ \bibinfo {author}
		{\bibfnamefont {W.~K.}\ \bibnamefont {Wootters}},\ }\bibfield  {title}
	{\bibinfo {title} {Teleporting an unknown quantum state via dual classical
			and {E}instein-{P}odolsky-{R}osen channels},\ }\href
	{https://doi.org/10.1103/PhysRevLett.70.1895} {\bibfield  {journal} {\bibinfo
			{journal} {Phys. Rev. Lett.}\ }\textbf {\bibinfo {volume} {70}},\ \bibinfo
		{pages} {1895} (\bibinfo {year} {1993})}\BibitemShut {NoStop}%
	\bibitem [{\citenamefont {Bennett}\ and\ \citenamefont
		{Wiesner}(1992)}]{Bennett1992}%
	\BibitemOpen
	\bibfield  {author} {\bibinfo {author} {\bibfnamefont {C.~H.}\ \bibnamefont
			{Bennett}}\ and\ \bibinfo {author} {\bibfnamefont {S.~J.}\ \bibnamefont
			{Wiesner}},\ }\bibfield  {title} {\bibinfo {title} {Communication via one-
			and two-particle operators on {E}instein-{P}odolsky-{R}osen states},\ }\href
	{https://doi.org/10.1103/PhysRevLett.69.2881} {\bibfield  {journal} {\bibinfo
			{journal} {Phys. Rev. Lett.}\ }\textbf {\bibinfo {volume} {69}},\ \bibinfo
		{pages} {2881} (\bibinfo {year} {1992})}\BibitemShut {NoStop}%
	\bibitem [{\citenamefont {Ekert}(1991)}]{Ekert1991}%
	\BibitemOpen
	\bibfield  {author} {\bibinfo {author} {\bibfnamefont {A.~K.}\ \bibnamefont
			{Ekert}},\ }\bibfield  {title} {\bibinfo {title} {Quantum cryptography based
			on {B}ell's theorem},\ }\href {https://doi.org/10.1103/PhysRevLett.67.661}
	{\bibfield  {journal} {\bibinfo  {journal} {Phys. Rev. Lett.}\ }\textbf
		{\bibinfo {volume} {67}},\ \bibinfo {pages} {661} (\bibinfo {year}
		{1991})}\BibitemShut {NoStop}%
	\bibitem [{\citenamefont {Bennett}\ \emph
		{et~al.}(1996{\natexlab{a}})\citenamefont {Bennett}, \citenamefont
		{DiVincenzo}, \citenamefont {Smolin},\ and\ \citenamefont
		{Wootters}}]{Bennett1996}%
	\BibitemOpen
	\bibfield  {author} {\bibinfo {author} {\bibfnamefont {C.~H.}\ \bibnamefont
			{Bennett}}, \bibinfo {author} {\bibfnamefont {D.~P.}\ \bibnamefont
			{DiVincenzo}}, \bibinfo {author} {\bibfnamefont {J.~A.}\ \bibnamefont
			{Smolin}},\ and\ \bibinfo {author} {\bibfnamefont {W.~K.}\ \bibnamefont
			{Wootters}},\ }\bibfield  {title} {\bibinfo {title} {Mixed-state entanglement
			and quantum error correction},\ }\href
	{https://doi.org/10.1103/PhysRevA.54.3824} {\bibfield  {journal} {\bibinfo
			{journal} {Phys. Rev. A}\ }\textbf {\bibinfo {volume} {54}},\ \bibinfo
		{pages} {3824} (\bibinfo {year} {1996}{\natexlab{a}})}\BibitemShut {NoStop}%
	\bibitem [{\citenamefont {Chitambar}\ and\ \citenamefont
		{Gour}(2019)}]{Chitambar2019}%
	\BibitemOpen
	\bibfield  {author} {\bibinfo {author} {\bibfnamefont {E.}~\bibnamefont
			{Chitambar}}\ and\ \bibinfo {author} {\bibfnamefont {G.}~\bibnamefont
			{Gour}},\ }\bibfield  {title} {\bibinfo {title} {Quantum resource theories},\
	}\href {https://doi.org/10.1103/RevModPhys.91.025001} {\bibfield  {journal}
		{\bibinfo  {journal} {Rev. Mod. Phys.}\ }\textbf {\bibinfo {volume} {91}},\
		\bibinfo {pages} {025001} (\bibinfo {year} {2019})}\BibitemShut {NoStop}%
	\bibitem [{\citenamefont {Nielsen}(1999)}]{Nielsen1999}%
	\BibitemOpen
	\bibfield  {author} {\bibinfo {author} {\bibfnamefont {M.~A.}\ \bibnamefont
			{Nielsen}},\ }\bibfield  {title} {\bibinfo {title} {Conditions for a class of
			entanglement transformations},\ }\href
	{https://doi.org/10.1103/PhysRevLett.83.436} {\bibfield  {journal} {\bibinfo
			{journal} {Phys. Rev. Lett.}\ }\textbf {\bibinfo {volume} {83}},\ \bibinfo
		{pages} {436} (\bibinfo {year} {1999})}\BibitemShut {NoStop}%
	\bibitem [{\citenamefont {Bennett}\ \emph
		{et~al.}(1996{\natexlab{b}})\citenamefont {Bennett}, \citenamefont
		{Bernstein}, \citenamefont {Popescu},\ and\ \citenamefont
		{Schumacher}}]{Bennett1996a}%
	\BibitemOpen
	\bibfield  {author} {\bibinfo {author} {\bibfnamefont {C.~H.}\ \bibnamefont
			{Bennett}}, \bibinfo {author} {\bibfnamefont {H.~J.}\ \bibnamefont
			{Bernstein}}, \bibinfo {author} {\bibfnamefont {S.}~\bibnamefont {Popescu}},\
		and\ \bibinfo {author} {\bibfnamefont {B.}~\bibnamefont {Schumacher}},\
	}\bibfield  {title} {\bibinfo {title} {Concentrating partial entanglement by
			local operations},\ }\href {https://doi.org/10.1103/PhysRevA.53.2046}
	{\bibfield  {journal} {\bibinfo  {journal} {Phys. Rev. A}\ }\textbf {\bibinfo
			{volume} {53}},\ \bibinfo {pages} {2046} (\bibinfo {year}
		{1996}{\natexlab{b}})}\BibitemShut {NoStop}%
	\bibitem [{\citenamefont {Bennett}\ \emph
		{et~al.}(1996{\natexlab{c}})\citenamefont {Bennett}, \citenamefont
		{Brassard}, \citenamefont {Popescu}, \citenamefont {Schumacher},
		\citenamefont {Smolin},\ and\ \citenamefont {Wootters}}]{Bennett1996b}%
	\BibitemOpen
	\bibfield  {author} {\bibinfo {author} {\bibfnamefont {C.~H.}\ \bibnamefont
			{Bennett}}, \bibinfo {author} {\bibfnamefont {G.}~\bibnamefont {Brassard}},
		\bibinfo {author} {\bibfnamefont {S.}~\bibnamefont {Popescu}}, \bibinfo
		{author} {\bibfnamefont {B.}~\bibnamefont {Schumacher}}, \bibinfo {author}
		{\bibfnamefont {J.~A.}\ \bibnamefont {Smolin}},\ and\ \bibinfo {author}
		{\bibfnamefont {W.~K.}\ \bibnamefont {Wootters}},\ }\bibfield  {title}
	{\bibinfo {title} {Purification of noisy entanglement and faithful
			teleportation via noisy channels},\ }\href
	{https://doi.org/10.1103/PhysRevLett.76.722} {\bibfield  {journal} {\bibinfo
			{journal} {Phys. Rev. Lett.}\ }\textbf {\bibinfo {volume} {76}},\ \bibinfo
		{pages} {722} (\bibinfo {year} {1996}{\natexlab{c}})}\BibitemShut {NoStop}%
	\bibitem [{\citenamefont {Rains}(1999)}]{Rains1999}%
	\BibitemOpen
	\bibfield  {author} {\bibinfo {author} {\bibfnamefont {E.~M.}\ \bibnamefont
			{Rains}},\ }\bibfield  {title} {\bibinfo {title} {Bound on distillable
			entanglement},\ }\href {https://doi.org/10.1103/PhysRevA.60.179} {\bibfield
		{journal} {\bibinfo  {journal} {Phys. Rev. A}\ }\textbf {\bibinfo {volume}
			{60}},\ \bibinfo {pages} {179} (\bibinfo {year} {1999})}\BibitemShut
	{NoStop}%
	\bibitem [{\citenamefont {Hayden}\ \emph {et~al.}(2001)\citenamefont {Hayden},
		\citenamefont {Horodecki},\ and\ \citenamefont {Terhal}}]{Hayden2001}%
	\BibitemOpen
	\bibfield  {author} {\bibinfo {author} {\bibfnamefont {P.~M.}\ \bibnamefont
			{Hayden}}, \bibinfo {author} {\bibfnamefont {M.}~\bibnamefont {Horodecki}},\
		and\ \bibinfo {author} {\bibfnamefont {B.~M.}\ \bibnamefont {Terhal}},\
	}\bibfield  {title} {\bibinfo {title} {The asymptotic entanglement cost of
			preparing a quantum state},\ }\href
	{https://doi.org/10.1088/0305-4470/34/35/314} {\bibfield  {journal} {\bibinfo
			{journal} {J. Phys. A Math.}\ }\textbf {\bibinfo {volume} {34}},\ \bibinfo
		{pages} {6891} (\bibinfo {year} {2001})}\BibitemShut {NoStop}%
	\bibitem [{\citenamefont {Fang}\ \emph
		{et~al.}(2019{\natexlab{a}})\citenamefont {Fang}, \citenamefont {Wang},
		\citenamefont {Tomamichel},\ and\ \citenamefont {Duan}}]{fang2019non}%
	\BibitemOpen
	\bibfield  {author} {\bibinfo {author} {\bibfnamefont {K.}~\bibnamefont
			{Fang}}, \bibinfo {author} {\bibfnamefont {X.}~\bibnamefont {Wang}}, \bibinfo
		{author} {\bibfnamefont {M.}~\bibnamefont {Tomamichel}},\ and\ \bibinfo
		{author} {\bibfnamefont {R.}~\bibnamefont {Duan}},\ }\bibfield  {title}
	{\bibinfo {title} {Non-asymptotic entanglement distillation},\ }\href
	{https://doi.org/10.1109/TIT.2019.2914688} {\bibfield  {journal} {\bibinfo
			{journal} {IEEE Trans. Inf. Theory}\ }\textbf {\bibinfo {volume} {65}},\
		\bibinfo {pages} {6454} (\bibinfo {year} {2019}{\natexlab{a}})}\BibitemShut
	{NoStop}%
	\bibitem [{\citenamefont {Regula}\ \emph {et~al.}(2019)\citenamefont {Regula},
		\citenamefont {Fang}, \citenamefont {Wang},\ and\ \citenamefont
		{Gu}}]{Regula2019}%
	\BibitemOpen
	\bibfield  {author} {\bibinfo {author} {\bibfnamefont {B.}~\bibnamefont
			{Regula}}, \bibinfo {author} {\bibfnamefont {K.}~\bibnamefont {Fang}},
		\bibinfo {author} {\bibfnamefont {X.}~\bibnamefont {Wang}},\ and\ \bibinfo
		{author} {\bibfnamefont {M.}~\bibnamefont {Gu}},\ }\bibfield  {title}
	{\bibinfo {title} {One-shot entanglement distillation beyond local operations
			and classical communication},\ }\href
	{https://doi.org/10.1088/1367-2630/ab4732} {\bibfield  {journal} {\bibinfo
			{journal} {New J. Phys.}\ }\textbf {\bibinfo {volume} {21}},\ \bibinfo
		{pages} {103017} (\bibinfo {year} {2019})}\BibitemShut {NoStop}%
	\bibitem [{\citenamefont {Zanoni}\ \emph {et~al.}(2023)\citenamefont {Zanoni},
		\citenamefont {Theurer},\ and\ \citenamefont {Gour}}]{Zanoni2023}%
	\BibitemOpen
	\bibfield  {author} {\bibinfo {author} {\bibfnamefont {E.}~\bibnamefont
			{Zanoni}}, \bibinfo {author} {\bibfnamefont {T.}~\bibnamefont {Theurer}},\
		and\ \bibinfo {author} {\bibfnamefont {G.}~\bibnamefont {Gour}},\ }\bibfield
	{title} {\bibinfo {title} {Complete characterization of entanglement
			embezzlement},\ }\href {https://doi.org/10.48550/arXiv.2303.17749} {\bibfield
		{journal} {\bibinfo  {journal} {arXiv:2303.17749}\ } (\bibinfo {year}
		{2023})}\BibitemShut {NoStop}%
	\bibitem [{\citenamefont {Kumagai}\ and\ \citenamefont
		{Hayashi}(2017)}]{Kumagai2017}%
	\BibitemOpen
	\bibfield  {author} {\bibinfo {author} {\bibfnamefont {W.}~\bibnamefont
			{Kumagai}}\ and\ \bibinfo {author} {\bibfnamefont {M.}~\bibnamefont
			{Hayashi}},\ }\bibfield  {title} {\bibinfo {title} {Second-order asymptotics
			of conversions of distributions and entangled states based on rayleigh-normal
			probability distributions},\ }\href
	{https://doi.org/10.1109/TIT.2016.2645223} {\bibfield  {journal} {\bibinfo
			{journal} {IEEE Trans. Inf. Theory}\ }\textbf {\bibinfo {volume} {63}},\
		\bibinfo {pages} {1829} (\bibinfo {year} {2017})}\BibitemShut {NoStop}%
	\bibitem [{\citenamefont {Buscemi}\ and\ \citenamefont
		{Datta}(2011)}]{Buscemi2011}%
	\BibitemOpen
	\bibfield  {author} {\bibinfo {author} {\bibfnamefont {F.}~\bibnamefont
			{Buscemi}}\ and\ \bibinfo {author} {\bibfnamefont {N.}~\bibnamefont
			{Datta}},\ }\bibfield  {title} {\bibinfo {title} {Entanglement cost in
			practical scenarios},\ }\href
	{https://doi.org/10.1103/PhysRevLett.106.130503} {\bibfield  {journal}
		{\bibinfo  {journal} {Phys. Rev. Lett.}\ }\textbf {\bibinfo {volume} {106}},\
		\bibinfo {pages} {130503} (\bibinfo {year} {2011})}\BibitemShut {NoStop}%
	\bibitem [{\citenamefont {Theurer}\ \emph {et~al.}(2019)\citenamefont
		{Theurer}, \citenamefont {Egloff}, \citenamefont {Zhang},\ and\ \citenamefont
		{Plenio}}]{Theurer2019}%
	\BibitemOpen
	\bibfield  {author} {\bibinfo {author} {\bibfnamefont {T.}~\bibnamefont
			{Theurer}}, \bibinfo {author} {\bibfnamefont {D.}~\bibnamefont {Egloff}},
		\bibinfo {author} {\bibfnamefont {L.}~\bibnamefont {Zhang}},\ and\ \bibinfo
		{author} {\bibfnamefont {M.~B.}\ \bibnamefont {Plenio}},\ }\bibfield  {title}
	{\bibinfo {title} {Quantifying operations with an application to coherence},\
	}\href {https://doi.org/10.1103/PhysRevLett.122.190405} {\bibfield  {journal}
		{\bibinfo  {journal} {Phys. Rev. Lett.}\ }\textbf {\bibinfo {volume} {122}},\
		\bibinfo {pages} {190405} (\bibinfo {year} {2019})}\BibitemShut {NoStop}%
	\bibitem [{\citenamefont {Zhuang}\ \emph {et~al.}(2018)\citenamefont {Zhuang},
		\citenamefont {Shor},\ and\ \citenamefont {Shapiro}}]{Zhuang2018}%
	\BibitemOpen
	\bibfield  {author} {\bibinfo {author} {\bibfnamefont {Q.}~\bibnamefont
			{Zhuang}}, \bibinfo {author} {\bibfnamefont {P.~W.}\ \bibnamefont {Shor}},\
		and\ \bibinfo {author} {\bibfnamefont {J.~H.}\ \bibnamefont {Shapiro}},\
	}\bibfield  {title} {\bibinfo {title} {Resource theory of non-{G}aussian
			operations},\ }\href {https://doi.org/10.1103/PhysRevA.97.052317} {\bibfield
		{journal} {\bibinfo  {journal} {Phys. Rev. A}\ }\textbf {\bibinfo {volume}
			{97}},\ \bibinfo {pages} {052317} (\bibinfo {year} {2018})}\BibitemShut
	{NoStop}%
	\bibitem [{\citenamefont {B{\"a}uml}\ \emph {et~al.}(2019)\citenamefont
		{B{\"a}uml}, \citenamefont {Das}, \citenamefont {Wang},\ and\ \citenamefont
		{Wilde}}]{Bauml2019}%
	\BibitemOpen
	\bibfield  {author} {\bibinfo {author} {\bibfnamefont {S.}~\bibnamefont
			{B{\"a}uml}}, \bibinfo {author} {\bibfnamefont {S.}~\bibnamefont {Das}},
		\bibinfo {author} {\bibfnamefont {X.}~\bibnamefont {Wang}},\ and\ \bibinfo
		{author} {\bibfnamefont {M.~M.}\ \bibnamefont {Wilde}},\ }\bibfield  {title}
	{\bibinfo {title} {Resource theory of entanglement for bipartite quantum
			channels},\ }\href {https://arxiv.org/abs/1907.04181} {\bibfield  {journal}
		{\bibinfo  {journal} {arXiv:1907.04181}\ } (\bibinfo {year}
		{2019})}\BibitemShut {NoStop}%
	\bibitem [{\citenamefont {Fang}\ \emph
		{et~al.}(2019{\natexlab{b}})\citenamefont {Fang}, \citenamefont {Wang},
		\citenamefont {Tomamichel},\ and\ \citenamefont {Berta}}]{fang2019quantum}%
	\BibitemOpen
	\bibfield  {author} {\bibinfo {author} {\bibfnamefont {K.}~\bibnamefont
			{Fang}}, \bibinfo {author} {\bibfnamefont {X.}~\bibnamefont {Wang}}, \bibinfo
		{author} {\bibfnamefont {M.}~\bibnamefont {Tomamichel}},\ and\ \bibinfo
		{author} {\bibfnamefont {M.}~\bibnamefont {Berta}},\ }\bibfield  {title}
	{\bibinfo {title} {Quantum channel simulation and the channel’s smooth
			max-information},\ }\href {https://doi.org/10.1109/TIT.2019.2943858}
	{\bibfield  {journal} {\bibinfo  {journal} {IEEE Trans. Inf. Theory}\
		}\textbf {\bibinfo {volume} {66}},\ \bibinfo {pages} {2129} (\bibinfo {year}
		{2019}{\natexlab{b}})}\BibitemShut {NoStop}%
	\bibitem [{\citenamefont {Liu}\ and\ \citenamefont {Yuan}(2020)}]{Liu2020}%
	\BibitemOpen
	\bibfield  {author} {\bibinfo {author} {\bibfnamefont {Y.}~\bibnamefont
			{Liu}}\ and\ \bibinfo {author} {\bibfnamefont {X.}~\bibnamefont {Yuan}},\
	}\bibfield  {title} {\bibinfo {title} {Operational resource theory of quantum
			channels},\ }\href {https://doi.org/10.1103/PhysRevResearch.2.012035}
	{\bibfield  {journal} {\bibinfo  {journal} {Phys. Rev. Research}\ }\textbf
		{\bibinfo {volume} {2}},\ \bibinfo {pages} {012035(R)} (\bibinfo {year}
		{2020})}\BibitemShut {NoStop}%
	\bibitem [{\citenamefont {Wang}\ \emph
		{et~al.}(2019{\natexlab{a}})\citenamefont {Wang}, \citenamefont {Wilde},\
		and\ \citenamefont {Su}}]{Wang2019a}%
	\BibitemOpen
	\bibfield  {author} {\bibinfo {author} {\bibfnamefont {X.}~\bibnamefont
			{Wang}}, \bibinfo {author} {\bibfnamefont {M.~M.}\ \bibnamefont {Wilde}},\
		and\ \bibinfo {author} {\bibfnamefont {Y.}~\bibnamefont {Su}},\ }\bibfield
	{title} {\bibinfo {title} {Quantifying the magic of quantum channels},\
	}\href {https://doi.org/10.1088/1367-2630/ab451d} {\bibfield  {journal}
		{\bibinfo  {journal} {New J. Phys.}\ }\textbf {\bibinfo {volume} {21}},\
		\bibinfo {pages} {103002} (\bibinfo {year} {2019}{\natexlab{a}})}\BibitemShut
	{NoStop}%
	\bibitem [{\citenamefont {Liu}\ and\ \citenamefont {Winter}(2019)}]{Liu2019}%
	\BibitemOpen
	\bibfield  {author} {\bibinfo {author} {\bibfnamefont {Z.-W.}\ \bibnamefont
			{Liu}}\ and\ \bibinfo {author} {\bibfnamefont {A.}~\bibnamefont {Winter}},\
	}\bibfield  {title} {\bibinfo {title} {Resource theories of quantum channels
			and the universal role of resource erasure},\ }\href
	{https://arxiv.org/abs/1904.04201} {\bibfield  {journal} {\bibinfo  {journal}
			{arXiv:1904.04201}\ } (\bibinfo {year} {2019})}\BibitemShut {NoStop}%
	\bibitem [{\citenamefont {Gour}\ and\ \citenamefont
		{Winter}(2019)}]{Gour2019a}%
	\BibitemOpen
	\bibfield  {author} {\bibinfo {author} {\bibfnamefont {G.}~\bibnamefont
			{Gour}}\ and\ \bibinfo {author} {\bibfnamefont {A.}~\bibnamefont {Winter}},\
	}\bibfield  {title} {\bibinfo {title} {How to quantify a dynamical quantum
			resource},\ }\href {https://doi.org/10.1103/PhysRevLett.123.150401}
	{\bibfield  {journal} {\bibinfo  {journal} {Phys. Rev. Lett.}\ }\textbf
		{\bibinfo {volume} {123}},\ \bibinfo {pages} {150401} (\bibinfo {year}
		{2019})}\BibitemShut {NoStop}%
	\bibitem [{\citenamefont {Takagi}\ and\ \citenamefont
		{Regula}(2019)}]{Takagi2019}%
	\BibitemOpen
	\bibfield  {author} {\bibinfo {author} {\bibfnamefont {R.}~\bibnamefont
			{Takagi}}\ and\ \bibinfo {author} {\bibfnamefont {B.}~\bibnamefont
			{Regula}},\ }\bibfield  {title} {\bibinfo {title} {General resource theories
			in quantum mechanics and beyond: Operational characterization via
			discrimination tasks},\ }\href {https://doi.org/10.1103/PhysRevX.9.031053}
	{\bibfield  {journal} {\bibinfo  {journal} {Phys. Rev. X}\ }\textbf {\bibinfo
			{volume} {9}},\ \bibinfo {pages} {031053} (\bibinfo {year}
		{2019})}\BibitemShut {NoStop}%
	\bibitem [{\citenamefont {Gour}\ and\ \citenamefont
		{Scandolo}(2020)}]{Gour2020b}%
	\BibitemOpen
	\bibfield  {author} {\bibinfo {author} {\bibfnamefont {G.}~\bibnamefont
			{Gour}}\ and\ \bibinfo {author} {\bibfnamefont {C.~M.}\ \bibnamefont
			{Scandolo}},\ }\bibfield  {title} {\bibinfo {title} {Dynamical
			entanglement},\ }\href {https://doi.org/10.1103/PhysRevLett.125.180505}
	{\bibfield  {journal} {\bibinfo  {journal} {Phys. Rev. Lett.}\ }\textbf
		{\bibinfo {volume} {125}},\ \bibinfo {pages} {180505} (\bibinfo {year}
		{2020})}\BibitemShut {NoStop}%
	\bibitem [{\citenamefont {Gour}\ and\ \citenamefont
		{Scandolo}(2021)}]{Gour2021}%
	\BibitemOpen
	\bibfield  {author} {\bibinfo {author} {\bibfnamefont {G.}~\bibnamefont
			{Gour}}\ and\ \bibinfo {author} {\bibfnamefont {C.~M.}\ \bibnamefont
			{Scandolo}},\ }\bibfield  {title} {\bibinfo {title} {Entanglement of a
			bipartite channel},\ }\href {https://doi.org/10.1103/PhysRevA.103.062422}
	{\bibfield  {journal} {\bibinfo  {journal} {Phys. Rev. A}\ }\textbf {\bibinfo
			{volume} {103}},\ \bibinfo {pages} {062422} (\bibinfo {year}
		{2021})}\BibitemShut {NoStop}%
	\bibitem [{\citenamefont {Fang}\ and\ \citenamefont {Liu}(2022)}]{Fang2022}%
	\BibitemOpen
	\bibfield  {author} {\bibinfo {author} {\bibfnamefont {K.}~\bibnamefont
			{Fang}}\ and\ \bibinfo {author} {\bibfnamefont {Z.-W.}\ \bibnamefont {Liu}},\
	}\bibfield  {title} {\bibinfo {title} {No-go theorems for quantum resource
			purification: New approach and channel theory},\ }\href
	{https://doi.org/10.1103/PRXQuantum.3.010337} {\bibfield  {journal} {\bibinfo
			{journal} {PRX Quantum}\ }\textbf {\bibinfo {volume} {3}},\ \bibinfo {pages}
		{010337} (\bibinfo {year} {2022})}\BibitemShut {NoStop}%
	\bibitem [{\citenamefont {Li}\ \emph {et~al.}(2021)\citenamefont {Li},
		\citenamefont {Wang}, \citenamefont {Wu},\ and\ \citenamefont
		{Yuan}}]{Li2021}%
	\BibitemOpen
	\bibfield  {author} {\bibinfo {author} {\bibfnamefont {C.}~\bibnamefont
			{Li}}, \bibinfo {author} {\bibfnamefont {B.-H.}\ \bibnamefont {Wang}},
		\bibinfo {author} {\bibfnamefont {B.}~\bibnamefont {Wu}},\ and\ \bibinfo
		{author} {\bibfnamefont {X.}~\bibnamefont {Yuan}},\ }\bibfield  {title}
	{\bibinfo {title} {Detecting entanglement of quantum channels},\ }\href
	{https://doi.org/10.1088/1572-9494/ac1da1} {\bibfield  {journal} {\bibinfo
			{journal} {Commun. Theor. Phys.}\ }\textbf {\bibinfo {volume} {73}},\
		\bibinfo {pages} {115101} (\bibinfo {year} {2021})}\BibitemShut {NoStop}%
	\bibitem [{\citenamefont {Zhou}\ \emph {et~al.}(2022)\citenamefont {Zhou},
		\citenamefont {Gao},\ and\ \citenamefont {Yan}}]{Zhou2022}%
	\BibitemOpen
	\bibfield  {author} {\bibinfo {author} {\bibfnamefont {H.}~\bibnamefont
			{Zhou}}, \bibinfo {author} {\bibfnamefont {T.}~\bibnamefont {Gao}},\ and\
		\bibinfo {author} {\bibfnamefont {F.}~\bibnamefont {Yan}},\ }\bibfield
	{title} {\bibinfo {title} {Quantifying the entanglement of quantum channel},\
	}\href {https://doi.org/10.1103/PhysRevResearch.4.013200} {\bibfield
		{journal} {\bibinfo  {journal} {Phys. Rev. Res.}\ }\textbf {\bibinfo {volume}
			{4}},\ \bibinfo {pages} {013200} (\bibinfo {year} {2022})}\BibitemShut
	{NoStop}%
	\bibitem [{\citenamefont {Rains}(2001)}]{Rains2001}%
	\BibitemOpen
	\bibfield  {author} {\bibinfo {author} {\bibfnamefont {E.}~\bibnamefont
			{Rains}},\ }\bibfield  {title} {\bibinfo {title} {A semidefinite program for
			distillable entanglement},\ }\href {https://doi.org/10.1109/18.959270}
	{\bibfield  {journal} {\bibinfo  {journal} {IEEE Trans. Inf. Theory}\
		}\textbf {\bibinfo {volume} {47}},\ \bibinfo {pages} {2921} (\bibinfo {year}
		{2001})}\BibitemShut {NoStop}%
	\bibitem [{\citenamefont {Audenaert}\ \emph {et~al.}(2003)\citenamefont
		{Audenaert}, \citenamefont {Plenio},\ and\ \citenamefont
		{Eisert}}]{Audenaert2003}%
	\BibitemOpen
	\bibfield  {author} {\bibinfo {author} {\bibfnamefont {K.}~\bibnamefont
			{Audenaert}}, \bibinfo {author} {\bibfnamefont {M.~B.}\ \bibnamefont
			{Plenio}},\ and\ \bibinfo {author} {\bibfnamefont {J.}~\bibnamefont
			{Eisert}},\ }\bibfield  {title} {\bibinfo {title} {Entanglement cost under
			positive-partial-transpose-preserving operations},\ }\href
	{https://doi.org/10.1103/PhysRevLett.90.027901} {\bibfield  {journal}
		{\bibinfo  {journal} {Phys. Rev. Lett.}\ }\textbf {\bibinfo {volume} {90}},\
		\bibinfo {pages} {027901} (\bibinfo {year} {2003})}\BibitemShut {NoStop}%
	\bibitem [{\citenamefont {Das}\ \emph {et~al.}(2020)\citenamefont {Das},
		\citenamefont {B\"auml},\ and\ \citenamefont {Wilde}}]{Das2020}%
	\BibitemOpen
	\bibfield  {author} {\bibinfo {author} {\bibfnamefont {S.}~\bibnamefont
			{Das}}, \bibinfo {author} {\bibfnamefont {S.}~\bibnamefont {B\"auml}},\ and\
		\bibinfo {author} {\bibfnamefont {M.~M.}\ \bibnamefont {Wilde}},\ }\bibfield
	{title} {\bibinfo {title} {Entanglement and secret-key-agreement capacities
			of bipartite quantum interactions and read-only memory devices},\ }\href
	{https://doi.org/10.1103/PhysRevA.101.012344} {\bibfield  {journal} {\bibinfo
			{journal} {Phys. Rev. A}\ }\textbf {\bibinfo {volume} {101}},\ \bibinfo
		{pages} {012344} (\bibinfo {year} {2020})}\BibitemShut {NoStop}%
	\bibitem [{\citenamefont {Wang}\ and\ \citenamefont {Wilde}(2023)}]{Wang2023}%
	\BibitemOpen
	\bibfield  {author} {\bibinfo {author} {\bibfnamefont {X.}~\bibnamefont
			{Wang}}\ and\ \bibinfo {author} {\bibfnamefont {M.~M.}\ \bibnamefont
			{Wilde}},\ }\bibfield  {title} {\bibinfo {title} {Exact entanglement cost of
			quantum states and channels under positive-partial-transpose-preserving
			operations},\ }\href {https://doi.org/10.1103/PhysRevA.107.012429} {\bibfield
		{journal} {\bibinfo  {journal} {Phys. Rev. A}\ }\textbf {\bibinfo {volume}
			{107}},\ \bibinfo {pages} {012429} (\bibinfo {year} {2023})}\BibitemShut
	{NoStop}%
	\bibitem [{\citenamefont {Kim}\ \emph {et~al.}(2021)\citenamefont {Kim},
		\citenamefont {Lee}, \citenamefont {Lami},\ and\ \citenamefont
		{Plenio}}]{Kim2021}%
	\BibitemOpen
	\bibfield  {author} {\bibinfo {author} {\bibfnamefont {H.-J.}\ \bibnamefont
			{Kim}}, \bibinfo {author} {\bibfnamefont {S.}~\bibnamefont {Lee}}, \bibinfo
		{author} {\bibfnamefont {L.}~\bibnamefont {Lami}},\ and\ \bibinfo {author}
		{\bibfnamefont {M.~B.}\ \bibnamefont {Plenio}},\ }\bibfield  {title}
	{\bibinfo {title} {One-shot manipulation of entanglement for quantum
			channels},\ }\href {https://doi.org/10.1109/TIT.2021.3079938} {\bibfield
		{journal} {\bibinfo  {journal} {IEEE Trans. Inf. Theory}\ }\textbf {\bibinfo
			{volume} {67}},\ \bibinfo {pages} {5339} (\bibinfo {year}
		{2021})}\BibitemShut {NoStop}%
	\bibitem [{\citenamefont {Kim}\ and\ \citenamefont {Lee}(2021)}]{Kim2021b}%
	\BibitemOpen
	\bibfield  {author} {\bibinfo {author} {\bibfnamefont {H.-J.}\ \bibnamefont
			{Kim}}\ and\ \bibinfo {author} {\bibfnamefont {S.}~\bibnamefont {Lee}},\
	}\bibfield  {title} {\bibinfo {title} {One-shot static entanglement cost of
			bipartite quantum channels},\ }\href
	{https://doi.org/10.1103/PhysRevA.103.062415} {\bibfield  {journal} {\bibinfo
			{journal} {Phys. Rev. A}\ }\textbf {\bibinfo {volume} {103}},\ \bibinfo
		{pages} {062415} (\bibinfo {year} {2021})}\BibitemShut {NoStop}%
	\bibitem [{\citenamefont {Regula}\ and\ \citenamefont
		{Takagi}(2021)}]{Regula2021}%
	\BibitemOpen
	\bibfield  {author} {\bibinfo {author} {\bibfnamefont {B.}~\bibnamefont
			{Regula}}\ and\ \bibinfo {author} {\bibfnamefont {R.}~\bibnamefont
			{Takagi}},\ }\bibfield  {title} {\bibinfo {title} {Fundamental limitations on
			distillation of quantum channel resources},\ }\href
	{https://doi.org/10.1038/s41467-021-24699-0} {\bibfield  {journal} {\bibinfo
			{journal} {Nat. Commun.}\ }\textbf {\bibinfo {volume} {12}},\ \bibinfo
		{pages} {4411} (\bibinfo {year} {2021})}\BibitemShut {NoStop}%
	\bibitem [{\citenamefont {Takagi}\ \emph {et~al.}(2022)\citenamefont {Takagi},
		\citenamefont {Regula},\ and\ \citenamefont {Wilde}}]{Takagi2022}%
	\BibitemOpen
	\bibfield  {author} {\bibinfo {author} {\bibfnamefont {R.}~\bibnamefont
			{Takagi}}, \bibinfo {author} {\bibfnamefont {B.}~\bibnamefont {Regula}},\
		and\ \bibinfo {author} {\bibfnamefont {M.~M.}\ \bibnamefont {Wilde}},\
	}\bibfield  {title} {\bibinfo {title} {One-shot yield-cost relations in
			general quantum resource theories},\ }\href
	{https://doi.org/10.1103/PRXQuantum.3.010348} {\bibfield  {journal} {\bibinfo
			{journal} {PRX Quantum}\ }\textbf {\bibinfo {volume} {3}},\ \bibinfo {pages}
		{010348} (\bibinfo {year} {2022})}\BibitemShut {NoStop}%
	\bibitem [{\citenamefont {Chiribella}\ \emph {et~al.}(2008)\citenamefont
		{Chiribella}, \citenamefont {D'Ariano},\ and\ \citenamefont
		{Perinotti}}]{Chiribella2008}%
	\BibitemOpen
	\bibfield  {author} {\bibinfo {author} {\bibfnamefont {G.}~\bibnamefont
			{Chiribella}}, \bibinfo {author} {\bibfnamefont {G.~M.}\ \bibnamefont
			{D'Ariano}},\ and\ \bibinfo {author} {\bibfnamefont {P.}~\bibnamefont
			{Perinotti}},\ }\bibfield  {title} {\bibinfo {title} {Transforming quantum
			operations: Quantum supermaps},\ }\href
	{https://doi.org/10.1209/0295-5075/83/30004} {\bibfield  {journal} {\bibinfo
			{journal} {Europhys. Lett.}\ }\textbf {\bibinfo {volume} {83}},\ \bibinfo
		{pages} {30004} (\bibinfo {year} {2008})}\BibitemShut {NoStop}%
	\bibitem [{\citenamefont {Marshall}\ \emph {et~al.}(2011)\citenamefont
		{Marshall}, \citenamefont {Olkin},\ and\ \citenamefont
		{Arnold}}]{Marshall2011}%
	\BibitemOpen
	\bibfield  {author} {\bibinfo {author} {\bibfnamefont {A.~W.}\ \bibnamefont
			{Marshall}}, \bibinfo {author} {\bibfnamefont {I.}~\bibnamefont {Olkin}},\
		and\ \bibinfo {author} {\bibfnamefont {B.~C.}\ \bibnamefont {Arnold}},\
	}\href {https://doi.org/10.1007/978-0-387-68276-1} {\emph {\bibinfo {title}
			{Inequalities: Theory of Majorization and Its Applications}}},\ Springer
	Series in Statistics\ (\bibinfo  {publisher} {Springer},\ \bibinfo {year}
	{2011})\BibitemShut {NoStop}%
	\bibitem [{\citenamefont {Horodecki}\ \emph {et~al.}(2018)\citenamefont
		{Horodecki}, \citenamefont {Oppenheim},\ and\ \citenamefont
		{Sparaciari}}]{Horodecki2018}%
	\BibitemOpen
	\bibfield  {author} {\bibinfo {author} {\bibfnamefont {M.}~\bibnamefont
			{Horodecki}}, \bibinfo {author} {\bibfnamefont {J.}~\bibnamefont
			{Oppenheim}},\ and\ \bibinfo {author} {\bibfnamefont {C.}~\bibnamefont
			{Sparaciari}},\ }\bibfield  {title} {\bibinfo {title} {Extremal distributions
			under approximate majorization},\ }\href
	{https://doi.org/10.1088/1751-8121/aac87c} {\bibfield  {journal} {\bibinfo
			{journal} {J. Phys. A}\ }\textbf {\bibinfo {volume} {51}},\ \bibinfo {pages}
		{305301} (\bibinfo {year} {2018})}\BibitemShut {NoStop}%
	\bibitem [{\citenamefont {Tomamichel}\ \emph {et~al.}(2010)\citenamefont
		{Tomamichel}, \citenamefont {Colbeck},\ and\ \citenamefont
		{Renner}}]{Tomamichel2010}%
	\BibitemOpen
	\bibfield  {author} {\bibinfo {author} {\bibfnamefont {M.}~\bibnamefont
			{Tomamichel}}, \bibinfo {author} {\bibfnamefont {R.}~\bibnamefont
			{Colbeck}},\ and\ \bibinfo {author} {\bibfnamefont {R.}~\bibnamefont
			{Renner}},\ }\bibfield  {title} {\bibinfo {title} {Duality between smooth
			min- and max-entropies},\ }\href {https://doi.org/10.1109/TIT.2010.2054130}
	{\bibfield  {journal} {\bibinfo  {journal} {IEEE Trans. Inf. Theory}\
		}\textbf {\bibinfo {volume} {56}},\ \bibinfo {pages} {4674} (\bibinfo {year}
		{2010})}\BibitemShut {NoStop}%
	\bibitem [{\citenamefont {Gour}\ and\ \citenamefont
		{Tomamichel}(2020)}]{Gour2020}%
	\BibitemOpen
	\bibfield  {author} {\bibinfo {author} {\bibfnamefont {G.}~\bibnamefont
			{Gour}}\ and\ \bibinfo {author} {\bibfnamefont {M.}~\bibnamefont
			{Tomamichel}},\ }\bibfield  {title} {\bibinfo {title} {Optimal extensions of
			resource measures and their applications},\ }\href
	{https://doi.org/10.1103/PhysRevA.102.062401} {\bibfield  {journal} {\bibinfo
			{journal} {Phys. Rev. A}\ ,\ \bibinfo {pages} {062401}} (\bibinfo {year}
		{2020})}\BibitemShut {NoStop}%
	\bibitem [{\citenamefont {Vidal}\ \emph {et~al.}(2000)\citenamefont {Vidal},
		\citenamefont {Jonathan},\ and\ \citenamefont {Nielsen}}]{Vidal2000}%
	\BibitemOpen
	\bibfield  {author} {\bibinfo {author} {\bibfnamefont {G.}~\bibnamefont
			{Vidal}}, \bibinfo {author} {\bibfnamefont {D.}~\bibnamefont {Jonathan}},\
		and\ \bibinfo {author} {\bibfnamefont {M.~A.}\ \bibnamefont {Nielsen}},\
	}\bibfield  {title} {\bibinfo {title} {Approximate transformations and robust
			manipulation of bipartite pure-state entanglement},\ }\href
	{https://doi.org/10.1103/PhysRevA.62.012304} {\bibfield  {journal} {\bibinfo
			{journal} {Phys. Rev. A}\ }\textbf {\bibinfo {volume} {62}},\ \bibinfo
		{pages} {012304} (\bibinfo {year} {2000})}\BibitemShut {NoStop}%
	\bibitem [{\citenamefont {Chubb}\ \emph {et~al.}(2018)\citenamefont {Chubb},
		\citenamefont {Tomamichel},\ and\ \citenamefont
		{Korzekwa}}]{chubb2018beyond}%
	\BibitemOpen
	\bibfield  {author} {\bibinfo {author} {\bibfnamefont {C.~T.}\ \bibnamefont
			{Chubb}}, \bibinfo {author} {\bibfnamefont {M.}~\bibnamefont {Tomamichel}},\
		and\ \bibinfo {author} {\bibfnamefont {K.}~\bibnamefont {Korzekwa}},\
	}\bibfield  {title} {\bibinfo {title} {Beyond the thermodynamic limit:
			finite-size corrections to state interconversion rates},\ }\href
	{https://doi.org/https://doi.org/10.48550/arXiv.1711.01193} {\bibfield
		{journal} {\bibinfo  {journal} {Quantum}\ }\textbf {\bibinfo {volume} {2}},\
		\bibinfo {pages} {108} (\bibinfo {year} {2018})}\BibitemShut {NoStop}%
	\bibitem [{\citenamefont {Terhal}\ and\ \citenamefont
		{Horodecki}(2000)}]{Terhal2000}%
	\BibitemOpen
	\bibfield  {author} {\bibinfo {author} {\bibfnamefont {B.~M.}\ \bibnamefont
			{Terhal}}\ and\ \bibinfo {author} {\bibfnamefont {P.}~\bibnamefont
			{Horodecki}},\ }\bibfield  {title} {\bibinfo {title} {Schmidt number for
			density matrices},\ }\href {https://doi.org/10.1103/PhysRevA.61.040301}
	{\bibfield  {journal} {\bibinfo  {journal} {Phys. Rev. A}\ }\textbf {\bibinfo
			{volume} {61}},\ \bibinfo {pages} {040301} (\bibinfo {year}
		{2000})}\BibitemShut {NoStop}%
	\bibitem [{\citenamefont {Hayashi}(2006)}]{Hayashi2006}%
	\BibitemOpen
	\bibfield  {author} {\bibinfo {author} {\bibfnamefont {M.}~\bibnamefont
			{Hayashi}},\ }\href@noop {} {\emph {\bibinfo {title} {Quantum information}}}\
	(\bibinfo  {publisher} {Springer},\ \bibinfo {year} {2006})\BibitemShut
	{NoStop}%
	\bibitem [{\citenamefont {Yue}\ and\ \citenamefont
		{Chitambar}(2019)}]{Yue2019}%
	\BibitemOpen
	\bibfield  {author} {\bibinfo {author} {\bibfnamefont {Q.}~\bibnamefont
			{Yue}}\ and\ \bibinfo {author} {\bibfnamefont {E.}~\bibnamefont
			{Chitambar}},\ }\bibfield  {title} {\bibinfo {title} {The zero-error
			entanglement cost is highly non-additive},\ }\href
	{https://doi.org/10.1063/1.5087815} {\bibfield  {journal} {\bibinfo
			{journal} {J. Math. Phys.}\ }\textbf {\bibinfo {volume} {60}},\ \bibinfo
		{pages} {112204} (\bibinfo {year} {2019})}\BibitemShut {NoStop}%
	\bibitem [{\citenamefont {Vidal}(1999)}]{Vidal1999}%
	\BibitemOpen
	\bibfield  {author} {\bibinfo {author} {\bibfnamefont {G.}~\bibnamefont
			{Vidal}},\ }\bibfield  {title} {\bibinfo {title} {Entanglement of pure states
			for a single copy},\ }\href {https://doi.org/10.1103/PhysRevLett.83.1046}
	{\bibfield  {journal} {\bibinfo  {journal} {Phys. Rev. Lett.}\ }\textbf
		{\bibinfo {volume} {83}},\ \bibinfo {pages} {1046} (\bibinfo {year}
		{1999})}\BibitemShut {NoStop}%
	\bibitem [{\citenamefont {Bowen}\ and\ \citenamefont
		{Datta}(2007)}]{Bowen2007}%
	\BibitemOpen
	\bibfield  {author} {\bibinfo {author} {\bibfnamefont {G.}~\bibnamefont
			{Bowen}}\ and\ \bibinfo {author} {\bibfnamefont {N.}~\bibnamefont {Datta}},\
	}\bibfield  {title} {\bibinfo {title} {Entanglement cost for sequences of
			arbitrary quantum states},\ }\href {https://doi.org/10.48550/arXiv.0704.1957}
	{\bibfield  {journal} {\bibinfo  {journal} {arXiv:0704.1957}\ } (\bibinfo
		{year} {2007})}\BibitemShut {NoStop}%
	\bibitem [{Note1()}]{Note1}%
	\BibitemOpen
	\bibinfo {note} {Note that due to the possibility of degenerate eigenvalues,
		$\Pi _m$ and thus $\sigma ^{(m)}$ are not necessarily unique. However, for
		our purpose, this is irrelevant.}\BibitemShut {Stop}%
	\bibitem [{\citenamefont {Datta}(2009)}]{Datta2009}%
	\BibitemOpen
	\bibfield  {author} {\bibinfo {author} {\bibfnamefont {N.}~\bibnamefont
			{Datta}},\ }\bibfield  {title} {\bibinfo {title} {Min- and max-relative
			entropies and a new entanglement monotone},\ }\href
	{https://doi.org/10.1109/TIT.2009.2018325} {\bibfield  {journal} {\bibinfo
			{journal} {IEEE Trans. Inf. Theory}\ }\textbf {\bibinfo {volume} {55}},\
		\bibinfo {pages} {2816} (\bibinfo {year} {2009})}\BibitemShut {NoStop}%
	\bibitem [{\citenamefont {Regula}\ \emph {et~al.}(2018)\citenamefont {Regula},
		\citenamefont {Fang}, \citenamefont {Wang},\ and\ \citenamefont
		{Adesso}}]{Regula2018}%
	\BibitemOpen
	\bibfield  {author} {\bibinfo {author} {\bibfnamefont {B.}~\bibnamefont
			{Regula}}, \bibinfo {author} {\bibfnamefont {K.}~\bibnamefont {Fang}},
		\bibinfo {author} {\bibfnamefont {X.}~\bibnamefont {Wang}},\ and\ \bibinfo
		{author} {\bibfnamefont {G.}~\bibnamefont {Adesso}},\ }\bibfield  {title}
	{\bibinfo {title} {One-shot coherence distillation},\ }\href
	{https://doi.org/10.1103/PhysRevLett.121.010401} {\bibfield  {journal}
		{\bibinfo  {journal} {Phys. Rev. Lett.}\ }\textbf {\bibinfo {volume} {121}},\
		\bibinfo {pages} {010401} (\bibinfo {year} {2018})}\BibitemShut {NoStop}%
	\bibitem [{\citenamefont {Wang}\ \emph
		{et~al.}(2019{\natexlab{b}})\citenamefont {Wang}, \citenamefont {Fang},\ and\
		\citenamefont {Tomamichel}}]{wang2019converse}%
	\BibitemOpen
	\bibfield  {author} {\bibinfo {author} {\bibfnamefont {X.}~\bibnamefont
			{Wang}}, \bibinfo {author} {\bibfnamefont {K.}~\bibnamefont {Fang}},\ and\
		\bibinfo {author} {\bibfnamefont {M.}~\bibnamefont {Tomamichel}},\ }\bibfield
	{title} {\bibinfo {title} {On converse bounds for classical communication
			over quantum channels},\ }\href {https://doi.org/10.1109/TIT.2019.2898656}
	{\bibfield  {journal} {\bibinfo  {journal} {IEEE Trans. Inf. Theory}\
		}\textbf {\bibinfo {volume} {65}},\ \bibinfo {pages} {4609} (\bibinfo {year}
		{2019}{\natexlab{b}})}\BibitemShut {NoStop}%
	\bibitem [{\citenamefont {Leditzky}\ \emph {et~al.}(2018)\citenamefont
		{Leditzky}, \citenamefont {Kaur}, \citenamefont {Datta},\ and\ \citenamefont
		{Wilde}}]{leditzky2018approaches}%
	\BibitemOpen
	\bibfield  {author} {\bibinfo {author} {\bibfnamefont {F.}~\bibnamefont
			{Leditzky}}, \bibinfo {author} {\bibfnamefont {E.}~\bibnamefont {Kaur}},
		\bibinfo {author} {\bibfnamefont {N.}~\bibnamefont {Datta}},\ and\ \bibinfo
		{author} {\bibfnamefont {M.~M.}\ \bibnamefont {Wilde}},\ }\bibfield  {title}
	{\bibinfo {title} {Approaches for approximate additivity of the holevo
			information of quantum channels},\ }\href
	{https://doi.org/10.1103/PhysRevA.97.012332} {\bibfield  {journal} {\bibinfo
			{journal} {Phys. Rev. A}\ }\textbf {\bibinfo {volume} {97}},\ \bibinfo
		{pages} {012332} (\bibinfo {year} {2018})}\BibitemShut {NoStop}%
	\bibitem [{\citenamefont {Christandl}\ \emph {et~al.}(2009)\citenamefont
		{Christandl}, \citenamefont {K\"onig},\ and\ \citenamefont
		{Renner}}]{Christandl2009}%
	\BibitemOpen
	\bibfield  {author} {\bibinfo {author} {\bibfnamefont {M.}~\bibnamefont
			{Christandl}}, \bibinfo {author} {\bibfnamefont {R.}~\bibnamefont
			{K\"onig}},\ and\ \bibinfo {author} {\bibfnamefont {R.}~\bibnamefont
			{Renner}},\ }\bibfield  {title} {\bibinfo {title} {Postselection technique
			for quantum channels with applications to quantum cryptography},\ }\href
	{https://doi.org/10.1103/PhysRevLett.102.020504} {\bibfield  {journal}
		{\bibinfo  {journal} {Phys. Rev. Lett.}\ }\textbf {\bibinfo {volume} {102}},\
		\bibinfo {pages} {020504} (\bibinfo {year} {2009})}\BibitemShut {NoStop}%
	\bibitem [{\citenamefont {Berta}\ \emph {et~al.}(2013)\citenamefont {Berta},
		\citenamefont {Brandão}, \citenamefont {Christandl},\ and\ \citenamefont
		{Wehner}}]{Berta2013}%
	\BibitemOpen
	\bibfield  {author} {\bibinfo {author} {\bibfnamefont {M.}~\bibnamefont
			{Berta}}, \bibinfo {author} {\bibfnamefont {F.~G. S.~L.}\ \bibnamefont
			{Brandão}}, \bibinfo {author} {\bibfnamefont {M.}~\bibnamefont
			{Christandl}},\ and\ \bibinfo {author} {\bibfnamefont {S.}~\bibnamefont
			{Wehner}},\ }\bibfield  {title} {\bibinfo {title} {Entanglement cost of
			quantum channels},\ }\href {https://doi.org/10.1109/TIT.2013.2268533}
	{\bibfield  {journal} {\bibinfo  {journal} {IEEE Trans. Inf. Theory}\
		}\textbf {\bibinfo {volume} {59}},\ \bibinfo {pages} {6779} (\bibinfo {year}
		{2013})}\BibitemShut {NoStop}%
	\bibitem [{\citenamefont {Chitambar}\ \emph {et~al.}(2014)\citenamefont
		{Chitambar}, \citenamefont {Leung}, \citenamefont {Man{\v{c}}inska},
		\citenamefont {Ozols},\ and\ \citenamefont {Winter}}]{Chitambar2014}%
	\BibitemOpen
	\bibfield  {author} {\bibinfo {author} {\bibfnamefont {E.}~\bibnamefont
			{Chitambar}}, \bibinfo {author} {\bibfnamefont {D.}~\bibnamefont {Leung}},
		\bibinfo {author} {\bibfnamefont {L.}~\bibnamefont {Man{\v{c}}inska}},
		\bibinfo {author} {\bibfnamefont {M.}~\bibnamefont {Ozols}},\ and\ \bibinfo
		{author} {\bibfnamefont {A.}~\bibnamefont {Winter}},\ }\bibfield  {title}
	{\bibinfo {title} {Everything you always wanted to know about {LOCC} (but
			were afraid to ask)},\ }\href {https://doi.org/10.1007/s00220-014-1953-9}
	{\bibfield  {journal} {\bibinfo  {journal} {Commun. Math. Phys.}\ }\textbf
		{\bibinfo {volume} {328}},\ \bibinfo {pages} {303} (\bibinfo {year}
		{2014})}\BibitemShut {NoStop}%
	\bibitem [{\citenamefont {Schumacher}\ and\ \citenamefont
		{Nielsen}(1996)}]{Schumacher1996}%
	\BibitemOpen
	\bibfield  {author} {\bibinfo {author} {\bibfnamefont {B.}~\bibnamefont
			{Schumacher}}\ and\ \bibinfo {author} {\bibfnamefont {M.~A.}\ \bibnamefont
			{Nielsen}},\ }\bibfield  {title} {\bibinfo {title} {Quantum data processing
			and error correction},\ }\href {https://doi.org/10.1103/PhysRevA.54.2629}
	{\bibfield  {journal} {\bibinfo  {journal} {Phys. Rev. A}\ }\textbf {\bibinfo
			{volume} {54}},\ \bibinfo {pages} {2629} (\bibinfo {year}
		{1996})}\BibitemShut {NoStop}%
	\bibitem [{\citenamefont {Buscemi}\ and\ \citenamefont
		{Datta}(2010)}]{Buscemi2010}%
	\BibitemOpen
	\bibfield  {author} {\bibinfo {author} {\bibfnamefont {F.}~\bibnamefont
			{Buscemi}}\ and\ \bibinfo {author} {\bibfnamefont {N.}~\bibnamefont
			{Datta}},\ }\bibfield  {title} {\bibinfo {title} {{Distilling entanglement
				from arbitrary resources}},\ }\href {https://doi.org/10.1063/1.3483717}
	{\bibfield  {journal} {\bibinfo  {journal} {J. Math. Phys.}\ }\textbf
		{\bibinfo {volume} {51}},\ \bibinfo {pages} {102201} (\bibinfo {year}
		{2010})}\BibitemShut {NoStop}%
	\bibitem [{\citenamefont {Devetak}\ and\ \citenamefont
		{Winter}(2005)}]{Devetak2005}%
	\BibitemOpen
	\bibfield  {author} {\bibinfo {author} {\bibfnamefont {I.}~\bibnamefont
			{Devetak}}\ and\ \bibinfo {author} {\bibfnamefont {A.}~\bibnamefont
			{Winter}},\ }\bibfield  {title} {\bibinfo {title} {Distillation of secret key
			and entanglement from quantum states},\ }\href
	{https://doi.org/10.1098/rspa.2004.1372} {\bibfield  {journal} {\bibinfo
			{journal} {Proceedings of the Royal Society A: Mathematical, Physical and
				Engineering Sciences}\ }\textbf {\bibinfo {volume} {461}},\ \bibinfo {pages}
		{207} (\bibinfo {year} {2005})}\BibitemShut {NoStop}%
	\bibitem [{\citenamefont {Wilde}\ \emph {et~al.}(2017)\citenamefont {Wilde},
		\citenamefont {Tomamichel},\ and\ \citenamefont {Berta}}]{Wilde2017}%
	\BibitemOpen
	\bibfield  {author} {\bibinfo {author} {\bibfnamefont {M.~M.}\ \bibnamefont
			{Wilde}}, \bibinfo {author} {\bibfnamefont {M.}~\bibnamefont {Tomamichel}},\
		and\ \bibinfo {author} {\bibfnamefont {M.}~\bibnamefont {Berta}},\ }\bibfield
	{title} {\bibinfo {title} {Converse bounds for private communication over
			quantum channels},\ }\href {https://doi.org/10.1109/TIT.2017.2648825}
	{\bibfield  {journal} {\bibinfo  {journal} {IEEE Trans. Inf. Theory}\
		}\textbf {\bibinfo {volume} {63}},\ \bibinfo {pages} {1792} (\bibinfo {year}
		{2017})}\BibitemShut {NoStop}%
	\bibitem [{\citenamefont {Fang}\ and\ \citenamefont
		{Liu}(2020)}]{Fang2020nogo}%
	\BibitemOpen
	\bibfield  {author} {\bibinfo {author} {\bibfnamefont {K.}~\bibnamefont
			{Fang}}\ and\ \bibinfo {author} {\bibfnamefont {Z.-W.}\ \bibnamefont {Liu}},\
	}\bibfield  {title} {\bibinfo {title} {No-go theorems for quantum resource
			purification},\ }\href {https://doi.org/10.1103/PhysRevLett.125.060405}
	{\bibfield  {journal} {\bibinfo  {journal} {Phys. Rev. Lett.}\ }\textbf
		{\bibinfo {volume} {125}},\ \bibinfo {pages} {060405} (\bibinfo {year}
		{2020})}\BibitemShut {NoStop}%
	\bibitem [{\citenamefont {Baumgratz}\ \emph {et~al.}(2014)\citenamefont
		{Baumgratz}, \citenamefont {Cramer},\ and\ \citenamefont
		{Plenio}}]{Baumgratz2014}%
	\BibitemOpen
	\bibfield  {author} {\bibinfo {author} {\bibfnamefont {T.}~\bibnamefont
			{Baumgratz}}, \bibinfo {author} {\bibfnamefont {M.}~\bibnamefont {Cramer}},\
		and\ \bibinfo {author} {\bibfnamefont {M.~B.}\ \bibnamefont {Plenio}},\
	}\bibfield  {title} {\bibinfo {title} {Quantifying coherence},\ }\href
	{https://doi.org/10.1103/PhysRevLett.113.140401} {\bibfield  {journal}
		{\bibinfo  {journal} {Phys. Rev. Lett.}\ }\textbf {\bibinfo {volume} {113}},\
		\bibinfo {pages} {140401} (\bibinfo {year} {2014})}\BibitemShut {NoStop}%
	\bibitem [{\citenamefont {Streltsov}\ \emph {et~al.}(2017)\citenamefont
		{Streltsov}, \citenamefont {Adesso},\ and\ \citenamefont
		{Plenio}}]{Streltsov2017}%
	\BibitemOpen
	\bibfield  {author} {\bibinfo {author} {\bibfnamefont {A.}~\bibnamefont
			{Streltsov}}, \bibinfo {author} {\bibfnamefont {G.}~\bibnamefont {Adesso}},\
		and\ \bibinfo {author} {\bibfnamefont {M.~B.}\ \bibnamefont {Plenio}},\
	}\bibfield  {title} {\bibinfo {title} {Colloquium: Quantum coherence as a
			resource},\ }\href {https://doi.org/10.1103/RevModPhys.89.041003} {\bibfield
		{journal} {\bibinfo  {journal} {Rev. Mod. Phys.}\ }\textbf {\bibinfo {volume}
			{89}},\ \bibinfo {pages} {041003} (\bibinfo {year} {2017})}\BibitemShut
	{NoStop}%
	\bibitem [{\citenamefont {Gour}\ \emph {et~al.}(2015)\citenamefont {Gour},
		\citenamefont {M{\"u}ller}, \citenamefont {Narasimhachar}, \citenamefont
		{Spekkens},\ and\ \citenamefont {{Yunger Halpern}}}]{GMNSYH15}%
	\BibitemOpen
	\bibfield  {author} {\bibinfo {author} {\bibfnamefont {G.}~\bibnamefont
			{Gour}}, \bibinfo {author} {\bibfnamefont {M.~P.}\ \bibnamefont
			{M{\"u}ller}}, \bibinfo {author} {\bibfnamefont {V.}~\bibnamefont
			{Narasimhachar}}, \bibinfo {author} {\bibfnamefont {R.~W.}\ \bibnamefont
			{Spekkens}},\ and\ \bibinfo {author} {\bibfnamefont {N.}~\bibnamefont
			{{Yunger Halpern}}},\ }\bibfield  {title} {\bibinfo {title} {{The resource
				theory of informational nonequilibrium in thermodynamics}},\ }\href
	{https://doi.org/https://doi.org/10.1016/j.physrep.2015.04.003} {\bibfield
		{journal} {\bibinfo  {journal} {Phys. Rep.}\ }\textbf {\bibinfo {volume}
			{583}},\ \bibinfo {pages} {1} (\bibinfo {year} {2015})}\BibitemShut {NoStop}%
	\bibitem [{\citenamefont {Janzing}\ \emph {et~al.}(2000)\citenamefont
		{Janzing}, \citenamefont {Wocjan}, \citenamefont {Zeier}, \citenamefont
		{Geiss},\ and\ \citenamefont {Beth}}]{Janzing2000}%
	\BibitemOpen
	\bibfield  {author} {\bibinfo {author} {\bibfnamefont {D.}~\bibnamefont
			{Janzing}}, \bibinfo {author} {\bibfnamefont {P.}~\bibnamefont {Wocjan}},
		\bibinfo {author} {\bibfnamefont {R.}~\bibnamefont {Zeier}}, \bibinfo
		{author} {\bibfnamefont {R.}~\bibnamefont {Geiss}},\ and\ \bibinfo {author}
		{\bibfnamefont {T.}~\bibnamefont {Beth}},\ }\bibfield  {title} {\bibinfo
		{title} {Thermodynamic cost of reliability and low temperatures: Tightening
			{L}andauer's principle and the second law},\ }\href
	{https://doi.org/10.1023/A:1026422630734} {\bibfield  {journal} {\bibinfo
			{journal} {Int. J. Theor. Phys.}\ }\textbf {\bibinfo {volume} {39}},\
		\bibinfo {pages} {2717} (\bibinfo {year} {2000})}\BibitemShut {NoStop}%
	\bibitem [{\citenamefont {Egloff}\ \emph {et~al.}(2015)\citenamefont {Egloff},
		\citenamefont {Dahlsten}, \citenamefont {Renner},\ and\ \citenamefont
		{Vedral}}]{Egloff2015}%
	\BibitemOpen
	\bibfield  {author} {\bibinfo {author} {\bibfnamefont {D.}~\bibnamefont
			{Egloff}}, \bibinfo {author} {\bibfnamefont {O.~C.~O.}\ \bibnamefont
			{Dahlsten}}, \bibinfo {author} {\bibfnamefont {R.}~\bibnamefont {Renner}},\
		and\ \bibinfo {author} {\bibfnamefont {V.}~\bibnamefont {Vedral}},\
	}\bibfield  {title} {\bibinfo {title} {A measure of majorization emerging
			from single-shot statistical mechanics},\ }\href
	{https://doi.org/10.1088/1367-2630/17/7/073001} {\bibfield  {journal}
		{\bibinfo  {journal} {New J. Phys.}\ }\textbf {\bibinfo {volume} {17}},\
		\bibinfo {pages} {073001} (\bibinfo {year} {2015})}\BibitemShut {NoStop}%
	\bibitem [{\citenamefont {Aberg}(2013)}]{Aberg2013}%
	\BibitemOpen
	\bibfield  {author} {\bibinfo {author} {\bibfnamefont {J.}~\bibnamefont
			{Aberg}},\ }\bibfield  {title} {\bibinfo {title} {Truly work-like work
			extraction via a single-shot analysis},\ }\href
	{https://doi.org/10.1038/ncomms2712} {\bibfield  {journal} {\bibinfo
			{journal} {Nat. Commun.}\ }\textbf {\bibinfo {volume} {4}},\ \bibinfo {pages}
		{1925} (\bibinfo {year} {2013})}\BibitemShut {NoStop}%
	\bibitem [{\citenamefont {Brand\~ao}\ \emph {et~al.}(2013)\citenamefont
		{Brand\~ao}, \citenamefont {Horodecki}, \citenamefont {Oppenheim},
		\citenamefont {Renes},\ and\ \citenamefont {Spekkens}}]{BHORS13}%
	\BibitemOpen
	\bibfield  {author} {\bibinfo {author} {\bibfnamefont {F.~G. S.~L.}\
			\bibnamefont {Brand\~ao}}, \bibinfo {author} {\bibfnamefont {M.}~\bibnamefont
			{Horodecki}}, \bibinfo {author} {\bibfnamefont {J.}~\bibnamefont
			{Oppenheim}}, \bibinfo {author} {\bibfnamefont {J.~M.}\ \bibnamefont
			{Renes}},\ and\ \bibinfo {author} {\bibfnamefont {R.~W.}\ \bibnamefont
			{Spekkens}},\ }\bibfield  {title} {\bibinfo {title} {Resource theory of
			quantum states out of thermal equilibrium},\ }\href
	{https://doi.org/10.1103/PhysRevLett.111.250404} {\bibfield  {journal}
		{\bibinfo  {journal} {Phys. Rev. Lett.}\ }\textbf {\bibinfo {volume} {111}},\
		\bibinfo {pages} {250404} (\bibinfo {year} {2013})}\BibitemShut {NoStop}%
	\bibitem [{\citenamefont {Lostaglio}(2019)}]{Lostaglio2019}%
	\BibitemOpen
	\bibfield  {author} {\bibinfo {author} {\bibfnamefont {M.}~\bibnamefont
			{Lostaglio}},\ }\bibfield  {title} {\bibinfo {title} {An introductory review
			of the resource theory approach to thermodynamics},\ }\href
	{https://doi.org/10.1088/1361-6633/ab46e5} {\bibfield  {journal} {\bibinfo
			{journal} {Rep. Prog. Phys.}\ }\textbf {\bibinfo {volume} {82}},\ \bibinfo
		{pages} {114001} (\bibinfo {year} {2019})}\BibitemShut {NoStop}%
	\bibitem [{\citenamefont {Lo}\ and\ \citenamefont {Popescu}(2001)}]{Lo2001}%
	\BibitemOpen
	\bibfield  {author} {\bibinfo {author} {\bibfnamefont {H.-K.}\ \bibnamefont
			{Lo}}\ and\ \bibinfo {author} {\bibfnamefont {S.}~\bibnamefont {Popescu}},\
	}\bibfield  {title} {\bibinfo {title} {Concentrating entanglement by local
			actions: Beyond mean values},\ }\href
	{https://doi.org/10.1103/PhysRevA.63.022301} {\bibfield  {journal} {\bibinfo
			{journal} {Phys. Rev. A}\ }\textbf {\bibinfo {volume} {63}},\ \bibinfo
		{pages} {022301} (\bibinfo {year} {2001})}\BibitemShut {NoStop}%
	\bibitem [{\citenamefont {Uhlmann}(1976)}]{Uhlmann1976}%
	\BibitemOpen
	\bibfield  {author} {\bibinfo {author} {\bibfnamefont {A.}~\bibnamefont
			{Uhlmann}},\ }\bibfield  {title} {\bibinfo {title} {The ``transition
			probability'' in the state space of a {*}-algebra},\ }\href
	{https://doi.org/https://doi.org/10.1016/0034-4877(76)90060-4} {\bibfield
		{journal} {\bibinfo  {journal} {Rep. Math. Phys.}\ }\textbf {\bibinfo
			{volume} {9}},\ \bibinfo {pages} {273} (\bibinfo {year} {1976})}\BibitemShut
	{NoStop}%
	\bibitem [{\citenamefont {Wilde}\ \emph {et~al.}(2014)\citenamefont {Wilde},
		\citenamefont {Winter},\ and\ \citenamefont {Yang}}]{Wilde2014}%
	\BibitemOpen
	\bibfield  {author} {\bibinfo {author} {\bibfnamefont {M.~M.}\ \bibnamefont
			{Wilde}}, \bibinfo {author} {\bibfnamefont {A.}~\bibnamefont {Winter}},\ and\
		\bibinfo {author} {\bibfnamefont {D.}~\bibnamefont {Yang}},\ }\bibfield
	{title} {\bibinfo {title} {Strong converse for the classical capacity of
			entanglement-breaking and {H}adamard channels via a sandwiched {R{\'e}}nyi
			relative entropy},\ }\href {https://doi.org/10.1007/s00220-014-2122-x}
	{\bibfield  {journal} {\bibinfo  {journal} {Commun. Math. Phys.}\ }\textbf
		{\bibinfo {volume} {331}},\ \bibinfo {pages} {593} (\bibinfo {year}
		{2014})}\BibitemShut {NoStop}%
	\bibitem [{\citenamefont {M{\"u}ller-Lennert}\ \emph
		{et~al.}(2013)\citenamefont {M{\"u}ller-Lennert}, \citenamefont {Dupuis},
		\citenamefont {Szehr}, \citenamefont {Fehr},\ and\ \citenamefont
		{Tomamichel}}]{Mueller2013}%
	\BibitemOpen
	\bibfield  {author} {\bibinfo {author} {\bibfnamefont {M.}~\bibnamefont
			{M{\"u}ller-Lennert}}, \bibinfo {author} {\bibfnamefont {F.}~\bibnamefont
			{Dupuis}}, \bibinfo {author} {\bibfnamefont {O.}~\bibnamefont {Szehr}},
		\bibinfo {author} {\bibfnamefont {S.}~\bibnamefont {Fehr}},\ and\ \bibinfo
		{author} {\bibfnamefont {M.}~\bibnamefont {Tomamichel}},\ }\bibfield  {title}
	{\bibinfo {title} {On quantum {R}{\'e}nyi entropies: a new definition, some
			properties and several conjectures},\ }\href
	{https://doi.org/10.48550/arXiv.1306.3142} {\bibfield  {journal} {\bibinfo
			{journal} {arXiv:1306.3142}\ } (\bibinfo {year} {2013})}\BibitemShut
	{NoStop}%
	\bibitem [{\citenamefont {Sion}(1958)}]{Sion1958}%
	\BibitemOpen
	\bibfield  {author} {\bibinfo {author} {\bibfnamefont {M.}~\bibnamefont
			{Sion}},\ }\bibfield  {title} {\bibinfo {title} {On general minimax
			theorems.},\ }\href {http://dml.mathdoc.fr/item/1103040253} {\bibfield
		{journal} {\bibinfo  {journal} {Pacific J. Math.}\ }\textbf {\bibinfo
			{volume} {8}},\ \bibinfo {pages} {171} (\bibinfo {year} {1958})}\BibitemShut
	{NoStop}%
	\bibitem [{\citenamefont {Hughston}\ \emph {et~al.}(1993)\citenamefont
		{Hughston}, \citenamefont {Jozsa},\ and\ \citenamefont
		{Wootters}}]{Hughston1993}%
	\BibitemOpen
	\bibfield  {author} {\bibinfo {author} {\bibfnamefont {L.~P.}\ \bibnamefont
			{Hughston}}, \bibinfo {author} {\bibfnamefont {R.}~\bibnamefont {Jozsa}},\
		and\ \bibinfo {author} {\bibfnamefont {W.~K.}\ \bibnamefont {Wootters}},\
	}\bibfield  {title} {\bibinfo {title} {A complete classification of quantum
			ensembles having a given density matrix},\ }\href
	{https://doi.org/https://doi.org/10.1016/0375-9601(93)90880-9} {\bibfield
		{journal} {\bibinfo  {journal} {Phys. Lett. A}\ }\textbf {\bibinfo {volume}
			{183}},\ \bibinfo {pages} {14} (\bibinfo {year} {1993})}\BibitemShut
	{NoStop}%
	\bibitem [{\citenamefont {Watrous}(2018)}]{Watrous2018}%
	\BibitemOpen
	\bibfield  {author} {\bibinfo {author} {\bibfnamefont {J.}~\bibnamefont
			{Watrous}},\ }\href@noop {} {\emph {\bibinfo {title} {The theory of quantum
				information}}}\ (\bibinfo  {publisher} {Cambridge university press},\
	\bibinfo {year} {2018})\BibitemShut {NoStop}%
	\bibitem [{\citenamefont {Lieb}\ and\ \citenamefont
		{Ruskai}(1973{\natexlab{a}})}]{Lieb1973b}%
	\BibitemOpen
	\bibfield  {author} {\bibinfo {author} {\bibfnamefont {E.~H.}\ \bibnamefont
			{Lieb}}\ and\ \bibinfo {author} {\bibfnamefont {M.~B.}\ \bibnamefont
			{Ruskai}},\ }\bibfield  {title} {\bibinfo {title} {Proof of the strong
			subadditivity of quantum‐mechanical entropy},\ }\href
	{https://doi.org/10.1063/1.1666274} {\bibfield  {journal} {\bibinfo
			{journal} {J. Math. Phys.}\ }\textbf {\bibinfo {volume} {14}},\ \bibinfo
		{pages} {1938} (\bibinfo {year} {1973}{\natexlab{a}})}\BibitemShut {NoStop}%
	\bibitem [{\citenamefont {Lieb}\ and\ \citenamefont
		{Ruskai}(1973{\natexlab{b}})}]{Lieb1973a}%
	\BibitemOpen
	\bibfield  {author} {\bibinfo {author} {\bibfnamefont {E.~H.}\ \bibnamefont
			{Lieb}}\ and\ \bibinfo {author} {\bibfnamefont {M.~B.}\ \bibnamefont
			{Ruskai}},\ }\bibfield  {title} {\bibinfo {title} {A fundamental property of
			quantum-mechanical entropy},\ }\href
	{https://doi.org/10.1103/PhysRevLett.30.434} {\bibfield  {journal} {\bibinfo
			{journal} {Phys. Rev. Lett.}\ }\textbf {\bibinfo {volume} {30}},\ \bibinfo
		{pages} {434} (\bibinfo {year} {1973}{\natexlab{b}})}\BibitemShut {NoStop}%
	\bibitem [{\citenamefont {Wehrl}(1978)}]{Wehrl1978}%
	\BibitemOpen
	\bibfield  {author} {\bibinfo {author} {\bibfnamefont {A.}~\bibnamefont
			{Wehrl}},\ }\bibfield  {title} {\bibinfo {title} {General properties of
			entropy},\ }\href {https://doi.org/10.1103/RevModPhys.50.221} {\bibfield
		{journal} {\bibinfo  {journal} {Rev. Mod. Phys.}\ }\textbf {\bibinfo {volume}
			{50}},\ \bibinfo {pages} {221} (\bibinfo {year} {1978})}\BibitemShut
	{NoStop}%
	\bibitem [{\citenamefont {Cerf}\ and\ \citenamefont {Adami}(1997)}]{Cerf1997}%
	\BibitemOpen
	\bibfield  {author} {\bibinfo {author} {\bibfnamefont {N.~J.}\ \bibnamefont
			{Cerf}}\ and\ \bibinfo {author} {\bibfnamefont {C.}~\bibnamefont {Adami}},\
	}\bibfield  {title} {\bibinfo {title} {Negative entropy and information in
			quantum mechanics},\ }\href {https://doi.org/10.1103/PhysRevLett.79.5194}
	{\bibfield  {journal} {\bibinfo  {journal} {Phys. Rev. Lett.}\ }\textbf
		{\bibinfo {volume} {79}},\ \bibinfo {pages} {5194} (\bibinfo {year}
		{1997})}\BibitemShut {NoStop}%
	\bibitem [{\citenamefont {Horodecki}\ \emph {et~al.}(2005)\citenamefont
		{Horodecki}, \citenamefont {Oppenheim},\ and\ \citenamefont
		{Winter}}]{Horodecki2005}%
	\BibitemOpen
	\bibfield  {author} {\bibinfo {author} {\bibfnamefont {M.}~\bibnamefont
			{Horodecki}}, \bibinfo {author} {\bibfnamefont {J.}~\bibnamefont
			{Oppenheim}},\ and\ \bibinfo {author} {\bibfnamefont {A.}~\bibnamefont
			{Winter}},\ }\bibfield  {title} {\bibinfo {title} {Partial quantum
			information},\ }\href {https://doi.org/10.1038/nature03909} {\bibfield
		{journal} {\bibinfo  {journal} {Nature}\ }\textbf {\bibinfo {volume} {436}},\
		\bibinfo {pages} {673} (\bibinfo {year} {2005})}\BibitemShut {NoStop}%
	\bibitem [{\citenamefont {Winter}(2016)}]{Winter2016}%
	\BibitemOpen
	\bibfield  {author} {\bibinfo {author} {\bibfnamefont {A.}~\bibnamefont
			{Winter}},\ }\bibfield  {title} {\bibinfo {title} {Tight uniform continuity
			bounds for quantum entropies: Conditional entropy, relative entropy distance
			and energy constraints},\ }\href {https://doi.org/10.1007/s00220-016-2609-8}
	{\bibfield  {journal} {\bibinfo  {journal} {Commun. Math. Phys.}\ }\textbf
		{\bibinfo {volume} {347}},\ \bibinfo {pages} {291} (\bibinfo {year}
		{2016})}\BibitemShut {NoStop}%
	\bibitem [{\citenamefont {Horodecki}\ and\ \citenamefont
		{Horodecki}(1999)}]{Horodecki1999}%
	\BibitemOpen
	\bibfield  {author} {\bibinfo {author} {\bibfnamefont {M.}~\bibnamefont
			{Horodecki}}\ and\ \bibinfo {author} {\bibfnamefont {P.}~\bibnamefont
			{Horodecki}},\ }\bibfield  {title} {\bibinfo {title} {Reduction criterion of
			separability and limits for a class of distillation protocols},\ }\href
	{https://doi.org/10.1103/PhysRevA.59.4206} {\bibfield  {journal} {\bibinfo
			{journal} {Phys. Rev. A}\ }\textbf {\bibinfo {volume} {59}},\ \bibinfo
		{pages} {4206} (\bibinfo {year} {1999})}\BibitemShut {NoStop}%
\end{thebibliography}
\end{document}